%% file: main-v6.tex
\RequirePackage{fix-cm}
\documentclass[a4paper]{article}

\newif\ifshowcomments
\showcommentstrue 

\usepackage[margin=1in]{geometry}  
\usepackage[english]{babel}
\usepackage[utf8x]{inputenc}
\usepackage[T1]{fontenc}
\usepackage{parskip}
\usepackage[dvipsnames]{xcolor}
\usepackage{graphicx}
\usepackage{subcaption}
\usepackage{wrapfig}
\usepackage{longtable}
\usepackage{multirow}
\usepackage{listings}
\usepackage{makecell}
\usepackage{array}
\usepackage{float}
\usepackage{dsfont}
\usepackage{rotating}
\usepackage{booktabs}
\usepackage{enumerate}
\usepackage{extarrows}
\usepackage{tikz}
\usetikzlibrary{chains}
\usetikzlibrary{arrows}
\usetikzlibrary{backgrounds,positioning}
\usepackage{pgf}

\usepackage[normalem]{ulem}

\usepackage{amsmath}
\usepackage{amssymb}
\usepackage{amsthm}
\usepackage{bm,bbm}
\usepackage[linesnumbered,ruled,vlined,resetcount]{algorithm2e}
\usepackage{xcolor}
\usepackage{algorithmicx}
\usepackage{mathtools}
\usepackage{mathrsfs}
\usepackage{ stmaryrd } 

\usepackage{natbib}
\usepackage{bibunits}
\let\given\relax
\usepackage[
  colorlinks,
  citecolor=blue,
  linkcolor=red,
  anchorcolor=red,
  urlcolor=blue
]{hyperref}
\mathtoolsset{showonlyrefs}
\usepackage{authblk}
\usepackage{todonotes}
\input{macro.tex}

\renewcommand{\hat}[1]{\widehat{#1}}

\renewcommand{\widetilde}{\tilde} 

\newcommand{\iid}{\mathrm{iid}}

\newcommand{\nmin}{n_{\min}}

\newcommand{\indiv}{\mathrm{indiv}}
\newcommand{\pooled}{\mathrm{pooled}}

\newcommand{\cbt}{\mathrm{CBT}} 
\newcommand{\sss}{\mathrm{SSS+}}

\newcommand{\rh}{\mathrm{rh}}

\newcommand{\rb}{\mathrm{CE}}
\newcommand{\ew}{\mathrm{ew}}

\graphicspath{{./}{figures_and_tables/}}
\makeatletter
\def\input@path{{./}}
\makeatother
\long\def\comment#1{}

\title{Compound p-values for competition-based testing}

\author[1]{Junu Lee\thanks{Corresponding author: \href{mailto:junulee@wharton.upenn.edu}{\nolinkurl{junulee@wharton.upenn.edu}}}}
\author[2]{Rina Foygel Barber} 
\author[3]{Richard J. Samworth}
\author[1]{Zhimei Ren}

\affil[1]{Department of Statistics and Data Science, University of Pennsylvania}
\affil[2]{Department of Statistics, University of Chicago}
\affil[3]{Statistical Laboratory, University of Cambridge}

\date{}
\begin{document}
\begin{bibunit}[apalike]
\maketitle

\begin{abstract}
  Standard competition-based multiple testing methods---such 
  as the knockoff filter and the target-decoy procedure---are
  typically formulated as head-to-head competitions between a 
  test unit and its assigned negative control unit. 
  Motivated by real-world applications, 
  which leverage any available negative controls, we study this competition-based framework in more intricate settings where  
  competitions vary in size and structure.   We propose new methods based on {\em compound p-values}---p-values
  that are superuniform  only on aggregate---that enable 
  powerful multiple testing with broader 
  competition structures, via the Bonferroni and Benjamini--Hochberg procedures.  The former provides exact familywise error rate control, 
  whereas the latter may incur a false discovery rate inflation factor, for which we derive 
  upper bounds across a range of settings.  
  We showcase the performance of our general competition-based testing 
  methodology through extensive simulations and real-data applications, 
  including analyses of virtual spatial transcriptomics and CRISPR 
  screen datasets that exhibit precisely the complex competition 
  structures that motivate our methods. 
\end{abstract}

\section{Introduction}
\label{sec:intro}
\input{sections/intro}

\section{Preliminaries}
\label{sec:prelim}
\input{sections/prelim}

\section{Compound p-values for competition-based testing}
\label{sec:cbt}
\input{sections/cbt}

\section{Competition-based testing with shared control competitors}
\label{sec:shared_cbt}
\input{sections/shared_cbt}


\section{Simulations}
\label{sec:sims}
\input{sections/sims}

\section{Real data analysis}
\label{sec:real_data}
\input{sections/real_data}

\section{Discussion}
\label{sec:discussion}
\input{sections/discussion}
\subsection*{Acknowledgements}

J.L. is supported by a Graduate Research Fellowship from the NSF.
R.F.B. is supported by the Office of Naval Research via grant N00014-24-1-2544.  R.J.S. is supported by European Research Council Advanced Grant 101019498.  
Z.R. is supported by the National Science Foundation  under grant DMS-2413135 and Wharton Analytics.
The authors thank Kaitian Jin for his helpful suggestions regarding the virtual ST application in Section~\ref{sec:application-virtual-st}.
The authors also thank William Bekerman, T. Tony Cai, Patrick Chao, Eugene Katsevich, Ziang Niu, Dylan Small, and Jeffrey Zhang for their constructive feedback at various stages of this work.

The first author employed generative AI models provided by OpenAI, 
most recently GPT-6 Astra, to assist in this work.  AI tools contributed to the codebase (for simulations 
and experiments) and the visualizations (diagrams and plots), as well as aiding with integration techniques in Appendix~\ref{sec:appd-fdr-supermartingale}.

\putbib[ref]
\end{bibunit}

\clearpage 

\appendix
\begin{bibunit}[apalike]
\makeatletter
\def\hyper@natlinkstart#1{%
  \Hy@backout{#1}%
  \hyper@linkstart{cite}{suppcite.#1}%
  \def\hyper@nat@current{#1}%
}
\def\hyper@natlinkbreak#1#2{%
  \hyper@linkend#1\hyper@linkstart{cite}{suppcite.#2}%
}
\def\hyper@natanchorstart#1{%
  \Hy@raisedlink{\hyper@anchorstart{suppcite.#1}}%
}
\makeatother
\section{Compound p-values for random hypotheses}
\label{sec:appd-random-compound}
\input{sections_appendix/appd_random_compound}

\input{sections_appendix/appd_technical_proofs}

\input{sections_appendix/appd_evalues}

\input{sections_appendix/appd_fdr_supermartingale.tex}

\input{sections_appendix/appd_equal_weights}

\input{sections_appendix/appd_miscellaneous_results}

\input{sections_appendix/appd_deferred_sims}

\clearpage 

\renewcommand{\bibsection}{\section*{Supplementary References}}
\putbib[ref]
\end{bibunit}

\end{document}

%% file: macro.tex
\theoremstyle{plain}

\newtheorem{definition}{Definition}
\newtheorem{theorem}{Theorem}
\newtheorem{lemma}[theorem]{Lemma}

\newtheorem{corollary}[theorem]{Corollary}
\newtheorem{remark}{Remark}
\newtheorem{assumption}{Assumption}
\newtheorem{proposition}[theorem]{Proposition}

\newcommand{\RR}{\mathbb{R}}

\newcommand{\NN}{\mathbb{N}}
\newcommand{\EE}{\mathbb{E}}
\newcommand{\PP}{\mathbb{P}}
\newcommand{\ind}{\mathbbm{1}}

\newcommand{\Bern}{\text{Bern}}
\newcommand{\Bin}{\text{Bin}}

\newcommand{\unif}{\text{Unif}}

\newcommand{\fwer}{\textnormal{FWER}}
\newcommand{\fdr}{\textnormal{FDR}}
\newcommand{\fdp}{\textnormal{FDP}}

\newcommand{\bonf}{\textnormal{Bonf}}
\newcommand{\bh}{\textnormal{BH}}
\newcommand{\ebh}{\textnormal{eBH}}

\newcommand{\blk}{\mathrm{blk}}
\newcommand{\rmax}{r_{\max}}

\newcommand{\argmax}[1]{\underset{#1}{\arg\!\max}}
\newcommand{\argmin}[1]{\underset{#1}{\arg\!\min}}

\usepackage{xspace}

\newcommand{\given}{{\,|\,}}
\newcommand{\biggiven}{\,\big{|}\,}
\newcommand{\Biggiven}{\,\Big{|}\,}
\newcommand{\bigggiven}{\,\bigg{|}\,}

\def\@#1\@{\begin{align}#1\end{align}}
\def\$#1\${\begin{align*}#1\end{align*}}

\definecolor{myblue}{rgb}{.8, .8, 1}
\definecolor{mathblue}{rgb}{0.2472, 0.24, 0.6} 
\definecolor{mathred}{rgb}{0.6, 0.24, 0.442893}
\definecolor{mathyellow}{rgb}{0.6, 0.547014, 0.24}

\newcommand{\cA}{{\mathcal{A}}}

\newcommand{\cC}{{\mathcal{C}}}

\newcommand{\cE}{{\mathcal{E}}}
\newcommand{\cF}{{\mathcal{F}}}
\newcommand{\cG}{{\mathcal{G}}}
\newcommand{\cH}{{\mathcal{H}}}
\newcommand{\cI}{{\mathcal{I}}}
\newcommand{\cJ}{{\mathcal{J}}}
\newcommand{\cK}{{\mathcal{K}}}
\newcommand{\cL}{{\mathcal{L}}}
\newcommand{\cM}{{\mathcal{M}}}
\newcommand{\cN}{{\mathcal{N}}}

\newcommand{\cP}{{\mathcal{P}}}

\newcommand{\cR}{{\mathcal{R}}}
\newcommand{\cS}{{\mathcal{S}}}
\newcommand{\cT}{{\mathcal{T}}}

%% file: sections/intro.tex

%

Consider a modern multiple testing setting where our goal is to 
test $m$ null hypotheses $H_1, \dots, H_m$ simultaneously using 
corresponding test statistics (or ``scores'') $S_{1, 0}, \dots, 
S_{m,0}$, for which the null distributions are unknown. 
Such settings commonly arise in fields such as 
genomics~\citep{dempster2025controlling},
drug discovery~\citep{jin2023selection} 
and natural language processing~\citep{lee2025full}, 
which leverage contemporary, state-of-the-art machine learning 
models to provide accurate, but black-box predictions with unknown or intractable null distributions.  To obtain p-values for each null, then, the analyst can 
turn to {\em negative controls} to learn a suitable 
nonparametric empirical null distribution 
for each $S_{j,0}$ under~$H_j$~\citep[e.g.][]{gao2023simultaneous}. 

There are two main paradigms for exploiting such negative controls. 
First, one can collect {\em individual} negative controls: each $H_j$ 
is assigned $n_j$ negative control scores 
$(S_{j,1}, \dots, S_{j, n_j})$ that are representative
{of the distribution} of $S_{j,0}$
under its null. Alternatively, one may collect a  {\em pooled} set  of 
negative controls 
{$(S_1, \dots, S_N)$}
used by all hypotheses 
as  a shared empirical null distribution. Once the 
negative control set is established, one can construct p-values 
for each $H_j$ by evaluating the resulting empirical 
CDF (eCDF) at $S_{j,0}$, yielding {\em empirical p-values}{
\begin{equation}
    p_j^\indiv \coloneqq  \frac{1 + \sum_{k=1}^{n_j} \ind\{  S_{j,k}\ge S_{j, 0} \} }{n_j+1} \quad \text{or}\quad 
    p_j^\pooled \coloneqq  \frac{1 + \sum_{i=1}^N \ind\{  S_i \ge S_{j,0} \} }{N+1}.
\end{equation}
}

The individual and pooled control paradigms each have their own 
drawbacks. When using individual controls, the empirical p-values
rely on $n_j$ being sufficiently large to attain reasonable statistical power; otherwise, they suffer from poor granularity, {
as evidenced by the denominator of $p_j^\indiv$ above.  
As an alternative, one can pool the individual control groups together and 
compute p-values with the pooled eCDF, enabling a finer granularity of $N +1 =\sum_{j=1}^m n_j + 1$.}
{However,} the validity of the empirical p-values may be compromised when the null distributions of the negative controls vary significantly across $H_1,\ldots,H_m$.  Thus the analyst may be forced into a tradeoff between validity and statistical power.

\begin{figure}[tbp!]
    \centering
    \resizebox{0.96\textwidth}{!}{%
        \input{figures_and_tables/negative-control-paradigms-comparison}%
    }
    \caption{
    {
        A visualization of the three negative control-based multiple 
        testing paradigms. Test scores are indicated by circles; 
        control scores are indicated by squares. Individual testing:
        each test score is compared with its own collection of control scores.
        Pooled testing: each test score is compared with the combined 
        collection of all control scores, pooled across all hypotheses. 
        Competition-based testing: for each hypothesis, Stage 1 compares
        their test and control scores and selects the highest scoring 
        unit as the winner of this ``competition''. Then,
        in Stage 2, each test score that won in Stage 1
        is compared against all control scores that also won in Stage 1.
        For individual and pooled testing, the comparisons result in strict p-values $p_j$; for competition-based testing, the comparisons result in 
        compound p-values $\tilde p_j$.
    }
    }
    \label{fig:negative-control-paradigms}
\end{figure}

In this paper, we develop a general methodology based on 
the {\em competition-based testing} framework 
\citep{emery2020multiple}, which initially
ranks each test score among its 
individual control group {(holding a metaphorical ``competition'' between
the test and control units based on their scores)}
and then further aggregates information 
about these comparisons across all hypotheses to boost statistical power. 
Competition-based testing is an established approach for 
multiple testing with negative controls, with the two most prominent examples being the \emph{target-decoy method}~\citep{elias2007target} in 
proteomics and the \emph{knockoff filter}~\citep{barber2015controlling,
candes2018panning,barber2020robust} for genomics-inspired variable selection.
In these use cases (and, to our knowledge, all other previous literature),
the framework is specifically instantiated for settings where the practitioner is able to generate the same number of negative controls for each hypothesis.  
However, many other application areas that employ negative controls  may not be able to enforce equal control group 
sizes, perhaps because the controls are collected unevenly across  hypotheses and/or are assigned to different hypotheses from a pooled sample (rather than being shared) using randomness or side information \citep{gui2026adaptive}.  
Existing competition-based  methods are therefore not well adapted to these practical real data settings. 

Our proposal overcomes this methodological gap and handles 
competitions with significantly greater complexity than those composed of
equal-sized control groups. Central to our methods is the 
concept of {\em compound p-values}  $\tilde{p}_1,\ldots,\tilde{p}_m$ \citep{armstrong2022false}, whose null tail probabilities are superuniform ``on aggregate'':
\begin{equation}
    \label{eq:compound-pval-defn}
    \sum_{j\in\cH_0}\PP(\tilde p_j\le t)\le mt,
    \qquad \text{for every }t\in[0,1].
\end{equation}
We begin with the important observation that 
the existing competition-based testing algorithm that underlies the 
target-decoy procedure and
knockoff filter 
can be reframed as running the Benjamini--Hochberg (BH) 
procedure~\citep{benjamini1995controlling} on a 
collection of compound p-values. 
This  motivates a general compound p-value construction 
that accommodates broader competition structures, including 
competitions of varied size and a novel setup with 
certain competitions sharing the same negative controls.
For each setting, we also investigate
familywise error rate (FWER) and false discovery rate (FDR) control,
providing nuanced inflated control guarantees for the latter.
Figure~\ref{fig:negative-control-paradigms} visualizes our 
competition-based proposal and juxtaposes it with the 
individual and pooled paradigms for negative control-based 
multiple testing. 

\subsection{An illustrative example in virtual spatial transcriptomics}
\label{sec:virtual-st-preview}

\begin{figure}[tbp!]
    \centering
    \begin{subfigure}[t]{2.25in}
        \vspace{0pt}
        \centering
        \includegraphics[width=\linewidth,keepaspectratio,clip]{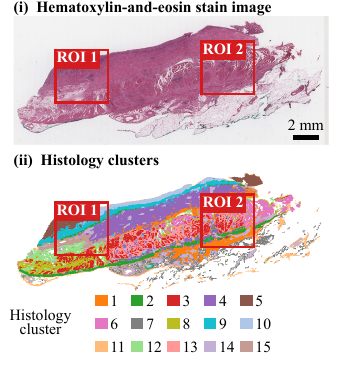}
        \caption{Gastric cancer sample with clusters.}
        \label{fig:utopia-preview-tissue}
    \end{subfigure}%
    \hspace{0.04in}%
    \begin{subfigure}[t]{3.97in}
        \vspace{0pt}
        \centering
        \includegraphics[width=\linewidth,keepaspectratio,clip]{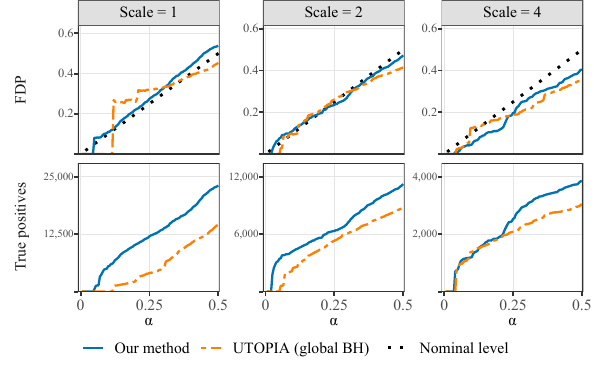}
        \caption{Preview of real data analysis using {our competition-based testing method}.}
        \label{fig:utopia-preview-results}
    \end{subfigure}
    \caption{
    Preview of the gastric cancer virtual ST application in Section~\ref{sec:application-virtual-st}.
    (a) (i) The H\&E-stained tissue and (ii) corresponding 15 histology clusters at the $32\,\mu\mathrm{m}$-per-pixel resolution with the two  measured regions of interest (ROIs)---the negative controls---outlined in red.
    The cluster composition differs inside and outside the measured ROIs, so the ROI controls do not form a representative global reference pool for the unmeasured locations outside the ROIs.
    (b) FDP and true positives at multiple spatial aggregation scales for both our proposed method and BH applied to the combined UTOPIA p-values over all clusters; the black dotted line in the FDP panels is the nominal FDP level.  The initial visualization and clustering were due  to~\citet{jin2026multiscale}, and were recreated for this paper.}
    \label{fig:utopia-tissue-histology-preview}
\end{figure}

We illustrate the tension between using a pooled control group 
and using individual controls per hypothesis, as well as 
the benefit of using our competition-based methodology, 
by previewing our real data example in virtual spatial 
transcriptomics (ST).
Virtual ST uses black-box prediction models to 
predict gene expression levels across a tissue sample 
cross-section by using spatial features derived from 
a hematoxylin-and-eosin (H\&E) stain of the 
sample (Figure~\ref{fig:utopia-preview-tissue}, (i)).
The workflow of~\citet{jin2026multiscale}, titled UTOPIA, quantifies the 
uncertainty of these predictions 
(or scores) to identify 
tissue locations with non-negligible gene expression by 
re-interpreting the task as a multiple hypothesis testing 
problem with FDR control. 
UTOPIA uses the predicted expression at each 
location to test the null hypothesis that 
its true underlying gene expression level 
is at most a user-specified threshold $c \in [0,1]$.
As the null distributions of the scores are unknown, 
UTOPIA leverages small regions of interests (ROIs) 
with known underlying expressions (measured by standard ST technology) and generates negative control 
scores from the prediction model. 

In the specific gastric cancer case study of~\citet{jin2026multiscale},
the ROI scores are unfit for use as a pooled collection of negative 
controls due to the heterogeneity inside and outside the ROI regions. 
Their preliminary analysis reveals the presence 
of fifteen distinct histological clusters in the sample
(Figure~\ref{fig:utopia-preview-tissue}, (ii)).
Since varying histological composition can affect 
both the genetic expression and the performance of the 
prediction model, the ROI scores need not be 
representative of null test scores outside the ROIs; {hence, naïvely computing $(p_j^\pooled)_{j\in[m]}$ using all the ROI scores will lead to severely miscalibrated empirical p-values.} 
Instead of pooling the negative controls,
UTOPIA therefore conducts its analysis cluster-by-cluster: {for each $j\in [m]$, 
$p_j^\pooled$ is constructed using only 
the scores of the negative controls corresponding to the same histological cluster as
location $j$.}

The authors find several drawbacks of this workaround. First, the resulting {cluster-specific}
empirical p-values can have poor granularity, so they 
choose to run the BH procedure within each cluster rather than 
globally---representing a shift in the statistical objective, as 
within-cluster FDR control does not in general imply overall 
FDR control. Second, per-cluster analysis does not completely 
address other sources of heterogeneity (e.g., from batch effects), so ad hoc corrections are needed to produce valid 
empirical p-values. These deficiencies are further exacerbated 
when the ROIs are limited in size, as is typical due to the cost of producing the standard ST measurements.

In Section~\ref{sec:application-virtual-st}, we implement our 
proposed  competition-based methodology to the gastric cancer 
case study.  We first address the heterogeneity present within clusters by further partitioning the controls into smaller, but more plausibly homogeneous, groups based on available covariates. 
This step itself is not part of our methodology,
but rather enabled by it. The standard UTOPIA approach would then result in p-values with extremely poor 
granularity; on the other hand, our methods can leverage these small competitions. 
Another problem is that even with smaller competitions, the fact that 
the ROIs are only a small proportion of the tissue means that not every test score can be 
assigned its own negative controls; thus, existing competition-based methods cannot be employed.  As a remedy, we capitalize on a simple observation: many ROI scores 
can serve as reasonable controls to a subset of all test scores. 
This observation inspires our investigation of settings where
the hypotheses are partitioned into blocks, 
with the test scores within each block 
sharing the same negative control competitors. 
Figure~\ref{fig:utopia-preview-results} 
demonstrates that the resulting competition-based method 
exhibits empirical evidence of FDR control 
while selecting many more locations with nonnull gene expression 
levels than the per-cluster p-value baseline. 

\subsection{Organization of the paper}

The rest of the paper is organized as follows. 
Section~\ref{sec:prelim} contains multiple testing preliminaries,
including the aforementioned key observation which relates existing 
competition-based testing procedures to compound p-values.  In
Section~\ref{sec:cbt}, we introduce our methods for the
general competition-based testing framework, while in Section~\ref{sec:shared_cbt} we extend the ideas to the setting of shared control competitors.  
We investigate the empirical performance of our methods through simulations
in Section~\ref{sec:sims} and applications to real-world datasets for virtual ST and 
CRISPR screens in Section~\ref{sec:real_data}.%
\footnote{Code for the simulations and data analyses is available at \url{https://github.com/leejunu/competition-based-testing}.}
We conclude with a discussion in Section~\ref{sec:discussion}.

%% file: figures_and_tables/negative-control-paradigms-comparison.tex
\begingroup%
\begin{tikzpicture}[x=1cm,y=1cm]
\definecolor{competitionblue}{HTML}{1F77B4}

\tikzset{
  test score/.style={
    circle,
    draw=none,
    fill=competitionblue,
    minimum size=3.4mm,
    inner sep=0pt,
    outer sep=0pt
  },
  control score/.style={
    rectangle,
    draw=none,
    fill=competitionblue,
    minimum size=2.8mm,
    inner sep=0pt,
    outer sep=0pt
  },
  test winner/.style={
    circle,
    draw=black,
    line width=0.60pt,
    fill=white,
    minimum size=3.4mm,
    inner sep=0pt,
    outer sep=0pt,
    path picture={
      \node[circle,draw=none,fill=competitionblue,minimum size=2.6mm,inner sep=0pt,outer sep=0pt]
        at (path picture bounding box.center) {};
    }
  },
  control winner/.style={
    rectangle,
    draw=black,
    line width=0.60pt,
    fill=white,
    minimum size=2.8mm,
    inner sep=0pt,
    outer sep=0pt,
    path picture={
      \node[rectangle,draw=none,fill=competitionblue,minimum size=2.0mm,inner sep=0pt,outer sep=0pt]
        at (path picture bounding box.center) {};
    }
  },
  test nonwinner/.style={
    circle,
    draw=competitionblue!58!black,
    densely dashed,
    line width=0.60pt,
    fill=competitionblue!10,
    minimum size=3.4mm,
    inner sep=0pt,
    outer sep=0pt
  },
  control nonwinner/.style={
    rectangle,
    draw=competitionblue!58!black,
    densely dashed,
    line width=0.60pt,
    fill=competitionblue!10,
    minimum size=2.8mm,
    inner sep=0pt,
    outer sep=0pt
  },
  test background/.style={
    circle,
    draw=competitionblue!55!black,
    densely dashed,
    line width=0.60pt,
    fill=competitionblue!22,
    minimum size=3.4mm,
    inner sep=0pt,
    outer sep=0pt
  },
  control background/.style={
    rectangle,
    draw=competitionblue!55!black,
    densely dashed,
    line width=0.60pt,
    fill=competitionblue!22,
    minimum size=2.8mm,
    inner sep=0pt,
    outer sep=0pt
  },
  comparison edge/.style={
    draw=black!32,
    line width=0.62pt,
    line cap=round
  },
  muted comparison edge/.style={
    draw=black!22,
    line width=0.55pt,
    line cap=round
  },
  control pool/.style={
    rounded corners=4pt,
    draw=black!54,
    densely dotted,
    line width=0.82pt,
    fill=black!1
  },
  method box/.style={
    rounded corners=3pt,
    draw=black!25,
    line width=0.58pt,
    fill=white
  },
  panel title/.style={
    font=\fontsize{9}{10.5}\selectfont\bfseries,
    text=black!88,
    anchor=west
  },
  step title/.style={
    font=\fontsize{9}{10.5}\selectfont,
    text=black!76,
    anchor=center
  },
  pvalue label/.style={
    font=\fontsize{9}{10.5}\selectfont,
    text=black,
    anchor=west,
    yshift=2.15mm,
    inner sep=0pt
  },
  legend test/.style={
    circle,
    draw=none,
    fill=competitionblue,
    minimum size=1.8mm,
    inner sep=0pt
  },
  legend control/.style={
    rectangle,
    draw=none,
    fill=competitionblue,
    minimum size=1.55mm,
    inner sep=0pt
  },
  legend test winner/.style={
    circle,
    draw=black,
    line width=0.42pt,
    fill=white,
    minimum size=1.8mm,
    inner sep=0pt,
    path picture={
      \node[circle,draw=none,fill=competitionblue,minimum size=1.2mm,inner sep=0pt,outer sep=0pt]
        at (path picture bounding box.center) {};
    }
  },
  legend control winner/.style={
    rectangle,
    draw=black,
    line width=0.42pt,
    fill=white,
    minimum size=1.55mm,
    inner sep=0pt,
    path picture={
      \node[rectangle,draw=none,fill=competitionblue,minimum size=0.95mm,inner sep=0pt,outer sep=0pt]
        at (path picture bounding box.center) {};
    }
  },
  legend test nonwinner/.style={
    circle,
    draw=competitionblue!58!black,
    densely dashed,
    line width=0.42pt,
    fill=competitionblue!22,
    minimum size=1.8mm,
    inner sep=0pt
  },
  legend control nonwinner/.style={
    rectangle,
    draw=competitionblue!58!black,
    densely dashed,
    line width=0.42pt,
    fill=competitionblue!22,
    minimum size=1.55mm,
    inner sep=0pt
  },
  legend text/.style={
    font=\fontsize{9}{10.5}\selectfont,
    text=black!70,
    anchor=west
  }
}

  \path[use as bounding box] (0,0.38) rectangle (15.60,7.15);
  \fill[white] (0,0.38) rectangle (15.60,7.15);

  \draw[method box] (0.12,4.28) rectangle (7.43,7.10);
  \draw[method box] (0.12,1.36) rectangle (7.43,4.10);
  \draw[method box] (7.60,1.36) rectangle (15.50,7.10);

  \node[panel title] at (0.32,6.82) {Individual};

  \begin{scope}[yshift=3.20cm]
    \foreach \xmin/\xmax in {
      0.20/1.44,1.54/2.46,2.56/3.80,
      3.90/4.82,5.08/6.00,6.10/7.34}
      \draw[control pool] (\xmin,1.56) rectangle (\xmax,2.10);

    \foreach \x in {0.82,2.00,3.18,4.36,5.54,6.72}
      \draw[comparison edge] (\x,2.70) -- (\x,2.10);

    \foreach \x [count=\j] in {0.82,2.00,3.18,4.36,5.54,6.72}{
      \node[test score] (individual-t\j) at (\x,2.70) {};
      \node[pvalue label,xshift=0.75mm] at (individual-t\j.east) {$p_{\j}$};
    }
    \foreach \x in {0.48,0.82,1.16,1.82,2.18,2.84,3.18,3.52,
                      4.18,4.54,5.36,5.72,6.38,6.72,7.06}
      \node[control score] at (\x,1.83) {};
  \end{scope}

  \node[panel title] at (0.32,3.84) {Pooled};

  \begin{scope}[yshift=0.32cm]
    \draw[control pool] (0.20,1.56) rectangle (7.34,2.10);

    \foreach \x in {0.82,2.00,3.18,4.36,5.54,6.72}
      \draw[comparison edge] (\x,2.70) -- (\x,2.10);

    \foreach \x [count=\j] in {0.82,2.00,3.18,4.36,5.54,6.72}{
      \node[test score] (pooled-t\j) at (\x,2.70) {};
      \node[pvalue label,xshift=0.75mm] at (pooled-t\j.east) {$p_{\j}$};
    }
    \foreach \x in {0.48,0.82,1.16,1.82,2.18,2.84,3.18,3.52,
                      4.18,4.54,5.36,5.72,6.38,6.72,7.06}
      \node[control score] at (\x,1.83) {};
  \end{scope}

  \node[panel title] at (7.82,6.82) {Competition-based};
  \node[step title] at (11.55,6.31) {Stage 1: individual competitions};

  \begin{scope}[xshift=-0.5mm]
  \foreach \xmin/\xmax in {
    7.73/8.97,9.19/10.11,10.33/11.57,
    11.79/12.71,13.09/14.01,14.23/15.47}
    \draw[control pool] (\xmin,4.50) rectangle (\xmax,4.96);

  \foreach \x in {8.35,9.65,10.95,12.25,13.55,14.85}
    \draw[comparison edge] (\x,5.56) -- (\x,4.96);

  \foreach \x in {8.35,9.65,10.95,12.25,13.55,14.85}
    \node[test score] at (\x,5.56) {};
  \foreach \x in {8.01,8.35,8.69,9.47,9.83,10.61,10.95,11.29,
                    12.07,12.43,13.37,13.73,14.51,14.85,15.19}
    \node[control score] at (\x,4.73) {};

  \foreach \x in {9.65,10.95,13.55}
    \node[test winner] at (\x,5.56) {};
  \foreach \x in {8.35,12.43,14.85}
    \node[control winner] at (\x,4.73) {};
  \end{scope}

  \draw[-stealth,draw=black!40,line width=0.55pt] (11.55,4.30) -- (11.55,3.92);
  \node[step title] at (11.55,3.58) {Stage 2: construct compound p-values};

  \begin{scope}[xshift=-0.5mm]
  \draw[control pool] (7.73,1.64) rectangle (15.47,2.10);

  \foreach \x in {8.35,12.25,14.85}
    \draw[muted comparison edge] (\x,2.80) -- (\x,2.10);
  \foreach \x in {9.65,10.95,13.55}
    \draw[comparison edge] (\x,2.80) -- (\x,2.10);

  \node[test background] (aggregate-t1) at (8.35,2.80) {};
  \node[test winner] (aggregate-t2) at (9.65,2.80) {};
  \node[test winner] (aggregate-t3) at (10.95,2.80) {};
  \node[test background] (aggregate-t4) at (12.25,2.80) {};
  \node[test winner] (aggregate-t5) at (13.55,2.80) {};
  \node[test background] (aggregate-t6) at (14.85,2.80) {};

  \foreach \j in {1,...,6}
    \node[pvalue label,xshift=0.75mm] at (aggregate-t\j.east) {$\tilde{p}_{\j}$};

  \foreach \x in {8.01,8.35,8.69,9.47,9.83,10.61,10.95,11.29,
                    12.07,12.43,13.37,13.73,14.51,14.85,15.19}
    \node[control background] at (\x,1.87) {};
  \foreach \x in {8.35,12.43,14.85}
    \node[control winner] at (\x,1.87) {};
  \end{scope}

  \node[legend test] at (1.81,0.82) {};
  \node[legend text] at (2.04,0.82) {test score};

  \node[legend control] at (4.11,0.82) {};
  \node[legend text] at (4.34,0.82) {control score};

  \node[legend test winner] at (9.40,0.82) {};
  \node[legend control winner] at (9.65,0.82) {};
  \node[legend text] at (9.91,0.82) {winner};

  \node[legend test nonwinner] at (11.90,0.82) {};
  \node[legend control nonwinner] at (12.15,0.82) {};
  \node[legend text] at (12.41,0.82) {non-winner};
\end{tikzpicture}%
\endgroup

%% file: sections/prelim.tex

\subsection{Problem setup}
\label{sec:setup}

For the collection of null hypotheses $H_1, \dots, H_m$, we denote by $\cH_0 \subseteq [m]\coloneqq \{1, \dots, m\}$ 
the index set of true nulls and by 
$\cH_1\coloneqq [m] \setminus \cH_0 $ the index set of false nulls 
(true nonnulls). The general objective is to identify a data-driven rejection set $\hat R \subseteq [m] $ that 
captures as much of $\cH_1$ as possible while avoiding the selection of too many elements of $\cH_0$. In this paper, we 
design procedures with the goal of controlling either the 
familywise error rate (FWER) or the false discovery rate (FDR), 
defined as follows: 
\begin{align}
    \label{eq:fwer-defn}
    \fwer(\hat R)
    &\coloneqq \PP\bigl(|\hat R\cap\cH_0|\ge 1\bigr);\\
    \label{eq:fdr-defn}
    \fdr(\hat R)
    &\coloneqq \EE\bigl[\fdp(\hat R)\bigr],
    \quad \text{where }
    \fdp(\hat R)
    \coloneqq \frac{|\hat R\cap\cH_0|}{|\hat R|\vee 1}.
\end{align}
(Above, FDP stands for ``false discovery proportion''.)

Classical multiple testing approaches assume we have {\em strict} p-values $(p_j)_{j\in [m]}$, so that $\PP_{H_j}(p_j \le t) \le t$ for all $t\in[0,1]$---in other words, $p_j$ is superuniform under the null $H_j$.  For FWER control at level $\alpha \in(0,1)$, the well-known Bonferroni correction~\citep{dunn1961multiple} returns the set
\begin{equation}
    \label{eq:bonferroni}
    \hat R^\bonf(p_1,\dots,p_m)
    \coloneqq \{j\in[m]:p_j\le \alpha/m\},
\end{equation}
which is guaranteed to satisfy $\fwer\bigl(\hat R^\bonf(p_1,\dots,p_m)\bigr) \le \alpha$.
On the other hand, for FDR control, the BH procedure~\citep{benjamini1995controlling} order the p-values as $p_{(1)} \leq \cdots \leq p_{(m)}$ and again rejects hypotheses according to a threshold-based rejection rule: 
\begin{equation}
    \label{eq:bh-rejection-set}
    \hat R^\bh(p_1,\dots,p_m)
    \coloneqq \bigl\{j\in[m]:p_j \le \alpha\hat k/m\bigr\}, \quad \text{where } \hat k 
    \coloneqq
    \max\bigl\{k\in[m]:p_{(k)}\le \alpha k/m\bigr\}.
\end{equation} 
By convention, $\hat k = 0$ in the case where the set is empty. Unlike the Bonferroni correction, which always controls the FWER, the BH procedure only guarantees $\fdr\bigl(\hat R^\bh(p_1,\dots,p_m)\bigr) \le \alpha$ under certain conditions on the dependence structure of the p-values---for example, when $p_1, \dots, p_m$ are independent or follow the positive regression dependence on a subset (PRDS) condition~\citep{benjamini2001control}. 

\subsubsection{Compound p-values for multiple testing}
\label{sec:compound-mt-prelim}

The popularity of the Bonferroni and BH procedures makes them  
natural choices for compound p-values %
\citep{ignatiadis2025empirical, barber2025false, ignatiadis2026empirical}.
The Bonferroni procedure is particularly well-suited, as demonstrated by the following proposition, which, although obvious, does not appear to have been made explicit in the literature.
\begin{proposition}
    \label{prop:fwer-compound}
    If $\tilde p_1,\dots,\tilde p_m$ are compound valid, then
    \begin{equation}
        \fwer\bigl(\hat R^\bonf
        (\tilde p_1,\dots,\tilde p_m)\bigr)\le\alpha.
    \end{equation}
\end{proposition}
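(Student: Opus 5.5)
The plan is to prove the statement with a union bound, followed directly by the compound validity condition~\eqref{eq:compound-pval-defn} evaluated at the single threshold $t=\alpha/m$. No dependence assumptions are needed, because the union bound never uses any joint structure of the $\tilde p_j$. This is the same reasoning that gives FWER control for Bonferroni with strict p-values, except that it uses only the aggregate (summed) null tail bound rather than a bound for each null separately.

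First, by the definition~\eqref{eq:bonferroni}, the event $\{|\hat R^\bonf(\tilde p_1,\dots,\tilde p_m)\cap\cH_0|\ge 1\}$ equals the union $\bigcup_{j\in\cH_0}\{\tilde p_j\le \alpha/m\}$. Second, the union bound gives
\begin{equation}
    \fwer\bigl(\hat R^\bonf(\tilde p_1,\dots,\tilde p_m)\bigr)
    \le \sum_{j\in\cH_0}\PP(\tilde p_j\le \alpha/m).
\end{equation}
Third, since $\alpha\in(0,1)$, the threshold $t=\alpha/m$ lies in $[0,1]$. Applying~\eqref{eq:compound-pval-defn} at this $t$ bounds the sum by $m\cdot\alpha/m=\alpha$, which is the claim.

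There is essentially no obstacle. The only point worth noting is that the argument needs the compound condition at just one fixed, deterministic threshold, and that this is exactly what makes Bonferroni well suited to compound p-values. The remark also explains why the argument does not carry over to BH: there the threshold $\alpha\hat k/m$ is data-dependent, so aggregate superuniformity at fixed thresholds does not immediately transfer. If $\cH_0$ is regarded as random, the same argument applies conditionally, or with the expectation of the sum, under the corresponding random-hypothesis version of~\eqref{eq:compound-pval-defn}.
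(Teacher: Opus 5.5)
Your proof is correct and is essentially the paper's argument. The paper bounds the probability of at least one false rejection by the expected number of false rejections (Markov's inequality), which equals $\sum_{j\in\cH_0}\PP(\tilde p_j\le\alpha/m)$, the same sum your union bound produces; it then applies compound validity at $t=\alpha/m$, exactly as you do.
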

\begin{proof}
By the definition of compound p-values at $t = \alpha/m$,
\[
\PP\biggl(\sum_{j\in\cH_0}\ind\{ \tilde  p_j \le \alpha/m \} \ge 1 \biggr) \le  \EE\biggl[\sum_{j\in\cH_0}\ind\{ \tilde  p_j \le \alpha/m \} \biggr] \le \sum_{j \in \cH_0}\PP\Bigl(\tilde p_j \le \frac{\alpha}{m}\Bigr) \le m\cdot \frac{\alpha}{m} = \alpha,
\]
as required.
\end{proof} 
\begin{remark}  
    \label{rem:weighted-p-compound}
    Proposition \ref{prop:fwer-compound} is similar to the underlying FWER control of the weighted Bonferroni correction~\citep{roeder2009genome, ignatiadis2021covariate}, 
    which associates a strict p-value~$p_j$ and positive weight $w_j$ 
    to each $H_j$, with $\sum_{j=1}^m w_j =1$. The weighted Bonferroni correction then returns the rejection
    set $\hat R^\bonf(p_1/w_1, \dots, p_m/w_m)$.    
    In fact, Proposition \ref{prop:fwer-compound} is the more general result, as $p_1/w_1,\ldots,p_m/w_m$ are compound p-values.  Similarly, the random-weights variant of~\citet{ignatiadis2023evalues} yields compound p-values when the weights are independent of the p-values.
\end{remark}
\begin{remark}
    \label{rem:holm-failure}
    For FWER control with compound p-values, one may ask whether Holm's procedure~\citep{holm1979simple}, which uniformly improves on Bonferroni correction, is valid. We answer in the negative: a construction where  the FWER is violated is provided in Appendix~\ref{sec:appd-holm-counterexample}. 
\end{remark}

In contrast, BH does not retain guaranteed nominal FDR control when run on compound p-values.  \citet{armstrong2022false} proves that the FDR is asymptotically controlled at its nominal level when the compound p-values are weakly dependent in the sense of~\citet{storey2004strong}, and provides counterexamples to show that asymptotic FDR control may fail in the absence of such a condition.  In terms of finite-sample bounds,~\citet{barber2025false} establish that $\fdr\bigl(\hat R^\bh(\tilde p_1, \dots, \tilde p_m)\bigr) \le 1.93\alpha$ when $(\tilde p_j)_{j\in\cH_0}$ are mutually independent and further independent of $(\tilde p_j)_{j\in\cH_1}$, and give an example of independent compound p-values for which $\fdr\bigl(\hat R^\bh(\tilde p_1, \dots, \tilde p_m)\bigr) \ge 7\alpha/6$.  Furthermore,~\citet{deng2026randomized} argue that
compound p-values typically fail to satisfy the PRDS condition. The competition-based testing methodology introduced in this paper induces a complicated dependence structure among the compound p-values, meaning that its FDR properties must be established from scratch.

\subsection{Multiple testing with negative controls}

In the introduction, we discussed the use of negative control 
scores to construct empirical p-values, to which standard 
p-value-based multiple testing procedures may be applied. 
We now define these p-values under both the pooled and 
individual negative control paradigms and discuss their validity. 
Subsequently, we  {discuss} the Selective SeqStep+ (SSS+)
algorithm~{\citep{barber2015controlling}}, which is the most well-known competition-based testing algorithm 
and which underlies both the knockoff filter
and the target-decoy procedure.

\subsubsection{Individual controls}

In the individual control paradigm, we calibrate evidence against $H_j$ by collecting negative controls $S_{j,1}, \dots, S_{j, n_j}$ 
to form a representative null empirical distribution for $S_{j,0}$.
This allows us to construct an individual empirical p-value 
$p_j^\indiv$ by  evaluating the eCDF at $S_{j,0}$:
\begin{equation}
    \label{eq:indiv-pval-defn}
    p_j^\indiv = \frac{1 + \sum_{k=1}^{n_j} \ind\{  S_{j,k}\ge S_{j, 0} \} }{n_j+1}.
\end{equation}
The random variable $p_j^\indiv$ is a strict p-value when the scores $(S_{j,k})_{k=0}^{n_j}$ 
are jointly exchangeable. For ease of future exposition, we assume a 
slightly stronger exchangeability assumption for the individual control paradigm.
\begin{assumption}[Individual exchangeability]
    \label{assu:individual}
    Conditional on $\cG_{-j} \coloneqq \sigma(\{ S_{\ell, k}\colon \ell \neq j, 0 \le k \le n_\ell \})$, the test and control scores $(S_{j,k})_{k=0}^{n_j}$ are exchangeable for $j\in\cH_0$.
\end{assumption}
One appealing consequence of Assumption~\ref{assu:individual} 
is that it ensures
$\PP_{H_j}\bigl(p_j^\indiv \le t \given (p^\indiv_{\ell})_{\ell \neq j}\bigr) \le t$ 
for all $t\in[0,1]$ and $j \in [m]$, 
which is sufficient for nominal FDR control of BH~\citep[Section~2.1]{fithian2020conditional}.  

{We observe that $p_j^\indiv$ can never take values below $1/(n_j+1)$}, while Bonferroni or BH thresholds are of order $\alpha/m$.  
{
Therefore, the granularity of $p_j^\indiv$  can prevent $H_j$ from being rejected
regardless of the magnitude of $S_{j,0}$. For example, when 
$\min_j 1 / (n_j+1) \gg \alpha/m$, it is impossible to reject $H_j$ via the 
Bonferroni correction. In realistic settings, the number of negative controls 
per hypotheses is  limited; as a consequence,  the poor granularity of the 
$p_j^\indiv$ typically results in a powerless procedure. 
}

\subsubsection{Pooled controls}

As an alternative to individual controls, we may instead proceed on the basis that the distributions of the test scores under the null hypotheses are homogeneous in some manner, 
thereby justifying their calibration 
via the same shared pool of negative controls. For example, 
in the outlier detection problem, the 
goal is to determine which unlabeled units are outliers in the sense 
that they do not follow an inlier distribution $P$. Nonparametric 
approaches~\citep{bates2023testing,mary2022semi,gao2023simultaneous}
compute empirical p-values via a reference set sampled 
independently from $P$.  Writing $(S_1, \dots, S_N)$ for the negative control set, we may construct p-values as
\begin{equation}
    \label{eq:pooled-pval-defn}
    p_j^\pooled = \frac{1 + \sum_{i=1}^N \ind\{  S_i \ge S_{j,0} \} }{N+1}.
\end{equation}
These are also known as \emph{conformal p-values} in the literature~\citep{bates2023testing}.
Under the following joint exchangeability assumption, each null 
$p_j^\pooled$ is a strict p-value:
\begin{assumption}[Pooled exchangeability]
    \label{assu:pooled}
    Conditional on $\sigma( \{ S_{j,0}\colon j\in \cH_1 \} )$, the concatenated control and null  test scores $(S_i)_{i\in[N]} \cup (S_{j,0})_{j\in\cH_0}$ are  exchangeable. 
\end{assumption}
Moreover, under Assumption~\ref{assu:pooled},~\citet{bates2023testing} 
prove that the p-values $p_1^\pooled,\ldots,p_m^\pooled$ satisfy the PRDS property, so the 
BH procedure guarantees nominal FDR control---even though the sharing of negative controls immediately precludes the 
independence of the p-values. 

As previously mentioned, one way to obtain a pooled control set is
to combine all individual controls. In this case, $p_j^\pooled$
has denominator $\sum_{j=1}^m n_j + 1$, so it improves on 
$p_j^\indiv$ in terms of granularity. This benefit is offset by the fact
that Assumption~\ref{assu:individual} does not imply 
Assumption~\ref{assu:pooled}, putting the validity of $p_j^\pooled$ 
in doubt; see Section~\ref{sec:virtual-st-preview} for a real data example. 

\subsubsection{Competition-based testing with individual control groups}
\label{sec:prelim-cbt}

We now turn to the definition of the Selective SeqStep+ (SSS+) algorithm,
designed for individual control groups. For each hypothesis $H_j$, 
SSS+ holds a competition between the corresponding test and control scores, before aggregating the outcomes of all competitions to decide 
which test scores exhibit enough evidence to reject their 
associated hypotheses.
SSS+ was initially implemented in the simplest case where each hypothesis was assigned exactly one negative control, i.e. $n_j = 1$ for each $j \in [m]$.  Later work extended the procedure to the setting where each control group 
is of the same size: $n_j = n \in \NN$~\citep{gimenez2019improving, emery2020multiple}.
For general $n$, SSS+ returns the rejection set 
\begin{equation}
\label{eq:multi-kn}
    \begin{aligned}
    \hat R^\sss
    &\coloneqq
    \{j : S_{j,0} \ge T,\; W_{j,0}=1\},\\
    \text{where }
    T
    &\coloneqq 
    \min \bigg\{
    t\in\cS :
    \frac{
    \frac{1}{n}
    +
    \sum_{\ell=1}^m \frac{1}{n}\sum_{k=1}^{n}
    \ind\{W_{\ell,k}=1,\; S_{\ell,k}\ge t\}
    }{
    1\vee
    \sum_{\ell=1}^m
    \ind\{W_{\ell,0}=1,\; S_{\ell,0}\ge t\}
    }
    \le \alpha
    \bigg\}, \\
    \cS &\coloneqq  \{S_{j,k}\colon j \in [m], 0 \le k \le n\}, \\W_{j,k} &\coloneqq \ind\{S_{j,k} = \max( S_{j,0}, \dots, S_{j, n})\}.
    \end{aligned}
\end{equation}
Thus, $W_{j,k}$ is a ``winner indicator'' for whether the $k$th unit won the $j$th competition, so when there are no ties among $S_{j,1},\ldots,S_{j,n}$, at most one of the terms in the inner sum in the numerator in the definition of $T$ makes a positive contribution.  The fraction defining $T$ can be interpreted as a rough estimate of 
the FDP when we reject all winning test scores that are at least  $t$; by convention, we take $\min \varnothing = \infty$ in the definition of $T$.  Under Assumption~\ref{assu:individual} and the above no-ties assumption, $\hat R^\sss$ has guaranteed nominal FDR control~\citep{barber2015controlling}.

\citet{emery2020multiple} present empirical evidence that for small $n$,
SSS+ can report significant power even when
BH applied to the $p_j^\indiv$ is
powerless. 
For this reason, they argue that SSS+ is the superior 
method under the individual controls paradigm. 
One finding of this paper is that the 
equivalence between SSS+ and BH (see Proposition~\ref{prop:sss-compound-bh} 
below) can explain this power gap. 
The compound p-values we construct have 
a granularity of $1/(mn)$, compared to the granularity of 
$1/(n+1)$ exhibited by $p_j^\indiv$; hence, 
the former is more favorable for the BH rejection rule.
Furthermore, we find that this advantage extends to settings 
with broader competition structures.

Outside the equal-size setting, there is almost no prior literature 
on competition-based testing. 
An exception is the work of~\citet{gui2026adaptive}, who develop adaptive 
screening methods for subgroup selection in observational studies. 
They interpret the matched sets between treatment and control units, 
which vary in size, as competitions; however, their subgroup selection task is tailored to the causal 
application and does not carry over to the 
standard multiple testing problem.

\subsection{Key insight: compound p-values underlie SSS+ when competitions have equal size}
\label{sec:key-insight}

We close our preliminaries with a key observation: 
in the equal-size competition setting, SSS+ is equivalent to 
BH applied to a collection of compound p-values.

\begin{proposition}
    \label{prop:sss-compound-bh}
    For some $n\in \NN$, suppose that $n_j = n$ for all $j\in [m]$.
    Assume further that Assumption~\ref{assu:individual} holds, and that there are no ties among the scores within each competition.\footnote{Formally, here and below, we only require that our no ties assumptions hold almost surely, but we omit this for brevity.} For each $j\in[m]$, define
    \begin{equation}
        \label{eq:sss-compound-pval}
        \tilde p_j
        \coloneqq
        \left[
        \frac{
        \frac{1}{n}
        +\frac{1}{n}\sum_{\ell=1}^m\sum_{k=1}^n
        \ind\{W_{\ell,k}=1,\ S_{\ell,k}\ge S_{j,0}\}
        +m\ind\{W_{j,0}=0\}
        }{m}
        \right]\wedge 1.
    \end{equation}
    Then $\tilde p_1,\dots,\tilde p_m$ are compound p-values.  Moreover, for every $\alpha\in(0,1)$, 
    \begin{equation}
        \hat R^{\sss}
        =
        \hat R^\bh(\tilde p_1,\dots,\tilde p_m).
    \end{equation}
\end{proposition}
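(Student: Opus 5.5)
The plan has two parts: compound validity of the $\tilde p_j$, and then the equivalence of the rejection sets.

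\textbf{Compound validity.} Fix $t\in[0,1]$ and write $N_0(s)\coloneqq\sum_{\ell\in\cH_0}\ind\{W_{\ell,0}=1,\,S_{\ell,0}\ge s\}$ for the number of null winning test scores at least $s$, and $V(s)\coloneqq\sum_{\ell=1}^m\sum_{k=1}^n\ind\{W_{\ell,k}=1,\,S_{\ell,k}\ge s\}$ for the number of winning controls at least $s$.

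\begin{enumerate}
\item Since $\tilde p_j\le t<1$ forces $W_{j,0}=1$, the event $\{\tilde p_j\le t\}$ equals $\{W_{j,0}=1,\ 1+V(S_{j,0})\le nmt\}$.
\item The function $s\mapsto V(s)$ is nonincreasing. Hence the null indices with $\tilde p_j\le t$ are exactly the null winners whose scores are at least the threshold $\tau\coloneqq\min\{s: 1+V(s)\le nmt\}$. Their count is therefore $N_0(\tau)$, and it suffices to show $\EE[N_0(\tau)]\le mt$.
\item On the event $N_0(\tau)\ge1$ we have $1+V(\tau)\le nmt$. Writing $V_0$ for the part of $V$ coming from null competitions, this gives
\[
N_0(\tau)\le nmt\cdot\frac{N_0(\tau)}{1+V_0(\tau)}.
\]
\item It therefore suffices to show $\EE\bigl[N_0(\tau)/(1+V_0(\tau))\bigr]\le 1/n$. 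This is the standard SSS+/knockoff supermartingale bound.
\item To get that bound, condition on the unordered multiset of scores in each null competition and on all nonnull competitions. Under Assumption~\ref{assu:individual}, each null competition's winner is then the test unit with probability $1/(n+1)$, independently across null competitions, so $B_\ell\coloneqq W_{\ell,0}$ are independent Bernoulli$(1/(n+1))$ variables, with winner scores fixed.
\item The threshold $\tau$ is a stopping time with respect to the filtration that reveals the $B_\ell$ as $s$ decreases. The process $N_0(s)/(1+V_0(s))$ is then a backward supermartingale. This gives the bound $\EE[\cdot]\le \frac{1/(n+1)}{n/(n+1)}=1/n$, either by citing the optional stopping lemma in \citet{barber2015controlling} or by the Bernoulli identity $\EE[B/(1+\sum(1-B))]$ argument.
\end{enumerate}

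I expect this step to be the main obstacle. The delicate points are handling the weight $1/n$ correctly and checking that $\tau$ depends only on the revealed filtration. The latter holds because $V$, including its nonnull part, is a function of the fixed winner scores together with the revealed labels.

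\textbf{Equivalence with SSS+.} Let $\hat k$ be the BH count.

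\begin{enumerate}
\item For a winning test score, $m\tilde p_j=(1+V(S_{j,0}))/n$ whenever this is below $m$.
\item The BH condition $\tilde p_{(k)}\le\alpha k/m$ with $k=\#\{\ell: W_{\ell,0}=1,\ S_{\ell,0}\ge S_{j,0}\}$ (assuming no ties between winners across competitions, or noting that ties are handled identically by both procedures) is exactly the FDP-hat condition at $t=S_{j,0}$ in \eqref{eq:multi-kn}.
\item For $t\in\cS$ that are not winning test scores, the FDP-hat at $t$ is at least its value at the next winning test score above $t$: the numerator is larger and the denominator is the same. So the minimum $T$ may be taken over winning test scores, with the trivial case of an empty set giving $T=\infty$ and $\hat k=0$.
\item Consequently $T$ corresponds to the largest $k$ satisfying the BH condition. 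Both procedures then reject exactly the winners with $S_{j,0}\ge T$; non-winners have $\tilde p_j=1$ and are never rejected, since $\alpha<1$.
\end{enumerate}
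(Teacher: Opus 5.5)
Your proof is correct in outline, and the equivalence half follows the same logic as the paper (Proposition~\ref{prop:cbt-bh-equivalence} with $c_0=1/n$). Your proof of compound validity, however, takes a genuinely different route.

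The paper derives compound validity from Proposition~\ref{prop:qfinalist-cbt-compound} with $q_j=1$. For each null $j$ it swaps the test unit with each control of competition $j$ to get
\[
\PP(\tilde p_j\le t)=\EE\biggl[\frac1n\sum_{k=1}^n\ind\Bigl\{W_{j,k}=1,\ \frac1n+C^{(-j)}(S_{j,k})\le mt\Bigr\}\biggr],
\]
where $C^{(-j)}$ is the weighted count of control winners outside competition $j$. It then bounds the sum over null $j$ of these indicators \emph{pathwise} by $mt$, using the lowest-scoring element of $\cA_t$. No threshold process, martingale, or independence across competitions is needed.

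You instead reduce $\sum_{j\in\cH_0}\PP(\tilde p_j\le t)$ to $nmt\,\EE[N_0/(1+V_0)]$ at a data-dependent threshold, and import the SSS+ optional-stopping bound $1/n$. This makes it transparent that compound validity at a fixed $t$ is an SSS+-type FDR bound for a threshold driven by control counts alone. But it uses more: conditional independence of the null labels and, crucially, $n_j\equiv n$, since the reversed supermartingale property needs the null labels to be exchangeable given their sum. That is why the paper's argument carries over verbatim to unequal $n_j$ (Theorem~\ref{thm:cbt-compound}) and to $q$-finalists, while yours does not extend directly.

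One step needs fixing: in your step 6 the filtration runs the wrong way. If the labels $B_\ell$ are revealed top-down as $s$ decreases, then $N_0/(1+V_0)$ is not a supermartingale: from a state with $N_0=0$ it increases in expectation. Nor is $\tau$ a stopping time, since deciding to stop requires the next label below.

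Instead, do the following.
\begin{itemize}
\item Order the null competitions by decreasing winning score $s_1>\cdots>s_{m_0}$, and let $N_0(r)$ count test winners among the first $r$.
\item Put $\cF_r\coloneqq\sigma\bigl(\text{your conditioning},\,N_0(r),\,B_{r+1},\dots,B_{m_0}\bigr)$, and let time run as $r$ decreases.
\item Given $\cF_r$, $B_r=1$ with probability $N_0(r)/r$, so $N_0(r)/(1+r-N_0(r))$ is a reversed supermartingale.
\item Set $\hat r\coloneqq\max\{r:1+V(s_r)\le nmt\}$ (zero if empty). By monotonicity of $V$, $\{\hat r\ge r\}=\{1+V(s_r)\le nmt\}$. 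This event lies in $\cF_r$, because $V(s_r)$ depends only on $N_0(r)$, the later labels, and the nonnull part fixed by your conditioning.
\item The number of nulls with $\tilde p_j\le t$ is $N_0(\hat r)$, and $1+\hat r-N_0(\hat r)\le nmt$ when $\hat r\ge1$.
\item Optional stopping then bounds the target by $nmt\,\EE[X/(1+m_0-X)]\le mt$, where $X\sim\Bin(m_0,1/(n+1))$.
\end{itemize}
This is exactly the setting of the Barber--Cand\`es lemma you cite, so only your description of the filtration is backwards.

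A cosmetic point for the equivalence: the minimizer $T$ over $\cS$ may be a non-winning score, but the rejected set is unchanged.
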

The proof of Proposition~\ref{prop:sss-compound-bh} is deferred to Appendix \ref{sec:appd-proof-sss-compound-bh}.  
To preview our results for varying-size competitions in 
Section~\ref{sec:cbt}, we provide a few high-level observations about the form of these compound p-values. First, we can rewrite $\tilde p_j$ as 
\begin{equation}
    \label{eq:sss-compound-pval-rewrite}
    \tilde p_j =
    \left[\frac{1+ \sum_{\ell=1}^m \sum_{k=1}^n \ind\{W_{\ell,k}=1, S_{\ell,k} \ge S_{j,0} \} }{mn} + \ind\{W_{j,0} =0 \} \right] \wedge 1.
\end{equation}
When $S_{j,0}$ is not the winner of the $j$th competition, we have $\ind\{W_{j,0} = 0\}=1$, so $\tilde p_j =1$.  This precludes the rejection of $H_j$ through either Bonferroni or BH.

The numerator of the fraction in \eqref{eq:sss-compound-pval-rewrite} 
resembles that of a pooled empirical p-value~\eqref{eq:pooled-pval-defn}, 
but in fact these differ in an important aspect.  Whereas the empirical p-value compares $S_{j,0}$ to all 
negative control scores $  \{S_{\ell,k}\}_{\ell \in [m], k \in [n]}$, 
our p-value compares a winning $S_{j,0}$ only to negative controls 
that also won their respective competitions. Conceptually, we treat the control winners themselves as negative controls for the test winners. 

The odds of a given control score winning a null competition is $n$ to 1.  Therefore, the contribution of each control 
winner to the numerator is weighted by 
$1/n$ in~\eqref{eq:sss-compound-pval}. 
The weighted contribution to the numerator 
from the event that a control won competition $\ell$ with 
a score of at least $S_{j,0}$ imitates the contribution
from the event that a true null test score won competition $\ell$ 
with a score of at least~$S_{j,0}$. 
The weighting means that the compound p-values attain the 
favorable $1/(mn)$-granularity discussed in 
Section~\ref{sec:prelim-cbt}.

\subsection{Additional related work}
\label{sec:related-work}

Multiple testing with negative controls is a popular 
approach, with both the individual and pooled paradigms well studied \citep[e.g.][]{gao2023simultaneous}. \citet{chakraborty2026power} compare the theoretical 
power of representative methods of the two paradigms, which they 
refer to as ``one-to-one'' and ``one-to-many'' augmentation
schemes, respectively. \citet{nguyen2020aggregation} showed the BH and SSS+ equivalence described in Proposition~\ref{prop:sss-compound-bh} in the case where $n_j=1$ for all $j$.  The authors prove that 
these are ``intermediate'' (rather than compound) p-values.\footnote{Specifically, 
they show that their p-values $(\pi_j)_{j\in[m]}$ satisfy 
$\PP(\pi_j  \le t) \le \kappa  \frac{m}{|\cH_0|} t$ for 
some inflation factor $\kappa \approx 3.24$. } Our methodology builds on the novel observation that these intermediate p-values are in fact compound valid, which inspires the compound p-value approach to multiple testing with broader competition structures. 

\citet{deng2026randomized} give an alternative unification of SSS+ and BH with either uniform p-values or conformal p-values.  Specifically, they prove that if $J\sim\unif(\cH_0)$ is independent of random variables $p_1,\ldots,p_m$ that satisfy a randomized PRDS condition, and if $p_J$ is superuniform, then BH applied to $p_1,\ldots,p_m$ has FDR validity.  The superuniformity of $p_J$ is similar to the compound 
p-value property~\eqref{eq:compound-pval-defn}, and their results are complementary to our goals 
of designing 
multiple testing methods involving competitions of varying size and structure, motivated by the messy realities of application areas
that employ  negative controls; see Section~\ref{sec:virtual-st-preview}.

Finally, we relate our contributions to the rapidly expanding line of work on multiple testing with {\em e-values}~\citep{ramdas2025hypothesis}, which are nonnegative 
random variables with expectation at most 1 under the 
null.~\citet{wang2022false}
propose the e-BH procedure (an e-value variant of BH), which 
controls the FDR at the nominal level regardless of the dependence 
among the e-values.
This FDR guarantee continues to hold for {\em compound e-values} $\tilde{e}_1,\ldots,\tilde{e}_m$, whose defining property is that $\sum_{j \in\cH_0} \EE[\tilde e_j] \le m$ 
\citep{wang2022false,ren2024derandomised}.  \citet{ren2024derandomised} leverage this fact to show 
an equivalence between SSS+ and  e-BH with compound 
e-values, en route to derandomizing the knockoff filter. 
We can similarly construct compound e-values for competition-based testing
in general competitions; see
Section~\ref{sec:fdr-cbt} for an example.

%% file: sections/cbt.tex

%
%
%

%

We now present a more general version of Proposition~\ref{prop:sss-compound-bh}, covering the setting where the number of controls~$n_j$ is 
allowed to vary across hypotheses.
The probability that a test unit wins a null competition
may now differ across $j$, and this heterogeneity must be addressed in order to retain compound validity. 
Our definition in~\eqref{eq:cbt-pval-thm} below therefore re-weights the contribution of a negative control winner in 
competition~$j$ by $1/n_j$.

\begin{theorem}
    \label{thm:cbt-compound}
    Suppose that Assumption~\ref{assu:individual} holds and that there are no ties among the scores within each competition. For each $j\in [m]$, choose 
    a deterministic offset $c_j \geq 0$ and define
    \begin{equation}
        \label{eq:cbt-pval-thm}
        \tilde p_j(c_j)
        \coloneqq
        \left[
        \frac{
            c_j
        +\sum_{\ell=1}^m\frac{1}{n_\ell}\sum_{k=1}^{n_\ell}
        \ind\{W_{\ell,k}=1,\ S_{\ell,k}\ge S_{j,0}\}
        +m\ind\{W_{j,0}=0\}
        }{m}
        \right]\wedge 1,
        \qquad j\in[m].
    \end{equation}
    Then, when $c_j \ge 1/n_j$ for all $j\in[m]$, we have that $\tilde p_1(c_1), \dots, \tilde p_m(c_m)$ are compound p-values. 
\end{theorem}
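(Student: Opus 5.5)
Since $\tilde p_j(c_j)$ is nonincreasing in $c_j$, it suffices to treat $c_j = 1/n_j$. The goal is to show $\sum_{j\in\cH_0}\PP(\tilde p_j\le t)\le mt$ for each fixed $t\in[0,1)$; the case $t=1$ is trivial since $|\cH_0|\le m$. Whenever $W_{j,0}=0$ we have $\tilde p_j = 1 > t$, so only null test winners contribute. On the event $W_{j,0}=1$, the p-value depends on the data only through the test score $S_{j,0}$ and the collection of control winners across all competitions. Write
\[
V(s) \coloneqq \sum_{\ell=1}^m \frac{1}{n_\ell}\sum_{k=1}^{n_\ell}\ind\{W_{\ell,k}=1,\ S_{\ell,k}\ge s\}.
\]
Then $\tilde p_j\le t$ with $W_{j,0}=1$ is equivalent to $1/n_j + V(S_{j,0})\le mt$. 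The natural threshold is the data-dependent quantity
\[
\tau \coloneqq \inf\{s : V(s)\le mt - \textstyle\min_j 1/n_j\},
\]
but $\tau$ depends on all competitions, including competition $j$ itself. I would therefore proceed by a leave-one-out swapping argument in the spirit of the knockoff and SSS+ supermartingale proofs.

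The key step, for a fixed null $j$, is to condition on $\cG_{-j}$ and on the unordered multiset of scores $\{S_{j,0},\dots,S_{j,n_j}\}$. Under Assumption~\ref{assu:individual} and the no-ties assumption, the maximum $M_j$ of competition $j$ is then fixed, and the index of the winner is uniform on $\{0,\dots,n_j\}$. Write $V_{-j}(s)$ for the contribution to $V(s)$ from competitions $\ell\ne j$; this is $\cG_{-j}$-measurable.
\begin{itemize}
\item If the test unit wins (probability $1/(n_j+1)$), then competition $j$ contributes nothing to $V$. The event $\tilde p_j\le t$ becomes $1/n_j + V_{-j}(M_j)\le mt$.
\item If a control wins (probability $n_j/(n_j+1)$), then competition $j$ contributes $(1/n_j)\ind\{M_j\ge s\}$ to $V(s)$.
\end{itemize}
Defining $A_j \coloneqq \{1/n_j + V_{-j}(M_j)\le mt\}$, which is measurable with respect to the conditioning, this gives
\[
\PP(\tilde p_j\le t\mid\cdot) = \frac{1}{n_j+1}\ind_{A_j} = \frac{1}{n_j}\,\PP\bigl(\text{a control wins competition } j\mid\cdot\bigr)\,\ind_{A_j}.
\]
On the control-win event, $A_j$ is equivalent to $V(M_j)\le mt$, since the control winner of competition $j$ adds exactly $1/n_j$ to $V$ at $M_j$. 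Hence
\[
\PP(\tilde p_j\le t)= \EE\Bigl[\sum_{k=1}^{n_j}\frac{1}{n_j}\ind\{W_{j,k}=1,\ V(S_{j,k})\le mt\}\Bigr].
\]
This identity is exactly where $c_j\ge 1/n_j$ is used: the offset $1/n_j$ compensates for the control's own contribution to $V$.

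Summing over $j\in\cH_0$ and bounding the null set by all of $[m]$ gives
\[
\sum_{j\in\cH_0}\PP(\tilde p_j\le t)\le \EE\Bigl[\sum_{\ell,k}\frac{1}{n_\ell}\ind\{W_{\ell,k}=1,\ V(S_{\ell,k})\le mt\}\Bigr].
\]
The final step is a deterministic counting bound. Order the control winners by decreasing score $s_{(1)}>s_{(2)}>\cdots$ with weights $w_{(i)}=1/n_{\ell_i}$, and recall $V(s_{(i)})=\sum_{i'\le i} w_{(i')}$; ties across competitions are harmless because $V$ is a right-continuous-type cumulative sum. The sum inside the expectation is the total weight $\sum_i w_{(i)}\ind\{\sum_{i'\le i}w_{(i')}\le mt\}$, which is a prefix sum bounded by $mt$. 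Taking expectations yields $\sum_{j\in\cH_0}\PP(\tilde p_j\le t)\le mt$.

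I expect the main obstacle to be the conditioning step: one must verify carefully that, given $\cG_{-j}$ and the multiset of competition $j$, the event in the test-wins case coincides with the event in the control-wins case (both reduce to $A_j$). This is where the no-ties assumption and the exact $1/n_j$ weighting enter.
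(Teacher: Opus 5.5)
Your proposal is correct and is essentially the paper's own proof (Proposition~\ref{prop:qfinalist-cbt-compound} with $q_j=1$, together with Remark~\ref{rem:looser-assumptions}): you reduce to $c_j=1/n_j$, use the conditional win probability $1/(n_j+1)$ given $\cG_{-j}$ and the unordered scores of competition $j$ to turn the null test-winner event into a $1/n_j$-weighted control-winner event, and finish with a deterministic bound, where your prefix-sum argument plays the role of the paper's minimum-score element of $\cA_t$. One slip: $\tilde p_j(c_j)$ is nondecreasing (not nonincreasing) in $c_j$, which is exactly why the case $c_j=1/n_j$ suffices; also, the threshold $\tau$ you introduce is never used and can be dropped.
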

As in the equal-size setting, 
$W_{j,k} \coloneqq \ind\{S_{j,k} = \max( S_{j,0}, \dots, S_{j, n_j})\}$
denotes the winner indicator for~$S_{j,k}$ in competition $j$. The varying-size setting also permits a p-value offset term  specific to $H_j$, denoted $c_j$, 
which must be at least $1/n_j$.  The choice $c_j = 1/n_j$ 
may be thought of as the default, as it results in the smallest possible values that guarantee compound validity.
We denote
these default compound p-values as $\tilde p_j$, dropping the offset specification.  %
Our rationale for retaining the flexibility of larger offsets arises from the fact that the 
FDR of the BH procedure is easier to analyze 
for compound p-values that share a common offset $c_0$ across hypotheses; see Section~\ref{sec:fdr-cbt}.

Finally, we highlight that the default compound p-value $\tilde p_j$ can be as small as $1/(mn_j)$, providing a favorable effective granularity for multiplicity correction.  
As an example, consider a scenario where $S_{j,0}$ 
is the largest element of $\cS = \{S_{j,k}\colon j \in [m], 0 \le k \le n_j\}$. 
A Bonferroni correction applied to $(\tilde p_j)_{j \in [m]}$ 
rejects $H_j$ when $n_j \ge  \lceil 1/\alpha \rceil$. 
By contrast, for the individual empirical p-values $(p_j^\indiv)_{j \in [m]}$, 
the Bonferroni method only rejects $H_j$ when $n_j + 1  \ge  \lceil m/\alpha \rceil$. Thus the advantage of the compound p-values becomes more pronounced when the number of hypotheses $m$ 
is large relative to the number of individual controls.

\subsection{FDR control for competition-based testing}
\label{sec:fdr-cbt}

The main goal of this subsection is to derive FDR control results when BH is applied to the compound p-values $\tilde{p}_j(c_0)$ from Theorem~\ref{thm:cbt-compound}, with a common offset $c_0 \ge 1/\nmin$, where $\nmin \coloneqq \min_{j\in [m]} n_j$. 
Central to our analysis is the following equivalence,
inspired by Proposition~\ref{prop:sss-compound-bh},
which relates a general version of SSS+ with 
the BH procedure applied to $\tilde{p}_j(c_0)$. 

\begin{proposition}
    \label{prop:cbt-bh-equivalence}
    Suppose that the conditions of Theorem~\ref{thm:cbt-compound} hold. For a prespecified level $\alpha\in(0,1)$,  define the competition-based testing rejection set
    \begin{equation} 
    \label{eq:cbt-rejection-set}
        \begin{aligned} 
            \hat R^\cbt
            &\coloneqq
            \{j : S_{j,0} \ge T,\; W_{j,0}=1\},\\
            \text{where }
            T
            &\coloneqq 
            \min \bigg\{
            t\in\cS :
            \frac{
            c_0 
            +
            \sum_{\ell=1}^m \frac{1}{n_\ell}\sum_{k=1}^{n_\ell}
            \ind\{W_{\ell,k}=1,\; S_{\ell,k}\ge t\}
            }{
            1\vee
            \sum_{\ell=1}^m
            \ind\{W_{\ell,0}=1,\; S_{\ell,0}\ge t\}
            }
            \le \alpha
            \bigg\}.
        \end{aligned}
    \end{equation}
    Then, for every $c_0 \geq 0$,
    \begin{equation}
        \hat R^\cbt
        =
        \hat R^\bh\bigl(\tilde p_1(c_0   ),\dots,\tilde p_m(c_0 )\bigr). 
    \end{equation}
\end{proposition}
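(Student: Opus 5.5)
The plan is to show that both procedures reject exactly the winning test units whose scores lie at or above a common data-dependent threshold. Define the control-winner count function
\[
V(t) \coloneqq \sum_{\ell=1}^m \frac{1}{n_\ell}\sum_{k=1}^{n_\ell}\ind\{W_{\ell,k}=1,\ S_{\ell,k}\ge t\}
\]
and the test-winner count
\[
R(t) \coloneqq \sum_{\ell=1}^m \ind\{W_{\ell,0}=1,\ S_{\ell,0}\ge t\}.
\]
Both are nonincreasing, left-continuous step functions of $t$. For a winning test unit $j$ we have $\tilde p_j(c_0) = \bigl[(c_0 + V(S_{j,0}))/m\bigr]\wedge 1$, while $\tilde p_j(c_0)=1$ for a losing one. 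For winners, therefore, the ordering of the $\tilde p_j(c_0)$ is the reverse of the ordering of the $S_{j,0}$, up to ties in $V$.

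First I show $\hat R^\cbt \subseteq \hat R^\bh$. If $T=\infty$ the left side is empty. Otherwise $T$ satisfies $c_0+V(T)\le \alpha(1\vee R(T))$. Since $V(T)\ge 0$, $c_0\ge0$ and $\alpha<1$, I will argue that we may take $R(T)\ge1$ whenever $\hat R^\cbt$ is nonempty; if $R(T)=0$ the set is empty anyway. Every $j\in\hat R^\cbt$ has $S_{j,0}\ge T$, so by monotonicity
\[
m\tilde p_j(c_0) \le c_0+V(S_{j,0}) \le c_0+V(T) \le \alpha R(T),
\]
and this is $<m$, so the truncation at $1$ is inactive. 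There are $R(T)=|\hat R^\cbt|$ such $j$, all with $\tilde p_j(c_0)\le \alpha R(T)/m$. Hence $p_{(R(T))}\le \alpha R(T)/m$, which gives $\hat k\ge R(T)$, and every such $j$ is rejected by BH.

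Next I show $\hat R^\bh\subseteq\hat R^\cbt$. Suppose $\hat k\ge1$. Every rejected $j$ has $\tilde p_j(c_0)\le\alpha\hat k/m<1$, so it must be a winner. Let $t^*$ be the smallest test score $S_{j,0}$ among the rejected units; this belongs to $\cS$. I claim that $R(t^*)\ge \hat k$ and $c_0+V(t^*)\le\alpha\hat k$.
\begin{itemize}
\item The second inequality holds because $\tilde p$ of the unit attaining $t^*$ is at most $\alpha\hat k/m$.
\item For the first, the $\hat k$ smallest p-values are all at most $\alpha\hat k/m<1$, so they belong to winners. Each such winner $i$ has $V(S_{i,0})\le \alpha\hat k - c_0$. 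Any winner with score below $t^*$ would also be rejected, by the minimality in the definition of $t^*$. Hence these $\hat k$ winners all have score at least $t^*$.
\end{itemize}
Combining the two, $\frac{c_0+V(t^*)}{1\vee R(t^*)}\le \alpha\hat k/R(t^*)\le\alpha$. So $t^*$ is feasible in the definition of $T$, giving $T\le t^*$, and all BH rejections, being winners with $S_{j,0}\ge t^*\ge T$, lie in $\hat R^\cbt$.

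The main obstacle is the bookkeeping around ties and monotonicity. Because $V$ is a step function, several winners can share the same $\tilde p$ value, and a low-scoring winner may not be rejected by BH even though $V$ is flat between it and $t^*$. The argument must carefully use the fact that $\tilde p_j$ is a nonincreasing function of $S_{j,0}$ among winners, so that $\{j:\tilde p_j\le \alpha\hat k/m\}$ is an upper set of winners in score order, and that test scores do not tie with control winners. The no-ties assumption within competitions and the restriction of the minimum to $t\in\cS$ are used precisely here. I will also explicitly handle the degenerate cases $T=\infty$ and $\hat k=0$, which correspond to each other by the two inclusions above.
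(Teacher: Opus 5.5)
Your proposal is correct and follows essentially the paper's argument. For $\hat R^{\bh}\subseteq\hat R^{\cbt}$ you show that the smallest rejected test score $t^*$ is feasible for $T$ via $c_0+V(t^*)\le\alpha\hat k\le\alpha R(t^*)$, and for $\hat R^{\cbt}\subseteq\hat R^{\bh}$ you use the self-consistency bound $m\tilde p_j(c_0)\le c_0+V(T)\le\alpha R(T)$ to get $\hat k\ge R(T)$, just as the paper does.

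Two small cleanups. First, $R(t^*)\ge\hat k$ is immediate, because all $\hat k$ BH rejections are winners with score at least $t^*$ by the definition of $t^*$; your sentence ``any winner with score below $t^*$ would also be rejected'' says this backwards. Second, the concerns in your final paragraph are unfounded. A winner's $\tilde p_j(c_0)$ depends only on $V(S_{j,0})$, so a winner on a flat stretch of $V$ below $t^*$ would itself be rejected, and the no-ties assumption is never actually needed.
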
 
By Theorem~\ref{thm:cbt-compound}, when $c_0 \geq 1/n_{\min}$, we have that $\tilde{p}_1(c_0),\ldots,\tilde{p}_m(c_0)$ are compound p-values, but Proposition~\ref{prop:cbt-bh-equivalence} shows that this condition is not necessary for the equivalence of the two rejections sets.  
Similar to the individual offsets $c_j$ in~\eqref{eq:cbt-pval-thm},
we can  think of  $c_0 = 1/\nmin$ as the default choice of a common offset: not only 
does this yield the smallest possible compound p-values, 
but $\hat R ^\cbt$ then also coincides with SSS+ for competitions of equal size.

Theorem~\ref{thm:inflated-fdr-control} is our main result on the (inflated) FDR control of $\hat{R}^{\mathrm{CBT}}$ applied to the compound p-values $\tilde{p}_1(c_0),\ldots,\tilde{p}_m(c_0)$ with default common offset.  For $t > 0$, we write $\mathfrak{h}_t\coloneqq \int_0^1 (1-x^t)/(1-x)\,dx$, recalling that this function is an analytic continuation of the harmonic series to $(0,\infty)$.

\begin{theorem}
    \label{thm:inflated-fdr-control}
    Suppose that the conditions of Theorem~\ref{thm:cbt-compound} hold and that the competition-winning scores
    $\bigl(\max_{0\le k\le n_j}S_{j,k}\bigr)_{j\in[m]}$, are distinct almost surely. Define $\bm{n} = (n_1,\ldots,n_m)$,
    \begin{equation}
        \Lambda(\bm n) \coloneqq\sum_{j=1}^m\frac{{\nmin}}{n_j+1}
        \quad \text{and} \quad 
        C(\bm n)\coloneqq
        \sum_{j=1}^m\frac{1}{h(n_j)+n_j/\nmin-1},
    \end{equation}
    where $h(d)\coloneqq|\{j\in[m]:n_j=d\}|$
    counts the number of hypotheses with $d$ controls in its competition.
    Then, for $c_0 = 1/\nmin$ and every $\alpha\in(0,1)$,
    \begin{equation}
        \fdr(\hat R^\cbt)
        \le
        \alpha\min\bigl\{
        {\mathfrak{h}_{\Lambda(\bm n)}},\ 
        C(\bm n)
        \bigr\}.
    \end{equation}
\end{theorem}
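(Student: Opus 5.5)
By Proposition~\ref{prop:cbt-bh-equivalence} we may analyse $\hat R^\cbt$ directly. Condition on the unordered multiset of scores within each competition; by Assumption~\ref{assu:individual} and the no-ties condition, for each null $j$ the identity of the winner of competition $j$ is then uniform over its $n_j+1$ units, independently across null competitions and independently of the nonnull competitions. We can therefore reduce to the sequence of competition-winning scores, which are distinct and hence have a fixed ordering. To each null competition attach a Bernoulli label $B_j=W_{j,0}$ with $\PP(B_j=1)=1/(n_j+1)$. Scanning the threshold $t$ from $+\infty$ downward gives a SeqStep-type filtration, and $T$ is a stopping time with respect to the backward filtration revealing labels of winners above $t$. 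Writing $V(t)=\sum_{j\in\cH_0}B_j\ind\{M_j\ge t\}$ and $U(t)=\sum_{j\in\cH_0}(1-B_j)\ind\{M_j\ge t\}/n_j$, where $M_j$ is the winning score, the definition of $T$ gives
\[
\fdp\le \frac{V(T)}{1\vee R(T)}\le \alpha\,\frac{V(T)}{c_0+U(T)}.
\]
So the proof reduces to bounding $\EE[V(T)/(1/\nmin+U(T))]$.

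For the bound $C(\bm n)$ I would group the null competitions by their size $d$ and bound separately the contribution of each group:
\[
\frac{V(T)}{1/\nmin+U(T)}\le\sum_d \frac{V_d(T)}{1/\nmin+U_d(T)}.
\]
Within a group all odds equal $1/d$, so $dV_d/(d/\nmin+dU_d)$ is a process of Selective SeqStep form in which each ``control'' counts one. The classical supermartingale argument of \citet{barber2015controlling} (optional stopping on $V_d/(c+\#\text{controls})$ with i.i.d.\ Bernoulli$(1/(d+1))$ labels) bounds the expectation in terms of $h(d)$ and the offset $d/\nmin$. I would use the exact finite-sample hypergeometric computation, rather than the crude bound $1$, to obtain a per-group constant of the form $h(d)/(h(d)+d/\nmin-1)$. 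Summing over groups, and rewriting $\sum_d h(d)/(h(d)+d/\nmin-1)=\sum_j 1/(h(n_j)+n_j/\nmin-1)$, gives $C(\bm n)$.

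For the harmonic bound, I would treat all competitions jointly with heterogeneous success probabilities $q_j=1/(n_j+1)$. The process $M(t)=V(t)/(1/\nmin+U(t))$ is not an exact supermartingale when the $q_j$ differ. Instead, I would compute $\EE[\sup_t M(t)]$, or a stopped version of it, by writing the ratio as an integral:
\[
\frac{1}{a+U}=\int_0^1 x^{a+U-1}\,dx.
\]
Taking the expectation of the resulting product over the independent labels yields factors $\prod_j(q_j+(1-q_j)x^{1/n_j})$. The plan is then to bound this product pointwise using $n_j\ge\nmin$ and convexity, reducing it to a homogeneous form governed by $\Lambda(\bm n)=\sum_j \nmin q_j$. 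The resulting integral should be $\int_0^1(1-x^{\Lambda})/(1-x)\,dx=\mathfrak h_{\Lambda(\bm n)}$.

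I expect this last step to be the main obstacle: handling the stopping time with heterogeneous weights $1/n_j$ while keeping a clean product structure. I would try first to dominate the stopped ratio by a backward supermartingale-type bound valid at every $t$, deferring the integration details as in Appendix~\ref{sec:appd-fdr-supermartingale}. Taking the minimum of the two bounds completes the proof.
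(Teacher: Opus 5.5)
\textbf{Gap in the $\mathfrak h_{\Lambda(\bm n)}$ bound.} The representation $1/(a+U)=\int_0^1x^{a+U-1}\,dx$ does not give a product over independent labels when you apply it to $V(T)/(1/\nmin+U(T))$ or to $\sup_t V(t)/(1/\nmin+U(t))$. The numerator and the data-dependent threshold (or supremum) couple all the labels, so $\prod_j\bigl(q_j+(1-q_j)x^{1/n_j}\bigr)$ never appears. Your fallback, a backward supermartingale integrated as in Appendix~\ref{sec:appd-fdr-supermartingale}, yields the Katsevich--Ramdas bound of Proposition~\ref{prop:kr-inflated-fdr-control-common-offset}, not $\mathfrak h_{\Lambda(\bm n)}$.

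The missing step is a pathwise bound that removes the threshold entirely. List the nulls $j_1,\dots,j_{m_0}$ by decreasing winning score and set $A_r\coloneqq\sum_{u\le r}(1-W_{j_u,0})/n_{j_u}$. Since $U$ only grows as $t$ decreases,
\[
\frac{V(T)}{1/\nmin+U(T)}\le\sup_{t}\frac{V(t)}{1/\nmin+U(t)}\le\sum_{r=1}^{m_0}\frac{W_{j_r,0}}{1/\nmin+A_r}.
\]
Because $A_r=A_{r-1}$ on $\{W_{j_r,0}=1\}$ and the null labels are conditionally independent, the $r$th summand has expectation $q_{j_r}\EE\bigl[1/(1/\nmin+A_{r-1})\bigr]$. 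The paper telescopes these terms using concavity of $t\mapsto\mathfrak h_t$ (Lemma~\ref{lem:harmonic-increment}) and finishes with Jensen.

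Your integral idea also closes the argument from this point:
\begin{itemize}
\item The expected $r$th term equals $q_{j_r}\nmin\int_0^1\prod_{u<r}y_u(x)\,dx$, where $y_u(x)\coloneqq q_{j_u}+(1-q_{j_u})x^{\nmin/n_{j_u}}$.
\item Since $q_{j_r}\nmin=(1-q_{j_r})\nmin/n_{j_r}$ and $1-x^{\beta}\ge\beta(1-x)$ for $\beta\in(0,1]$, the sum over $r$ telescopes to at most $\int_0^1\bigl(1-\prod_{u\le m_0}y_u(x)\bigr)/(1-x)\,dx$.
\item Weighted AM--GM gives $y_u(x)\ge x^{\nmin/(n_{j_u}+1)}$, hence the bound $\mathfrak h_{\Lambda_0(\bm n)}\le\mathfrak h_{\Lambda(\bm n)}$.
\end{itemize}

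\textbf{The $C(\bm n)$ bound.} Your counting half is sound and takes a genuinely different route from the paper. The paper introduces leave-one-out thresholds $\tau_j^-$, trades $W_{j,0}$ for $(1-W_{j,0})/n_j$ by the exchange identity, and shows $\tau_j^-$ is constant on each size class, obtaining $k_d/(k_d-1+d/\nmin)$ pathwise.

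For your per-class optional stopping, the filtration must run the other way from what you wrote. It should reveal the labels of competitions whose winning score lies below the current threshold, and only the per-class counts above it. If labels above $t$ are revealed while scanning downward, $T$ is the last time the FDP estimate is at most $\alpha$, which is not a stopping time.

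With the correct filtration, in the direction of increasing threshold, $V_d/(d/\nmin+N_d)$ is a supermartingale because $d/\nmin\ge1$. Here $N_d$ counts size-$d$ null competitions above the threshold won by a control. Its initial value is a binomial, not hypergeometric, expectation: for $V\sim\Bin(h,p)$ with $p=1/(d+1)$ and $a=d/\nmin$,
\[
d\,\EE\Bigl[\frac{V}{a+h-V}\Bigr]=h\int_p^1\Bigl(\frac{y-p}{1-p}\Bigr)^{a-1}y^{h-1}\,dy\le\frac{h}{h+a-1},
\]
since $(y-p)/(1-p)\le y$ on $[p,1]$. Monotonicity of $h\mapsto h/(h+a-1)$ lets you replace the null count by $h(d)$, and summing over $d$ gives $C(\bm n)$.

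Your route reuses the classical SSS+ supermartingale class by class and even gains a factor $1-p^{h+a-1}$. The paper's argument avoids stopping-time bookkeeping and is pathwise once the perturbed thresholds are introduced.
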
 
The FDR inflation factor in Theorem~\ref{thm:inflated-fdr-control} is a minimum of two distinct upper bounds, both of which depend only on the vector $\bm n$ of competition sizes (which is known to the practitioner).   Since $\Lambda(\bm n) \leq m$, the first term in the minimum immediately implies the harmonic upper bound $\fdr \le \alpha \mathfrak{h}_m$ for BH applied to arbitrarily dependent strict~\citep{benjamini2001control}
and compound~\citep{armstrong2022false} p-values.  %
The second term, on the other hand, can be 
interpreted as a ``counting'' bound, since $C(\bm n) \leq \sum_{j=1}^m 1/h(n_j)$, which is the number of
distinct values in $\bm n$. 

\begin{remark}
    \label{rem:three-different-settings}
To illustrate the behaviors of the two components, we present 
several examples of competition-size vectors $\bm n$ and their resulting FDR bounds. First, consider the simplest case of equal-sized competitions, where every element 
in $\bm n$ is equal to some $n\in\NN$. Here, $C(\bm n) = 1$, so
$\fdr(\hat R^\cbt)\le\alpha$; this aligns with existing FDR control results of \citet{barber2015controlling} on 
the SSS+ procedure; see Section~\ref{sec:prelim-cbt}.  By contrast, $\Lambda(\bm n) = mn/(n+1)$, which yields a very loose upper bound that is almost the worst-case bound of \citet{armstrong2022false} for arbitrarily dependent compound p-values.

As a second example, we study a setting where one control group has a different size from the rest: $n_1 = \cdots = n_{m-1} = n$ and $n_m= N$, with 
$n \in \NN$ and $N = \kappa n$ for some positive $\kappa\neq 1$.  When $\kappa \in (0,1)$, we have $\mathfrak h_{\Lambda(\bm n)} \leq \mathfrak h_{1 + (m-1) \kappa}$, while $C(\bm n) = 1 + (m-1)\kappa / \bigl(1 - \kappa + (m-1)\kappa\bigr) \leq 2$.  If $m=2$ and $\kappa$ is close to 1, then the former bound is the stronger, but if $(m-1)\kappa \geq 3$ then the latter bound is better.  On the other hand, when $\kappa > 1$, we have $\Lambda(\bm n) \approx m$, while $C(\bm n) = 1 + 1 / \kappa$, providing a constant inflation factor that approaches 1 as $\kappa\to\infty$. 

Finally, consider the case where all $m$ control group sizes are distinct.  Here,  $C(\bm n) = \sum_{j=1}^m \nmin / n_j > \Lambda (\bm n)$. 
If $\Lambda(\bm n) \le1 $, then $\mathfrak h_{\Lambda(\bm n )} \le \mathfrak h _1 =1\le C(\bm n)$; if $\Lambda(\bm n)  > 1 $, then since $t \mapsto \mathfrak{h}_t$ is concave and its derivative is $\pi^2/6 - 1 < 1$ at $t = 1$, we see that $\mathfrak h_t \le t$ for all $t\ge 1$, so $\mathfrak h_{\Lambda(\bm n)} \le \Lambda(\bm n)  < C(\bm n )$. Hence,  $\mathfrak h_{\Lambda(\bm n)}  <  C(\bm n )$ uniformly, 
so the harmonic bound is always tighter.
\end{remark}

Prior work on dependent compound p-values has only yielded the worst-case FDR inflation factor of~$\mathfrak{h}_m$, even though a constant inflation factor is achievable under independence \citep{barber2025false}.  Theorem~\ref{thm:inflated-fdr-control} applies to compound p-values having a particular dependence structure that arises from our competitions, but reveals that much tighter FDR control is achievable (even nominal FDR control in the best, equal-size competitions case).

At first glance, the harmonic inflation factor may seem overly conservative. The following result clarifies that it is asymptotically attainable.
\begin{proposition}
    \label{prop:lower-bound}
    There exists a sequence of competition-based testing 
    problem instances satisfying the conditions of Theorem~\ref{thm:inflated-fdr-control}, indexed by $N\in \NN$, with competition sizes $\bm{n}_N$ and corresponding nominal levels $\alpha_N\in(0,1)$ such that
    \begin{equation}
        \label{eq:fdr-log-lower-bound}
        \fdr_N
        \ge
        \bigl( 1-o(1)\bigr)\,\alpha_N
        \log (1+\Lambda_N),\qquad \text{as } N\to\infty,
    \end{equation}
    where $\Lambda_N \coloneqq \Lambda(\bm{n}_N)$ and $\fdr_N \coloneqq \fdr(\hat R^\cbt_N )$.  Furthermore, $\log(1+ \Lambda_N) \rightarrow \infty$ as $N \rightarrow \infty$ for this sequence of problem instances. 
\end{proposition}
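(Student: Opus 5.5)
We will construct the instances explicitly, with all hypotheses null and competition sizes that vary geometrically. Since every hypothesis is null, $\fdr = \PP(\hat R^\cbt \neq \varnothing)$. We then engineer the sizes so that this rejection probability, divided by $\alpha_N$, grows like a harmonic sum.

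Concretely, take $m = N$ null competitions with $n_j$ chosen so that $\nmin/(n_j+1)$ decreases roughly like $1/j$. The simplest choice is $n_j = j\,n_0$ for a large base size $n_0$, which gives $\Lambda_N \approx \sum_j 1/j \approx \log N$. An alternative is to fix the sizes so that $\Lambda_N \asymp N$ while $\alpha_N$ is tuned. We would first try the alternative, with $\alpha_N$ small, so that rejections happen only through the first (largest) few winning scores.

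The score law is chosen so that the ordering of winners is tractable. Give all scores in competition $j$ i.i.d.\ continuous distributions $F_j$, arranged so that the winning scores of different competitions are stochastically separated: competition $j$'s scores all lie in a band $(j-1,j)$, say. Then the global ordering of winners is deterministic, with competition $m$ highest. Only the identity of each winner, test or control, is random. These are independent events with $\PP(W_{j,0}=1) = 1/(n_j+1)$, and the winning control contributes weight $1/n_j$.

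Under this structure, scanning $t$ downward reduces the CBT threshold rule in \eqref{eq:cbt-rejection-set} to a deterministic recursion on independent Bernoulli indicators. Among the top $r$ competitions, let $V_r$ be the number of test winners and $D_r = \sum 1/n_j$ over the control winners. A rejection occurs iff for some $r$ we have $c_0 + D_r \le \alpha\, V_r$.

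With $\alpha_N$ small, say $\alpha_N \ll \nmin^{-1}$, any single test winner at level $r$ triggers rejection only if $c_0 + D_r \le \alpha_N$. This essentially forces the first test winner to be accompanied by no heavy control weight. So we choose $n_j$ increasing in $j$, ensuring the controls winning the top competitions have negligible weight $1/n_j$. The event "some competition $j$ is won by its test unit while the accumulated control weight above it stays below $\alpha_N - c_0$" then has probability about $\sum_j 1/(n_j+1)$ in a regime where these events rarely overlap. We would calibrate $c_0 = 1/\nmin$ and $\alpha_N$ so that the threshold is a multiple of $c_0$, which produces the sum $\sum_j \PP(\text{first test win at } j)$.

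The main obstacle is matching this to $\alpha_N \log(1+\Lambda_N)$ rather than to merely $\alpha_N\Lambda_N$ or a constant. The log should come from multiple rejections: a test win at level $r$ with $V_r = k$ is allowed accumulated control weight $\alpha k - c_0$. The cleanest route is to take many equal-size blocks with geometrically growing sizes, so that the process $V_r - D_r/\alpha$ behaves like a random walk with rate-$1/k$ crossing probabilities. Summing these gives the harmonic $\sum_{k\le\Lambda}1/k \sim \log(1+\Lambda_N)$.

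We would verify this by computing $\PP(\exists r: \alpha V_r \ge c_0+D_r)$ via a ballot/Poisson-limit argument. In the limit the test wins form a Poisson process and control weights form a compensated process; the crossing probability of a linear boundary by a Poisson process is explicit, and it yields $(1-o(1))\alpha\log(1+\Lambda)$. Finally, we check that the Poisson intensity $\Lambda_N \to\infty$ along the sequence, which gives the last claim.
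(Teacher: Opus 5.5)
Your plan fails at its first step: no all-null design can exhibit logarithmic inflation. Since $\fdr_N\le1$ and $\log(1+\Lambda_N)\to\infty$, the target forces $\alpha_N\to0$. Under the global null, however, the FDR of $\hat R^\cbt$ is at most $\alpha+O(\alpha^2)$ for every choice of $\bm n$.

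To see this, use your notation. Rank competitions by winning score, so that given the winning scores the test-win indicators are independent $\Bern\bigl(w_r/(1+w_r)\bigr)$ with $w_r=1/n_{j_r}\le c_0=1/\nmin$. A rejection needs $c_0+D_r\le\alpha V_r$ for some $r$.
\begin{itemize}
\item With $V_r=1$, the first test win must occur no later than the last position $r^*$ with $c_0+\sum_{u<r^*}w_u\le\alpha$, so that $\sum_{u\le r^*}w_u\le\alpha$. This has probability $1-\prod_{u\le r^*}(1+w_u)^{-1}\le1-e^{-\alpha}\le\alpha$.
\item With $V_r=k\ge2$ and $c_0\le\alpha$, couple competition $u$ with an interval of length $\log(1+w_u)$ of a unit-rate Poisson process. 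Then $k$ test wins must occur within Poisson time $2\alpha k$, which has probability at most $(2e\alpha)^k$. If instead $c_0>\alpha$, Proposition~\ref{prop:supermartingale-global-null} already gives $\fdr\le\alpha$.
\end{itemize}
So $\fdr_N\le(1+o(1))\alpha_N$ whatever the sizes, and the target bound cannot hold.

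Your specific regime is worse still. If $\alpha_N\ll1/\nmin$ then $c_0>\alpha_N$, so $c_0+D_r\le\alpha_N$ is impossible and a lone test winner never triggers a rejection. The event you estimate at probability about $\sum_j1/(n_j+1)$ is therefore empty. Your Poisson heuristic is also wrong. A unit-rate Poisson process crosses the boundary $(c_0+\lambda)/\alpha$ with probability about $\alpha$, essentially all from its first jump. There is no harmonic sum of ``rate-$1/k$'' crossings, because under the global null the FDP is $0$ or $1$.

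The missing idea is that the logarithm comes from \emph{fractional} FDPs, which require nonnull discoveries to dilute a single false one. The paper's construction works as follows.
\begin{itemize}
\item It takes $L=\lfloor N^{1/2}\rfloor$ nulls, the $r$th in ranking order having $N-r+1$ controls. Thus $w_r=1/(N-r+1)$ \emph{increases} down the ranking, the opposite of your suggestion.
\item After the $r$th null it inserts deterministic nonnull competitions until there are $K_r=\lceil(1/\nmin+\overline{A}_{r-1})/\alpha_N\rceil-1$ of them in total. Here $\overline{A}_{r}=\sum_{u\le r}w_u$ and $\alpha_N=1/(4N^2)$. Each nonnull has $N^4$ controls and a sure test winner, so it barely affects $\nmin$ or $\Lambda_N$.
\item Let $E_r$ be the event that null $r$ is the only null test winner. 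On $E_r$ the threshold stops exactly at the end of stage $r$. At every later stage $k$, the accumulated control weight exceeds $1/\nmin+\overline{A}_{k-1}$ by $w_k-w_r\ge w_k-w_{k-1}>1/N^2>\alpha_N$, which is more than the slack the nonnull count allows. This is why the weights must increase.
\item Hence $\fdp=1/(K_r+1)\ge\alpha_N/(1/\nmin+\overline{A}_{r-1}+\alpha_N)$ on $E_r$, and $\PP(E_r)\ge(1-o(1))w_r$.
\item Summing over $r$ and telescoping with $\log(1+x)\le x$ gives $\fdr_N\ge(1-o(1))\alpha_N\log(1+\nmin\overline{A}_L)$, with $\nmin\overline{A}_L=\Lambda_N+o(1)\to\infty$.
\end{itemize}
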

As a consequence of Proposition~\ref{prop:lower-bound}, there is no universal constant $\kappa > 0$ such that $\fdr(\hat R^\cbt) \le \kappa\alpha$.  By Proposition~\ref{prop:lower-bound} and Theorem~\ref{thm:inflated-fdr-control}, for this sequence of problem instances,
\begin{equation}
        \bigl( 1-o(1)\bigr)\,\alpha_N
        \log (1+\Lambda_N) \le \fdr_N 
        \le \alpha_N\bigl(\log(\Lambda_N) + 0.5773\bigr),
\end{equation}
so the worst-case FDR inflation factor is tightly controlled.  In fact, our numerical and real data experiments find that the FDR is often controlled at the nominal level; see Section~\ref{sec:sims}.  %

We conclude this subsection on FDR control with three variants and extensions of our competition-based testing methodology. %

\paragraph{Nominal FDR control when competitions are 
equally weighted.}

{Consider again the second setting in Remark~\ref{rem:three-different-settings}, where}
$n_1 = \cdots = n_{m-1} = n$ and $n_m= N$, with 
$n \in \NN$ and $N = \kappa n$ for some $\kappa > 0$.
{In this instance}, $C(\bm n)  = 1 + 1/\kappa$ for $\kappa >1$.  Thus, when $n$ and $N$ are close but not equal, $C(\bm n)$ is roughly~2. However, when $n = N$, we have $C(\bm n) =1$. 
This discontinuity raises the question of whether 
the proposed method achieves FDR control much closer to the 
nominal level than implied by Theorem~\ref{thm:inflated-fdr-control}
for settings where $\bm n$ are all nearly equal.

The following proposition provides a heuristic answer in 
the affirmative to the above question. 
We find that replacing the weights $1/n_j$ with $1/\nmin$ in the compound p-values of Theorem~\ref{thm:cbt-compound}, thereby enforcing equal weighting, leads to valid compound p-values 
that control the FDR at the nominal level via BH. 

\begin{proposition}
    \label{prop:cbt-pval-equal-weights}
    Suppose that the conditions of Theorem~\ref{thm:cbt-compound} hold. Define
    \begin{equation}
        \label{eq:cbt-pval-equal-weights}
        \tilde p_j^\ew
        \coloneqq
        \biggl[
        \frac{
             \frac{1}{\nmin}
        +\sum_{\ell=1}^m\frac{1}{\nmin}\sum_{k=1}^{n_\ell}
        \ind\{W_{\ell,k}=1,\ S_{\ell,k}\ge S_{j,0}\}
        +m\ind\{W_{j,0}=0\}
        }{m}
        \biggr]\wedge 1,
        \qquad j\in[m].
    \end{equation} 
    Then $\tilde p_1^\ew,\dots,\tilde p_m^\ew$ are compound p-values.  
    Moreover, suppose that the competition-winning scores are distinct almost surely. Then,  for every $\alpha\in(0,1)$, 
    \begin{equation}
        \fdr \bigl( R^\bh(\tilde p_1^\ew,\dots,\tilde p_m^\ew) \bigr) \le \alpha.
    \end{equation}
\end{proposition}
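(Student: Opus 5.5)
The plan has two parts: compound validity follows from a pointwise comparison with Theorem~\ref{thm:cbt-compound}, and the FDR bound follows from an SSS+-type reformulation plus a knockoff-style supermartingale argument.

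\emph{Compound validity.} I would compare $\tilde p_j^\ew$ pointwise with the compound p-value $\tilde p_j(c_j)$ of Theorem~\ref{thm:cbt-compound} at offset $c_j = 1/\nmin$. This offset is admissible because $1/\nmin \ge 1/n_j$. For every $\ell$ we have $1/\nmin \ge 1/n_\ell$, so each control-winner term in the numerator of \eqref{eq:cbt-pval-equal-weights} is at least the corresponding term in \eqref{eq:cbt-pval-thm}. The offset and the $m\ind\{W_{j,0}=0\}$ terms coincide. Hence $\tilde p_j^\ew \ge \tilde p_j(1/\nmin)$ almost surely. Therefore $\sum_{j\in\cH_0}\PP(\tilde p_j^\ew\le t)\le \sum_{j\in\cH_0}\PP(\tilde p_j(1/\nmin)\le t)\le mt$, and no new probabilistic work is needed.

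\emph{Reduction to a stopping-time rule.} Next I would repeat the argument of Proposition~\ref{prop:cbt-bh-equivalence} (equivalently Proposition~\ref{prop:sss-compound-bh}) with all weights $1/n_\ell$ replaced by $1/\nmin$. This shows that $\hat R^\bh(\tilde p^\ew)$ equals $\{j: W_{j,0}=1,\ S_{j,0}\ge T\}$. Here $T$ is the smallest $t\in\cS$ at which
\[
\frac{\frac{1}{\nmin}\bigl(1+U(t)\bigr)}{1\vee R(t)}\le\alpha ,
\]
where $U(t)$ counts control winners with score $\ge t$ and $R(t)$ counts test winners with score $\ge t$. I would then restrict $U$ to null competitions, writing $U_0(t)\le U(t)$, and let $V(t)$ be the number of null test winners with score $\ge t$. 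This gives
\[
\fdp\le \frac{V(T)}{1\vee R(T)}\le \alpha\,\nmin\,\frac{V(T)}{1+U_0(T)},
\]
so it suffices to prove $\EE\bigl[V(T)/(1+U_0(T))\bigr]\le 1/\nmin$.

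\emph{Supermartingale step (main obstacle).} I would condition on the competition-winning scores $M_\ell=\max_k S_{\ell,k}$, which are distinct by assumption, and on everything in the nonnull competitions. For $\ell\in\cH_0$, let $B_\ell=\ind\{W_{\ell,0}=1\}$. Assumption~\ref{assu:individual} states that, given $\cG_{-\ell}$, the scores in competition $\ell$ are exchangeable. It follows that, given $\cG_{-\ell}$ and the unordered multiset of competition $\ell$, the position of the maximum is uniform over the $n_\ell+1$ units. Thus $B_\ell\sim\Bern(\rho_\ell)$ with $\rho_\ell=1/(n_\ell+1)\le 1/(\nmin+1)$, conditionally independent of everything else, and iterating over $\ell$ gives mutual conditional independence of $(B_\ell)_{\ell\in\cH_0}$. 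This is the step I would check most carefully: the independence must hold conditional on the maxima and the nonnull data simultaneously, and it should follow by applying the exchangeability one competition at a time.

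Ordering the null competitions by $M_\ell$ and revealing the $B_\ell$ from the smallest score upward, $T$ is a stopping time in the reversed filtration. The process $V(t)/(1+U_0(t))$ is then exactly the object in the supermartingale lemma of \citet{barber2015controlling} for independent Bernoullis with heterogeneous success probabilities $\rho_\ell\le\rho_{\max}$. That lemma gives
\[
\EE\Bigl[\frac{V(T)}{1+U_0(T)}\Bigr]\le \frac{\rho_{\max}}{1-\rho_{\max}}=\frac{1}{\nmin}.
\]
Combining this with the display in the previous paragraph yields $\fdr\le\alpha$.
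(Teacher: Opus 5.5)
Your proposal is correct. The compound-validity step is the same as the paper's: the main text notes $\tilde p_j^\ew\ge\tilde p_j(1/\nmin)$, and the appendix proof of the general-weight version, Proposition~\ref{prop:cbt-pval-equal-weights-general}, compares with $\tilde p_j(1/n_j)$ on the event of a test win, which is the same idea.

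The FDR bound is where you take a genuinely different route.

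\emph{The paper's route.} It uses no filtration or stopping time. It reuses the leave-one-out machinery of the counting bound (Proposition~\ref{prop:cbt-counting-fdr-bound}):
\begin{enumerate}
\item it replaces $W_{j,0}\ind\{\rho_j\le\tau\}/\hat V(\tau)$ by a version built from the $\cI_j$-measurable perturbed threshold $\tau_j^-$;
\item it swaps $W_{j,0}$ for $(1-W_{j,0})/n_j$ via the one-competition balance identity $\EE[W_{j,0}\mid\cI_j]=\frac{1}{n_j}\EE[1-W_{j,0}\mid\cI_j]$;
\item it shows that, with equal weights, all $\tau_j^-$ for $j\in\cJ$ coincide. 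This gives the deterministic bound $\Gamma\le\sum_{j\in\cJ}(1/\nmin)/(|\cJ|w_0)=1/(\nmin w_0)$.
\end{enumerate}
That argument needs only the per-competition balance conditional on $\cI_j$. It also shows exactly why equal weights matter: in Theorem~\ref{thm:inflated-fdr-control} the thresholds coincide only within classes of equal $n_j$.

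\emph{Your route.} You use the joint conditional independence of the null winner indicators given the winning scores, which the paper also proves inside Proposition~\ref{prop:cbt-logarithmic-fdr-bound}. You then hand the key step to the knockoff optional-stopping lemma. This is shorter and links the result directly to the SSS+ proof. It also makes clear that heterogeneity enters only through the win probabilities $\rho_\ell\le 1/(\nmin+1)$, which can only help.

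Two points need tightening.

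\emph{The reversed filtration.} It is not the one generated by revealing the $B_\ell$ from the bottom. To decide whether to stop at the $j$th-ranked null competition you need $V$ and $U_0$ at that level. So, indexing $B_{(i)}$ in decreasing order of winning score, the filtration must be the conditioning $\sigma$-algebra augmented by $\sum_{i\le j}B_{(i)}$ and $B_{(j+1)},\ldots,B_{(m_0)}$. Also, $T$ ranges over $\cS$, which contains non-winning null scores that you did not condition on. You should note, as in Lemma~\ref{lem:duality-of-cbt}, that $J=\#\{\ell\in\cH_0:M_\ell\ge T\}$ depends only on the winning scores and the indicators.

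\emph{Heterogeneous success probabilities.} The basic martingale argument of \citet{barber2015controlling} uses exchangeability of the indicators given their partial sum, and this fails when the $\rho_\ell$ differ. Either cite the heterogeneous version in \citet{barber2020robust}, or reduce to the i.i.d.\ case by a coupling:
\begin{itemize}
\item write $1-B_\ell=\max(B'_\ell,Q_\ell)$, with $B'_\ell\sim\Bern(1-\rho_{\max})$ i.i.d.\ and independent $Q_\ell\sim\Bern(1-\rho_\ell/\rho_{\max})$;
\item discard the competitions with $Q_\ell=1$, which are all control wins. This can only increase $V/(1+U_0)$, and conditional on $(Q_\ell)$ it preserves the reversed stopping-time property for the remaining i.i.d.\ subsequence.
\end{itemize}
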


The compound validity of $\tilde p_1^\ew,\dots,\tilde p_m^\ew$ follows directly
from Theorem~\ref{thm:cbt-compound}, since $\tilde p_j^\ew  \ge \tilde p_j(1/\nmin )$ for all $j\in [m]$. When all values in $\bm n$ are large and close to $\nmin$,
we have $\tilde p_j^\ew  \approx \tilde p_j(1/\nmin ) $ and consequently expect 
$\fdr \bigl( R^\bh(\tilde p_1^\ew,\dots,\tilde p_m^\ew) \bigr) \approx \fdr \bigl( R^\bh(\tilde p_1(1/\nmin ),\dots,\tilde p_m(1/\nmin )) \bigr)$.
On the other hand, when $\nmin$ is not large, or when $\bm n$ 
varies widely, we expect $\tilde p_j^\ew$ to 
induce a more conservative procedure; this is verified by an empirical comparison of the two 
competition-based compound p-values in Appendix~\ref{sec:appd-deferred-sims}. 
We maintain our initial proposal as the primary 
approach and provide Proposition~\ref{prop:cbt-pval-equal-weights}
as a heuristic for its nominal level FDR-controlling behavior 
when the competition sizes are nearly equal.

A related approach is to subsample $\nmin$ controls 
from each individual control group via independent randomness, then compute
the SSS+ compound p-values~\eqref{eq:sss-compound-pval} on the ensuing equal-size competitions; conditional on 
the randomness in the subsampling step, the resulting p-values are compound valid.
We detail both approaches and prove their nominal FDR control in Appendix~\ref{sec:appd-equal-weights}.

\paragraph{FDR inflation bounds for general $c_0$.} 
The previous FDR bounds are given for the default offset $c_0 = 1/\nmin$. 
Here, we briefly discuss a separate FDR upper bound for  general $c_0 \ge 1/\nmin$.

\begin{proposition}
    \label{prop:kr-inflated-fdr-control-common-offset}
    Suppose that the conditions of Theorem~\ref{thm:cbt-compound} hold.
    For any $c_0\ge 1/\nmin$ and prespecified level
    $\alpha\in(0,1)$,
    \begin{equation}
        \fdr\bigl(\hat R^\bh(\tilde p_1(c_0),\dots,\tilde p_m(c_0))\bigr)
        \le
        \alpha c_0\nmin^2
        \int_0^1
        (1-z)^{c_0\nmin-1}
        \frac{\log\bigl(1/(1-z)\bigr)}{\log(1+\nmin z)}
        \,dz.
    \end{equation}
    In the case $c_0 = 1/\nmin$, the inflation factor becomes $\int_0^1 \frac{ \nmin \log(1/(1-z)) }{\log(1 + \nmin z)}\,dz $; moreover, the inflation factor converges to $1$ as $c_0 \to \infty$.
\end{proposition}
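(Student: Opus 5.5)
Via Proposition~\ref{prop:cbt-bh-equivalence}, we may work with $\hat R^\cbt$ and write its FDP in supermartingale form. Set
\[
V(t)\coloneqq\sum_{j\in\cH_0}\ind\{W_{j,0}=1,S_{j,0}\ge t\},\qquad
U(t)\coloneqq\sum_{\ell\in\cH_0}\frac{1}{n_\ell}\sum_{k=1}^{n_\ell}\ind\{W_{\ell,k}=1,S_{\ell,k}\ge t\}.
\]
Since the control winners in nonnull competitions only enlarge the numerator, the definition of $T$ gives
\[
\fdp\le\alpha\,\frac{V(T)}{c_0+U(T)}.
\]
So it suffices to bound $\EE[V(T)/(c_0+U(T))]$ by the stated integral.

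\textbf{Conditioning and filtration.} Condition on the unordered winning score sets of each null competition. By Assumption~\ref{assu:individual} and no ties, the null winner of competition $j$ is then the test unit with probability $1/(n_j+1)$ and a given control with probability $n_j/(n_j+1)$. These identities are independent across $j\in\cH_0$ given everything else. Order the null competitions by decreasing winning score and reveal them one at a time. The threshold $T$ is a stopping time in the reverse (decreasing-$t$) filtration, as in the knockoff/SSS+ argument of \citet{barber2015controlling}.

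\textbf{Reduction to Bernoulli variables.} Let $B_j=\ind\{W_{j,0}=1\}$, so $B_j\sim\Bern(q_j)$ with $q_j=1/(n_j+1)$. Each null competition contributes $B_j$ to $V$ and $(1-B_j)/n_j$ to $U$. We then need to bound
\[
\EE\!\left[\sup_{s}\frac{\sum_{i\le s}B_{(i)}}{c_0+\sum_{i\le s}(1-B_{(i)})/n_{(i)}}\right],
\]
or at least its value at the stopping time. Because $n_j\ge\nmin$, we have $(1-B_j)/n_j$ weighted against $B_j$, with odds ratio $n_j q_j/(1-q_j)=1$. Hence $X_s\coloneqq V_s/(c_0+U_s)$ is not an exact supermartingale when the $n_j$ vary. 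The plan is instead to dominate it by a worst case in which every competition is replaced by the one maximizing a one-step growth factor.

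\textbf{Worst-case one-step bound.} Compute the conditional increment of $X_s$ for a single step with success probability $1/(n+1)$ and failure weight $1/n$. Bound it uniformly in $n\ge\nmin$. Then apply an optional-stopping/maximal inequality to obtain
\[
\EE[X_T]\le\sup_{\text{paths}}\EE[X_\infty\text{-type functional}].
\]
This should reduce to a geometric or negative-binomial computation with parameter $\nmin$. Writing it via the identity
\[
\frac{1}{c_0+u}=\int_0^1 c_0\,\nmin\,(1-z)^{c_0\nmin-1}\,(1-z)^{\nmin u}\,dz,
\]
after rescaling, the expectation of $V(1-z)^{\nmin U}$ should be controllable stepwise. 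The factor $\log(1/(1-z))/\log(1+\nmin z)$ should arise by optimizing over the number of steps: a maximum over $s$ of $s\rho^{s}$-type terms, with $\rho$ the per-step multiplier $\frac{1}{n+1}+\frac{n}{n+1}(1-z)^{\nmin/n}$, bounded via $n\ge\nmin$.

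\textbf{Limits.} For $c_0=1/\nmin$ we substitute directly, giving $c_0\nmin^2=\nmin$ and exponent $0$. For $c_0\to\infty$ the weight $c_0\nmin(1-z)^{c_0\nmin-1}$ concentrates near $z=0$, where $\nmin\log(1/(1-z))/\log(1+\nmin z)\to1$; the limit $1$ follows by dominated convergence.

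\textbf{Main obstacle.} The hardest step is the worst-case step: handling the stopping time and heterogeneous $n_j$ simultaneously to land exactly on the $\log(1+\nmin z)$ denominator. I would first try bounding the per-$z$ quantity by $\sup_s s\rho^s$ with $\rho$ maximized at $n=\nmin$, and verify that this produces the claimed logarithm.
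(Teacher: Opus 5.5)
Your reduction and conditioning agree with the paper. You have $\fdp\le\alpha V(T)/(c_0+U(T))$, and given the winning scores the null test-winner indicators are independent with $\PP(X_r=1\mid\cF_{r-1})=w_r/(1+w_r)$, where $X_r=W_{j_r,0}$ and $w_r=1/n_{j_r}\le1/\nmin$. The gap is in the step you flag as the main obstacle, and the route you sketch for it does not work.

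First, in the filtration you describe (revealing null competitions in decreasing order of winning score), $T$ is not a stopping time. Whether the minimum defining $T$ has been reached depends on the counts at lower scores. $T$ is a stopping time only for the backward filtration of the SSS+ argument, and there, as you note, the ratio stops being a supermartingale once the $n_j$ differ, so optional stopping is unavailable.

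Second, bounding $\EE[V_s(1-z)^{\nmin U_s}]$ step by step and then maximising over $s$ for each $z$ controls $\sup_s\EE[\,\cdot\,]$, not $\EE[\sup_s\,\cdot\,]$ or the value at $T$. So it is not an upper bound on the FDR. It would not give an $m$-free answer anyway. With $n_j\equiv\nmin=n$ and $c_0=1/n$, one has $\EE[V_s(1-z)^{nU_s}]=sq\rho^{s-1}$ with $q=1/(n+1)$ and $\rho=1-nz/(n+1)$. Its maximum over $s\le m_0$ is of order $1/(nz)$ for $z\gg1/m_0$, so the resulting $z$-integral is of order $\log m_0$.

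Third, your integral identity is off by a factor $c_0$: its right-hand side equals $c_0/(c_0+u)$. The correct version is $1/(c_0+u)=\int_0^1\nmin(1-z)^{\nmin(c_0+u)-1}\,dz$. The weight $c_0\nmin^2(1-z)^{c_0\nmin-1}$ in the target does not come from such a representation.

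The missing idea is to control $\sup_t Q_{c_0}(t)$, with $Q_{c_0}(t)=V(t)/(c_0+C(t))$, over \emph{all} thresholds via a tail bound at each level $x>1$, and then integrate the tails.
\begin{itemize}
\item With $\bar w=1/\nmin$, let $\theta=\theta_{\bar w}(x)>0$ solve $e^{-\bar wx\theta}+\bar we^{\theta}=1+\bar w$.
\item The map $w\mapsto we^\theta+e^{-x\theta w}-1-w$ is convex and vanishes at $w=0$ and $w=\bar w$. Hence the one-step factor $(w_re^\theta+e^{-x\theta w_r})/(1+w_r)$ is at most $1$ for every $w_r\in(0,\bar w]$.
\item Therefore $\exp\{\theta\sum_{\ell\le r}X_\ell-x\theta\sum_{\ell\le r}w_\ell(1-X_\ell)\}$ is a nonnegative supermartingale in the forward filtration. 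This is where the heterogeneity is absorbed, with $n=\nmin$ the worst case, as you guessed.
\item Ville's inequality (the paper cites Katsevich and Ramdas, Lemma~1, Case~3) gives $\PP(\sup_tQ_{c_0}(t)\ge x)\le e^{-c_0x\theta_{\bar w}(x)}$. This makes the data-dependence of $T$ irrelevant, and yields $\fdr\le\alpha\{1+\int_1^\infty e^{-c_0x\theta_{\bar w}(x)}\,dx\}$.
\item Substituting $z=\bar w(e^\theta-1)$ turns the root equation into $e^{-c_0x\theta}=(1-z)^{c_0\nmin}$ and $x=\nmin\log(1/(1-z))/\log(1+\nmin z)$. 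This is the source of the $\log(1+\nmin z)$ denominator. An integration by parts then produces exactly the stated integral.
\end{itemize}

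Your handling of the case $c_0=1/\nmin$ and of the limit $c_0\to\infty$ is fine.
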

Unfortunately, when $c_0 = 1/\nmin$, 
this bound is  much  looser than that of 
Theorem~\ref{thm:inflated-fdr-control},
and thus the practicality of Proposition~\ref{prop:kr-inflated-fdr-control-common-offset} is currently limited. 
In Section~\ref{sec:shared_cbt}, we will see that the 
analogous compound p-values for competitions with shared controls 
naturally require much larger common offsets. Therefore,
we state Proposition~\ref{prop:kr-inflated-fdr-control-common-offset}  as a precursor to its more practical counterpart 
Proposition~\ref{prop:shared-cbt-ville-fdr-control}
in Section~\ref{sec:shared_cbt}. 
We provide the proof of 
Proposition~\ref{prop:kr-inflated-fdr-control-common-offset} and discuss its connection to the high-probability FDP bounds studied by~\citet{katsevich2020simultaneous} in Appendix~\ref{sec:appd-deferred-fdr-bounds}.

\paragraph{Establishing FDR control with compound e-values.}
As an alternative way of establishing nominal FDR control, we can instead turn to compound e-values and the e-BH procedure~\citep{wang2022false}. 
As a preliminary to the compound e-value construction, we first define a perturbed version of the threshold $T$ in~\eqref{eq:cbt-rejection-set}  with the default offset $c_0=1/\nmin$. For each $j\in[m]$,  write $S_{j,(0)}\coloneqq\max_{0\le k\le n_j}S_{j,k}$ and define 
\begin{equation}
    \label{eq:cbt-fdp-plus}
        T_j^+
        \coloneqq
        \min\bigg\{t\in\cS:
        \frac{
        \frac{1}{\nmin}
        +\frac{1}{n_j}\ind\{S_{j,(0)}\ge t\}
        +\sum_{\ell\ne j}\frac{1}{n_\ell}\sum_{k=1}^{n_\ell}
        \ind\{W_{\ell,k}=1,\ S_{\ell,k}\ge t\}
        }{
        1\vee\sum_{\ell\ne j}
        \ind\{W_{\ell,0}=1,\ S_{\ell,0}\ge t\}
        }
        \le\alpha\bigg\}.
\end{equation} 
The perturbed FDP estimator in the definition of $T_j^+$ treats the $j$th 
competition as always having a control winner. This inflates the 
estimated FDP and induces a more conservative rejection rule $T_j^+$, 
in the sense that $T_j^+ \ge  T$ almost surely, with equality  on the event 
that $S_{j,0}$ is not the winner of competition~$j$. 
Using these hypothesis-specific 
thresholds, we define the following collection of compound e-values. 

\begin{proposition}
    \label{prop:cbt-compound-evalues}
    Suppose that the conditions of Theorem~\ref{thm:cbt-compound} hold. 
    For a prespecified level $\alpha\in(0,1)$, define
    \begin{equation}
        \label{eq:cbt-compound-evalue}
        \tilde e_j
        \coloneqq
        \frac{
        m\ind\{W_{j,0}=1,\ S_{j,0}\ge T_j^+\}
        }{
        \frac{1}{n_j}
        +\sum_{\ell=1}^m\frac{1}{n_\ell}\sum_{k=1}^{n_\ell}
        \ind\{W_{\ell,k}=1,\ S_{\ell,k}\ge T_j^+\}
        }
    \end{equation}
    for $j \in [m]$.  Then $\tilde e_1,\dots,\tilde e_m$ are compound e-values.  
     Further, the rejection set of  e-BH applied to $\tilde e_1, \dots, \tilde  e_m$ is a 
subset of $\hat R^\bh\bigl( (\tilde p_1(1/\nmin), \dots, \tilde p_m(1/\nmin)) \bigr)$.
\end{proposition}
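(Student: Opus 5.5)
Two claims need proof: compound validity, $\sum_{j\in\cH_0}\EE[\tilde e_j]\le m$, and the containment of the e-BH rejection set.

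\emph{Compound validity.} Fix $j\in\cH_0$ and write $D_j(t)$ for the sum over $\ell\neq j$ in the denominator of $\tilde e_j$,
\[
D_j(t)\coloneqq\sum_{\ell\ne j}\frac{1}{n_\ell}\sum_{k=1}^{n_\ell}\ind\{W_{\ell,k}=1,\ S_{\ell,k}\ge t\}.
\]
The key observation is that $T_j^+$ depends on the $j$th competition only through its winning score $S_{j,(0)}$, and not on which unit won. Condition on $\cG_{-j}$ and on the unordered multiset $\{S_{j,0},\dots,S_{j,n_j}\}$. Then $S_{j,(0)}$, $T_j^+$ and $D_j(T_j^+)$ are all fixed.

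By Assumption~\ref{assu:individual} and the absence of ties, the identity of the winner of competition $j$ is uniform over the $n_j+1$ units. Three cases arise.
\begin{itemize}
\item If the test unit wins, which has probability $1/(n_j+1)$, then $\tilde e_j=m\ind\{S_{j,(0)}\ge T_j^+\}/\bigl(1/n_j+D_j(T_j^+)\bigr)$. Here the own-competition sum in the denominator vanishes, since no control won.
\item If a control unit wins, which has probability $n_j/(n_j+1)$, then $\tilde e_j=0$.
\item If $S_{j,(0)}<T_j^+$, then $\tilde e_j=0$ regardless of the winner.
\end{itemize}
Hence
\[
\EE[\tilde e_j\mid\cdot]=\frac{m\,\ind\{S_{j,(0)}\ge T_j^+\}}{(n_j+1)\bigl(1/n_j+D_j(T_j^+)\bigr)}.
\]

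A per-$j$ bound $\le 1$ fails here, so the bound must be aggregated. I would emulate the leave-one-out argument of \citet{ren2024derandomised}. Compare the above with the unconditional quantity
\[
\frac{m\,\ind\{W_{j,k}=1,\ S_{j,k}\ge T_j^+\}/n_j}{1/n_j+D_j(T_j^+)+\frac{1}{n_j}\ind\{\text{control won}\}},
\]
which is the null test contribution re-expressed through control winners. The weight $\frac{1}{n_j}\sum_k$ over control winners has conditional expectation $\frac{1}{n_j+1}\ind\{S_{j,(0)}\ge T_j^+\}$, matching the test-win probability. So $\EE[\tilde e_j]$ equals $\EE$ of $m$ times the $j$th control-winner contribution divided by $1/n_j+\sum_\ell(\cdots)$, both evaluated at $T_j^+$.

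On the event that a control wins competition $j$ we have $T_j^+=T$. Hence the thresholds coincide across $j$ on the relevant events, and summing over $j\in\cH_0$ gives
\[
\sum_j\EE[\tilde e_j]\le m\,\EE\Bigl[\frac{\sum_\ell(\text{control-winner mass at }T)}{\text{same}+\cdots}\Bigr]\le m.
\]
Carefully verifying that $T_j^+$ is invariant when the $j$th winner is swapped between test and control, so that this exchange identity is legitimate, is the main obstacle. If exact invariance fails, I would instead use monotonicity: $T_j^+\ge T$, and the denominator is decreasing in $t$.

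\emph{Containment.} Let $\hat R^{\mathrm e}$ denote the e-BH rejection set, of size $r\ge1$, so that $\tilde e_j\ge m/(\alpha r)$ for every $j\in\hat R^{\mathrm e}$. For such $j$, $W_{j,0}=1$ and $S_{j,0}\ge T_j^+$. Since the test won competition $j$, the $j$th own-competition sum vanishes, and so
\[
1/n_j+D_j(T_j^+)\le\alpha r,
\]
with the left side bounded below by $1/\nmin$. Let $t^*=\min_{j\in\hat R^{\mathrm e}}S_{j,0}$. Each $j\in\hat R^{\mathrm e}$ has $S_{j,0}\ge t^*$ and won its competition, so at $t^*$ the number of winning test scores is at least $r$.

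Because the winning scores are distinct, using $t^*\ge T_j^+$ for the minimizing $j$ and monotonicity, the control-winner mass at $t^*$ plus $1/\nmin$ is at most $\alpha r$. This yields $T\le t^*$, so $\hat R^{\mathrm e}\subseteq\hat R^\cbt$. Proposition~\ref{prop:cbt-bh-equivalence} then identifies $\hat R^\cbt$ with $\hat R^\bh\bigl(\tilde p_1(1/\nmin),\dots,\tilde p_m(1/\nmin)\bigr)$.

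The offset comparison ($1/n_j$ versus $1/\nmin$, with the extra $\frac{1}{n_j}\ind\{S_{j,(0)}\ge t\}$ term) should go through because the numerator of $T_j^+$ already dominates the CBT numerator.
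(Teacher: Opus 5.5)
Your outline follows the paper's route: condition on $\cI_j$ (i.e.\ $\cG_{-j}$ together with the multiset of competition-$j$ scores), trade the test-win probability for control-win weight, use $T_j^+=T$ on $\{W_{j,0}=0\}$, and sum over nulls. However, the exchange step, which is the crux, is written incorrectly and fails as stated. Your displayed comparison quantity has denominator $1/n_j+D_j(T_j^+)+\frac{1}{n_j}\ind\{\text{control won}\}$, which counts competition $j$ twice. Conditional on $\cI_j$, its sum over $k$ has expectation $m\ind\{S_{j,(0)}\ge T_j^+\}/\bigl((n_j+1)(2/n_j+D_j(T_j^+))\bigr)$. This is smaller than your (correct) formula for $\EE[\tilde e_j\mid\cI_j]$, strictly so whenever $S_{j,(0)}\ge T_j^+$. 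So the claimed equality is false, and the inequality you actually get points the wrong way for an upper bound. The swap must leave the $\cI_j$-measurable denominator $1/n_j+D_j(T_j^+)$ untouched.

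The reason the correct swap works is this. Let $C(t)\coloneqq\sum_{\ell=1}^m\frac{1}{n_\ell}\sum_{k=1}^{n_\ell}\ind\{W_{\ell,k}=1,\ S_{\ell,k}\ge t\}$ be the full weighted control-winner count. On $\{W_{j,0}=0,\ S_{j,(0)}\ge T_j^+\}$ we have $T_j^+=T$, and the $j$th control winner contributes exactly $1/n_j$ to $C(T)$, so $1/n_j+D_j(T_j^+)=C(T)$. In other words, the offset $1/n_j$ in $\tilde e_j$ is a placeholder for that control winner. If the control wins below $T_j^+$, the contribution is zero at both thresholds. Hence
\[
\sum_{j\in\cH_0}\EE[\tilde e_j]
=m\,\EE\biggl[\frac{\sum_{j\in\cH_0}\frac{1}{n_j}\sum_{k=1}^{n_j}\ind\{W_{j,k}=1,\ S_{j,k}\ge T\}}{C(T)}\biggr]\le m,
\]
with one common denominator and no extra offset.

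Relatedly, the ``main obstacle'' you name is not an obstacle. You already noted that $T_j^+$ depends on competition $j$ only through $S_{j,(0)}$, so it is $\cI_j$-measurable. What actually needs a line is $T_j^+=T$ on $\{W_{j,0}=0\}$. There the two FDP estimates coincide for every $t$: $\frac{1}{n_j}\ind\{S_{j,(0)}\ge t\}$ is then exactly competition $j$'s term in $C(t)$, and competition $j$ contributes no test winner to either denominator.

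Your containment argument also contains a false step. When $n_j>\nmin$, $1/n_j+D_j(T_j^+)$ need not be at least $1/\nmin$, so the e-BH cutoff $1/n_j+D_j(T_j^+)\le\alpha r$ does not yield $1/\nmin+C(t^*)\le\alpha r$. Neither the cutoff nor $t^*$ is needed. Any $j$ rejected by e-BH has $\tilde e_j>0$, hence $W_{j,0}=1$ and $S_{j,0}\ge T_j^+$. Moreover $T_j^+\ge T$ always. For each $t$, the ratio defining $T_j^+$ has numerator at least $1/\nmin+C(t)$, since competition $j$ has at most one control winner, with score $S_{j,(0)}$. Its denominator is at most $1\vee\sum_{\ell=1}^m\ind\{W_{\ell,0}=1,\ S_{\ell,0}\ge t\}$. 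So every $t$ admissible for $T_j^+$ is admissible for $T$. Thus $S_{j,0}\ge T$, so $j\in\hat R^\cbt$, and Proposition~\ref{prop:cbt-bh-equivalence} finishes. Your closing remark points at exactly this comparison; it should be the whole argument.
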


The incorporation of $T_j^+$ rather than $T$ is the key to establishing the compound validity of the e-values.  The fact that $\tilde{e}_1,\ldots,\tilde{e}_m$ are compound e-values means that applying e-BH controls the FDR at the nominal level \citep{wang2022false}.  On the other hand, the rejection set inclusion in the final part may be strict, as
$T_j^+ > T$ can happen with positive probability. 
In our simulation settings, we find that  the e-BH approach attains virtually identical power to the BH approach; see Appendix~\ref{sec:appd-deferred-sims}. 

\subsection{Relaxing the winner prerequisite}
\label{sec:qfinalist}

Thus far, we have only considered compound p-values built upon
competition winners.  On the one hand, if the test score beats all its control competitors, then this suggests evidence against the null hypothesis.  On the other hand, there are certain scenarios where this test winner prerequisite for 
rejecting $H_j$ can be detrimental to the power. The clearest case is 
when $n_j$ is very large; in this setting, 
there is a ``large-competition'' phenomenon where 
it is extremely likely that some negative control scores 
higher than the test score by pure chance, even when $H_j$ is false.  This phenomenon has been observed elsewhere in the competition-based testing literature; see, e.g.,  the ``TopGap'' 
phenomenon of~\citet{gui2026adaptive}.

To address this issue, prior work on competition-based testing~\citep[e.g.][]{emery2019controlling}
has allowed for more general competition outcomes that 
track the placement of the test score to 
measure its evidence against its null. Importantly, 
this approach enables the rejection of $H_j$ when 
$S_{j,0}$ is large relative to most of its negative controls, but is not necessarily the competition winner.
We generalize these ideas by introducing a {\em $q$-finalist} condition 
when comparing each test score to its negative controls. 
Specifically, we call $S_{j,k}$ a $q_j$-finalist
if it scores in the top $q_j \in [n_j+1]$ in competition $j$, i.e.~if $1 + \sum_{a=0}^{n_j} \mathbbm{1}\{S_{j,a} > S_{j,k}\} \leq q_j$.  In order to construct corresponding compound p-values, it is convenient to introduce $q$-finalist indicator variables $W_{j,k}^{(q_j)} \coloneqq \ind\bigl\{1+\sum_{a=0}^{n_j}
\ind\{S_{j,a}>S_{j,k}\}\le q_j\bigr\}$.  
\begin{proposition}
    \label{prop:qfinalist-cbt-compound}
    For $j\in[m]$, let $q_j\in[n_j+1]$ and $c_j \ge 1/n_j$.
    Denote $\bm q \coloneqq (q_1,\dots,q_m)$ and suppose that the conditions of Theorem~\ref{thm:cbt-compound} hold. Define
    \begin{equation}
        \label{eq:qfinalist-cbt-pval}
        \tilde p_j^{(\bm q)}(c_j)
        \coloneqq
        \biggl[
        \frac{
            c_j
        +\sum_{\ell=1}^m\frac{1}{n_\ell}\sum_{k=1}^{n_\ell}
        \ind\{W_{\ell,k}^{(q_\ell)}=1,\ S_{\ell,k}\ge S_{j,0}\}
        +m\ind\{W_{j,0}^{(q_j)}=0\}
        }{m}
        \biggr]\wedge 1,
        \qquad j\in[m].
    \end{equation}
    Then $\tilde p_1^{(\bm q)}(c_1),\dots,\tilde p_m^{(\bm q)}(c_m)$ are compound p-values.
\end{proposition}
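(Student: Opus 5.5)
We prove the bound $\sum_{j\in\cH_0}\PP(\tilde p_j^{(\bm q)}(c_j)\le t)\le mt$ for each fixed $t\in[0,1)$ (the case $t\ge1$ is trivial). We follow the same template we would use for Theorem~\ref{thm:cbt-compound}, with winners replaced by $q$-finalists; indeed, Theorem~\ref{thm:cbt-compound} is the special case $q_j\equiv 1$. Since $t<1$, the event $\{\tilde p_j^{(\bm q)}(c_j)\le t\}$ forces $W^{(q_j)}_{j,0}=1$. Write $F(s)\coloneqq\sum_{\ell}\frac{1}{n_\ell}\sum_{k\ge1}\ind\{W^{(q_\ell)}_{\ell,k}=1,S_{\ell,k}\ge s\}$, which is nonincreasing in $s$. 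Let $\tau\coloneqq mt$. Then $\tilde p_j\le t$ is equivalent to $W^{(q_j)}_{j,0}=1$ and $c_j+F(S_{j,0})\le \tau$. Since $c_j\ge 1/n_j$, this implies $\frac{1}{n_j}+F(S_{j,0})\le\tau$.

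The key device is a swap. Fix a null $j$ and condition on $\cG_{-j}$ together with the unordered multiset $\{S_{j,0},\dots,S_{j,n_j}\}$. By Assumption~\ref{assu:individual}, the label of the test unit is then uniform over the $n_j+1$ positions, and there are no ties. The $q_j$-finalist set $\cQ_j$ of competition $j$ is determined by the multiset and has size $r_j\coloneqq q_j\wedge(n_j+1)$. Moreover, the contributions of all other competitions to $F$ are fixed.

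For a finalist value $s\in\cQ_j$, the quantity $F$ depends on which unit in competition $j$ is the test unit: if the test unit has value $v$, competition $j$ contributes $\frac{1}{n_j}\#\{u\in\cQ_j\setminus\{v\}:u\ge s\}$. Define $G(s)\coloneqq F_{-j}(s)+\frac{1}{n_j}\#\{u\in\cQ_j: u\ge s\}$, the count that treats every finalist as a control. On the event that the test unit is $v=s\in\cQ_j$, we have $\frac{1}{n_j}+F(s)=G(s)$. Hence
\[
\PP(\tilde p_j\le t\mid\cdot)\le \frac{1}{n_j+1}\#\{s\in\cQ_j: G(s)\le\tau\}\le \frac{1}{n_j}\#\{s\in\cQ_j: G(s)\le\tau\}.
\]
In words, this is the conditional expectation of $\frac{1}{n_j}\#\{\text{control finalists } s \text{ in competition } j \text{ with } G(s)\le \tau\}$ plus a correction. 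More cleanly, we bound this by $\EE\bigl[\frac{1}{n_j}\sum_{k\ge1}\ind\{W^{(q_j)}_{j,k}=1,\ \frac{1}{n_j}\ind\{\dots\}+F(S_{j,k})\le\tau\}\bigm|\cdot\bigr]$. This works because each control has probability $n_j/(n_j+1)$ of not being the test unit, which cancels the $1/n_j$ weight against $1/(n_j+1)$.

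Summing over $j\in\cH_0$, we would then compare with a global deterministic count. We order all $q$-finalist control values $s$ across competitions by decreasing score. The total weight $\frac{1}{n_\ell}$ of control finalists at or above the threshold $s^*\coloneqq\inf\{s:\ G(s)\le\tau\}$ is at most $\tau=mt$, because $G$ is exactly the cumulative weighted count. The main obstacle is making this rigorous: the bound must hold simultaneously for all nulls, even though the swap changes $F$ in competition $j$ only. We would therefore try a leave-one-out threshold $s_j^*$ defined via $G$ with competition $j$'s test unit forced to be a control, as in $T_j^+$ in \eqref{eq:cbt-fdp-plus}. We then show that $s_j^*$ is invariant to the test label within competition $j$, so that the swap argument yields
\[
\PP(\tilde p_j\le t)\le\EE\Bigl[\frac{1}{n_j}\sum_{k\ge1}\ind\{W^{(q_j)}_{j,k}=1,S_{j,k}\ge s_j^*\}\Bigr].
\]
Finally, we verify that the resulting sum over $j$ is at most $\EE\bigl[\frac{1}{n_j}\cdot{}\text{cumulative count}\bigr]\le mt$. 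This uses the fact that the weighted count of control finalists strictly above the last admissible level is at most $\tau$, exactly as the argument for Theorem~\ref{thm:cbt-compound} does with $q_j=1$.
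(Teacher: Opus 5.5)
Your per-null step is correct and is essentially the paper's. After reducing to $c_j=1/n_j$, your $G_j(S_{j,k})$ at a control finalist equals $\rho_{j,k}/n_j+C_{\bm q}^{(-j)}(S_{j,k})$, where $\rho_{j,k}$ is the within-competition rank. Exchangeability then gives
\[
\PP\bigl(\tilde p_j^{(\bm q)}\le t\bigr)\le\EE\biggl[\frac{1}{n_j}\sum_{k=1}^{n_j}\ind\Bigl\{W_{j,k}^{(q_j)}=1,\ \frac{\rho_{j,k}}{n_j}+C_{\bm q}^{(-j)}(S_{j,k})\le mt\Bigr\}\biggr],\qquad j\in\cH_0.
\]
The paper gets this as an equality in one line, without the detour through $1/(n_j+1)$.

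The gap is the aggregation over $j\in\cH_0$. You flag it as the main obstacle but leave it unproved, and your proposed remedy targets the wrong issue. Invariance of $s_j^*$ to the test label is automatic, since $G_j$ already treats every finalist of competition $j$ as a control. The real difficulty is that $G_j$ depends on $j$, so there is no level shared by all counted nulls. With $q_j\equiv1$, a counted competition has a control winner and $G_j$ coincides with the global count $F$. For $q_j>1$, however, a counted competition may also have its test unit among the finalists, and then $G_j=F+\frac{1}{n_j}\ind\{S_{j,0}\ge\cdot\}\ne F$. So neither ``compare with a global deterministic count'' at a single $s^*$ nor ``exactly as for $q_j=1$'' is available. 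Your final claimed inequality, which carries a free index $j$, asserts the conclusion rather than proving it. Separately, $\{s:G_j(s)\le\tau\}$ need not be closed, so the condition $S_{j,k}\ge\inf\{s:G_j(s)\le\tau\}$ can count a finalist that is not admissible.

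The missing idea is a pathwise extremal argument over all competitions at once.
\begin{itemize}
\item Let $\cA_t$ be the finite set of pairs $(j,k)$, with $j\in[m]$ and $k\in[n_j]$, such that $W_{j,k}^{(q_j)}=1$ and $\rho_{j,k}/n_j+C_{\bm q}^{(-j)}(S_{j,k})\le mt$.
\item If $\cA_t$ is nonempty, pick $(j^*,k^*)\in\cA_t$ of minimal score $s^*$.
\item Every pair of $\cA_t$ outside competition $j^*$ is a control finalist scoring at least $s^*$, so it is counted by $C_{\bm q}^{(-j^*)}(s^*)$.
\item The pairs of $\cA_t$ inside competition $j^*$ are among the $\rho_{j^*,k^*}$ units there scoring at least $s^*$.
\end{itemize}
Hence
\[
\sum_{j=1}^m\frac{1}{n_j}\sum_{k=1}^{n_j}\ind\{(j,k)\in\cA_t\}\le\frac{\rho_{j^*,k^*}}{n_{j^*}}+C_{\bm q}^{(-j^*)}(s^*)\le mt,
\]
where the last inequality is the admissibility of $(j^*,k^*)$ itself. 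Sum your per-null bound over $j\in\cH_0$, dominate it by this pathwise sum, and take expectations to get $\sum_{j\in\cH_0}\PP(\tilde p_j^{(\bm q)}\le t)\le mt$. The bound is read off from the extremal competition $j^*$'s own constraint, not from a threshold common to all nulls.
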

Similar to the winner-only construction, the contribution of the event $\{W_{\ell,k}^{(q_\ell)}=1, S_{\ell, k} \ge S_{j,0}\}$ to the numerator of our p-values is weighted by $1/n_\ell$, due to the symmetry under $H_\ell$ of the test and control units that achieve $q_\ell$-finalist status. 
The ability to choose $q_j$ provides flexibility in the number of high-scoring competitors employed in the second stage of the competition-based testing procedure (Figure~\ref{fig:negative-control-paradigms}).
One straightforward choice is to take $q_j \propto n_j$, which selects scores that rank within 
some top percentile of their competition; another is to  take $q_j =1$ for all $j\in[m]$, coinciding 
with the initial proposal.

We conclude our treatment of $q$-finalist compound p-values by providing a bound on the FDR of BH for equal-size competitions. %

\begin{proposition}
    \label{prop:fdr-qfinalist}
    Suppose that the conditions of Proposition~\ref{prop:qfinalist-cbt-compound} hold with $n_1=\cdots=n_m=n$ and $q_1=\cdots=q_m=q\in[n]$, and denote $\tilde p_j^{(q)}$ as the compound p-values in~\eqref{eq:qfinalist-cbt-pval} with  $c_j  =1/n$.
    Then, for any prespecified level $\alpha \in (0,1)$,  
    \begin{equation}
        \fdr\bigl(\hat R^\bh(\tilde p_1^{(q)},\dots,\tilde p_m^{(q)})\bigr)
        \le
        \alpha\frac{n}{n-q+1}.
    \end{equation}
\end{proposition}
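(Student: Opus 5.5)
The plan is to reduce the BH rejection set to a stopping-time rule, exactly as in Proposition~\ref{prop:cbt-bh-equivalence}, and then run a knockoff-style supermartingale argument. The only new ingredient is handling the within-competition dependence created by having $q$ finalists instead of one.

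\textbf{Step 1 (reduction to a ratio).} Repeating the proof of Proposition~\ref{prop:cbt-bh-equivalence} with $W^{(q)}$ in place of $W$, the BH set equals $\{j: W^{(q)}_{j,0}=1,\ S_{j,0}\ge T\}$. Here $T$ is the smallest $t\in\cS$ with
\[
\frac{1/n+(1/n)\sum_{\ell,k\ge1}\ind\{W^{(q)}_{\ell,k}=1,S_{\ell,k}\ge t\}}{1\vee\sum_\ell \ind\{W^{(q)}_{\ell,0}=1,S_{\ell,0}\ge t\}}\le\alpha .
\]
Write $V(t)$ for the number of null competitions whose test score is a $q$-finalist with score $\ge t$. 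Write $C(t)$ for the number of control $q$-finalists with score $\ge t$ in null competitions. Then
\[
\fdp \le \frac{V(T)}{1\vee R(T)}\le \alpha n\,\frac{V(T)}{1+C(T)},
\]
so it suffices to show $\EE[V(T)/(1+C(T))]\le 1/(n-q+1)$.

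\textbf{Step 2 (conditioning and filtration).} Condition on all nonnull competitions. For each null $j$, also condition on the unordered set of the $q$ finalist values of competition $j$. By Assumption~\ref{assu:individual} and no ties, the labels of these finalist slots (test or control) remain random. There is at most one test label per competition. Marginally, the test unit is a finalist with probability $q/(n+1)$, and given that, it is uniform over the $q$ slots. Pool all null finalist slots and order them by decreasing score. Define the filtration $\cF_s$ that reveals the labels of the top $s$ slots together with all conditioned information. Since $T$ is determined by counts at levels $\ge t$, it is a stopping time for the process $s\mapsto V/(1+C)$ read from the top down (equivalently, a reverse-time stopping time as in \citet{barber2015controlling}).

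\textbf{Step 3 (key conditional bound; main obstacle).} Fix the next slot to be revealed, belonging to competition $j$, and suppose $r\le q-1$ higher slots of $j$ have already been revealed. If one of them was the test, the conditional probability that this slot is the test is $0$. Otherwise every revealed slot was a control, and by exchangeability the conditional probability is
\[
\frac{1/(n+1)}{1-r/(n+1)}=\frac{1}{n+1-r}\le \frac{1}{n-q+2}=:\rho .
\]
Independence across competitions (again Assumption~\ref{assu:individual}) means labels from other competitions do not change this bound. So each revealed label is a test with conditional probability at most $\rho$, giving conditional odds at most $\rho/(1-\rho)=1/(n-q+1)$.

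\textbf{Step 4 (optional stopping).} I would then reproduce the supermartingale lemma of \citet{barber2015controlling}, which states that $V/(1+C)$, indexed by decreasing threshold, is a supermartingale when the test probabilities are bounded by $\rho$. Because our labels are only conditionally bounded by $\rho$ along the filtration of Step 2 rather than independent, this lemma must be re-verified in that form. The required computation is one step: revealing a slot changes $V/(1+C)$ in a way whose conditional expectation is nonpositive when the odds are at most $\rho/(1-\rho)$. Optional stopping at $T$ then gives
\[
\EE\bigl[V(T)/(1+C(T))\bigr]\le \frac{\rho}{1-\rho}=\frac{1}{n-q+1}
\]
(the initial value bound). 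This is the step where care is needed. Combining this with Step 1 yields $\fdr\le\alpha n/(n-q+1)$.
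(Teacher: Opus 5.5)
Your reduction in Step~1 and the conditional-odds computation in Step~3 are correct. The constant $1/(n-q+1)$ is also the right one: in the paper it arises as the ratio of $\PP(\text{test has a given finalist rank}\mid\cI_j)=1/(n+1)$ to $\PP(W^{(q)}_{j,0}=0\mid\cI_j)=(n-q+1)/(n+1)$.

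The gap is Step~4, which fails in the filtration you chose. When labels are revealed from the top down, the index $\tau$ corresponding to $T$ is the \emph{last} $s$ at which the estimated FDP is at most $\alpha$, because $T$ is a minimum over thresholds. So $\{\tau=s\}$ depends on the labels of slots below $s$, and $\tau$ is not a stopping time. Nor is $M_s=V_s/(1+C_s)$ a supermartingale. If the next slot is a test with conditional probability $\pi$, then
\[
\EE[M_{s+1}\mid\cF_s]-M_s=\frac{\pi}{1+C_s}-\frac{(1-\pi)V_s}{(1+C_s)(2+C_s)},
\]
which equals $\pi>0$ at $s=0$. The initial value is $M_0=0$, not $\rho/(1-\rho)$. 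In this direction you can only bound $\EE[\sup_s M_s]$ through a maximal inequality. That is what the paper does for Proposition~\ref{prop:kr-inflated-fdr-control-common-offset}, and it yields a far larger factor than $n/(n-q+1)$.

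Reversing time, as the Barber--Cand\`es argument actually does, does not rescue it either. There the filtration reveals the labels \emph{below} the threshold and only the counts $V(t),C(t)$ above it. The supermartingale step then requires the lowest remaining slot to be a test with conditional probability at least $V/(V+C)$. This holds when the labels above $t$ are exchangeable given the counts. For $q\ge2$, the at-most-one-test-per-competition constraint breaks this. Suppose a null competition has its test at rank~$2$ below $t$ and its rank-$1$ control above $t$. That slot is then known to be a control, and removing it moves $V/(1+C)$ to $V/C$ deterministically, a strict increase whenever $V\ge1$.

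So you need a different mechanism. The paper uses a leave-one-out perturbation:
\begin{itemize}
\item For each null $j$ it builds a threshold $T_j$ from processes in which $S_{j,(0)}$ is always counted as a test finalist and $S_{j,(1)},\dots,S_{j,(q-1)}$ as control finalists.
\item Then $T_j$ is $\cI_j$-measurable, $T_j\le T$, and $T_j=T$ on $\{j\in\hat R^\bh\}$.
\item Conditioning on $\cI_j$ turns $\ind\{W^{(q)}_{j,0}=1,\,S_{j,0}\ge T_j\}/(1+nC_{\bm q}(T_j))$ into $1/(n-q+1)$ times the analogous ratio with competition $j$'s control finalists above $T_j$ in the numerator.
\item A common-threshold argument, like the one in the counting bound, shows that the sum of these ratios over null competitions is at most $1$.
\end{itemize}
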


Proposition~\ref{prop:fdr-qfinalist} ensures that when $q = \lfloor\gamma n\rfloor$ (the top $\gamma$ fraction of scores per competition), the FDR inflation factor is at most $1/(1-\gamma)$.
This conclusion is similar to the FDR upper bounds of \citet{dohler2018new} and \citet{emery2019controlling} (though these works do not concern compound p-values).

%% file: sections/shared_cbt.tex

%
%
%
%
%

In Section~\ref{sec:cbt}, we exclusively considered competitions 
based on individual control groups whose scores satisfy 
Assumption~\ref{assu:individual}.
If we were to try to form individual control groups from a shared negative control set that has 
fewer than $m$ scores, then we would be unable to 
allocate each hypothesis its own control score.   
A natural solution is to assign to certain test units, 
whose distributions are plausibly identical or similar, 
an allocation of negative controls to share 
among themselves.  This relaxes the requirement of per-hypothesis assignment and allows for powerful multiple testing to be conducted 
even when the number of controls is significantly outnumbered by the number of hypotheses.

As a preliminary to introducing this new {\em block-specific} negative control paradigm, let $B_1,\dots,B_G$ be a partition
of $[m]$, with $b_g\coloneqq|B_g|$, and write $g(j)=g$ when $j\in B_g$. 
Each block $B_g$ is assigned a shared control group 
$C_g\coloneqq(S_{g,1},\dots,S_{g,n_g})$, so that the competition for 
each $H_j$ with $j\in B_g$ compares $S_{j,0}$ against the 
$n_g$ control scores in $C_g$.
\begin{assumption}[Block-specific exchangeability]
    \label{assu:block-specific}
    For each $g \in [G]$, conditional on
    \begin{equation}
        \cG_{-g}^{\blk}
        \coloneqq
        \sigma\bigl(
        \{S_{j,0}:j\in[m]\setminus(B_g\cap\cH_0)\}
        \cup
        \{S_{h,k}:h\in[G]\setminus\{g\},\ k\in[n_h]\}
        \bigr),
    \end{equation}
    the concatenated  null test and control scores $ (S_{j,0})_{j\in B_g\cap\cH_0}\cup(S_{g,k})_{k=1}^{n_g}$ are exchangeable.
\end{assumption}

Under this  assumption, we can construct compound p-values for the 
block-specific competition setting. These p-values follow a similar 
form to the individual compound p-values proposed in Section~\ref{sec:cbt}.
For each $g\in[G]$ and  $j\in B_g$, first define the block-specific winner indicators:
\begin{equation}
    \label{eq:block-winner}
    W_{j,k}^{\blk}\coloneqq 
    \begin{cases}
        \ind\{S_{j,0}=\max(S_{j,0},S_{g,1},\dots,S_{g,n_g})\}, & k=0\\
        \ind\{S_{g,k}=\max(S_{j,0},S_{g,1},\dots,S_{g,n_g})\},
         & k\in[n_g].
    \end{cases}
\end{equation}
Intuitively, each $H_j$ defines its own competition using the shared controls 
and $S_{j,0}$, with $W_{j,k}^\blk$ indicating winner status of this 
specific competition. Although the competitors themselves are shared, 
the competitions are distinct across hypotheses: test units are not 
competing against each other, instead entering their 
own exclusive competition with these shared controls.
We now provide the compound p-value construction below.

\begin{theorem}
    \label{thm:shared-cbt-compound}
    Suppose that Assumption~\ref{assu:block-specific} holds and there are no ties among the scores within each competition. For each $j\in[m]$, define
    \begin{equation}
        \label{eq:shared-cbt-pval}
        \tilde p_j^{\blk}
        \coloneqq
        \left[
        \frac{
        \frac{b_{g(j)}}{n_{g(j)}}
        +\sum_{h=1}^G\sum_{\ell\in B_h}\frac{1}{n_h}\sum_{k=1}^{n_h}
        \ind\{W_{\ell,k}^{\blk}=1,\ S_{h,k}\ge S_{j,0}\}
        +m\ind\{W_{j,0}^{\blk}=0\}
        }{m}
        \right]\wedge 1.
    \end{equation}
    Then $\tilde p_1^{\blk},\dots,\tilde p_m^{\blk}$ are compound p-values.
\end{theorem}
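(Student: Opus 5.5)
We need to show that $\sum_{j\in\cH_0}\PP(\tilde p_j^\blk\le t)\le mt$ for each $t\in[0,1]$. It suffices to handle $t<1$, since for $t=1$ the bound is trivial. The plan is to reduce to a block-by-block statement and then, within a block, run the exchangeability/counting argument that underlies Theorem~\ref{thm:cbt-compound}.

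\textbf{Reduction to one block.} On the event $\{\tilde p_j^\blk\le t\}$ with $t<1$ we must have $W_{j,0}^\blk=1$. Let $D_j(s)$ denote the numerator term $\sum_{h}\sum_{\ell\in B_h}\frac1{n_h}\sum_k\ind\{W^\blk_{\ell,k}=1,S_{h,k}\ge s\}$. Then
\[
\{\tilde p_j^\blk\le t\}=\Bigl\{W^\blk_{j,0}=1,\ \tfrac{b_g}{n_g}+D_j(S_{j,0})\le mt\Bigr\}.
\]
Since every summand of $D_j$ is nonnegative, this event is contained in
\[
\Bigl\{W^\blk_{j,0}=1,\ \tfrac{b_g}{n_g}+D^{(g)}(S_{j,0})+R_g(S_{j,0})\le mt\Bigr\},
\]
where $D^{(g)}(s)=\frac1{n_g}\sum_{\ell\in B_g\cap\cH_0}\sum_k\ind\{W^\blk_{\ell,k}=1,S_{g,k}\ge s\}$ and $R_g(s)$ collects all remaining summands. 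The key observation is that $R_g$ is a nonincreasing function that is measurable with respect to $\cG^\blk_{-g}$, since it involves only other blocks and the nonnull tests of block $g$. Conditioning on $\cG_{-g}^\blk$ therefore fixes a deterministic budget function $u(s)=mt-R_g(s)$, nonincreasing in $s$ from the summand structure, so it is nondecreasing in $-s$. It then suffices to show, for $g\in[G]$,
\[
\sum_{j\in B_g\cap\cH_0}\PP\Bigl(W_{j,0}^\blk=1,\ \tfrac{b_g}{n_g}+D^{(g)}(S_{j,0})\le u(S_{j,0})\Bigm|\cG^\blk_{-g}\Bigr)\le \EE[\text{something}]\le b_g\, t',
\]
with the block contributions summing to at most $mt$.

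\textbf{Within-block counting.} Write $a=|B_g\cap\cH_0|\le b_g$ and let $M=\max_k S_{g,k}$ be the top control. A null test $j$ is a winner iff $S_{j,0}>M$. For each $\ell$ whose test loses, exactly one control wins, namely the one attaining $M$, so the control-winner indicator for competition $\ell$ counts $\ind\{S_{\ell,0}<M\}$ at level $M$. Hence
\[
D^{(g)}(s)=\frac{1}{n_g}\,\#\{\ell\in B_g\cap\cH_0: S_{\ell,0}<M\}\,\ind\{M\ge s\}.
\]
However, for a winning test $j$ we have $M<S_{j,0}$, so $D^{(g)}(S_{j,0})=0$. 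This forces a different accounting: the block's own controls appear only through the offset $b_g/n_g$ and through competing. I would instead compare events via exchangeability of the $a+n_g$ pooled scores. Order them decreasingly, and let $r$ be the number of null tests ranked above $M$. These are exactly the winners, and they are ranked $1,\dots,r$. By exchangeability, conditional on the set of values, the position of the top control satisfies $\PP(r\ge i)=\binom{a}{i}/\binom{a+n_g}{i}$. A winner $j$ at rank $i$ satisfies $S_{j,0}=$ the $i$-th largest pooled value $v_i$, and the event reads $b_g/n_g\le u(v_i)$. Thus the expected count is
\[
\sum_{i}\PP(r\ge i)\,\ind\{b_g/n_g\le u(v_i)\}.
\]
We then need to compare this with the corresponding count that would hold if controls were not shared. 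Here the quantity $\PP(r\ge i)\le (a/(a+n_g))^i$ decays geometrically, and the offset $b_g/n_g$ compensates for the fact that up to $b_g$ winners may appear simultaneously.

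\textbf{Main obstacle.} The hard step is converting this rank expression into the bound $mt$ summed over blocks. My plan is to pair the event with a ``control-at-the-top'' event, in the spirit of Theorem~\ref{thm:cbt-compound}. Swapping the top control with a winning null test via exchangeability shows that the expected number of null winners above a level is at most $(a/n_g)$ times the probability that $M$ lies above that level, and $a/n_g\le b_g/n_g$. One must also check that the contributions to the budget $mt$ from the different blocks combine additively. Because $R_g$ contains the control terms of other blocks, each block's charge must be attributed to its own term, and the offset's scale $b_g/n_g$ is what makes this charging argument close. The cleanest route is probably to bound $\sum_{j\in\cH_0}\ind\{\tilde p_j^\blk\le t\}$ by the largest threshold $s^\ast$ at which the total budget is exceeded. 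This is done by a deterministic argument relating each block's $D$ plus offset at $s^\ast$ to $mt$, followed by an exchangeability-based expectation bound, computed block by block, on the null winners above $s^\ast$.
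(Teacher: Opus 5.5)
There is a genuine gap: the step that combines the blocks is never supplied, and none of the routes you sketch for it works.

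\emph{Deterministic per-block budgets.} Your reduction to per-block bounds of the form $b_g t'$ whose sum is at most $mt$ cannot work with deterministic budgets, because these p-values are valid only on aggregate. Take singleton blocks with $n_g=1$, one null hypothesis, and all other competitions nonnull with test scores that always beat their control. Then the null has $\tilde p_1^{\blk}=1/m$ with probability $1/2$, which exceeds $b_1t=1/m$ at $t=1/m$ once $m\ge 3$.

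\emph{Bounding the losing count by $a$.} Your swap step points at the right identity. However, you then replace the random number of losing null competitions in block $g$ by $a$ and keep only $\PP(M\text{ above the level})$. This discards exactly what aggregation needs. For example, take two independent singleton null blocks with $n_g=1$, $m=2$, $t=1/2$, and block $2$'s scores lying entirely below block $1$'s. Your bound gives $1+\tfrac12>1=mt$, although the true sum is $3/4$.

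\emph{Exchangeability at a global threshold.} Your ``cleanest route'' applies within-block exchangeability above a global threshold $s^\ast$. But $s^\ast$ depends on block $g$'s own scores, through which tests win and through $\max_k S_{g,k}$. It is therefore not invariant under within-block permutations, and the swap is not justified there.

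\emph{Measurability slip.} $R_g$ is not $\cG_{-g}^{\blk}$-measurable, since the control-winner terms of the nonnull competitions in block $g$ depend on $\max_k S_{g,k}$. You should use the leave-block-out count $C^{\blk,(-g)}$ instead. This is legitimate because on $\{W_{j,0}^{\blk}=1\}$ every control winner of block $g$ lies below $S_{j,0}$.

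The missing idea is to apply exchangeability first and a deterministic charging argument second. Keep the exact swapped identity
\[
\PP(\tilde p_j^{\blk}\le t)=\EE\biggl[\frac{1}{n_g}\sum_{k=1}^{n_g}\ind\Bigl\{W_{j,k}^{\blk}=1,\ \frac{b_g}{n_g}+C^{\blk,(-g)}(S_{g,k})\le mt\Bigr\}\biggr],\qquad j\in B_g\cap\cH_0,
\]
which holds because the threshold function involves only data outside block $g$. Sum it over all $j\in[m]$, nulls and nonnulls, and bound the resulting control-side sum pathwise.

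Among the qualifying pairs $(j,k)$, take one with the smallest control score $s^\ast$, lying in block $g^\ast$. The qualifying pairs in block $g^\ast$ contribute at most $b_{g^\ast}/n_{g^\ast}$, since each of its $b_{g^\ast}$ competitions has at most one control winner. Every qualifying pair in another block is a control winner with score at least $s^\ast$, so it is counted in $C^{\blk,(-g^\ast)}(s^\ast)$. Hence the total is at most $b_{g^\ast}/n_{g^\ast}+C^{\blk,(-g^\ast)}(s^\ast)\le mt$, where the last inequality holds because the minimizing pair itself qualifies. Taking expectations finishes the proof.

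This minimum-score charging is what lets the offset $b_g/n_g$ close the argument. It only works if you retain the random control-winner indicators rather than bounding them by $a$.
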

The block-specific 
$\tilde p_j^\blk$ reduces to the individual 
$\tilde p_j$~\eqref{eq:cbt-pval-thm} in the case where 
every block of hypotheses is a singleton. In that setting, the block-specific 
winner indicators coincide with individual competition winner indicators,
and the offset term also matches as $b_{g(j)}=1$.

\subsection{FDR control for block-specific competition-based testing}
\label{sec:shared-cbt-fdr}

In Section~\ref{sec:fdr-cbt}, 
Proposition~\ref{prop:kr-inflated-fdr-control-common-offset}
provides an upper bound on the FDR inflation when a common offset $c_0$ is used, exhibiting only a small inflation when $c_0$ is large.  The block-specific setting is motivated by scenarios where control units
are in short supply relative to test units; in such cases, the analogous
default choice of common offset $\max_{ g\in [G]} b_g/n_g$
will typically be large. For example, in the virtual ST data analysis 
conducted in Section~\ref{sec:application-virtual-st}, 
this common offset ranges approximately from 19.5 to 216.4, depending 
on the algorithm used to form the blocks.
Therefore, we expect a similar FDR bound to be useful 
in the block-specific setting, and provide a similar result below.

\begin{proposition}
    \label{prop:shared-cbt-ville-fdr-control}
    Suppose that the conditions of Theorem~\ref{thm:shared-cbt-compound} hold. 
    Denote
    \[
        \rmax\coloneqq\max_{g\in[G]}\frac{b_g}{n_g}
    \]
    and define the p-value
    \begin{equation}
        \label{eq:shared-cbt-common-offset-pval}
        \tilde p_j^{\blk}(\rmax)
        \coloneqq
        \biggl[
        \frac{
        \rmax
        +\sum_{h=1}^G\sum_{\ell\in B_h}\frac{1}{n_h}\sum_{k=1}^{n_h}
        \ind\{W_{\ell,k}^{\blk}=1,\ S_{h,k}\ge S_{j,0}\}
        +m\ind\{W_{j,0}^{\blk}=0\}
        }{m}
        \biggr]\wedge 1.
    \end{equation}
    Then $\tilde p_1^{\blk}(\rmax),\dots,\tilde p_m^{\blk}(\rmax)$ are compound p-values. Moreover, for any prespecified level $\alpha\in(0,1)$, 
    \begin{equation} \fdr\bigl(\hat R^{\bh} (\tilde p_1^{\blk}(\rmax),\dots,\tilde p_m^{\blk}(\rmax))\bigr)\\
            \le
            \alpha\int_0^1
            \frac{\log\bigl(1/(1-z)\bigr)}{\rmax\log(1+z/\rmax)}
            \,dz. 
    \end{equation}
\end{proposition}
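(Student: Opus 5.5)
Compound validity is immediate from Theorem~\ref{thm:shared-cbt-compound}: since $\rmax \ge b_{g(j)}/n_{g(j)}$ for every $j$, we have $\tilde p_j^{\blk}(\rmax)\ge \tilde p_j^{\blk}$ pointwise, so the aggregate tail bound~\eqref{eq:compound-pval-defn} transfers. The work is the FDR bound, which I would prove along the lines of Proposition~\ref{prop:kr-inflated-fdr-control-common-offset}. That is, I first establish an SSS+-type equivalence, as in Proposition~\ref{prop:cbt-bh-equivalence}, between BH on $\tilde p^{\blk}_j(\rmax)$ and a threshold rule. Write
\[
V(t)=\sum_{j\in\cH_0}\ind\{W^{\blk}_{j,0}=1,\ S_{j,0}\ge t\},
\]
and let $U(t)$ be the weighted control-winner count
\[
U(t)=\sum_h\sum_{\ell\in B_h}\frac{1}{n_h}\sum_{k}\ind\{W^{\blk}_{\ell,k}=1,\ S_{h,k}\ge t\}.
\]
The rejection threshold $T$ is then the smallest $t$ with $(\rmax+U(t))/(1\vee R(t))\le\alpha$. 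Consequently
\[
\fdp\le \alpha\,\frac{V(T)}{\rmax+U(T)}\le \alpha\sup_t \frac{V(t)}{\rmax+U_0(t)},
\]
where $U_0$ restricts $U$ to null hypotheses $\ell\in\cH_0$. It therefore suffices to bound $\EE\sup_t V(t)/(\rmax+U_0(t))$ by the stated integral.

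Next I would reduce to a stopping-time/martingale structure. Process the null competition outcomes in decreasing order of the winning scores, revealing at each step whether the winner was the test unit or a control. Under Assumption~\ref{assu:block-specific}, conditional on the unordered set of scores in block $g$, each null $j\in B_g\cap\cH_0$ wins its own competition with a probability tied to the block's exchangeable structure. The key identity I would aim for is that, for a null in block $g$, the expected increment of $V$ is comparable to $1/n_g$ times the expected increment of the control-winner count. Summing over the $b_g$ hypotheses sharing the controls, the weight $\rmax\ge b_g/n_g$ plays the role of the per-competition offset $c_0$ with $c_0 n_{\min}$ replaced by $\rmax$ after rescaling.

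Working within a block, the shared controls make the competitions dependent. The cleanest route is to condition on the order statistics of the pooled exchangeable set $(S_{j,0})_{j\in B_g\cap\cH_0}\cup C_g$ and note that, descending from the top, every score above the top control is a test winner. For each control above a given null test score, that control is a winner for each of the (up to $b_g$) null competitions whose test score lies below it. This yields a process in which $V$ grows like a Bernoulli/geometric count against $U_0$ with odds ratio controlled by $b_g/n_g\le\rmax$. I would then show that $M(t)=V(t)/(\rmax+U_0(t))$, or a suitable exponential transform, is dominated by a supermartingale in the filtration that reveals outcomes from the top down. Applying Ville's inequality then gives $\PP(\sup_t M\ge x)$ bounded by an explicit tail. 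Integrating that tail produces
\[
\int_0^1\frac{\log(1/(1-z))}{\rmax\log(1+z/\rmax)}\,dz,
\]
matching the form in Proposition~\ref{prop:kr-inflated-fdr-control-common-offset} with $c_0\nmin$ replaced by $\rmax$. The substitution $z\mapsto$ level of the test supermartingale $(1+z/\rmax)^{V}(1-z)^{U_0}$ would produce the $\log$ ratio.

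The main obstacle is the within-block dependence. Because controls are shared, a single high control simultaneously defeats many test units, so $U_0$ jumps by up to $b_g/n_g$ at once and the increments are not independent across $\ell\in B_g$. I would first verify that $(1+z/\rmax)^{V(t)}(1-z)^{U_0(t)}$ is a supermartingale for each $z\in(0,1)$ by a one-step computation conditional on the block's order statistics. This relies on the convexity bound $(1-z)^{b/n}\le 1-zb/n$ for $b/n\le\rmax$, and the case of a large control jump is where $\rmax$ being the maximum ratio is essential. The remaining integration is the routine calculus already handled for Proposition~\ref{prop:kr-inflated-fdr-control-common-offset}.
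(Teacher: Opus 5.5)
Your overall route is the paper's. You reduce to $\EE\bigl[\sup_t V(t)/(\rmax+U_0(t))\bigr]$ and condition on the per-block multisets of null test and control scores. You then reveal from the top down which unit occupies each position, and conclude with Ville's inequality. The paper packages that last step as Lemma~\ref{lem:competition-supermartingale-tail}, with $w_r=s_r/n_{g_r}$, $\pi_r=w_r/(1+w_r)$ and $\bar w=\rmax$. Two details need tightening:
\begin{itemize}
\item The filtration must reveal positions of the per-block order statistics, as in your third paragraph, not competition outcomes ordered by winning score, as in your second. In a block, every null competition whose test loses has the same winning score (the top control's), so all of them are decided by a single event.
\item Only the block's top control wins anything, so $U_0$ jumps once per block, by $s/n_g\le b_g/n_g\le\rmax$, where $s$ is the number of null tests still below it. You also need the block permutations to be conditionally uniform and independent across blocks, so that the cross-block interleaving keeps the one-step probability at $w/(1+w)$.
\end{itemize}

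The genuine gap is in the step you flag as the crux: the test process is mis-normalized.
\begin{itemize}
\item Your process fails the one-step check. The next position is a null test with conditional probability $w/(1+w)$. So $(1+z/\rmax)^{V}(1-z)^{U_0}$ is a supermartingale only if $wz/\rmax+(1-z)^{w}\le 1$ for all $w\in[0,\rmax]$. At $w=\rmax$ this requires $(1-z)^{\rmax}\le 1-z$, which is false when $\rmax<1$.
\item Your supporting inequality fails in the relevant regime. The bound $(1-z)^{b/n}\le 1-zb/n$ is the chord bound for a convex function of $b/n$, so it holds only for $b/n\le 1$. For $b/n>1$, Bernoulli's inequality reverses it, and that is exactly the large-jump regime you single out.
\item Even for $\rmax\ge 1$, where your process is a supermartingale, Ville only gives $\PP\bigl(\sup_t V/(\rmax+U_0)\ge x\bigr)\le (1-z)^{\rmax}$ at $x=\log(1/(1-z))/\log(1+z/\rmax)\ge\rmax$. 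Integrating that tail yields an inflation factor of order $\rmax$, not the stated integral.
\end{itemize}

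The fix is to use $M=(1+z/\rmax)^{V}(1-z)^{U_0/\rmax}$, i.e.\ $e^{\theta V-x\theta U_0}$ with $e^{\theta}=1+z/\rmax$ and $e^{-\rmax x\theta}=1-z$. This is the paper's root equation with $\bar w=\rmax$. The correspondence with Proposition~\ref{prop:kr-inflated-fdr-control-common-offset} is $1/\nmin\mapsto\bar w=\rmax$ together with $c_0=\bar w$, not $c_0\nmin\mapsto\rmax$. With this normalization:
\begin{itemize}
\item The one-step condition becomes $(1-z)^{a}\le 1-za$ with $a=w/\rmax\in[0,1]$. This is exactly where $w\le\rmax$ is used.
\item Set $x(z)=\log(1/(1-z))/\{\rmax\log(1+z/\rmax)\}$. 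On $\{V\ge x(z)(\rmax+U_0)\}$ one has $M\ge 1/(1-z)$, so Ville gives the tail bound $1-z$.
\item Integrating by parts, $1+\int_0^1(1-z)\,x'(z)\,dz=\int_0^1 x(z)\,dz$, which is the stated bound.
\end{itemize}
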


\begin{corollary}
    \label{coro:limsup-fdr-pi2alpha}
        Under the conditions of Theorem~\ref{thm:shared-cbt-compound},
        \[
            \limsup_{\rmax\to\infty}
            \fdr\bigl(\hat R^{\bh} (\tilde p_1^{\blk}(\rmax),\dots,\tilde p_m^{\blk}(\rmax))\bigr)
            \le \frac{\pi^2}{6}\alpha.
        \]  
\end{corollary}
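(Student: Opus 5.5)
The corollary follows from Proposition~\ref{prop:shared-cbt-ville-fdr-control} by evaluating the limit of the inflation factor
\[
I(r) \coloneqq \int_0^1 \frac{\log\bigl(1/(1-z)\bigr)}{r\log(1+z/r)}\,dz, \qquad r=\rmax,
\]
as $r\to\infty$. Since the FDR is bounded by $\alpha I(\rmax)$ for every problem instance, it suffices to show $\limsup_{r\to\infty} I(r)\le \pi^2/6$. In fact I expect $I(r)\to\pi^2/6$.

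The plan is a pointwise limit followed by dominated convergence.

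\emph{Pointwise limit.} For fixed $z\in(0,1)$, the expansion $r\log(1+z/r) = z - z^2/(2r) + O(r^{-2})\to z$ shows that the integrand converges to $\log(1/(1-z))/z$.

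\emph{Domination.} The elementary inequality $\log(1+x)\ge x/(1+x)$ for $x\ge 0$, applied with $x=z/r$, gives
\[
r\log(1+z/r)\ge \frac{z}{1+z/r}\ge \frac{z}{1+1/r}.
\]
For $r\ge 1$ the integrand is therefore bounded by $2\log(1/(1-z))/z$. This function is integrable on $(0,1)$: it tends to $2$ as $z\to0$, and it has only a logarithmic singularity at $z=1$.

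\emph{Evaluation.} Dominated convergence then yields
\[
\lim_{r\to\infty} I(r)=\int_0^1 \frac{-\log(1-z)}{z}\,dz.
\]
Expanding $-\log(1-z)=\sum_{k\ge1} z^k/k$ and integrating term by term (justified by Tonelli, since every term is nonnegative) gives $\sum_{k\ge1}1/k^2=\pi^2/6$, which is $\mathrm{Li}_2(1)$.

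The only delicate point is the domination step; the explicit lower bound on $r\log(1+z/r)$ above handles it. One could instead avoid dominated convergence by using monotonicity: $r\log(1+z/r)$ is increasing in $r$, so $I(r)$ decreases to its limit, and the monotone convergence theorem applies directly. Either route gives $\limsup_{\rmax\to\infty}\fdr\le \alpha\pi^2/6$, as claimed.
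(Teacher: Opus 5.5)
Your proposal is correct and follows the paper's proof essentially step for step. The paper also bounds the FDR by $\alpha$ times the integral in Proposition~\ref{prop:shared-cbt-ville-fdr-control}, uses $\log(1+y)\ge y/(1+y)$ to get $\rmax\log(1+z/\rmax)\ge z/2$ for $\rmax\ge 1$, and applies dominated convergence to obtain $\int_0^1 -\log(1-z)/z\,dz=\sum_{k\ge1}k^{-2}=\pi^2/6$.
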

To verify the utility of the above bound and gain intuition about the rate of convergence to $\pi^2/6\approx 1.645$ in Corollary~\ref{coro:limsup-fdr-pi2alpha}, we may compute that the upper bound on the inflation factor in Proposition~\ref{prop:shared-cbt-ville-fdr-control}  is $2.099$, $1.882$, and $1.743$ when $\rmax = 1$, $2$, and $5$ respectively.  
In general, when $\rmax$ is large, we may achieve nearly nominal FDR control by applying BH to  
$(\tilde p_j^{\blk}(\rmax))_{j\in [m]}$  at the corrected 
level $6\alpha/\pi^2$.  In fact, we find empirically that the 
uncorrected BH procedure generally reports an FDR of 
at most~$\alpha$ in our numerical experiments; see Section~\ref{sec:sims}.  
In our virtual ST and CRISPR applications (Section~\ref{sec:real_data}), the range of inflation factors 
is approximately $(1.647, 1.670)$ and $(1.657,1.766)$, respectively. 
For the virtual ST application where we have access to ground truth gene expression levels, we also find empirically that the FDR is controlled at the nominal level. 

Similar to our results in Section~\ref{sec:cbt}, we note that the compound p-values $\bigl(\tilde p_j^{\blk}(\rmax)\bigr)_{j\in [m]}$ remain 
compound valid if $\rmax$ were instead defined to be a fixed common offset 
greater than or equal to $\max_{g} b_g / n_g$; 
the FDR inflation factor presented in 
Proposition~\ref{prop:shared-cbt-ville-fdr-control} decreases to $\pi^2/6$ as $\rmax\to\infty$.  

%
%

%

%% file: sections/sims.tex

\subsection{Setup and data-generating process}
\label{sec:dgp}

We examine the empirical performance of our proposed methodology 
under a hierarchical Gaussian mixture model inspired by 
the conformal outlier detection 
literature~\citep{bates2023testing, lee2025full}.
Gaussian data are generated in different clusters, with each mixture component corresponding to a null distribution.
The goal is to identify outliers with respect to the 
mixture null distribution, using the negative control 
data points across all clusters.  
The following setup is common to all experiments; more details can be found in Appendix~\ref{sec:appd-details-on-dgp}.

Before any replications, we sample 25 independent cluster  
centers $K_g\sim\unif([-3,3]^{25})$, $g\in[25]$, and write 
$\cK\coloneqq\{K_1,\ldots,K_{25}\}$.  We then sample a $\cN_{25}(0,I)$ random vector, before centering and rescaling it to have mean zero and unit  variance and denoting it by $\bm u\coloneqq(u_1,\ldots,u_{25})$.  Both $\cK$ and $\bm u$ are 
sampled once and held fixed across all experiments. 
For each test unit $j \in [m]$ with cluster label $G_j \in [25]$, we generate
\begin{equation}
    \label{eq:the-dgp}
    X_{j,0} \coloneqq \begin{cases}
        K_{G_j} + \exp(\tau u_{G_j}) Z_{j,0}  & \text{under }H_j,\\
        K_{G_j} + \exp(\tau u_{G_j})\sqrt\eta\, Z_{j,0}& \text{under the alternative},
    \end{cases}
\end{equation}
where $\tau > 0$, $\eta > 1$ and $Z_{1,0},\ldots,Z_{m,0} \stackrel{\iid}{\sim} \cN_{25}(0,I)$. 
The sampling mechanism of $G_j$ varies across simulation settings.  We also generate $Z_{j,1},\ldots,Z_{j,n_j} \stackrel{\iid}{\sim} \cN_{25}(0,I)$ and compute the $k$th negative control for $j$ as for $X_{j,0}$ under $H_j$, but with $Z_{j,0}$ replaced with $Z_{j,k}$.  Thus the distribution of $X_{j,k}$ only depends on $j$ through its cluster label $G_j$. 

Similarly to~\citet{bates2023testing}, we fit a OneClassSVM with default arguments, using the Python package \texttt{scikit-learn}~\citep{scikit-learn}, on an independent held-out training dataset of 2,000 independent observations drawn from the uniform-weighted null mixture distribution. The resulting fitted model   assigns higher non-conformity scores to inputs that are more likely to be anomalous---we denote the score of the unit $X_{j,k}$ as $S_{j,k}$. 

Unless otherwise stated, we set $\tau=0.25$, vary $\eta\in\{2,2.5,\ldots,6\}$
and report the average metrics over 1,000 Monte Carlo replications  
(with error bands of two standard errors displayed where applicable).
Finally, we always apply BH at $\alpha=0.1$ and Bonferroni at $\alpha=0.2$, even when the former only has inflated FDR control guarantees 
as in Sections~\ref{sec:sims-heterog} and~\ref{sec:sims-block}.
For BH, we use the default common-offset version of the competition 
p-values $\bigl(\tilde p_j(1/\nmin)\bigr)_{j\in [m]}$ from~\eqref{eq:cbt-pval-thm} and  $\bigl(\tilde p_j^\blk(\rmax)\bigr)_{j\in [m]}$ from~\eqref{eq:shared-cbt-common-offset-pval}; 
for Bonferroni, we use the individual-offset competition p-values $\bigl(\tilde p_j(1/n_j)\bigr)_{j\in [m]}$ from~\eqref{eq:cbt-pval-thm}.

\paragraph{Baselines.} We compare the proposed competition-based testing (CBT)
methodology with the following existing methods:
\begin{itemize}
    \item {\em Individual empirical p-values.} Each $S_{j,0}$ uses 
    its own individual control group to construct the empirical p-values, 
    as in~\eqref{eq:indiv-pval-defn}.
    \item {\em Pooled empirical p-values.} We pool all negative controls 
    and use the resulting eCDF to construct the empirical p-values, 
    as in~\eqref{eq:pooled-pval-defn}.
    \item {\em Block empirical  p-values.} For competitions 
    with shared controls, 
    we can construct pooled empirical p-values {\em within} each block:
    \begin{equation}
        \label{eq:block-pval-defn}
        p_j^{\blk} = \frac{1 + \sum_{k=1}^{n_{g(j)}} \ind\{  S_{g(j),k}\ge S_{j, 0} \} }{n_{g(j)}+1}.
    \end{equation}
    Under Assumption~\ref{assu:block-specific}, these are superuniform under $H_j$. If we further assume no ties between scores in 
    the same block, then $(p_j^\blk)_{j=1}^m$ satisfies the PRDS condition,
    implying nominal FDR control via BH; see 
    Appendix~\ref{sec:appd-block-pvalues-prds} for further details.
    \item {\em Compound empirical p-values.} When hypotheses 
    have their individual control groups, one can construct 
    the following compound p-values, which are an extension of those proposed by~\citet[][Section~3.4]{barber2025false}:
    \begin{equation}
        \label{eq:rb-pval-defn}
        \tilde p_j^\rb 
        \coloneqq \frac 1 m \sum_{\ell=1}^m
        \biggl(
        \frac{
        \ind\{S_{\ell,0}\ge S_{j,0}\}
        +\sum_{k=1}^{n_\ell}\ind\{S_{\ell,k}\ge S_{j,0}\}
        }{n_\ell+1}
        \biggr).
    \end{equation}
    By examining the original proof (which assumes  $n_1= \cdots = n_m$), we see that these are independent 
    compound p-values (conditional on all hypotheses' eCDFs); 
    thus, BH applied to $(\tilde p_j^\rb)_{j\in[m]}$ attains an FDR of at 
    most $1.93\alpha$. 
\end{itemize}

\subsection{Simulations for equal-size competitions}
\label{sec:sims-homog}

First, we conduct experiments in the equal-size competition setting. 

\paragraph{Additional setup.}

We sample $m=500$ test units, of which $20\%$ are nonnull. 
Each test unit has $n_j=25$ null controls sampled from its own cluster. 
Cluster labels are generated independently, with
null cluster labels distributed as $\unif([25])$ and nonnull labels as either 
$\unif([25])$ or sampled via $\PP(G=g)\propto\exp(u_g)$, 
corresponding to the ``Uniform $K$'' and ``Weighted $K$'' columns.

\paragraph{Results.}
Figure~\ref{fig:homog-n25-unif_and_weighted} presents the FDR, FWER and power curves of the methods above.  All methods control the FDR and FWER at the nominal level (as proved for CBT in Theorem~\ref{thm:inflated-fdr-control} and Proposition~\ref{prop:fwer-compound}), but the CBT compound p-values have greatest power at all values of $\eta$.  The closest competitor in terms of power is the pooled empirical approach, with the compound empirical approach next.  The individual empirical p-values suffer from poor granularity and are almost entirely powerless.

\subsection{Simulations for varying-size competitions}
\label{sec:sims-heterog}

\paragraph{Additional setup.}
We sample $m=500$ tests with $20\%$ nonnulls, drawing cluster labels independently from $\unif([25])$ under both the null and alternative. Each test's control units are sampled from its own cluster. We have two simulation settings for the $n_j$'s: {\em binary}, where $n_{j} \in \{ n_{\text{low}}, n_{\text{high}} \}$, and {\em smooth}, where $n_j$ is distributed as $ n_{\text{low}} + \Bin(n_{\text{high}}-n_{\text{low}}, \rho_g) $ for a cluster-specific probability $\rho_g$.

We take $n_{\mathrm{low}}=10$ and $n_{\mathrm{high}}=50$ and consider two cases:
\begin{enumerate}[1.]
    \item {\em Higher variance clusters have more controls.} In the binary setting, $n_j=n_{\mathrm{high}}$ when $u_g$ is at or above $\mathrm{median}(\bm u)$ and $n_j=n_{\mathrm{low}}$ otherwise; in the smooth setting, $\rho_g=\operatorname{logit}^{-1}(u_g)$.
    \item {\em Lower variance clusters have more controls.} The assignments are reversed in the binary setting; in the smooth setting, $\rho_g=\operatorname{logit}^{-1}(-u_g)$ instead.
\end{enumerate}
We call these Cases~1 and~2 respectively in our figures.   For Case 2, $\bm n$  is randomly generated per replicate; however, we still enjoy existing validity guarantees conditional on $
\bm n$ (and hence unconditionally). Case~1 does not have random $\bm n$.

\begin{figure}[tbp!]
    \centering
    \includegraphics[width=\textwidth,clip]{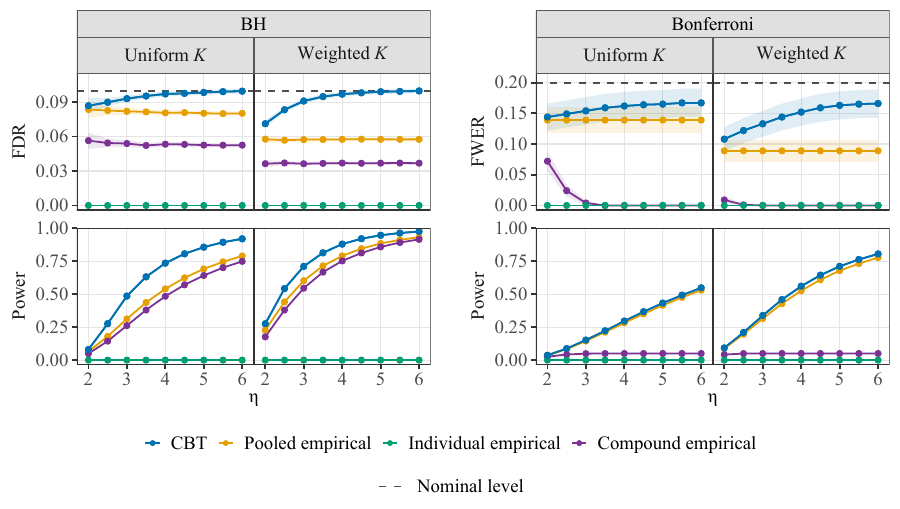}
    \caption{Power, FDR and FWER curves for the equal-size competition experiments.  The left plots apply the BH procedure to the relevant compound (or strict) p-values, while the right plots apply the Bonferroni procedure.  ``Uniform K'' and ``Weighted K'' refer to the cluster label generating mechanisms described in the text.}
    \label{fig:homog-n25-unif_and_weighted}
\end{figure}

\paragraph{Results.}
Figure~\ref{fig:heterog-binary-10-50} visualizes the FDR, FWER and power curves in the binary $n_j\in \{10, 50\}$ setting, while the smooth $n_j \in [10, 50]$ results (which are qualitatively similar) are shown in Figure~\ref{fig:heterog-smooth-10-50} in Appendix~\ref{sec:appd-binom-heterog}.  In Case~2, the pool of controls overrepresents low variance datapoints
and makes higher-variance nulls appear anomalous; this inflates 
the FDR and FWER of the pooled approach well above the nominal level.  The three remaining methods continue to enjoy Type~I error control (both FDR and FWER), but CBT achieves much greater power at all values of $\eta$ than the other two approaches.  On the other hand, in Case~1, the pooling approach is conservative, as the overrepresentation of high-variance nulls in the pooled control set means the nonnull test points from lower-variance clusters seem less 
anomalous. In this case, all four methods have nominal Type~I error control, and again  CBT is most powerful at all values of $\eta$.

\begin{figure}[tbp!]
    \centering
    \includegraphics[width=\textwidth,clip]{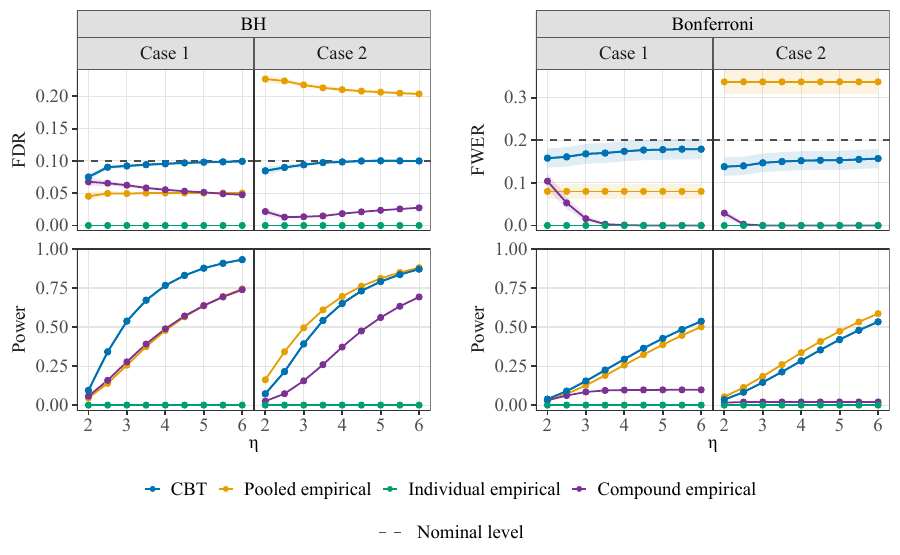}
    \caption{Power, FDR and FWER curves for the varying-size competition experiments, where 
    $n_j\in \{10, 50\}$ for all $j\in[m]$. Cases~1 and~2 refer to scenarios where higher-variance clusters have more controls and lower variance clusters have more controls, as described in the text.}
    \label{fig:heterog-binary-10-50}
\end{figure}

\subsection{Simulations for competitions with shared controls}
\label{sec:sims-block}

We now consider a variation of the data-generating process 
in which the negative controls are collected all at once, 
instead of specifically sampling a collection for each test unit. 
This setup is more similar to  the numerical 
experiments conducted in prior work on conformal outlier detection~\citep{bates2023testing}.

\paragraph{Additional setup.} 
For each replicate, we sample $m=1000$ test units with $10\%$ nonnulls, drawing cluster labels independently from $ \unif([25])$ under both the null and alternative; the data are then generated following~\eqref{eq:the-dgp}.
For the controls, we sample $N_{\mathrm{total}}\in\{2000,4000\}$ 
units, again with uniformly chosen cluster labels, 
and follow the null process in~\eqref{eq:the-dgp}. 
The shared control group $C_g$ consists of the 
units that are generated from the $g$th component of the 
Gaussian mixture model; every test unit from cluster $g$ 
shares $C_g$. 
We compare our block-specific CBT p-values~\eqref{eq:shared-cbt-pval} 
against the pooled and block empirical p-values.

We run two separate experiments: one with oracle blocks and the other with  estimated blocks. 
Oracle blocks are formed with knowledge of the cluster labels $G$. 
Estimated blocks are learned by running the 
$K$-means algorithm (with $K=25$) via \texttt{scikit-learn} on the 
independent hold-out training set used to train the SVM. 
We then assign the test and control units to the closest 
learned cluster center (with some post-processing steps as described in Appendix~\ref{sec:appd-details-on-dgp}) 
to form the block-specific competition format. 

\paragraph{Results.}
In Figure~\ref{fig:shared}, the results for oracle and 
estimated blocks are indistinguishable.  All three methods have nominal Type~I error control, but for both choices of $N_{\mathrm{total}}$, CBT attains 
higher power than the baseline methods at all values of $\eta$. For $N_{\mathrm{total}} = 4000$, the 
block empirical p-values achieve power approaching that of CBT 
for mid-to-high signal strengths---due to adequate p-value granularity---but 
continue to struggle with weaker signals.

\begin{figure}[b]
    \centering
    \includegraphics[width=\textwidth,clip]{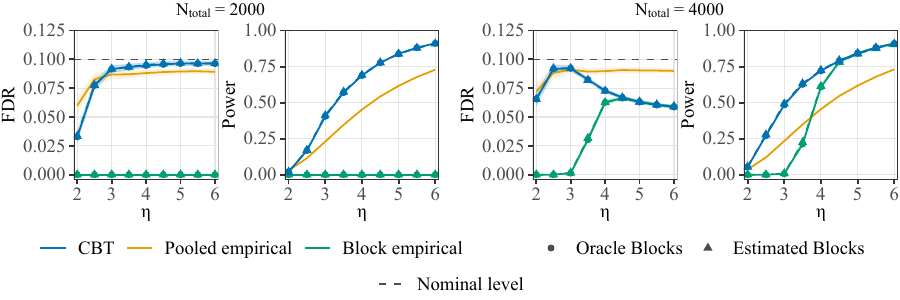}
    \caption{Power and FDR curves for competitions with shared control groups of size $N_{\mathrm{total}} = 2000$ (left) and $N_{\mathrm{total}} = 4000$ (right).}
    \label{fig:shared}
\end{figure}

\subsection{Simulations for $q$-finalists: the effect of $q$}

Finally we turn to the $q$-finalist variation of CBT to investigate the effect of $q$ on performance. We use the equal-size setting in Section~\ref{sec:sims-homog}, restricted to independent, uniform cluster-label sampling and a $20\%$ nonnull proportion. We compare $q\in\{1,2,5,10,25\}$ for $(m,n)\in\{(500,50),(50,500)\}$.

\paragraph{Results.}
Figure~\ref{fig:q-finalist-bh} illustrates that even though Proposition~\ref{prop:fdr-qfinalist} leaves open the possibility of mild FDR inflation, empirically we see that the FDR is controlled at the nominal level in all cases.  Interestingly, $q=1$ is not always the best choice for the BH procedure in terms of power.  When $(m,n) = (50, 500)$, intermediate values of $q$ outperform either $q=1$ or $25$; this agrees 
with the large-competition phenomenon discussed in 
Section~\ref{sec:qfinalist}. On the other hand, 
when $(m,n) = (500, 50)$, we find that $q=1$ is the best choice, with power typically decreasing for fixed $\eta$ as $q$ grows.

\begin{figure}[tbp!]
    \centering
    \includegraphics[width=\textwidth,clip]{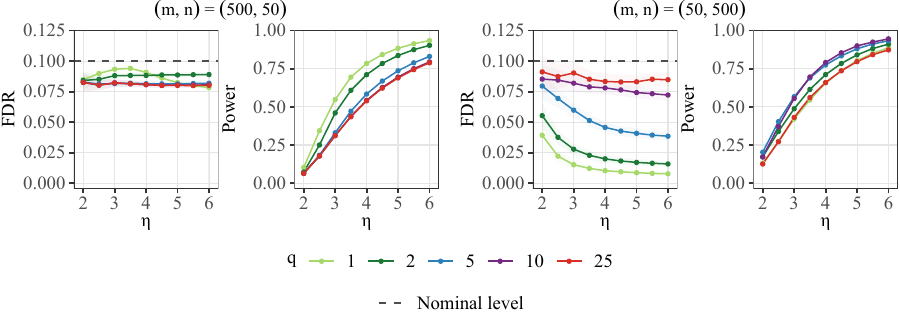}
    \caption{Power and FDR curves for the  $q$-finalist experiments  with equal-size competitions, where $q$ is varied and $(m,n) = (500,50)$ or $(m,n) = (50,500)$.} 
    \label{fig:q-finalist-bh}
\end{figure}

%% file: sections/real_data.tex

%

We examine the performance of our methods 
on publicly available datasets for 
two real-data applications: the virtual ST example 
previewed in the introduction 
and a CRISPR screen analysis that employs negative control genes.

\subsection{Application: virtual ST on gastric cancer tissue data}
\label{sec:application-virtual-st}

The public data for the gastric cancer virtual ST example from Section~\ref{sec:virtual-st-preview} are provided online by~\citet{jin2026multiscale}. Our goal is to locate areas (e.g., pixels) in 
the tissue sample with tertiary lymphoid structures (TLSs), which are associated with heightened anti-tumor immunity and are of clinical research interest.
The corresponding genetic target is the expression level of a TLS ``meta-gene'',
a curated collection of relevant genes. Virtual ST trains a blackbox model to predict expression levels across the tissue by taking a 
H\&E-stained histology slide as input (Figure~\ref{sec:virtual-st-preview}).

~\citet{jin2026multiscale} reformulate the scientific challenge as a multiple testing problem for the purpose of uncertainty quantification.  The $j$th pixel (or superpixel, depending on the resolution) in the image has a covariate--response tuple $(\bm X_{j,0}, Y_{j,0})$, where $\bm X_{j,0}$ represents a histology feature vector of the pixel's surrounding area  and $Y_{j,0}$ is the desired but unknown true meta-gene expression. Given the predicted regression function $\hat f$, the $j$th pixel has an associated predicted expression $\hat Y_{j,0} \coloneqq \hat f(\bm X_{j,0})$.  The baseline expression level is $c=0.05$, so similar to the conformal selection framework of \citet{jin2023selection}, the (random) hypothesis for pixel $j$ is
\begin{equation}
    H_j \colon Y_{j,0} \leq c;
\end{equation}
this encodes the belief that location $j$ expresses the meta-gene only negligibly, if at all.
The fact that~$H_j$ is a random hypothesis means that the setting is not explicitly covered by the theoretical results of this paper.  However, the relevant methods for this application can all be extended to this setting; we defer a rigorous treatment  to Appendix~\ref{sec:appd-random-compound}.

To test $H_j$ in a nonparametric manner,~\citet{jin2026multiscale} 
propose UTOPIA, which employs negative control locations with data 
$(\bm X_i, Y_i, \hat Y_i)$, collected from two small ROIs 
where ground truths are available from previous ST measurements. 
As we know the ground truth $Y_i$, one can learn the empirical null distribution of $\hat Y_{j,0}$ from the pool 
of null negative controls $(\hat Y_i)_{i:Y_i\le c}$.
However, as we discussed in Section~\ref{sec:virtual-st-preview},
the controls are extremely heterogeneous, as the tissue
can be classified into 15 distinct histological clusters. 
UTOPIA foregoes pooling the controls due to the heterogeneity, 
instead opting for a cluster-by-cluster analysis, with 
the block empirical p-values~\eqref{eq:block-pval-defn} made for each test score belonging to 
cluster~$g$ with null negative control scores also from cluster $g$.  Nevertheless, the controls of cluster $g$ still exhibit some heterogeneity, requiring  
the scores to be altered via an ad hoc, data-driven correction for the resulting p-values to appear superuniform. In addition, the 
resolution of the p-values from clusters with small numbers of control scores can be insufficient 
for rejection of the null via BH; hence, UTOPIA chooses to run BH for each cluster, rather 
than globally.

We implement CBT  with shared controls on the 
gastric cancer tissue data, using
smaller blocks  arising from a further covariate-based partition of the histological clusters. These smaller blocks are designed 
to assign only the most relevant control scores $(C_g)_{g\in[G]}$ to 
groups of hypotheses $(B_g)_{g\in [G]}$, providing a more 
plausible exchangeability guarantee across each $C_g$ and $B_g$.

\paragraph{Methods.}
To obtain the test and control groups, we implement two tailored algorithms: adaptive recursive coarsening (ARC), inspired by coarsened exact matching (CEM;~\citealp{iacus2012causal}); and greedy grouping (GG), inspired by optimal full matching and the \texttt{optmatch} software~\citep{hansen2007optmatch}. 
Details of these algorithms are provided in Appendix~\ref{sec:appd-matching-algorithms}.
Both return partitions that can reasonably be assumed to satisfy 
block-specific exchangeability (Assumption~\ref{assu:block-specific}), 
but we do not promote one over the other. Instead, we showcase how  CBT  can be implemented in conjunction with either partition to 
achieve good performance, acting as
a wrapper around thoughtfully populated competitions.

In the random hypothesis setting, we 
must slightly alter the compound p-value 
in~\eqref{eq:shared-cbt-common-offset-pval}. 
For $\widehat Y_g^{c,*}\coloneqq
\max\{\widehat Y_{g,k}:Y_{g,k}\leq c,\ k\in[n_g]\}$, 
with $\max\varnothing \coloneqq -\infty$ and $\rmax\coloneqq\max_{g\in[G]}b_g/n_g$, 
we define
\begin{equation}
    \widetilde p_j^{\blk, \rh}(\rmax)
    \coloneqq
    \biggl[
        \frac{
            \rmax
            + \sum_{g=1}^G \frac{b_g}{n_g}
                \ind\{\widehat Y_g^{c,*}\geq \widehat Y_{j,0}\}
            + m\ind\{\widehat Y_{j,0}<\widehat Y_{g(j)}^{c,*}\}
        }{m}
    \biggr]\wedge 1,
    \qquad j\in[m].
\end{equation}
These are compound p-values in a random sense: 
\begin{equation}
    \sum_{j=1}^m \PP\bigl( \tilde p_j^{\blk,\rh}(\rmax)\le t, Y_{j,0}\le c\bigr) \le mt,\qquad \forall t\in[0,1].
\end{equation} 
Furthermore, denoting $\hat R \coloneqq \hat R^\bh\bigl( \tilde p_1^{\blk,\rh}(\rmax), \dots,  \tilde p_m^{\blk,\rh}(\rmax)\bigr)$, we find that the analogue of the FDR result in 
Proposition~\ref{prop:shared-cbt-ville-fdr-control} and Corollary~\ref{coro:limsup-fdr-pi2alpha} still holds:
\begin{equation} 
    \label{eq:fdr-integral-bound-real-data}
    \fdr(\hat R)
    \coloneqq
    \EE\biggl[
        \frac{\sum_{j=1}^m
            \ind\{j\in\widehat R,\ Y_{j,0}\leq c\}}
        {|\widehat R|\vee1}
    \biggr]
    \leq
    \alpha\int_0^1
    \frac{-\log(1-z)}{\rmax\log(1+z/\rmax)}
    \,dz \xrightarrow[]{\rmax\to\infty} \frac{\pi^2}{6}\alpha;
\end{equation} 
see Appendix~\ref{sec:shared_cbt-rh}. 

As a baseline, we also implement the empirical p-values 
constructed per cluster by~\citet{jin2026multiscale}. 
These are a random hypothesis variant of the block empirical values~\eqref{eq:block-pval-defn} (with each cluster forming its own block of tests and controls), which are then
deflated using data-driven batch effect corrections.
We report the results of both the original UTOPIA procedure, 
which runs BH separately within each cluster and returns the union of all 
rejection sets, and a global variant that applies BH over the entire 
collection of p-values. Finally, we include the version~\eqref{eq:sss-compound-pval} of CBT that enforces  the assignment of exactly  one control to each test 
unit. As we have fewer control scores 
than test scores, we discard all hypotheses with no assigned 
controls prior to running BH, so we expect it to underperform 
in terms of power. 
Following~\citet{jin2026multiscale}, we evaluate the methods on a grid of 
$\alpha$ values from $0.005$ to $0.5$ in increments of $0.005$, 
returning the FDP, number of rejections, and number of true positives 
rejected at each $\alpha$. 

\begin{figure}[tbp!]
    \centering
    \includegraphics[width=\textwidth,clip]{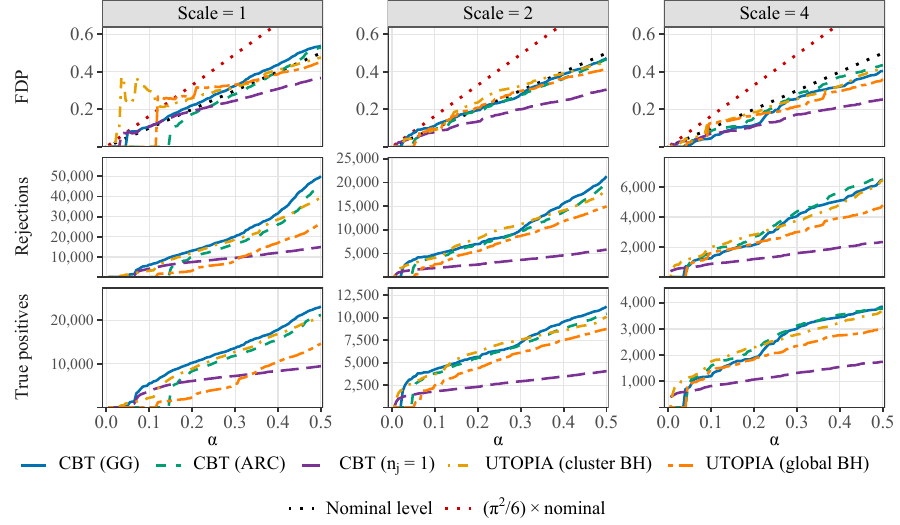}
    \caption{
    For the virtual ST data, we show the empirical performance of the CBT methods and baselines across three spatial aggregation scales (resolutions). In the top row, the black and red dotted lines signify control at $\alpha$ and $(\pi^2/6)\alpha$, respectively. We report the FDP, number of rejections, and number of true positives rejected.}
    \label{fig:utopia-tls-metagene}
\end{figure}

\paragraph{Results.}
Figure~\ref{fig:utopia-tls-metagene} shows the results of 
our data analysis for multiple spatial resolutions of the tissue sample.  \citet{jin2026multiscale} mention that all methods may perform 
quite poorly at fine resolutions (lower scale values);
this is showcased by the fact that 
both UTOPIA methods attain FDP significantly above $\alpha$ at scale 1. 
However, all CBT methods track the nominal level at all scales, 
with both shared-competitor variants rejecting more than the global 
UTOPIA baseline. Even compared to the cluster UTOPIA method 
(which, again, does not target global FDR control), 
the shared CBT methods reject a comparable number of hypotheses.  
Finally, as we have ground truth expression levels available, 
we demonstrate that a similar story holds for the number of true 
positive locations identified. Throughout each experiment, 
the $n_j=1$ version of CBT lags behind, as expected: requiring every hypotheses
to have its individual control group can prove to be a stringent 
condition, which is avoided by our shared-competitor paradigm.

In the FDP plots, we also highlight that the CBT methods have FDPs that are well below $(\pi^2/6)\alpha \approx 1.645\alpha$. 
At scales $1$, $2$, and $4$, respectively, the common offsets $\rmax$ are (approximately) $216.4$, $74.1$, and $31.7$ for ARC, giving FDR inflation factor bounds of approximately $1.647$, $1.652$, and $1.661$; the corresponding GG offsets are $19.5$, $29.7$, and $26.2$, giving bounds of approximately $1.670$, $1.662$, and $1.664$. The value of $\rmax$ also affects the effective resolution 
of the CBT p-values, so we see that ARC-based CBT, with its higher 
$\rmax$ values, can suffer in power at the lower FDR targets. Meanwhile, 
the GG-based CBT performs reasonably due to smaller $\rmax$.
For lower $\alpha$, GG-based CBT also outperforms global UTOPIA: as the 
plots demonstrate, it starts to achieve nonnegligible power starting 
at smaller $\alpha$ levels for scales 1 and 2.

\subsection{Application: CRISPR screen analysis with multiple cell lines}
\label{sec:application-crispr}

We next turn to a second application based on large-scale, pooled CRISPR screens
for determining whether knocking out a specific gene affects 
the viability of a certain cell.  The technology first infects cells of a certain type 
(called a cell line) with single-guide RNAs to disrupt the genes of interest and then measures the effects of the gene disruption on the cell population over time.  In particular,~\citet{dempster2025controlling} apply the 
Chronos module~\citep{dempster2021chronos} to assign 
gene-effect scores to each gene per cell line based on 
the outcome of the CRISPR screen.

Statistically, we test for 
each gene $j \in [m]$ and cell line $c \in [C]$ the null hypothesis:
\begin{equation}
    H_{j,c} \colon \text{knockout of gene }j \text{ has no effect on viability for cell line }c.
\end{equation}
We follow~\citet{dempster2025controlling}, 
who analyze five separate Avana screens~\citep{depmap202424q4}, 
each of which considers a different cell line.  Each cell line screen also provides access to nonessential genes, known not to affect viability when knocked out, and these are pooled to form a negative control group. Each gene is assigned a Chronos gene-effect score by the Chronos module, where higher values signify 
stronger evidence for presence of a viability effect.  These scores are then used to construct pooled empirical p-values per cell line screen; similar to UTOPIA for virtual ST, 
one then runs BH within each screen,\footnote{
    To be precise,~\citet{dempster2025controlling} actually run a variation of the BH procedure---known as the two-stage BH procedure~\citep{benjamini2006adaptive}---within each cell line. In our analysis, we use their p-values but run the original BH procedure per cell line as well as globally. By fixing the role of BH in all methods, we are able to visualize the advantages of our competition-based testing p-values.
} foregoing global FDR control over all $ H_{j,c}$.

\paragraph{Methods.} 
As in the virtual ST application, we use 
covariates to learn refined ARC and GG partitions of the data. 
With the resulting partitions
$(B_g)_{g\in [G]}$ of $[m]$ and $(C_g)_{g \in [m]}$ of the 
negative control gene-effect scores, we implement the 
CBT method with shared competitors 
(using~\eqref{eq:shared-cbt-common-offset-pval} 
as the hypotheses are deterministic) and apply BH. 
For the empirical p-values proposed by~\citet{dempster2025controlling}, 
we test two methods: the first runs BH within each cell line dataset and returns the union of rejection sets, and the second applies BH over the 
global collection of p-values.
The common offset values range from $4.0$ to
$40.5$ (including both ARC and GG), which corresponds to 
inflation factor bounds ranging from approximately $1.657$ to $1.766$. 

As stated by~\citet{dempster2025controlling}, there are some index pairs $(j,c)$ for which the ground truths for~$H_{j,c}$ are not known.  To assess FDR and power, we 
run two types of experiment: a global null setting and 
an unrestricted setting.
For the former, as any rejections would be 
false positives significant numbers of rejections would 
indicate a lack of FDR control.
In the latter setting, only some of the hypotheses' ground truths are known. 
We report the number of rejections as a general measure of discovery power, 
supplemented by the number of rejections whose ground truth is 
known to be nonnull (``Labeled true positives'')  as well as the 
proportion of rejections whose ground truth is known to be null 
(``Labeled FDP'').  
We evaluate the methods on a grid of $\alpha$ values from $0.005$ to $0.2$ 
in increments of $0.005$, returning the above metrics at each $\alpha$ level.

\begin{figure}[tbp!]
    \centering
    \includegraphics[width=\textwidth,clip]{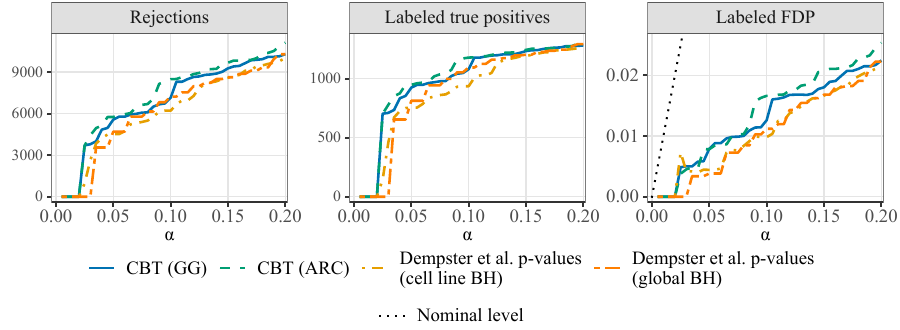}
    \caption{
        CRISPR application results when the hypotheses include both nulls and nonnulls.  We report the total number of rejections, the number of labeled true positives rejected, and the labeled FDP across nominal levels $\alpha$, signified in the right panel by the black dotted line.}
    \label{fig:crispr-all-genes}
\end{figure}

\paragraph{Results.}
In the global null experiment, all methods reported zero rejections at every value of $\alpha$, so we do not provide a plot.  Figure~\ref{fig:crispr-all-genes} summarizes the results of 
the unrestricted setting (where  all genes are included in the hypothesis set). 
We find that CBT with either ARC or GG partitions leads to more rejections 
than the cell line p-value approaches of~\citet{dempster2025controlling}, with an especially large power gap for small to moderate $\alpha$.  Further, we find that CBT exhibits a higher, but still comparable, 
labeled FDP. This empirical FDP is still much lower than the nominal level, 
but the labeled FDP is only a lower bound for the real 
FDP, as some unlabeled hypotheses rejected by the methods could be true nulls.

Unlike in the virtual ST example, taking the union of 
each cell line's rejection sets is not more powerful than 
running BH globally in this example. Due to the sparsity of nonnulls for certain cell lines, 
some hypotheses may not be rejected in the cell line-specific analysis. 
These hypotheses, in turn, can be rejected globally if the other cell lines can contribute many more p-values of similar magnitude.

%% file: sections/discussion.tex

We have proposed powerful multiple testing procedures based on compound p-values 
for a broad framework of competition-based testing where the 
competitions can vary in size and and share competitors
across hypotheses. Our theoretical results establish finite-sample FDR control guarantees (with inflation factor bounds based on the control group sizes $\bm n$)
for BH applied to these compound p-values. For the case where
every hypothesis has their own individual control group, 
we show tight upper and lower bounds on the worst-case inflation 
factor $\fdr/\alpha$. 
Moreover, we propose new competition structures where 
control units are shared by multiple test units (potentially in blocks) and provide 
compound p-values and corresponding inflated FDR control guarantees.
We then demonstrate the Type~I error and power 
of our methodology on simulated and real-data experiments, 
empirically verifying that our proposals control the FWER and FDR 
at the nominal level while achieving higher power than baseline 
methods.  

%% file: sections_appendix/appd_random_compound.tex

In this section, we generalize our competition-based testing methodology to the conformal selection framework
introduced by~\citet{jin2023selection}. 
We give preliminaries on the conformal selection setting, 
define the analogous random-compound p-value property, 
and prove extensions of our results. 

\subsection{Preliminaries}

We follow the treatment of~\citet{jin2023selection} and \citet{gui2025acs}, where each unit is a random tuple $(S, U, \cC)$ taking values in $\RR \times \RR \times \cP(\RR)$ and  $\cP(\RR)$ denotes the power set of $\RR$.  Conceptually, $U$ is a random outcome that we call ``interesting'' if $U \notin \cC$, so we call $\cC$ the ``dull set''.  For example, analysts may find values of $U$ outside some range interesting, and would seek to select such units.  In this setting,~$U$ can be unobserved, but we always observe $S$, a random variable that acts as a prediction or proxy for~$U$.  The straightforward approach is to select units based on $S \in \cC$. However, the uncertainty in $S$ as a prediction for $U$ is left unchecked, which necessitates   rigorous uncertainty quantification. 

Conformal selection~\citep{jin2023selection} formulates the selection problem as multiple testing with random hypotheses. In our language, in the $j$th competition, $S_{j,0}$ denotes our proxy for the unobserved $U_{j,0}$ and $\cC_{j, 0}$ is the dull set.  We can then define the random null hypothesis 
\begin{equation}
    \label{eq:random-hyp}
    H_{j} \colon U_{j,0} \in \cC_{j,0}.
\end{equation}
Conformal selection assumes access to fully-observed negative controls $(S_i, U_i, \cC_i)_{i=1}^N$ and further assumes that their concatenation with $( S_{j,0}, U_{j,0}, \cC_{j,0} )_{j=1}^m$ is a jointly exchangeable collection of $m+N$ tuples.  The method then computes a standard conformal p-value for $H_j$ using the scores $S_{j,0}$ and $(S_i)_{i:U_i \in \cC_i}$; thus, only the negative controls with dull $U_i$ act as reference units for $S_{j,0}$. This results in a computable quantity
\begin{equation}
    \label{eq:pooled-pval-rh}
    p_j^\pooled
    \coloneqq 
    \frac{
    1+\sum_{i=1}^N
    \ind\{S_i\ge S_{j,0}, U_i\in\cC_i\}
    }{
    N+1
    }.
\end{equation}
Despite the seemingly inflated  denominator, $p_j^\pooled$ is a strict p-value in the random-hypothesis sense:
\begin{equation}
    \label{eq:strict-pval-defn-rh}
    \PP(p_j^\pooled \le t, U_{j,0} \in \cC_{j,0}) \le t
\end{equation}
for each $t \in [0,1]$.  Often, $\cC_i$ is treated as deterministic: \citet{jin2023selection} originally formulated this problem by taking $\cC_i$ to be the deterministic one-sided intervals $(-\infty, c_i]$ for some constant $c_i$; hence,~$U_i$ is interesting if $U_i > c_i$. For ease of presentation, we will also consider such dull sets; extending our methods to general dull sets is straightforward.

To conclude the preliminaries, 
we extend the notions of FWER, FDR, and compound validity 
to the conformal selection setting. For $c_{1,0},\ldots,c_{m,0} \in \RR$ and a rejection set $\hat R$:
\begin{align}
    \label{eq:random-fwer}
    \fwer(\hat R)
    &\coloneqq
    \PP\bigg(
    \sum_{j=1}^m\ind\{j\in\hat R, U_{j,0}\le c_{j,0}\}\ge1
    \bigg);\\
    \label{eq:random-fdr}
    \fdr(\hat R)
    &\coloneqq
    \EE\bigg[
    \frac{\sum_{j=1}^m\ind\{j\in\hat R,  U _{j,0}\le c_{j,0}\}}
    {|\hat R|\vee1}
    \bigg].
\end{align}
Compared with the deterministic $H_j$ setting, we replace the sum over a fixed $\cH_0$ with a sum over all~$m$ hypotheses, 
but with terms randomly included based on $U_{j,0}$. 
Similarly, we extend the notion of strict p-value validity in~\eqref{eq:strict-pval-defn-rh} to a definition of compound p-values for random hypotheses.
\begin{definition}
    \label{def:random-compound-validity}
    We say $(\tilde p_1,\dots,\tilde p_m)$ are \emph{random-compound valid} if
    \begin{equation} 
        \sum_{j=1}^m
        \PP(\tilde p_j\le t, U_{j,0} \le c_{j,0})
        \le mt,
        \qquad t\in[0,1].
    \end{equation}
\end{definition}

\subsection{Extensions of Section~\ref{sec:cbt}}

We construct random-compound p-values for the varying-size setting based on $q$-finalists. To do so, we first rephrase the competition setting in the conformal selection language. First, for each $j\in[m]$, we have a test unit $(S_{j,0}, U_{j,0})$ as well as negative controls $(S_{j,k}, U_{j,k})_{k=1}^{n_j}$ and constants $(c_{j,k})_{k=0}^{n_j}$.  The analogue of Assumption~\ref{assu:individual} is the following:
\begin{assumption}[Individual exchangeability in the conformal selection setting]
    \label{assu:individual-rh}
    Conditional on
    \begin{equation}
        \cF_{-j}^{\rh}
        \coloneqq
        \sigma\bigl(
        \{(S_{\ell,k},\ind\{U_{\ell,k}\le c_{\ell,k}\}):
        \ell\in[m]\setminus\{j\},\ 0\le k\le n_\ell\}
        \bigr),
    \end{equation}
    the score and null status pairs
    \begin{equation}
        \bigl(S_{j,k},\ind\{U_{j,k}\le c_{j,k}\}\bigr)_{k=0}^{n_j}
    \end{equation}
    are exchangeable for each $j\in[m]$.
\end{assumption}

The next step is to define the $j$th competition. As in the conformal selection procedure, we filter out control units with the property $U_{j,k } > c_{j,k}$; the units that are left,
indexed by $\cN_j\coloneqq \{k \in [n_j]\colon U_{j,k} \le c_{j,k}\}$, are well suited to construct an empirical null distribution for $S_{j,0}$. The $q_j$-finalist condition, therefore, should be based on the score comparisons only among these null units.  For $a\in[|\cN_j|]$, let $S_{j,(a)}^{\cN}$ denote the $a$th largest score among $\{S_{j,k}:k\in\cN_j\}$, and define the finalist cutoff
\begin{equation}
    \tau_j^{(q_j)}
    \coloneqq
    \begin{cases}
        S_{j,(q_j)}^{\cN}, & \text{for } q_j\le |\cN_j|,\\
        -\infty, & \text{for }  q_j>|\cN_j|.
    \end{cases}
\end{equation}
We then define the $q_j$-finalist indicators as follows:
\begin{equation}
    V_{j,k}^{(q_j)}
    \coloneqq
    \begin{cases} 
        \ind\{S_{j,0}\ge\tau_j^{(q_j)}\},
        & \text{for } k=0,\\[0.5em]
        \displaystyle
        \ind\{k\in\cN_j,\ S_{j,k}\ge\tau_j^{(q_j)}\},
        & \text{for } k\in[n_j].
    \end{cases}
\end{equation}
These indicators are computing something similar, but not quite the same, as the $W^{(q_j)}_{j,k}$ used in the fixed-hypothesis setting.  The quantity $V_{j,0}^{(q_j)}$ computes whether $S_{j,0}$ is at least the $q_j$th largest null control score---not whether it placed in the top $q_j$ in a competition {\em that included itself}.  Similarly, when $k \in [n_j]$ (corresponding to control units), $V_{j,k}^{(q_j)}$ indicates if $(j,k)$ indexes a null control {\em and} its score placed in the top $q_j$ of the null controls. 

Restricting our attention to the $q_j =1$ case, we observe that whenever $|\cN_j | \ge 1$,   $V_{j,0}^{(1)}$ {\em and} the highest-scoring null control $V_{j,k}^{(1)}$ can both be equal to 1. This was previously impossible in the fixed hypothesis setting.

We are now in a position to present the random-compound p-values.
\begin{proposition}
    \label{prop:qfinalist-cbt-compound-rh}
    For each $j\in[m]$, fix $q_j\in[n_j]$, and write $\bm q=(q_1,\dots,q_m)$. Suppose that Assumption~\ref{assu:individual-rh} holds and there are no ties among the scores within each competition. For each $j\in[m]$, define
    \begin{equation}
        \label{eq:qfinalist-cbt-pval-rh}
        \tilde p_j^{(\bm q),\rh}
        \coloneqq
        \Biggl[
        \frac{
        \frac{1}{n_j}
        +\sum_{\ell=1}^m\frac{1}{n_\ell}
        \sum_{k=1}^{n_\ell}
        \ind\{V_{\ell,k}^{(q_\ell)}=1,\ S_{\ell,k}\ge S_{j,0}\}
        +m\ind\{V_{j,0}^{(q_j)}=0\}
        }{m}
        \Biggr]\wedge1.
    \end{equation}
    Then $\tilde p_1^{(\bm q),\rh},\dots,\tilde p_m^{(\bm q),\rh}$ are random-compound valid.
\end{proposition}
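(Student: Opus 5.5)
Fix $t\in[0,1)$; the case $t=1$ is trivial. For each $j$, write $N_j(s)\coloneqq\frac1{n_j}\sum_{k=1}^{n_j}\ind\{V^{(q_j)}_{j,k}=1,\,S_{j,k}\ge s\}$ for the weighted count of null control finalists in competition $j$ scoring at least $s$. On the event $\{\tilde p_j^{(\bm q),\rh}\le t\}$ we have $V^{(q_j)}_{j,0}=1$ and
\[
\frac1{n_j}+\sum_{\ell=1}^m N_\ell(S_{j,0})\le mt .
\]
The strategy mirrors what we expect for Theorem~\ref{thm:cbt-compound} and Proposition~\ref{prop:qfinalist-cbt-compound}. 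Let $\tau\coloneqq\min\{s: \sum_\ell N_\ell(s)\le mt\}$ be the smallest threshold at which the aggregate control count is at most $mt$. Then every rejected null satisfies $S_{j,0}\ge\tau$. This gives the bound
\[
\sum_j\PP\bigl(\tilde p^{(\bm q),\rh}_j\le t,\ U_{j,0}\le c_{j,0}\bigr)\le \EE\Bigl[\sum_j \ind\{V_{j,0}^{(q_j)}=1,\,U_{j,0}\le c_{j,0},\,S_{j,0}\ge \tau_j'\}\Bigr],
\]
where $\tau_j'$ is a leave-one-out threshold. Specifically, $\tau_j'$ is computed with competition $j$'s own contribution replaced by a quantity symmetric in the $n_j+1$ units of competition $j$. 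We then want to show that each null test finalist is matched in expectation by $n_j\cdot\frac1{n_j}$ units of control finalist mass, so the left side is at most $\EE[\sum_\ell N_\ell(\tau')+\text{offset}]\le mt$.

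The key symmetry step uses Assumption~\ref{assu:individual-rh}. Conditional on $\cF^{\rh}_{-j}$ and on the unordered multiset of pairs $\{(S_{j,k},\ind\{U_{j,k}\le c_{j,k}\})\}_{k=0}^{n_j}$, the test unit's label is uniform over the $n_j+1$ positions. Consider the null units in competition $j$, including the test unit when $U_{j,0}\le c_{j,0}$, and let $M$ be their total number. The test unit is a null finalist, meaning it is at least the $q_j$th largest among the \emph{other} null units, exactly when its rank among all $M$ null units is at most $q_j$. Conversely, a null control is counted in $V$ when its rank among the null controls is at most $q_j$.

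We would therefore introduce a symmetric function $F_j(s)$: the number of null units among all $n_j+1$ whose score is at least $s$ and whose rank among the \emph{other} null units is at most $q_j$. Then
\[
\ind\{U_{j,0}\le c_{j,0},V^{(q_j)}_{j,0}=1,S_{j,0}\ge s\}+n_jN_j(s)\le F_j(s),
\]
and averaging over the uniform position gives
\[
\PP(\text{test counted at }s\mid\cdot)\le \EE[F_j(s)\mid\cdot]/(n_j+1)\le \EE\bigl[\tfrac1{n_j}+N_j(s)\mid\cdot\bigr].
\]
The last inequality is the crux. It needs $F_j(s)\le 1+n_j N_j(s)$ pointwise, using that at most one extra unit (the test) can be counted beyond the controls, combined with the $(n_j+1)/n_j$ accounting. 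This reproduces the $1/n_j$ offset.

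The main obstacle is that both the threshold and the finalist statuses depend on competition $j$. We plan to handle this with a swap/leave-one-out argument in the style of the SSS+ proof. Define $\tau_j'$ using $\sum_{\ell\ne j}N_\ell(s)+F_j(s)/n_j$, which is invariant to permuting competition $j$ and dominates the true count. Hence $\tau\le\tau_j'$ is false in general, but on the rejection event we can use $\frac1{n_j}+N_j(s)\ge F_j(s)/n_j\cdot\frac{n_j}{n_j+1}$. If this bookkeeping fails, we fall back to proving the bound for each $t$ via the optional stopping/Markov style identity
\[
\sum_j\EE\Bigl[\frac{\ind\{\text{null finalist},S_{j,0}\ge\tau_j'\}}{\frac1{n_j}+N_j(\tau_j')}\Bigr]\le m .
\]
Checking that the random-hypothesis finalist definition, where $V_{j,0}$ and a control can both equal 1, still yields $F_j\le 1+n_jN_j$ is the delicate part.
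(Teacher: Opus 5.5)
Your outline has the right ingredients (exchangeability within competition $j$ plus a leave-one-out threshold), but the bookkeeping at the crux does not close.

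First, the display $\ind\{U_{j,0}\le c_{j,0},V^{(q_j)}_{j,0}=1,S_{j,0}\ge s\}+n_jN_j(s)\le F_j(s)$ is false. Take $q_j=1$, a null test scoring highest, a null control scoring second, and $s$ below both: the left side is $2$ while $F_j(s)=1$. This is exactly the case where $V^{(1)}_{j,0}$ and the top null control's indicator both equal $1$. What is true is $F_j(s)=\min\{q_j,\#\{\text{null units with score}\ge s\}\}$ and $n_jN_j(s)=\min\{q_j,\#\{\text{null controls with score}\ge s\}\}$, so $n_jN_j\le F_j\le 1+n_jN_j$.

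More seriously, your per-competition bound $\PP(\text{test counted at }s\mid\cdot)\le\EE[\frac1{n_j}+N_j(s)\mid\cdot]$ charges a separate offset $1/n_j$ to every competition. Summing over $j$ adds $\sum_j 1/n_j$ (up to order $m/\nmin$) on top of the control mass, which already exceeds $mt$ for small $t$. The $1/n_j$ in $\tilde p^{(\bm q),\rh}_j$ should not be reproduced per competition; it is absorbed by the swapped control's own contribution. The missing observation is an exact identity. Conditional on $\cF^{\rh}_{-j}$ and the multiset of pairs in competition $j$, let $f_k$ indicate that unit $k$ is counted in $F_j(s)$. Then $\PP(f_0=1\mid\cdot)=F_j(s)/(n_j+1)=\frac1{n_j}\EE[\sum_{k=1}^{n_j}f_k\mid\cdot]\le\EE[N_j(s)\mid\cdot]$, because a control among the top $q_j$ null units is a fortiori among the top $q_j$ null controls. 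This is the paper's swap inequality $\rho^{\rh;(j,k)}_{j,0}\ge\rho^{\rh}_{j,k}$, and it leaves no offset.

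Second, the threshold step is confused and the aggregation is missing. With $C_j'(s)\coloneqq\sum_{\ell\ne j}N_\ell(s)+F_j(s)/n_j\ge\sum_\ell N_\ell(s)$, the inequality $\tau\le\tau_j'$ is in fact true. What you need is that the rejection event lies in $\{S_{j,0}\ge\tau_j'\}$, and your inequality $\frac1{n_j}+N_j\ge F_j/(n_j+1)$ does not give that. It does hold, with equality. On $\{U_{j,0}\le c_{j,0},V^{(q_j)}_{j,0}=1\}$ one has $F_j(S_{j,0})=\rho^{\rh}_{j,0}=1+n_jN_j(S_{j,0})$, so $\frac1{n_j}+\sum_\ell N_\ell(S_{j,0})=C_j'(S_{j,0})$. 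Hence $\{\tilde p^{(\bm q),\rh}_j\le t,\,U_{j,0}\le c_{j,0}\}=\{U_{j,0}\le c_{j,0},\,V^{(q_j)}_{j,0}=1,\,S_{j,0}\ge\tau_j'\}$ for $\tau_j'\coloneqq\min\{s\in\cS\cup\{+\infty\}:C_j'(s)\le mt\}$, which is measurable with respect to the conditioning.

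Since the $\tau_j'$ differ across $j$, you then need $\sum_j N_j(\tau_j')\le mt$. Take $\underline\tau\coloneqq\min_j\tau_j'=\tau'_{j^\dagger}$ and use monotonicity: $\sum_jN_j(\tau_j')\le\sum_jN_j(\underline\tau)\le C'_{j^\dagger}(\underline\tau)\le mt$. Your Markov-type fallback, with right-hand side $m$, would not produce $mt$.

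With these repairs your route is a valid variant of the paper's proof. The paper swaps each control into the test slot, evaluates the condition at that control's own score, and aggregates through the minimum-score element of $\cA_t^{\rh}$. You instead fix a symmetric leave-one-out threshold first. Both rest on the same rank monotonicity under swapping and the same minimum argument.
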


\subsection{Extensions of Section~\ref{sec:shared_cbt}}
\label{sec:shared_cbt-rh}

In addition to the setting where controls are assigned to individual hypotheses, we can find similar extensions to our compound p-values for block-specific competitions in the conformal selection setting. As in Section~\ref{sec:shared_cbt}, we have blocks of test and shared control unit tuples (rather than simply containing scores). Furthermore, we have the analogue of Assumption~\ref{assu:block-specific}:
\begin{assumption}[Block-specific exchangeability in the conformal selection setting]
    \label{assu:block-specific-rh}
    Conditional on the block partition and on
    \begin{equation}
        \begin{aligned}
            \cF_{-g}^{\blk,\rh}
            \coloneqq
            \sigma\Bigl(
            &\{(S_{j,0},\ind\{U_{j,0}\le c_{j,0}\}):j\in[m]\setminus B_g\}\\
            &\cup\{(S_{h,k},\ind\{U_{h,k}\le c_{h,k}\}):
            h\in[G]\setminus\{g\},\ k\in[n_h]\}
            \Bigr),
        \end{aligned}
    \end{equation}
    the score and null status pairs
    \begin{equation}
        \bigl((S_{j,0},\ind\{U_{j,0}\le c_{j,0}\}):j\in B_g\bigr)
        \cup
        \bigl((S_{g,k},\ind\{U_{g,k}\le c_{g,k}\}):k\in[n_g]\bigr)
    \end{equation}
    are exchangeable for each $g\in[G]$.
\end{assumption}

We  now  adjust our methodology to handle random hypotheses. First, fix $g\in[G]$ and  define the indices and max score of the null controls in group $g$ by
\begin{equation}
    \cN_g
    \coloneqq
    \{k\in[n_g]:U_{g,k}\le c_{g,k}\},\qquad
    S_g^{\cN,*}
    \coloneqq
    \max\{S_{g,k}:k\in\cN_g\},
\end{equation}
where $S_g^{\cN,*} \coloneqq -\infty$ if $\cN_g=\varnothing$. Analogously to the block-specific winner indicators in~\eqref{eq:block-winner}, for each $j\in B_g$, define
\begin{equation}
    \label{eq:block-winner-rh}
    V_{j,k}^{\blk}
    \coloneqq
    \begin{cases}
        \ind\{S_{j,0}=\max(S_{j,0},S_g^{\cN,*})\},
        & \text{for } k=0,\\
        \ind\{k\in\cN_g,\ S_{g,k}=S_g^{\cN,*}\},
        & \text{for } k\in[n_g].
    \end{cases}
\end{equation}
As in the previous subsection, these winner indicators are slightly different from their fixed-hypothesis predecessors. 

\begin{proposition}
    \label{prop:block-cbt-compound-rh}
    Suppose that Assumption~\ref{assu:block-specific-rh} holds and there are no ties among the test and control scores within each block. For each $j\in[m]$, define
    \begin{equation}
        \label{eq:block-cbt-pval-rh}
        \tilde p_j^{\blk,\rh}
        \coloneqq
        \Biggl[
        \frac{
        \frac{b_{g(j)}}{n_{g(j)}}
        +\sum_{h=1}^G\sum_{\ell\in B_h}\frac{1}{n_h}
        \sum_{k=1}^{n_h}
        \ind\{V_{\ell,k}^{\blk}=1,\ S_{h,k}\ge S_{j,0}\}
        +m\ind\{V_{j,0}^{\blk}=0\}
        }{m}
        \Biggr]\wedge1.
    \end{equation}
    Then $\tilde p_1^{\blk,\rh},\dots,\tilde p_m^{\blk,\rh}$ are random-compound valid.
\end{proposition}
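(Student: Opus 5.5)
The plan is to fix $t\in[0,1)$ (the case $t=1$ is trivial), first simplify the p-value, then bound each block's null contribution by a swap argument, and finally sum over blocks using a deterministic counting bound.

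\textbf{Simplification.} For $k\in[n_h]$, the indicator $V^{\blk}_{\ell,k}$ does not depend on $\ell\in B_h$. It equals one exactly when $k$ indexes the top null control of block $h$. Hence the double sum in~\eqref{eq:block-cbt-pval-rh} collapses to
\[
F(s)\coloneqq\sum_{h=1}^G \frac{b_h}{n_h}\ind\{S_h^{\cN,*}\ge s\},\qquad s=S_{j,0}.
\]
So $\tilde p_j^{\blk,\rh}\le t$ iff $V^{\blk}_{j,0}=1$ (that is, $S_{j,0}>S_{g(j)}^{\cN,*}$) and $b_{g(j)}/n_{g(j)}+F(S_{j,0})\le mt$. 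On the event $S_{j,0}>S_g^{\cN,*}$ with $g=g(j)$, the term $h=g$ in $F(S_{j,0})$ vanishes. The condition therefore becomes $\varphi_g(S_{j,0})\le mt$, where
\[
\varphi_g(s)\coloneqq b_g/n_g+F_{-g}(s),\qquad F_{-g}(s)\coloneqq\sum_{h\ne g}(b_h/n_h)\ind\{S_h^{\cN,*}\ge s\}.
\]
The function $\varphi_g$ is nonincreasing and is measurable with respect to $\cF^{\blk,\rh}_{-g}$, since it uses only scores and null statuses from other blocks.

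\textbf{Swap step.} Fix $g$, $j\in B_g$ and $k\in[n_g]$, and set
\[
E_j\coloneqq\{U_{j,0}\le c_{j,0},\ S_{j,0}>S_g^{\cN,*},\ \varphi_g(S_{j,0})\le mt\}.
\]
Under Assumption~\ref{assu:block-specific-rh}, conditionally on $\cF^{\blk,\rh}_{-g}$, exchanging the pair $(S_{j,0},\ind\{U_{j,0}\le c_{j,0}\})$ with $(S_{g,k},\ind\{U_{g,k}\le c_{g,k}\})$ preserves the joint law and leaves $\varphi_g$ unchanged. After the swap, on the image of $E_j$, control $k$ is a null control whose score exceeds every other null control in block $g$. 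It is therefore the block-$g$ null winner, with $\varphi_g(S_g^{\cN,*})\le mt$. This gives
\[
\PP(E_j)\le \PP\bigl(k\in\cN_g,\ S_{g,k}=S_g^{\cN,*},\ \varphi_g(S_g^{\cN,*})\le mt\bigr).
\]
The events on the right are disjoint in $k$ because there are no ties within blocks. Summing over $k\in[n_g]$ yields $n_g\PP(E_j)\le\PP\bigl(\varphi_g(S_g^{\cN,*})\le mt\bigr)$, with the convention that the event fails when $\cN_g=\varnothing$. Summing over $j\in B_g$ then gives
\[
\sum_{j\in B_g}\PP(\tilde p_j^{\blk,\rh}\le t,\ U_{j,0}\le c_{j,0})\le \EE\Bigl[\tfrac{b_g}{n_g}\ind\{\varphi_g(S_g^{\cN,*})\le mt\}\Bigr].
\]

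\textbf{Deterministic counting step.} It remains to show pointwise that
\[
\sum_{g\in\cA}\frac{b_g}{n_g}\le mt,\qquad \cA\coloneqq\{g:\varphi_g(S_g^{\cN,*})\le mt\}.
\]
If $\cA\neq\varnothing$, choose $g^*\in\cA$ minimizing $S_g^{\cN,*}$. Every $h\in\cA\setminus\{g^*\}$ has $S_h^{\cN,*}\ge S_{g^*}^{\cN,*}$, so it is counted in $F_{-g^*}(S_{g^*}^{\cN,*})$. Hence the sum is at most $\varphi_{g^*}(S_{g^*}^{\cN,*})\le mt$. Cross-block ties are harmless here because $F$ uses the non-strict inequality $\ge$. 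Summing the block bounds and taking expectations gives random-compound validity.

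\textbf{Main obstacle.} The delicate step is the swap. It must be checked that the conditioning $\sigma$-field in Assumption~\ref{assu:block-specific-rh} contains everything $\varphi_g$ depends on. It must also be checked that null-status labels travel with the scores, so that $\cN_g$ and $S_g^{\cN,*}$ transform correctly: the swapped-out test may become a nonnull control or a null control, but either way control $k$ is the new top null control.
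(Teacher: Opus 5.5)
Your proof is correct and follows the same route as the paper. You reduce $\{\tilde p_j^{\blk,\rh}\le t,\ U_{j,0}\le c_{j,0}\}$ to a leave-block-out condition, swap the test pair with a control pair by exchangeability to move the probability onto the block's top null control, and finish with the deterministic minimal-score counting argument. The only difference is cosmetic: you note that $V^{\blk}_{\ell,k}$ for $k\in[n_h]$ does not depend on $\ell$, so the control count collapses to $\sum_h (b_h/n_h)\ind\{S_h^{\cN,*}\ge s\}$, and your counting runs over blocks rather than over the pairs $(j,k)$ in the paper's set $\cA_t^{\blk,\rh}$.
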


We conclude with the analogue of Proposition~\ref{prop:shared-cbt-ville-fdr-control}, which we reference in Section~\ref{sec:application-virtual-st} when discussing the empirical performance of the BH procedure on $(\tilde p_j^{\blk,\rh})_{j\in[m]}$. 

\begin{proposition}
    \label{prop:abstract-random-block-cbt-ville-fdr-control}
    Suppose that the conditions of Proposition~\ref{prop:block-cbt-compound-rh} hold.
    Define $\rmax \coloneqq \max_{g\in[G]} b_g/n_g$ and
    \begin{equation}
        \label{eq:abstract-random-block-cbt-common-offset-pval}
        \tilde p_j^{\blk,\rh}(\rmax)
        \coloneqq
        \Biggl[
        \frac{
        \rmax
        +\sum_{h=1}^G\sum_{\ell\in B_h}\frac{1}{n_h}
        \sum_{k=1}^{n_h}
        \ind\{V_{\ell,k}^{\blk}=1,\ S_{h,k}\ge S_{j,0}\}
        +m\ind\{V_{j,0}^{\blk}=0\}
        }{m}
        \Biggr]\wedge1.
    \end{equation}
    Then 
    $\tilde p_1^{\blk,\rh}(\rmax),\dots,\tilde p_m^{\blk,\rh}(\rmax)$
    are random-compound valid. Moreover, for every $\alpha\in(0,1)$,
    \begin{equation}  \fdr\Bigl(
            \hat R^\bh\bigl(
            \tilde p_1^{\blk,\rh}(\rmax),\dots,
            \tilde p_m^{\blk,\rh}(\rmax)
            \bigr)
            \Bigr) \le
            \alpha\EE\biggl[
            \int_0^1
            \frac{\log\bigl(1/(1-z)\bigr)}{\rmax\log(1+z/\rmax)}
            \,dz
            \biggr]. 
    \end{equation}
    Moreover, 
    \[
        \limsup_{\rmax\to\infty}
            \fdr\Bigl(
                \hat R^\bh\bigl(
                \tilde p_1^{\blk,\rh}(\rmax),\dots,
                \tilde p_m^{\blk,\rh}(\rmax)
                    \bigr)
            \Bigr)
            \le \frac{\pi^2}{6}\alpha.
    \]
\end{proposition}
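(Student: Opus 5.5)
The proof has three parts: deduce random-compound validity from Proposition~\ref{prop:block-cbt-compound-rh}, prove the FDR bound conditionally on the block partition by a supermartingale argument, and take the limit in $\rmax$ by dominated convergence.

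\textbf{Validity.} I would condition on the block partition throughout, so that $(b_g,n_g)_{g\in[G]}$ and hence $\rmax$ are fixed. Since $\rmax\ge b_{g(j)}/n_{g(j)}$ for every $j$, we have $\tilde p_j^{\blk,\rh}(\rmax)\ge\tilde p_j^{\blk,\rh}$ pointwise. Proposition~\ref{prop:block-cbt-compound-rh} then gives
\[
\sum_j\PP\bigl(\tilde p_j^{\blk,\rh}(\rmax)\le t,\,U_{j,0}\le c_{j,0}\bigr)\le mt
\]
conditionally on the partition. Integrating over the partition gives the unconditional statement.

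\textbf{FDR bound, reduction.} First I would establish the analogue of Proposition~\ref{prop:cbt-bh-equivalence}, by the same argument. BH applied to the common-offset p-values rejects exactly the test winners $\{j:V^{\blk}_{j,0}=1,\,S_{j,0}\ge T\}$, where $T$ is the smallest $t$ with
\[
\widehat{\fdp}(t)\coloneqq\frac{\rmax+\sum_h\sum_{\ell\in B_h}\frac{1}{n_h}\sum_k\ind\{V^{\blk}_{\ell,k}=1,\,S_{h,k}\ge t\}}{1\vee\#\{\text{test winners}\ge t\}}\le\alpha.
\]
For block $h$, write $D_h(t)$ for the number of null test winners in $B_h$ scoring at least $t$. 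Write $A_h(t)\coloneqq\frac{1}{n_h}\sum_{\ell\in B_h}\sum_k\ind\{V^{\blk}_{\ell,k}=1,\,S_{h,k}\ge t\}$; this equals $\frac{|\{\ell\in B_h: V^{\blk}_{\ell,0}=0\}|}{n_h}\,\ind\{S_h^{\cN,*}\ge t\}$. At $t=T$ the denominator is at least $\alpha^{-1}(\rmax+\sum_hA_h(T))$, so
\[
\fdp\le\alpha\,\frac{\sum_h D_h(T)}{\rmax+\sum_h A_h(T)}.
\]
It therefore suffices to bound $\EE\bigl[\sup_t \sum_hD_h(t)/(\rmax+\sum_hA_h(t))\bigr]$, or at least its value at the stopping time $T$, by the integral.

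\textbf{FDR bound, main step (the obstacle).} I would reveal scores in decreasing order, letting $t$ decrease, and use Assumption~\ref{assu:block-specific-rh}. Within block $g$, condition on the unordered multiset of (score, null-status) pairs. Every null test unit scoring above $S_g^{\cN,*}$ is a null winner, and by exchangeability each such unit is equally likely to be the top null control. This mirrors the equal-weight calculation in Theorem~\ref{thm:shared-cbt-compound}: a null winner in block $g$ is balanced, on average, by weight $b_g/n_g\le\rmax$ of control-winner mass. I would use this to show that a suitable transform of $\sum_hD_h$ against $\rmax+\sum_hA_h$ is a backward supermartingale across blocks, which are independent given $\cF^{\blk,\rh}_{-g}$. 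A Ville/optional-stopping argument at $T$ should then bound the expectation. Writing it as $\int_0^\infty\PP(\text{ratio}>s)\,ds$ and substituting $z=1-e^{-\cdot}$ should produce
\[
\int_0^1\frac{\log\bigl(1/(1-z)\bigr)}{\rmax\log(1+z/\rmax)}\,dz,
\]
which is the same integral as in Proposition~\ref{prop:shared-cbt-ville-fdr-control}. I would copy that proof essentially verbatim, replacing winner indicators by the $V^{\blk}$ versions and $\cH_0$ by the random null indicators. The new point to check is that $V^{\blk}_{j,0}=1$ and a control winner can now coexist. This only increases the denominator, so it helps. Taking expectations over the partition yields the stated $\EE[\cdot]$ bound.

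\textbf{Limit.} As $\rmax\to\infty$ we have $\rmax\log(1+z/\rmax)\uparrow z$, so the integrand decreases to $\log(1/(1-z))/z$. Its integral over $[0,1]$ is $\sum_{k\ge1}k^{-2}=\pi^2/6$. At $\rmax=1$ the integrand is $\log(1/(1-z))/\log(1+z)$, which is integrable, so monotone/dominated convergence applies, both inside the expectation and for the limsup.
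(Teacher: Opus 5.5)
Your validity argument (monotonicity in the offset, conditionally on the partition), the BH reduction to $\fdp\le\alpha V(T)/\{\rmax+C(T)\}$, and the monotone-convergence computation of the $\pi^2/6$ limit are correct and match the paper. The gap is the central step. You assert it (``should then bound'', ``should produce'') rather than prove it, and nothing in the sketch generates the specific integrand.

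What is needed is a Katsevich--Ramdas-type exponential supermartingale with bounded predictable weights:
\begin{itemize}
\item Condition on the per-block multisets of (score, null-status) pairs.
\item Merge the null-status positions of \emph{all} blocks in decreasing score order. Revealing block by block cannot control $\sup_t$, which interleaves blocks.
\item At step $r$, reveal which unit occupies the $r$th position. Let $X_r$ indicate that this unit is a test preceded, among the null-status positions of its block, only by tests.
\item Find predictable weights $w_r\le\bar w$ with $\PP(X_r=1\mid\cF_{r-1})\le w_r/(1+w_r)$, $V(t)=\sum_{\ell\le r(t)}X_\ell$ and $C(t)\ge\sum_{\ell\le r(t)}w_\ell(1-X_\ell)$.
\end{itemize}
Then $\exp\{\theta\sum_{\ell\le r}X_\ell-\theta x\sum_{\ell\le r}w_\ell(1-X_\ell)\}$ is a supermartingale whenever $\bar w e^{\theta}+e^{-\bar w x\theta}\le 1+\bar w$. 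This holds because the one-step defect is convex in $w_r$ and vanishes at $w_r=0$. Ville's inequality then gives $\PP\bigl(\sup_t V(t)/\{\rmax+C(t)\}\ge x\bigr)\le e^{-\rmax x\theta(x)}$, where $\theta(x)$ is the positive root of that equation. Take $\bar w=\rmax$ and substitute $z=\rmax(e^{\theta}-1)$. This turns the tail bound into $1-z$ and $x$ into $\log(1/(1-z))/\{\rmax\log(1+z/\rmax)\}$, and only then does $1+\int_1^\infty\PP(\sup\ge x)\,dx$ become the stated integral.

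Note also that when scores are revealed in decreasing order, $T$ is a last-exit time, not a stopping time. So ``optional stopping at $T$'' is unavailable, and the uniform bound over $t$ is essential.

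The second missing piece is the correct compensator. This, not the coexistence of test and control winners, is what is genuinely new in the random-hypothesis setting. Your closed form for $A_h(t)$ is wrong. The indicator $V^{\blk}_{\ell,k}$ for $k\ge1$ does not depend on $\ell$, so $A_h(t)=(b_h/n_h)\ind\{S_h^{\cN,*}\ge t\}$ whenever $\cN_h\neq\varnothing$. This full weight is exactly what the argument needs, for the following reason. Non-null-status positions are never revealed. So on the event that the block's earlier null-status positions are all tests, the unit at the $r$th position is uniform over all unrevealed tests of the block, including those that will turn out non-null, together with all $n_{g_r}$ controls. Hence $\PP(X_r=1\mid\cF_{r-1})=s_r/(s_r+n_{g_r})$ with $s_r$ as large as $b_{g_r}$. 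This forces $w_r=s_r/n_{g_r}$, which is dominated only by $b_{g_r}/n_{g_r}\le\rmax$.

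With the number of losing tests in place of $b_h$, the target bound is actually false. Take $G$ blocks with $n_g=1$ and $b_g=b\ge2$. In each block, let the $b+1$ pairs be $b-1$ non-null pairs on top and two null pairs below, assigned to units uniformly at random. At the lowest score your closed-form denominator is about $b+G/(b+1)$, while $V\approx G(2b-1)/(b+1)$. So the ratio tends to $2b-1$ as $G\to\infty$. With the true weight the denominator is about $b+2Gb/(b+1)$ and the ratio stays below one.

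For the same reason, transcribing a fixed-hypothesis argument that conditions on the null statuses would need weights as large as $|B_g\cap\cH_0|/|\cN_g|$. These need not be bounded by $\rmax$ or dominated by $C$.
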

%
%
%

%% file: sections_appendix/appd_technical_proofs.tex

%
%

\section{Proofs of compound validity}
\label{sec:appd-proofs-compoundness}

Proposition~\ref{prop:qfinalist-cbt-compound} establishes Theorem~\ref{thm:cbt-compound} and Proposition~\ref{prop:sss-compound-bh} (the compound validity portion) as special cases.  The former follows when $q_j = 1$ for all $j\in[m]$; the latter when additionally $n_j = n$ for all $ j\in [m]$.

\subsection{Proof of Proposition~\ref{prop:qfinalist-cbt-compound}}
\label{sec:appd-qfinalist-compoundness-proof}

Since $c_j \mapsto \tilde p_j^{(\bm q)}(c_j)$ is increasing, it suffices to prove the result for $c_j=1/n_j$ for every $j\in[m]$. We write $\tilde p_j^{(\bm q)}$ for $\tilde p_j^{(\bm q)}(1/n_j)$ throughout the proof.

    Fix $t\in(0,1)$. For $j\in[m]$ and $0\le k\le n_j$, write
    \[
        \rho_{j,k}
        \coloneqq
        \sum_{a=0}^{n_j}\ind\{S_{j,a}\ge S_{j,k}\}
    \]
    for the rank of $S_{j,k}$ within its competition, so that
    $W_{j,k}^{(q_j)}=\ind\{\rho_{j,k}\le q_j\}$. 
    Also define the weighted number of control finalists with score at least $s \in \RR$ and its leave-$j$th-competition-out version by
    \begin{align}
        C_{\bm q}(s)
        &\coloneqq
        \sum_{\ell=1}^m\frac{1}{n_\ell}\sum_{k=1}^{n_\ell}
        \ind\{W_{\ell,k}^{(q_\ell)}=1,\ S_{\ell,k}\ge s\},
        \\
        C_{\bm q}^{(-j)}(s)
        &\coloneqq
        \sum_{\ell\ne j}\frac{1}{n_\ell}\sum_{k=1}^{n_\ell}
        \ind\{W_{\ell,k}^{(q_\ell)}=1,\ S_{\ell,k}\ge s\}.
    \end{align}
If $W_{j,0}^{(q_j)}=0$, then $\tilde p_j^{(\bm q)}=1$. On the other hand, if $W_{j,0}^{(q_j)}=1$, then precisely $\rho_{j,0}-1$ controls in competition $j$ have scores greater than $S_{j,0}$, and each of them is also a $q_j$-finalist. Consequently,
    \[
        C_{\bm q}(S_{j,0})
        =
        C_{\bm q}^{(-j)}(S_{j,0})
        +\frac{\rho_{j,0}-1}{n_j}.
    \]
It follows that
    \begin{equation}
        \label{eq:qfinalist-pvalue-event}
        \{\tilde p_j^{(\bm q)}\le t\}
        =
        \biggl\{
        W_{j,0}^{(q_j)}=1,\ 
        \frac{\rho_{j,0}}{n_j}
        +C_{\bm q}^{(-j)}(S_{j,0})
        \le mt
        \biggr\}.
    \end{equation}

    Now fix $j\in\cH_0$. 
    Conditional on $\cG_{-j}$ and under Assumption~\ref{assu:individual}, 
    the joint distribution of $(S_{j,k})_{k=0}^{n_j}$ is invariant under any permutation of $\{0, \dots, n_j\}$.
    Moreover, the function $C_{\bm q}^{(-j)}(\cdot)$ does not depend on $(S_{j,k})_{k=0}^{n_j}$. Hence, from~\eqref{eq:qfinalist-pvalue-event}, 
    \begin{align}
    \begin{split}
    \label{eq:qfinalist-label-balance}
\PP(\tilde p_j^{(\bm q)}\le t ) = \mathbb{E}\bigl[\PP(\tilde p_j^{(\bm q)}\le t \given \mathcal{G}_{-j})\bigr] &= \EE\biggl[ \PP\biggl(
        W_{j,0}^{(q_j)}=1,\ 
        \frac{\rho_{j,0}}{n_j}
        +C_{\bm q}^{(-j)}(S_{j,0})
        \le mt
         \bigggiven \mathcal{G}_{-j}\biggr)\biggr]\\
&= \EE\biggl[
            \frac{1}{n_j}\sum_{k=1}^{n_j}
            \ind\Bigl\{
            W_{j,k}^{(q_j)}=1,\ 
            \frac{\rho_{j,k}}{n_j}
            +C_{\bm q}^{(-j)}(S_{j,k})
            \le mt
            \Bigr\}
            \biggr],
            \end{split}
    \end{align}
where the final equality holds because replacing the zero subscripts with $k \in [n_j]$ leaves the conditional probability unchanged.
    Let
    \begin{equation}
        \label{eq:A_t-def-indiv-fixed}
        \cA_t
        \coloneqq
        \biggl\{
        (j,k) \in [m] \times [n_j]: W_{j,k}^{(q_j)}=1,\ 
        \frac{\rho_{j,k}}{n_j}
        +C_{\bm q}^{(-j)}(S_{j,k})
        \le mt
        \biggr\}. 
    \end{equation}
    If $\cA_t$ is non-empty, then choose $(j^*,k^*)\in\cA_t$ with $S_{j^*,k^*} = \min\{S_{j,k}:(j,k) \in \cA_t\}$, and write $s^*\coloneqq S_{j^*,k^*}$ and $r^*\coloneqq\rho_{j^*,k^*}$.  Every index pair of a control in $\cA_t$ outside competition $j^*$ has corresponding score at least $s^*$ and is therefore counted by $C_{\bm q}^{(-j^*)}(s^*)$.  Moreover, for any $k \in [n_{j^*}]$ such that $(j^*,k) \in \cA_t$, we have
    \[
    \sum_{a=0}^{n_{j^*}} \ind\{S_{j^*,a} \geq S_{j^*,k}\} \leq \sum_{a=0}^{n_{j^*}} \ind\{S_{j^*,a} \geq s^*\} \leq r^*.
    \]
    Hence, by~\eqref{eq:qfinalist-label-balance},
    \begin{align}
        \label{eq:qfinalist-pathwise-bound}
        \begin{split}
            \sum_{j\in\cH_0}\PP(\tilde p_j^{(\bm q)}\le t) &= \EE\biggl[\sum_{j\in\cH_0}\frac{1}{n_j}\sum_{k=1}^{n_j}
            \ind\Bigl\{
            W_{j,k}^{(q_j)}=1,\ 
            \frac{\rho_{j,k}}{n_j}
            +C_{\bm q}^{(-j)}(S_{j,k})
            \le mt
            \Bigr\}\biggr] \\
            &\le
            \EE\biggl[
            \sum_{j=1}^m\frac{1}{n_j}\sum_{k=1}^{n_j}
            \ind\{(j,k)\in\cA_t\}\biggr]  \le
            \EE\biggl[\frac{r^*}{n_{j^*}}+C_{\bm q}^{(-j^*)}(s^*)\biggr] 
            \le mt.
        \end{split}
    \end{align}
    The cases $t=0$ and $t=1$ are immediate, so the result follows.
    \hfill $\qed$

\begin{remark}[Assumptions in Theorem~\ref{thm:cbt-compound} and Proposition~\ref{prop:qfinalist-cbt-compound}]
    \label{rem:looser-assumptions}
    Assumption~\ref{assu:individual} is a convenient sufficient condition, 
    but Proposition~\ref{prop:qfinalist-cbt-compound} remains true under a weaker condition on the null competitions.  Specifically, for each $j\in\cH_0$, define the $\sigma$-algebra 
    \begin{equation}
        \label{eq:individual-invariant-sigma-algebra}
        \cI_j \coloneqq \sigma\bigl(\Lbag(S_{j,k})_{k=0}^{n_j}\Rbag,\,\cG_{-j}\bigr),
    \end{equation}
    where $\Lbag A\Rbag$ denotes the unordered version of a collection $A$ (a multiset). 
    To show the last line in~\eqref{eq:qfinalist-label-balance}, it is sufficient that, for every $j\in\cH_0$,
    \begin{equation}
        \label{eq:qfinalist-conditional-rank-uniformity}
        \PP(\rho_{j,0}=r\given\cI_j)
        =
        \frac{1}{n_j+1},
        \qquad r\in[n_j+1].
    \end{equation}
    This condition is implied by Assumption~\ref{assu:individual}. 
    Indeed, we then have
    \begin{align*}
            \EE\biggl[
            \ind\biggl\{
            W_{j,0}^{(q_j)}=1,\;
            \frac{\rho_{j,0}}{n_j}
            +C_{\bm q}^{(-j)}(S_{j,0})\le mt
            \biggr\}
            \bigggiven \cI_j
            \biggr] &=
            \frac{1}{n_j+1}\sum_{k=0}^{n_j}
            \ind\biggl\{
            W_{j,k}^{(q_j)}=1,\;
            \frac{\rho_{j,k}}{n_j}
            +C_{\bm q}^{(-j)}(S_{j,k})\le mt
            \biggr\}\\
            &\hspace{-0.5cm}=
            \frac{1}{n_j}
            \EE\biggl[
            \sum_{k=1}^{n_j}
            \ind\biggl\{
            W_{j,k}^{(q_j)}=1,\;
            \frac{\rho_{j,k}}{n_j}
            +C_{\bm q}^{(-j)}(S_{j,k})\le mt
            \biggr\}
            \bigggiven \cI_j
            \biggr].
    \end{align*}
For the $q_j=1$ case covered by Theorem~\ref{thm:cbt-compound}, even~\eqref{eq:qfinalist-conditional-rank-uniformity} is stronger than necessary. 
Since $S_{j,(0)} \coloneqq \max_{k=0, \dots, n_j} S_{j,k}$
is $\cI_j$-measurable, it is enough to assume that
    \begin{equation}
        \label{eq:cbt-conditional-win-probability}
        \PP(W_{j,0}=1\given\cI_j)
        =
        \frac{1}{n_j+1},
        \qquad j\in\cH_0.
    \end{equation}
Indeed, by our assumption that each competition has a unique winner, we then have
    \begin{equation}
        \label{eq:cbt-conditional-winner-balance}
        \EE[W_{j,0}\given\cI_j]
        =\frac{1}{n_j+1}
        =\frac{1}{n_j}\EE[{1-W_{j,0}}\given\cI_j]=\frac{1}{n_j}\EE\biggl[{\sum_{k=1}^{n_j}W_{j,k}}\given\cI_j\biggr],
        \qquad j\in\cH_0.
    \end{equation}
    Hence
    \begin{align*}
            \EE\biggl[
            \ind\biggl\{
            W_{j,0} =1,\;
            \frac{1}{n_j}
            +C_{\bm q} ^{(-j)}(S_{j,0})\le mt
            \biggr\}
            \bigggiven \cI_j
            \biggr] &= 
            \EE\biggl[W_{j,0} \cdot
            \ind\biggl\{
            \frac{1}{n_j}
            +C_{\bm q} ^{(-j)}(S_{j,(0)})\le mt
            \biggr\}
            \bigggiven \cI_j
            \biggr]\\
            &= 
            \ind\biggl\{\frac{1}{n_j}
            +C_{\bm q}^{(-j)}(S_{j,(0)})\le mt\biggr\}
            \EE [
            W_{j,0}
            \given \cI_j
            ]\\
            &= 
            \ind\biggl\{\frac{1}{n_j}
            +C_{\bm q}^{(-j)}(S_{j,(0)})\le mt\biggr\}\cdot \frac{1}{n_j}\EE\biggl[{\sum_{k=1}^{n_j}W_{j,k}}\given\cI_j\biggr]\\
            &= 
            \EE\biggl[
                \frac{1}{n_j}
                \sum_{k=1}^{n_j}
                \ind\biggl\{W_{j,k}=1,  \frac{1}{n_j}
            +C_{\bm q}^{(-j)}(S_{j,(0)})\le mt \biggr\}
            \bigggiven \cI_j
            \biggr]\\
            &=  \EE\biggl[
                \frac{1}{n_j}
                \sum_{k=1}^{n_j}
                \ind\biggl\{W_{j,k}=1,  \frac{1}{n_j}
            +C_{\bm q}^{(-j)}(S_{j,k})\le mt \biggr\}
            \bigggiven \cI_j
            \biggr],
    \end{align*}
    which establishes the last line in~\eqref{eq:qfinalist-label-balance} for this case.
\end{remark}
\subsection{Proof of Theorem~\ref{thm:shared-cbt-compound}}
\label{sec:appd-proof-shared-cbt-compound}

    Fix $t\in(0,1)$.  For $s \in \RR$, define the weighted number of control winners that are at least $s$, the leave-one-block-out version excluding block $g\in [G]$, and
    the number of test winners that are at least $s$ as
    \begin{align}
        C^{\blk}(s)
        &\coloneqq
        \sum_{h=1}^G\sum_{\ell\in B_h}\frac{1}{n_h}\sum_{k=1}^{n_h}
        \ind\{W_{\ell,k}^{\blk}=1,\ S_{h,k}\ge s\}, 
        \\
        C^{\blk,(-g)}(s)
        &\coloneqq
        \sum_{h\ne g}\sum_{\ell\in B_h}\frac{1}{n_h}\sum_{k=1}^{n_h}
        \ind\{W_{\ell,k}^{\blk}=1,\ S_{h,k}\ge s\}, 
        \\
        R^{\blk}(s)
        &\coloneqq
        \sum_{\ell=1}^m
        \ind\{W_{\ell,0}^{\blk}=1,\ S_{\ell,0}\ge s\}, 
    \end{align}

    Fix $g\in [G]$ and $j\in B_g \cap \cH_0$. 
    If $W_{j,0}^{\blk}=0$, then $\tilde p_j^{\blk}=1$. On the other hand, if $W_{j,0}^{\blk}=1$, then by our no ties assumption, $S_{j,0}$ is larger each of $S_{g,1}, \dots, S_{g,n_g}$. Consequently, the contribution to $C^{\blk}(S_{j,0})$ from $B_g$ is zero, and hence $ C^{\blk}(S_{j,0}) =  C^{\blk,(-g)}(S_{j,0})$
    when $S_{j,0}$ is a winner.
    It follows that
    \begin{equation}
        \label{eq:shared-cbt-pvalue-event}
        \{\tilde p_j^{\blk}\le t\}
        =
        \biggl\{W_{j,0}^{\blk}=1,\ \frac{b_g}{n_g}+C^{\blk,(-g)}(S_{j,0})\le mt\biggr\},
        \qquad j\in B_g.
    \end{equation}

    The function ${b_g}/{n_g}+C^{\blk,(-g)}(\cdot)$ depends only on scores outside block $g$. Conditional on $\cG_{-g}^\blk$ and under Assumption~\ref{assu:block-specific}, 
    the joint distribution of the scores $(S_{j,0}, S_{g,1}, \dots, S_{g,n_g})$ is again invariant under any  permutation  of $\{0, \dots, n_g \}$. Furthermore, 
    $C^{\blk,(-g)}(\cdot)$ does not depend on 
    any test or control scores from block $g$.
    Therefore, by~\eqref{eq:shared-cbt-pvalue-event} and arguing as in Remark~\ref{rem:looser-assumptions},
    \begin{align}
        \begin{split}
        \label{eq:shared-cbt-label-balance}
    \PP(\tilde p_j^{\blk}\le t ) = \mathbb{E}\bigl[\PP(\tilde p_j^{\blk}\le t \given \cG^{\blk}_{-g})\bigr] &= \EE\biggl[ \PP\biggl(W_{j,0}^\blk=1, \frac{b_g}{n_g}+C^{\blk,(-g)}(S_{j,0})\le mt
             \bigggiven \cG^{\blk}_{-g}\biggr)\biggr]\\
            &= \EE\biggl[
            \frac{1}{n_g}\sum_{k=1}^{n_g}
            \ind\Bigl\{
            W_{j,k}^\blk=1, \frac{b_g}{n_g}+C^{\blk,(-g)}(S_{g,k})\le mt
            \Bigr\}
            \biggr].
        \end{split}
    \end{align}

    Let
    \begin{equation}
        \label{eq:A_t-def-block-fixed}
        \cA_t
        \coloneqq
        \bigcup_{g=1}^G
        \biggl\{
        (j,k)\in B_g\times[n_g]:
        W_{j,k}^{\blk}=1,\ 
        \frac{b_g}{n_g}
        +C^{\blk,(-g)}(S_{g,k})
        \le mt
        \biggr\}. 
    \end{equation} 
    If $\cA_t$ is non-empty, then choose
    $(j^*,k^*)\in\cA_t$ with
    $S_{g(j^*),k^*}  = \min\{S_{g(j),k}:(j,k)\in\cA_t\}$,
    and write $g^*\coloneqq g(j^*)$ and
    $s^*\coloneqq S_{g^*,k^*}$.
    Every index pair $(j,k) \in \cA_t$ for which  $j \notin B_{g^*}$ has $S_{g(j),k} \geq s^*$ and is therefore counted by
    $C^{\blk,(-g^*)}(s^*)$.
    Moreover, each competition indexed by $B_{g^*}$ has at most
    one control winner by assumption, so
    \[
        \sum_{j\in B_{g^*}}\sum_{k=1}^{n_{g^*}}
        \ind\{(j,k)\in\cA_t\}
        \le
        \sum_{j\in B_{g^*}}\sum_{k=1}^{n_{g^*}}
        W_{j,k}^{\blk}
        \le b_{g^*}.
    \]
    Hence, by~\eqref{eq:shared-cbt-label-balance},
    \begin{equation}
    \begin{aligned}
        \sum_{j\in\cH_0}\PP(\tilde p_j^{\blk}\le t)
        &=
        \EE\biggl[
        \sum_{g=1}^G\sum_{j\in B_g\cap\cH_0}
        \frac{1}{n_g}\sum_{k=1}^{n_g}
        \ind\Bigl\{
        W_{j,k}^{\blk}=1,\ 
        \frac{b_g}{n_g}
        +C^{\blk,(-g)}(S_{g,k})\le mt
        \Bigr\}
        \biggr] \\
        &\le
        \EE\biggl[
        \sum_{g=1}^G\sum_{j\in B_g}
        \frac{1}{n_g}\sum_{k=1}^{n_g}
        \ind\{(j,k)\in\cA_t\}
        \biggr] \le
        \EE\biggl[ 
            \frac{b_{g^*}}{n_{g^*}}
            +C^{\blk,(-g^*)}(s^*) \biggr] 
        \le mt,
    \end{aligned} 
    \end{equation}
    where the last inequality follows from the fact that
    $(j^*,k^*)\in\cA_t$ whenever $\cA_t$ is non-empty.
    The cases $t=0$ and $t=1$ are immediate, so the result follows.
    \hfill $\qed$

\subsection{Proofs of random-compound validity results}

\subsubsection{Proof of Proposition~\ref{prop:qfinalist-cbt-compound-rh}}
\label{sec:appd-proof-qfinalist-cbt-compound-rh}

For $j \in [m]$ and $k \in [n_j]$, write $I_{j,k}\coloneqq\ind\{U_{j,k}\le c_{j,k}\} $
    for the null status indicator of each unit
    $(j,k)$. We call
    $(S_{j,k}, I_{j,k})$
    a  ``score--null-status'' pair.

    Fix $t\in(0,1)$.  Define the rank of $S_{j,k}$
    relative to null controls as 
    \begin{equation}
        \rho_{j,k}^{\rh}
        \coloneqq
        1+\sum_{a=1}^{n_j}
        I_{j,a}\ind\{S_{j,a}>S_{j,k}\}.
    \end{equation}
    Then the test and control 
    $q_j$-finalist indicators can be rewritten as $V_{j,0}^{(q_j)}  =\ind\{\rho_{j,0}^{\rh}\le q_j\}$
    and $V_{j,k}^{(q_j)}  =I_{j,k}\ind\{\rho_{j,k}^{\rh}\le q_j\}$,
    respectively. Define $C_{\bm q}^{\rh}$ and $C_{\bm q}^{\rh,(-j)}$
    as the corresponding counts in the proof of
    Proposition~\ref{prop:qfinalist-cbt-compound}, with each
    $W_{\ell,k}^{(q_\ell)}$ replaced by $V_{\ell,k}^{(q_\ell)}$.
    The calculation leading to~\eqref{eq:qfinalist-pvalue-event},
    now counting only the controls with null status $I_{j,k} = 1$, gives
    \begin{equation}
        \label{eq:qfinalist-rh-pvalue-event}
        \{\tilde p_j^{(\bm q),\rh}\le t,\ I_{j,0}=1\}
        =
        \biggl\{
        I_{j,0}=1,\ 
        \rho_{j,0}^{\rh}\le q_j,\ 
        \frac{\rho_{j,0}^{\rh}}{n_j}
        +C_{\bm q}^{\rh,(-j)}(S_{j,0})\le mt
        \biggr\}.
    \end{equation}

    Now fix $j \in [m]$. 
    For each $k\in[n_j]$, 
    write $\rho_{j,0}^{\rh;(j,k)}$ for the rank of the 
    test unit when  $(S_{j,0},I_{j,0})$ and
    $(S_{j,k},I_{j,k})$ are swapped (so that the latter is the new test unit for competition $j$).
    The original test becomes a control post-swap, so
    \begin{equation}
        \label{eq:qfinalist-rh-rank-comparison}
        \rho_{j,0}^{\rh;(j,k)}
        =
        \rho_{j,k}^{\rh}
        +I_{j,0}\ind\{S_{j,0}>S_{j,k}\}
        \ge \rho_{j,k}^{\rh}.
    \end{equation}
    Conditional on $\cF_{-j}^{\rh}$ and under
    Assumption~\ref{assu:individual-rh}, the joint
    distribution of $(S_{j,k}, I_{j,k})_{0\le k\le n_j}$ is invariant under any permutation of $\{0, \dots, n_j \}$, while
    $C_{\bm q}^{\rh,(-j)}(\cdot)$ remains fixed.
    Applying the same exchangeability argument as
    in~\eqref{eq:qfinalist-label-balance}, followed by
    \eqref{eq:qfinalist-rh-rank-comparison}, yields
    \begin{equation}
        \label{eq:qfinalist-rh-label-balance}
        \begin{aligned}
            \PP(\tilde p_j^{(\bm q),\rh}\le t,\ I_{j,0}=1)
            &=
            \EE\biggl[
            \frac{1}{n_j}\sum_{k=1}^{n_j}
            I_{j,k} \cdot 
            \ind\Bigl\{
            \rho_{j,0}^{\rh;(j,k)}\le q_j,\ 
            \frac{\rho_{j,0}^{\rh;(j,k)}}{n_j}
            +C_{\bm q}^{\rh,(-j)}(S_{j,k})\le mt
            \Bigr\}
            \biggr]\\
            &\le
            \EE\biggl[
            \frac{1}{n_j}\sum_{k=1}^{n_j}
            \ind\Bigl\{
            V_{j,k}^{(q_j)}=1,\ 
            \frac{\rho_{j,k}^{\rh}}{n_j}
            +C_{\bm q}^{\rh,(-j)}(S_{j,k})\le mt
            \Bigr\}
            \biggr].
        \end{aligned}
    \end{equation}

    Define $\cA_t^{\rh}$ as $\cA_t$ in \eqref{eq:A_t-def-indiv-fixed} (in the proof of
    Proposition~\ref{prop:qfinalist-cbt-compound}), but with
    $W_{j,k}^{(q_j)}$, $\rho_{j,k}$, and
    $C_{\bm q}^{(-j)}$ replaced by their
    random-hypothesis counterparts 
    $V_{j,k}^{(q_j)}$, $\rho_{j,k}^{\rh}$, and
    $C_{\bm q}^{\rh,(-j)}$, respectively.
    The same argument via the minimum 
    score of $\cA_t^\rh$ can then be repeated;
    denote this minimum score as $s^*$ and 
    its index pair as $(j^*, k^*)$.
    We can then write
    \begin{equation}
        \label{eq:qfinalist-rh-pathwise-bound}
        \sum_{j=1}^m
        \PP(\tilde p_j^{(\bm q), \rh}\le t,\ I_{j,0}=1)
        \le
        \EE\biggl[
        \sum_{j=1}^m\frac{1}{n_j}\sum_{k=1}^{n_j}
        \ind\{(j,k)\in\cA_t^{\rh}\}
        \biggr]
        \le
            \EE\biggl[\frac{\rho_{j^*,k^*}^{\rh}}{n_{j^*}}+C_{\bm q}^{\rh,(-j^*)}(s^*)\biggr] 
        \le mt.
    \end{equation}
    As for the fixed-hypothesis compound validity proofs, the cases $t=0$ and $t=1$ are immediate, so the result follows.
    \hfill $\qed$

\subsubsection{Proof of Proposition~\ref{prop:block-cbt-compound-rh}}
\label{sec:appd-proof-block-cbt-compound-rh}

    Fix $t\in(0,1)$ and $g \in [G]$. 
    Define $C^{\blk,\rh}$ and $C^{\blk,\rh,(-g)}$
    as the corresponding counts in the proof of
    Theorem~\ref{thm:shared-cbt-compound}, with each
    $W_{\ell,k}^{\blk}$ replaced by $V_{\ell,k}^{\blk}$. 
    Fix $j\in B_g$.
    If $V_{j,0}^{\blk}=0$, then $\tilde p_j^{\blk}=1$.
    On the other hand, if $V_{j,0}^{\blk}=1$, then
    $S_{j,0}$ is larger than every control score in block $g$
    with $I_{g,k}=1$.
    Consequently, the contribution to
    $C^{\blk,\rh}(S_{j,0})$ from $B_g$ is zero, and hence 
       $ C^{\blk,\rh}(S_{j,0})
        =
        C^{\blk,\rh,(-g)}(S_{j,0}) $
    when~$S_{j,0}$ is a winner. It follows that
    \begin{equation}
        \label{eq:block-rh-pvalue-event}
        \{\tilde p_j^{\blk}\le t,\ I_{j,0}=1\}
        =
        \biggl\{
        I_{j,0}=1,\ 
        V_{j,0}^{\blk}=1,\ 
        \frac{b_g}{n_g}
        +C^{\blk,\rh,(-g)}(S_{j,0})\le mt
        \biggr\}.
    \end{equation}

    For $k\in[n_g]$, write $(V_{j,0}^{\blk})^{(j,k)}$ for the winner indicator when $(S_{j,0},I_{j,0})$
    and $(S_{g,k},I_{g,k})$ are swapped (so that the latter is the new test unit for competition $j$).  The original test becomes a control, so 
    \begin{equation}
        \label{eq:block-rh-winner-comparison}
            I_{g,k}\cdot (V_{j,0}^{\blk})^{(j,k)}
            =
            V_{j,k}^{\blk}\bigl(1 - 
            I_{j,0}\ind\{S_{j,0}>S_{g,k}\}\bigr) \le V_{j,k}^{\blk}. 
    \end{equation}
    Conditional on $\cF_{-g}^{\blk,\rh}$, the joint
    distribution of 
    $\bigl(
    (S_{j,0},I_{j,0}),
    (S_{g,1},I_{g,1}),\dots,
    (S_{g,n_g},I_{g,n_g})
    \bigr)$ is 
    invariant under any permutation of $\{0, \dots, n_g \}$ by 
    Assumption~\ref{assu:block-specific-rh}, while
    $C^{\blk,\rh,(-g)}(\cdot)$ remains fixed.
    Applying the same exchangeability argument as
    in~\eqref{eq:shared-cbt-label-balance}, followed by
    \eqref{eq:block-rh-winner-comparison}, yields
    \begin{equation}
        \label{eq:block-rh-label-balance}
        \begin{aligned}
            \PP(\tilde p_j^{\blk}\le t,\ I_{j,0}=1)& =
            \EE\biggl[
            \frac{1}{n_g}\sum_{k=1}^{n_g}
            I_{g,k} \cdot 
            \ind\Bigl\{
            (V_{j,0}^{\blk})^{(j,k)}=1,\ 
            \frac{b_g}{n_g}
            +C^{\blk,\rh,(-g)}(S_{g,k})\le mt
            \Bigr\}
            \biggr]\\
            &\le
            \EE\biggl[
            \frac{1}{n_g}\sum_{k=1}^{n_g}
            \ind\Bigl\{
            V_{j,k}^{\blk}=1,\ 
            \frac{b_g}{n_g}
            +C^{\blk,\rh,(-g)}(S_{g,k})\le mt
            \Bigr\}
            \biggr].
        \end{aligned}
    \end{equation}

    Define $\cA_t^{\blk,\rh}$ as $\cA_t$ in~\eqref{eq:A_t-def-block-fixed} (in the proof of
    Theorem~\ref{thm:shared-cbt-compound}), with
    $W_{j,k}^{\blk}$ and $C^{\blk,(-g)}$ replaced by their
    random-hypothesis counterparts
    $V_{j,k}^{\blk}$ and $C^{\blk,\rh,(-g)}$, respectively. 
    Choose
    $(j^*,k^*)\in\cA_t^{\blk,\rh}$ with
    \[
        S_{g(j^*),k^*}
        =
        \min\bigl\{
        S_{g(j),k}:(j,k)\in\cA_t^{\blk,\rh}
        \bigr\},
    \]
    and write $g^*\coloneqq g(j^*)$ and
    $s^*\coloneqq S_{g^*,k^*}$.
    Every control winner in $\cA_t^{\blk,\rh}$ outside
    block $g^*$ has score at least~$s^*$ and is therefore
    counted by $C^{\blk,\rh,(-g^*)}(s^*)$.
    Moreover, each competition in block $g^*$ has at most
    one null control winner, so
    \[
        \sum_{j\in B_{g^*}}\sum_{k=1}^{n_{g^*}}
        \ind\{(j,k)\in\cA_t^{\blk,\rh}\}
        \le
        \sum_{j\in B_{g^*}}\sum_{k=1}^{n_{g^*}}
        V_{j,k}^{\blk}
        \le b_{g^*}.
    \]
    Hence, by~\eqref{eq:shared-cbt-label-balance},
    \begin{equation}
    \begin{aligned}
        \sum_{j=1}^m \PP(\tilde p_j^{\blk}\le t, I_{j,0}=1)
        &=
        \EE\biggl[
        \sum_{g=1}^G\sum_{j\in B_g\ }
        \frac{1}{n_g}\sum_{k=1}^{n_g}
            \ind\Bigl\{
            V_{j,k}^{\blk}=1,\ 
            \frac{b_g}{n_g}
            +C^{\blk,\rh,(-g)}(S_{g,k})\le mt
        \Bigr\}
        \biggr] \\
        &\le
        \EE\biggl[
        \sum_{g=1}^G\sum_{j\in B_g}
        \frac{1}{n_g}\sum_{k=1}^{n_g}
        \ind\{(j,k)\in\cA_t\}
        \biggr] \le
        \EE\biggl[ 
            \frac{b_{g^*}}{n_{g^*}}
            +C^{\blk,\rh,(-g^*)}(s^*) \biggr] 
        \le mt.
    \end{aligned} 
    \end{equation}
    where the last inequality follows from
    $(j^*,k^*)\in\cA_t^{\blk,\rh}$.
    The 
    cases $t=0$ and $t=1$ are immediate, so 
    the result follows.
    \hfill $\qed$

\section{Proofs for results on BH and FDR control}
\label{sec:appd-proofs-fdr}

\subsection{Proof of Proposition~\ref{prop:cbt-bh-equivalence}}
\label{sec:appd-cbt-equivalence-proof}

    Fix $c_0\ge0$ and define the weighted number of control winners and the number of test winners with scores at least $t \in \RR$ by
    \begin{align}
        C(t)
        &\coloneqq
        \sum_{\ell=1}^m\frac{1}{n_\ell}\sum_{k=1}^{n_\ell}
        \ind\{W_{\ell,k}=1,\ S_{\ell,k}\ge t\},
        \\
        R(t)
        &\coloneqq
        \sum_{\ell=1}^m
        \ind\{W_{\ell,0}=1,\ S_{\ell,0}\ge t\}.
    \end{align}
    Thus, the threshold in~\eqref{eq:cbt-rejection-set} can be written as
    \begin{equation}
        \label{eq:rewrite-T}
        T
        =
        \min\biggl\{
        t\in\cS:
        \frac{c_0+C(t)}{1\vee R(t)}\le\alpha
        \biggr\}.
    \end{equation} 

    Define $\hat k \coloneqq  | \hat R^\bh |$.
    On the event $\{\hat k > 0\}$,
    write $j_* \coloneqq \mathrm{sargmin}\{S_{j,0}:j \in \hat{R}^{\bh}\}$, and define $j_* \coloneqq 1$ otherwise. Since $t\mapsto C(t)$ is decreasing, we have for every $j\in \hat R^\bh\bigl(\tilde p_1(c_0),\dots,\tilde p_m(c_0)\bigr)$ that
    \begin{equation}
        \label{eq:compound-p-bh-threshold}
        \tilde p_j(c_0) = \frac{c_0  + C(S_{j,0})}{m} \leq \frac{c_0  + C(S_{j_*,0})}{m} = \tilde p_{j_*}(c_0) \le \frac{\alpha \hat k}{m}.
    \end{equation} 
    Since $\alpha \hat k /m  < 1$, we have $W_{j,0} = 1$ for $j \in \hat{R}^{\mathrm{BH}}$. Hence 
    \begin{equation}
        \label{eq:bh-rej-set-subseteq}
        \hat{R}^\bh\bigl(\tilde p_1(c_0),\dots,\tilde p_m(c_0)\bigr) \subseteq \{ j\in [m] \colon W_{j,0} = 1, S_{j,0} \ge S_{j_*,0} \}.
    \end{equation}
    In particular, comparing the cardinalities of these sets yields $c_0 + C(S_{j_*,0}) \le \alpha \hat k\le \alpha  R(S_{j_*,0})$, so $T \le S_{j_*,0}$ by~\eqref{eq:rewrite-T}.  Combining this with~\eqref{eq:bh-rej-set-subseteq}, we conclude that $\hat R^\bh \subseteq \hat R^\cbt$, where we note that this inclusion holds trivially if $\hat R^\bh = \emptyset$, i.e. $\hat{k} = 0$.
    
    For the other direction, suppose again that $\hat{k} > 0$ and that $j \in \hat R^\cbt$, so that $W_{j,0} = 1$ and $S_{j,0} \ge T$.  By the   monotonicity of $C(\cdot)$ again, together with the definition of $T$,
    \begin{equation}
        \label{eq:cbt-pval-bh-thresh}
        \tilde p_j(c_0)
        \le
        \frac{c_0+C(T)}{m}
        \le
        \frac{\alpha R(T)}{m};
    \end{equation} 
    this is analogous to the self-consistency property~\citep{blanchard2008two} of the BH procedure for strict p-values. 
    Now $|\{j:W_{j,0}  =1, S_{j,0} \ge T\}| = R(T)$, so $\max\{k:\tilde p_j(c_0) \leq \alpha k/m\} \geq R(T)$.  Hence, BH will reject these $R(T)$ hypotheses, 
    so $\hat R^\cbt \subseteq \hat R^\bh$ on $\{\hat k > 0\}$.  
    
    Finally, suppose that $\hat k=0$ but for a contradiction that $R(T) = |\hat R^\cbt|>0$.  Then $|\{j:\tilde p_j(c_0) \leq \alpha R(T)/m\}| \geq R(T)$, so $\hat{k} > 0$, which establishes our contradiction.  We conclude for every $c_0 \ge 0$ that  
    $ \hat R^\cbt  =   \hat R^\bh$.
    \hfill $\qed$

\subsection{Proof of Proposition~\ref{prop:sss-compound-bh}}
\label{sec:appd-proof-sss-compound-bh}

Taking $q_j=1$, $n_j=n$, and $c_j=1/n$ for every $j\in[m]$ in Proposition~\ref{prop:qfinalist-cbt-compound}, we see that $\tilde{p}_1,\ldots,\tilde{p}_m$ are compound valid.  For the second part, we specialize Proposition~\ref{prop:cbt-bh-equivalence} to $n_j=n$ and $c_0=1/n$. The rejection set and threshold in~\eqref{eq:cbt-rejection-set} are then exactly the SSS+ rejection set and threshold in~\eqref{eq:multi-kn}. Therefore, $\hat R^\sss=\hat R^\bh(\tilde p_1,\dots,\tilde p_m)$. \hfill $\qed$

\subsection{Proof of Theorem~\ref{thm:inflated-fdr-control}} 
\label{sec:appd-proof-inflated-fdr-control}

The theorem follows by combining the bounds in Propositions~\ref{prop:cbt-logarithmic-fdr-bound} and~\ref{prop:cbt-counting-fdr-bound} below.  We first introduce notation shared by the proofs of both bound components; this notation is similar to that of the proof of the nominal FDR control of SSS+ by~\citet{barber2015controlling}, which uses a rank-based rather than score-based rejection rule. 

Throughout this subsection, take $c_0=1/\nmin$   and write 
$S_{j,(0)}
\coloneqq
\max_{0\le k\le n_j}S_{j,k}.$ 
Since each competition has a unique winner,
$\sum_{k=1}^{n_j}W_{j,k}
=
1-W_{j,0}$.  We rank the competitions (both null and nonnull) by their winning scores:
\[
    \rho_j
    \coloneqq
    1+\sum_{\ell=1}^m
    \ind\{S_{\ell,(0)}>S_{j,(0)}\},
    \qquad j\in[m].
\]
Since we assume the scores of each competition winner are all distinct,  $(\rho_j)_{j\in[m]}$ is a permutation of $[m]$. For $s\in\{0,\dots,m\}$, define the number of test winners, 
the number of test winners from null competitions, and 
the weighted number of control winners (plus the offset) among the first $s \in [m]$ competitions (in descending order of their winning score) by
\begin{equation}
    \begin{aligned}
        \label{eq:cbt-rank-fdp-process}
        R(s)
        &\coloneqq
        \sum_{j=1}^m W_{j,0}\ind\{\rho_j\le s\}, \\
        V(s)
        &\coloneqq
        \sum_{j\in\cH_0}W_{j,0}\ind\{\rho_j\le s\},
        \\
        \hat V(s)
        &\coloneqq
        \frac{1}{\nmin}
        +\sum_{j=1}^m\frac{1}{n_j} \sum_{k=1}^{n_j} W_{j,k}\ind\{\rho_j\le s\}
        =
        \frac{1}{\nmin}
        +\sum_{j=1}^m\frac{1}{n_j}(1-W_{j,0})\ind\{\rho_j\le s\}.
    \end{aligned} 
\end{equation}
We can now define
\begin{equation}
    \label{eq:tau}
    \tau
    \coloneqq
    \max\biggl(
    \biggl\{
    s\in[m]:
    \frac{\widehat V(s)}{R(s)\vee1}\le\alpha
    \biggr\}
    \cup\{0\}
    \biggr),
\end{equation}
The FDP bound of Lemma~\ref{lem:duality-of-cbt} below is a major ingredient of the proof for both the harmonic and counting inflation bounds.
\begin{lemma}
    \label{lem:duality-of-cbt}
    We have $\tau = \sum_{j=1}^m \ind\{S_{j, (0)} \ge T\}$. Furthermore,
    $\hat R^\cbt  =
    \{j\in[m]:W_{j,0}=1,\ \rho_j\le\tau\}$ and $
    \fdp(\hat R^\cbt) \le  \alpha 
    {V(\tau)} / {\widehat V(\tau)}$.
\end{lemma}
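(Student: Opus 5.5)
The plan is to translate the score-based threshold $T$ into the rank-based index $\tau$, using the ordering of competitions by their winning scores $S_{j,(0)}$. With probability one the winning scores are distinct, so the winners can be listed in descending order as $S_{\pi(1),(0)}>\cdots>S_{\pi(m),(0)}$, where $\rho_{\pi(s)}=s$. For any real $t$, let $s(t)\coloneqq\sum_{j}\ind\{S_{j,(0)}\ge t\}$. Every unit with $W_{j,k}=1$ has score $S_{j,(0)}$. Hence
\[
C(t)=\sum_j \tfrac{1}{n_j}(1-W_{j,0})\ind\{S_{j,(0)}\ge t\},
\]
which gives $c_0+C(t)=\hat V(s(t))$, and likewise $R(t)=R(s(t))$, where the left-hand side is the score-based count from the proof of Proposition~\ref{prop:cbt-bh-equivalence}. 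The estimated FDP at threshold $t$ therefore depends on $t$ only through $s(t)$. Moreover, $s(t)$ is constant for $t$ ranging between consecutive winning scores.

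The first step is to reduce the minimization over $t\in\cS$ to a minimization over winning scores. If $t\in\cS$ satisfies the FDP condition and $t$ is not itself a winning score, then raising $t$ to the smallest winning score that is at least $t$ leaves $s(t)$ unchanged. When no such winning score exists, $s=0$, and the condition reads $c_0\le 0$, which fails because $c_0>0$. The minimal feasible $t$ is therefore $T=S_{\pi(s^*),(0)}$, where $s^*$ is the largest $s\in[m]$ with $\hat V(s)/(R(s)\vee1)\le\alpha$. Indeed, a smaller winning score corresponds to a larger $s$. This largest feasible $s$ is exactly $\tau$, and $s(T)=\tau$. If no $s$ is feasible, then $T=\infty$ and $\tau=0=s(\infty)$. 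This establishes $\tau=\sum_j\ind\{S_{j,(0)}\ge T\}$.

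The second step identifies the rejection set. If $W_{j,0}=1$, then $S_{j,0}=S_{j,(0)}$, and the condition $S_{j,0}\ge T$ is equivalent to $\rho_j\le s(T)=\tau$. This gives $\hat R^\cbt=\{j:W_{j,0}=1,\rho_j\le\tau\}$, so that $|\hat R^\cbt|=R(\tau)$ and $|\hat R^\cbt\cap\cH_0|=V(\tau)$.

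The final step is the FDP bound. If $\tau=0$, then $V(0)=0$ and the bound holds trivially. Otherwise, the definition of $\tau$ gives $\hat V(\tau)\le\alpha(R(\tau)\vee1)$, and $\hat V(\tau)\ge 1/\nmin>0$. Therefore
\[
\fdp(\hat R^\cbt)=\frac{V(\tau)}{R(\tau)\vee1}\le \frac{\alpha V(\tau)}{\hat V(\tau)}.
\]
The only delicate point is the bookkeeping in the first step. One must check that non-winner scores in $\cS$ never yield a smaller feasible threshold than the winning scores do, and that ties are excluded by the almost-sure distinctness assumption.
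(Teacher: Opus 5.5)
Your overall route is the paper's: write the estimated FDP at $t$ as $\hat V(s(t))/(1\vee R(s(t)))$, identify $s(T)$ with $\tau$, and then read off the rejection set and the FDP bound. Your second and third steps are correct as written.

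The problem is in the first step. The claim that $T=S_{\pi(\tau),(0)}$ is false, and so is the ``delicate point'' you flag at the end: a losing score in $\cS$ can be a \emph{smaller} feasible threshold than every feasible winning score. For example, take $m=2$, $n_1=n_2=1$, $(S_{1,0},S_{1,1})=(10,5)$ and $(S_{2,0},S_{2,1})=(8,3)$. Then $c_0=1$, no control wins, and the estimated FDP equals $1$ for $t\in(8,10]$ and $1/2$ for $t\le 8$. With $\alpha=0.6$ the feasible points of $\cS$ are $8,5,3$, so $T=3$, which is a losing control score. By contrast, $\tau=2$ and $S_{\pi(2),(0)}=8$. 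Your observation that a feasible non-winning $t$ can be \emph{raised} to a feasible winning score only shows that $\cS$ and the set of winning scores produce the same feasible counts. It says nothing about which feasible $t$ is smallest.

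What is true, and all your later steps use, is $s(T)=\tau$. This follows from the monotonicity of $s(\cdot)$, as in the paper. First, $T$ is feasible and feasibility depends on $t$ only through $s(t)$, so $s(T)\le\tau$. Second, if $\tau\ge1$, the winning score $S_{\pi(\tau),(0)}\in\cS$ is feasible, so $T\le S_{\pi(\tau),(0)}$ and hence $s(T)\ge s(S_{\pi(\tau),(0)})=\tau$. The paper phrases this second inequality as a contradiction with the minimality of $T$.

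There is also a smaller slip. For $t$ above every winning score, the condition reads $c_0\le\alpha$, not $c_0\le 0$, because the denominator is $1\vee R(t)$. It can hold when $\nmin\ge 1/\alpha$. The case is simply vacuous: $S_{j,k}\le S_{j,(0)}$ gives $s(t)\ge 1$ for every $t\in\cS$.
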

\begin{proof}%
For the first claim, observe that 
\begin{equation}
    \begin{split}
        \hat\fdp(t)\coloneqq 
        \frac{
        \frac{1}{\nmin}
        +\sum_{j=1}^m\frac{1}{n_j}\sum_{k=1}^{n_j}
        \ind\{W_{j,k}=1,\ S_{j,k}\ge t\}
        }{
        1\vee\sum_{j=1}^m
        \ind\{W_{j,0}=1,\ S_{j,0}\ge t\}
        }
        &= \frac{
        \frac{1}{\nmin}
        +\sum_{j=1}^m\frac{1}{n_j}\sum_{k=1}^{n_j} W_{j,k}
        \ind\{ S_{j,(0)}\ge t\}
        }{
        1\vee\sum_{j=1}^m W_{j,0}
        \ind\{ S_{j,(0)}\ge t\}
        }
        \\
        &=
        \frac{\hat V\bigl(\sum_{j=1}^m\ind\{S_{j,(0)}\ge t\}\bigr)}{1\vee R\bigl(\sum_{j=1}^m\ind\{S_{j,(0)}\ge t\}\bigr)},
    \end{split}
\end{equation} 
so since $\hat\fdp(T) \leq \alpha$, we have $\sum_{j=1}^m \ind\{S_{j, (0)} \ge T\} \leq \tau$.  Now suppose for a contradiction that  $\sum_{j=1}^m \ind\{S_{j, (0)} \ge T\} < \tau$, so we could find $T' < T$ such that $\tau =\sum_{j=1}^m \ind\{S_{j, (0)} \ge T'\} $. Since $\hat V(\tau) / \bigl(1 \vee R(\tau)\bigr) \le\alpha$, we have $\hat\fdp (T') \le\alpha$. But this contradicts the minimality of $T$ in~\eqref{eq:cbt-rejection-set} and therefore establishes our first claim.  For the second, we can now note that  
\[
    \{W_{j,0} = 1, S_{j,0} \ge T\} = 
    \{W_{j,0} = 1, S_{j,(0)} \ge T\}
    =
    \{W_{j,0} = 1, \rho_j \le \tau\},
\]
so $\hat R^\cbt  =
\{j\in[m]:W_{j,0}=1,\ \rho_j\le\tau\}$.  Finally, observe that $
\fdp(\hat R^\cbt)
=
{V(\tau)}/\bigl({R(\tau)\vee1}\bigr)$. 
When $V(\tau) > 0$, we have $R(\tau) \geq 1$, and by definition of $\tau$ we deduce that $\hat V(\tau) \le \alpha R(\tau)$. Hence
\begin{equation}
    \label{eq:cbt-basic-fdp-ratio-bound}
    \fdp(\hat R^\cbt)
    =\frac{V(\tau)}{R(\tau)\vee1}
    \le\alpha\frac{V(\tau)}{\widehat V(\tau)},
\end{equation}
as required.
\end{proof}

\begin{proposition}[Harmonic inflation bound]
    \label{prop:cbt-logarithmic-fdr-bound}
    Under the conditions of Theorem~\ref{thm:inflated-fdr-control}, define 
        \[
        \Lambda_0(\bm n)
        \coloneqq
        \sum_{j\in\cH_0}\frac{\nmin}{n_j+1}.
        \]
        Then 
       $ \fdr(\hat R^\cbt)
        \le
        \alpha\mathfrak{h}_{\Lambda_0(\bm n)}
        \le
        \alpha\mathfrak{h}_{\Lambda (\bm n)}$. 
\end{proposition}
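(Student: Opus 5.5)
We start from Lemma~\ref{lem:duality-of-cbt}, which gives $\fdp(\hat R^\cbt)\le \alpha V(\tau)/\hat V(\tau)$. Since $\hat V(\tau)\ge 1/\nmin+\sum_{j\in\cH_0,\rho_j\le\tau}(1-W_{j,0})/n_j$, it suffices to bound
\[
\EE\biggl[\sup_{0\le s\le m}\frac{V(s)}{\frac{1}{\nmin}+\sum_{j\in\cH_0}\frac{1}{n_j}(1-W_{j,0})\ind\{\rho_j\le s\}}\biggr].
\]
This replaces the data-dependent $\tau$ by a supremum over all prefixes of the ordering by winning scores. Equivalently, we multiply numerator and denominator by $\nmin$ and bound $\EE[\sup_s \nmin V(s)/(1+\sum_{j\in\cH_0}\frac{\nmin}{n_j}(1-W_{j,0})\ind\{\rho_j\le s\})]$.

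The key structural step is to condition on the multiset of scores in each competition (the $\sigma$-algebras $\cI_j$ of Remark~\ref{rem:looser-assumptions}) and on all nonnull competitions. Conditionally, the ordering $\rho$ is fixed, since the winning scores $S_{j,(0)}$ are $\cI_j$-measurable. Under Assumption~\ref{assu:individual}, applied sequentially in the style of the proof of Proposition~\ref{prop:qfinalist-cbt-compound}, the null indicators $W_{j,0}$, $j\in\cH_0$, are independent $\Bern(1/(n_j+1))$. We therefore face a purely probabilistic problem. Given a fixed order of independent Bernoulli variables $B_j$ with success probabilities $\pi_j=1/(n_j+1)$ and weights $w_j=\nmin/n_j\le 1$, we must bound
\[
\EE\Bigl[\sup_s \nmin\sum_{i\le s}B_i\big/\bigl(1+\sum_{i\le s}w_i(1-B_i)\bigr)\Bigr].
\]
Note that $\nmin\pi_j=w_j(1-\pi_j)$, so $\nmin B_i - w_i(1-B_i)$ has mean zero. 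The numerator increments and the weighted denominator increments are thus balanced in expectation, which is the usual SSS+ martingale structure.

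To get the harmonic bound, I would first pass to continuous time via a Poissonization/time-change: process $i$ contributes "time" $\lambda_i=\nmin\pi_i=\nmin/(n_i+1)$, so total time is $\Lambda_0(\bm n)$. I would show that the sup-ratio is stochastically dominated, in expectation, by the worst case in which every competition is equal-weighted and infinitesimal. Then I would compute $\EE[\sup_s N(s)/(1+M(s))]$ for the limiting counting-process pair by an integration argument. This means writing the expectation as $\int_0^\infty \PP(\sup_s\cdot>u)\,du$ and bounding each tail by Ville's inequality applied to an exponential supermartingale of the form $x^{\nmin V(s)}\cdot(\text{compensator})$. Integrating over $x\in(0,1)$ should produce $\int_0^1(1-x^{\Lambda_0})/(1-x)\,dx=\mathfrak h_{\Lambda_0(\bm n)}$. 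A cleaner route would be to find, for each $x\in(0,1)$, a supermartingale $M_s(x)$ with $M_0(x)=1-x^{\Lambda_0}$ such that the ratio is at most $\int_0^1 M_\tau(x)/(1-x)\,dx$ pathwise; optional stopping at the (conditionally) stopping-time-like $\tau$ then gives the result. Here I would use the backward filtration revealing winners from the lowest winning score upward, as in the \citet{barber2015controlling} proof.

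The final inequality $\mathfrak h_{\Lambda_0(\bm n)}\le\mathfrak h_{\Lambda(\bm n)}$ is immediate from $\Lambda_0\le\Lambda$ and the monotonicity of $t\mapsto\mathfrak h_t$. The main obstacle is the middle step: finding the right supermartingale so that heterogeneous weights $w_j$ and probabilities $\pi_j$ reduce exactly to the single parameter $\Lambda_0$. For this I would try the product $\prod_{i\le s}\bigl(x^{\lambda_i}\bigr)$-type compensator, with one factor per competition chosen as $\EE[x^{\,\cdot}]$ of that competition's contribution. I would also check convexity so that each factor is bounded by $x^{\lambda_i}$-like terms, which is what makes the exponents add to $\Lambda_0$.
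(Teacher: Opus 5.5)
There is a genuine gap. Your reductions are sound: the bound from Lemma~\ref{lem:duality-of-cbt}, dropping nonnull contributions from $\hat V$, conditional independence of the null indicators $W_{j,0}\sim\Bern(1/(n_j+1))$, the balance $\nmin\pi_j=w_j(1-\pi_j)$, and the final monotonicity of $t\mapsto\mathfrak h_t$ are all correct. But the middle step, which is the whole content of the proposition, is never carried out, and none of the routes you sketch leads to $\mathfrak h_{\Lambda_0(\bm n)}$.

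\begin{itemize}
\item \emph{The infinitesimal comparison goes the wrong way.} With a single null competition, your expected supremum is exactly $\nmin/(n_1+1)=\Lambda_0(\bm n)$. Splitting the same total time $\Lambda_0$ into infinitesimal pieces gives numerator $\nmin N(t)$, with $N$ Poisson of rate $1/\nmin$, against denominator $1+t$. That supremum is at most $\sum_k \nmin/(1+T_k)$ over jump times $T_k\le\Lambda_0$, whose expectation is $\log(1+\Lambda_0)<\Lambda_0$. So the discrete case is not dominated by the infinitesimal one.
\item \emph{Ville's inequality is horizon-free.} Integrating its tail bound gives a factor that depends on $\nmin$ but not on $\Lambda_0(\bm n)$. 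This is exactly what Proposition~\ref{prop:kr-inflated-fdr-control-common-offset} does, and it is much looser. Nothing in that approach produces $\int_0^1(1-x^{\Lambda_0})/(1-x)\,dx$.
\item \emph{The backward-filtration argument does not transfer.} The SSS+ optional-stopping proof needs the revealed labels to be exchangeable given the counts. That fails exactly when the pairs $(\pi_j,w_j)$ are heterogeneous, which is the source of the inflation. You also never construct the supermartingale $M_s(x)$ that this route would require.
\end{itemize}

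The missing idea is that no stopping argument is needed: the FDP, and hence your supremum, is dominated pathwise by a sum over all null competitions. Let $m_0\coloneqq|\cH_0|$, let $j_1,\dots,j_{m_0}$ list the nulls in decreasing order of winning score, and set $A_r\coloneqq\sum_{u\le r}(1-W_{j_u,0})/n_{j_u}$. Monotonicity of $s\mapsto\hat V(s)$ gives
\[
\frac{V(\tau)}{\hat V(\tau)}\le\sum_{r=1}^{m_0}\frac{W_{j_r,0}\ind\{\rho_{j_r}\le\tau\}}{\hat V(\rho_{j_r})}\le\sum_{r=1}^{m_0}\frac{W_{j_r,0}}{1/\nmin+A_r}.
\]
Work in the forward filtration $\cF_r\coloneqq\sigma(\cG,W_{j_1,0},\dots,W_{j_r,0})$, where $\cG$ is generated by the winning scores. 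Since $A_r=A_{r-1}$ on $\{W_{j_r,0}=1\}$, the $r$th term has conditional expectation $\frac{n_{j_r}}{n_{j_r}+1}\cdot\frac{y}{1+z}$, with $y\coloneqq\nmin/n_{j_r}\in(0,1]$ and $z\coloneqq\nmin A_{r-1}$. Concavity of $t\mapsto\mathfrak h_t$ gives $y/(1+z)\le\mathfrak h_{z+y}-\mathfrak h_z$ (Lemma~\ref{lem:harmonic-increment}). Moreover, $\frac{n_{j_r}}{n_{j_r}+1}(\mathfrak h_{z+y}-\mathfrak h_z)=\EE[\mathfrak h_{\nmin A_r}-\mathfrak h_{\nmin A_{r-1}}\mid\cF_{r-1}]$. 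Telescoping then yields $\fdr(\hat R^\cbt)\le\alpha\,\EE[\mathfrak h_{\nmin A_{m_0}}]$.

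Your product-of-$x^{\lambda_i}$ intuition is exactly right for the last step. By weighted AM--GM, $\EE[x^{\nmin A_{m_0}}\mid\cG]=\prod_{j\in\cH_0}\{\pi_j+(1-\pi_j)x^{w_j}\}\ge x^{\Lambda_0(\bm n)}$. Hence $\EE[\mathfrak h_{\nmin A_{m_0}}]=\int_0^1(1-\EE[x^{\nmin A_{m_0}}])/(1-x)\,dx\le\mathfrak h_{\Lambda_0(\bm n)}$. This is equivalent to the paper's use of Jensen's inequality for the concave map $t\mapsto\mathfrak h_t$.
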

\begin{proof}
    Let $m_0\coloneqq|\cH_0|$ and assume that $m_0\ge 1$ (as the FDR is 0 otherwise).
    List the null indices as $j_1,\dots,j_{m_0}$ in increasing order of $\rho_j$, or equivalently in decreasing order of winning score $S_{j, (0)}$.  We first record the conditional distribution of the null winner labels. Define the $\sigma$-algebra of competition-winning scores
    $ \cG
    \coloneqq
    \sigma\bigl(
    (S_{j,(0)})_{j\in[m]}
    \bigr)$. 
    As in Remark~\ref{rem:looser-assumptions}, 
    conditional on $\cG$, we have
    \begin{equation}
        \label{eq:cbt-log-conditional-label-law}
        W_{j_r,0}\mid\cG
        \sim
        \Bern\biggl( \frac{1}{n_{j_r}+1}\biggr ),
        \qquad r\in[m_0].
    \end{equation}
    Furthermore, we claim that conditional on $\mathcal{G}$, the variables $W_{j_1,0},\dots,W_{j_{m_0},0}$ are independent. 
    To see this, for each $j\in\cH_0$,  define 
    $\cE^{-j} \coloneqq \sigma\bigl(\cG, \bigl\{ W_{i,0} \colon i \in \cH_{0}\setminus \{j\} \bigr\}\bigr) \subseteq \cI_j$.
    Then 
    $\PP( W_{j,0} = 1 \given \cE^{-j}) = 1/(n_j + 1)$.
    Moreover, for any $(b_i)_{i \in \mathcal{H}_0}$ with $b_i \in \{0,1\}$, we have 
    \[
        \begin{aligned}
            \PP(\cap_{i \in \cH_0} \{W_{i,0} = b_i \}  \given \cG ) &= 
            \EE[\PP(  (\cap_{i \in \cH_0} \{W_{i,0} = b_i \} \given \cE^{-j}  ) \given \cG  ]\\
            &= \EE\Big[ \ind\bigl\{\cap_{i \in \cH_0\setminus \{j\}} \{W_{i,0} = b_i \} \bigr\} \cdot 
            \PP(W_{j,0}=b_j\given\cE^{-j}) \biggiven \cG \Big]\\
            &=  \PP(\cap_{i \in \cH_0\setminus \{j\}} \{W_{i,0} = b_i \}  \given \cG   )\cdot \PP(W_{j,0}=b_j\given\cG).
        \end{aligned}
    \]
    Repeating this process for the indices in $\cH_0\setminus \{j\}$ proves the claim of conditional independence of the null test winner indicators.

    Define
    \begin{equation}
        \label{eq:cbt-null-control-weight-process}
        A_0\coloneqq0,
        \qquad
        A_r
        \coloneqq
        \sum_{u=1}^r \frac{1}{n_{j_u}}(1-W_{j_u,0}),
        \qquad r\in[m_0].
    \end{equation} 
    The null competitions with indices $j_1,\ldots,j_r$ are precisely those null
    competitions whose ranks are at most $\rho_{j_r}$.
    Consequently, for every $r\in[m_0]$,
    \[ 
        \frac{1}{\nmin}+A_r
        =
        \frac{1}{\nmin}
        +\sum_{j\in\cH_0}
        \frac{1}{n_j}(1-W_{j,0})
        \ind\{\rho_j\le\rho_{j_r}\} 
        \le
        \hat V(\rho_{j_r}), 
    \]
    where the inequality follows from the fact that $\hat V(\cdot)$ includes nonnegative contributions
    from nonnull competitions. 
    By Lemma~\ref{lem:duality-of-cbt} and using the fact that $s \mapsto \hat V(s)$ is increasing,
    \begin{equation}
        \label{eq:cbt-log-pathwise}
        \begin{aligned}
            \fdp(\hat R^\cbt)
            \le
            \alpha\frac{V(\tau)}{\widehat V(\tau)}
            \le
            \alpha\sum_{r=1}^{m_0}
            \frac{W_{j_r,0}\ind\{\rho_{j_r}\le\tau\}}
            {\widehat V(\rho_{j_r})}
            \le
            \alpha\sum_{r=1}^{m_0}
            \frac{W_{j_r,0}}
            {1/\nmin+A_r}.
        \end{aligned}
    \end{equation}

    For $r\in\{0,\dots,m_0\}$, let
    \[
        \cF_r
        \coloneqq
        \sigma(\cG,W_{j_1,0},\dots,W_{j_r,0}).
    \]
    Since $A_r=A_{r-1}+(1-W_{j_r,0})/ {n_{j_r}}$, we have by~\eqref{eq:cbt-log-conditional-label-law} that
    \begin{equation}
        \label{eq:cbt-log-one-step}
        \EE\biggl[
        \frac{W_{j_r,0}}{1/\nmin+A_r}
        \bigggiven
        \cF_{r-1}
        \biggr] =
        \frac{{1}/({n_{j_r}+1})}{1/\nmin+A_{r-1}} 
        = \biggl( \frac{n_{j_r}}{n_{j_r}+1}\biggr) \cdot 
        \frac{\nmin / n_{j_r}}{1+\nmin A_{r-1}}.
    \end{equation}
    Using the fact that $t\mapsto (1-x^t) / (1-x)$ is concave on $[0,\infty)$ for each $x\in(0,1)$, we see that  $H(t)\coloneqq \mathfrak{h}_t$ is concave.  From~\eqref{eq:cbt-log-one-step} and applying  Lemma~\ref{lem:harmonic-increment}, we deduce that  
    \begin{align}
        \label{eq:conversion-to-harmonic-telescope-1}
    \EE\biggl[
        \frac{W_{j_r,0}}{1/\nmin+A_r}
        \bigggiven
        \cF_{r-1}
        \biggr] &= 
        \biggl( \frac{n_{j_r}}{n_{j_r}+1}\biggr) \cdot 
        \frac{\nmin / n_{j_r}}{1+\nmin A_{r-1}} \nonumber \\
        &\le \biggl( \frac{n_{j_r}}{n_{j_r}+1}\biggr) \bigl\{ H(\nmin A_{r-1} + \nmin / n _{j_r} ) - H(\nmin A_{r-1} ) \bigr\}.
    \end{align}
    Since $\PP(W_{j_r,0} = 0 \given \cF_{r-1}) = \PP(W_{j_r,0} = 0 \given \cG) = n_{j_r} / ({n_{j_r}+1}) $
    by~\eqref{eq:cbt-log-conditional-label-law}
    and the conditional independence of $(W_{j_r})_{r \in [m_0]}$ given $\cG$, 
    we find that
    \begin{equation}
        \label{eq:conversion-to-harmonic-telescope-2}
        \biggl( \frac{n_{j_r}}{n_{j_r}+1}\biggr) \bigl\{ H(\nmin A_{r-1} + \nmin / n _{j_r} ) - H(\nmin A_{r-1} ) \bigr\}
        =
        \EE\bigl[ H(\nmin A_r) - H(\nmin A_{r-1}) \given \cF_{r-1} \bigr]
    \end{equation}
    since $A_r = A_{r-1}$ when $W_{j_r,0} = 1$ and 
    $A_r = A_{r-1} + 1 / n_{j_r}$ when $W_{j_r,0}  = 0$. 

    It follows from~\eqref{eq:cbt-log-pathwise},~\eqref{eq:conversion-to-harmonic-telescope-1},~\eqref{eq:conversion-to-harmonic-telescope-2} and Jensen's inequality that
    \begin{align}
        \fdr(\hat{R}^{\cbt}) &\leq   \alpha \EE \biggl[\sum_{r=1}^{m_0}
        \frac{W_{j_r,0}}{1/\nmin+A_r}\biggr] = \alpha \EE \biggl[\sum_{r=1}^{m_0}
        \EE\Bigl[\frac{W_{j_r,0}}{1/\nmin+A_r}\bigggiven \cF_{r-1}\Bigr]\biggr]\\
        & \le \alpha \EE\biggl[\sum_{r=1}^{m_0} \EE\bigl[ H(\nmin A_r) - H(\nmin A_{r-1}) \given \cF_{r-1} \bigr]\biggr]  = \alpha  \EE\bigl[ H(\nmin A_{m_0}) - H(\nmin A_{0})  \bigr] \\
        &= \alpha  \EE\bigl[ H(\nmin A_{m_0})\bigr] \le \alpha  H \bigl( \nmin  \EE[A_{m_0}]\bigr)= \alpha H\biggl( \nmin \sum_{j\in\cH_0} \frac{1}{n_j} \EE[1-W_{j,0}] \biggr)
         \\
         &= \alpha H\biggl( \sum_{j \in\cH_0} \frac{\nmin }{n_j + 1} \biggr) =\alpha  H \bigl( \Lambda_0(\bm{n})\bigr) \leq \alpha H \bigl( \Lambda(\bm{n})\bigr),
         \end{align}
         as required.
    \end{proof}

\begin{lemma}
        \label{lem:harmonic-increment}
        For all $z\ge 0$ and $y \in [0,1]$, we have $H(z+y)  -H(z) \ge y / (1+z)$.
    \end{lemma}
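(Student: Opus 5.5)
We need $H(z+y)-H(z)\ge y/(1+z)$ for $z\ge0$ and $y\in[0,1]$, where $H(t)=\mathfrak h_t=\int_0^1(1-x^t)/(1-x)\,dx$. The plan is to use the integral representation directly rather than concavity. Concavity alone points the wrong way here: it gives an upper bound on increments by the derivative at the left endpoint. We therefore compute the increment explicitly,
\[
H(z+y)-H(z)=\int_0^1 \frac{x^z(1-x^y)}{1-x}\,dx ,
\]
and compare it with $y/(1+z)=y\int_0^1 x^z\,dx$.

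The key pointwise step is the inequality $(1-x^y)/(1-x)\ge y$ for $x\in[0,1)$ and $y\in[0,1]$. We prove it by fixing $x$ and viewing $\phi(y)\coloneqq 1-x^y$ as a function of $y$. This function is concave in $y$, because $x^y=e^{y\log x}$ is convex, and it satisfies $\phi(0)=0$ and $\phi(1)=1-x$. Hence on $[0,1]$ the function $\phi$ lies above its chord, which gives $1-x^y\ge y(1-x)$. At the endpoint $x=0$ the inequality reads $1\ge y$ (or $0\ge0$ when $y=0$ under the convention $0^0=1$), which is fine.

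Multiplying by $x^z\ge0$ and integrating over $[0,1]$ then gives
\[
H(z+y)-H(z)\ge y\int_0^1 x^z\,dx=\frac{y}{1+z},
\]
which is the claim. The only points requiring care are that all integrands are nonnegative and integrable, and that $y\le1$ is exactly where the chord argument applies. For $y>1$ the chord inequality reverses, so the hypothesis $y\in[0,1]$ is essential. In the application in Proposition~\ref{prop:cbt-logarithmic-fdr-bound}, the lemma is used with $y=\nmin/n_{j_r}\le1$ and $z=\nmin A_{r-1}$, so this hypothesis is satisfied. I expect no substantive obstacle beyond noticing that the chord inequality is the right tool.
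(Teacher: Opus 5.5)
Your proof is correct and is essentially the paper's argument. The paper applies concavity of $H$ through the chord inequality $H(z+y)\ge(1-y)H(z)+yH(z+1)$ on $[z,z+1]$, together with $H(z+1)-H(z)=\int_0^1 x^z\,dx=1/(1+z)$; your pointwise bound $1-x^y\ge y(1-x)$ is exactly this chord inequality at the level of the integrand (multiply it by $x^z/(1-x)$), before integrating. Your remark that concavity ``points the wrong way'' is therefore inaccurate: concavity yields the needed lower bound via the chord over $[z,z+1]$, as opposed to the tangent at $z$, and that is precisely what the paper uses.
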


\begin{proof} %
    By concavity,
    \begin{align*}
        H(z+y) - H(z) &\ge (1-y) H(z) + y H(z+1) - H(z) = y\bigl(H(z+1)-H(z)\bigr) = y\int_0^1 x^z \, dz = \frac{y}{1+z},
    \end{align*}
    as required.
\end{proof}

\begin{proposition}[Counting inflation bound]
    \label{prop:cbt-counting-fdr-bound}
    Under the conditions of Theorem~\ref{thm:inflated-fdr-control},
    \[
        \fdr(\hat R^\cbt)
        \le
        \alpha C(\bm n).
    \]
\end{proposition}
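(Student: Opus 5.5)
The bound will follow by splitting the null competitions according to their size, so that each size class can be handled by the classical SSS+ supermartingale argument of \citet{barber2015controlling}. We take the pathwise inequality of Lemma~\ref{lem:duality-of-cbt} as the starting point:
\[
\fdp(\hat R^\cbt)\le \alpha\,\frac{V(\tau)}{\hat V(\tau)}.
\]
For each value $d$ in $\bm n$, let $j^{(d)}_1,\dots,j^{(d)}_{h_0(d)}$ be the null indices with $n_j=d$, listed in increasing order of $\rho_j$. Here $h_0(d)\le h(d)$. For $s\in\{0,\dots,m\}$, write $X_d(s)$ for the number of test winners and $Y_d(s)$ for the number of control winners among those of these competitions with $\rho_j\le s$. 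Then $V(\tau)=\sum_d X_d(\tau)$. Since every term in $\hat V$ is nonnegative, discarding all competitions except the size-$d$ nulls gives $\hat V(\tau)\ge 1/\nmin + Y_d(\tau)/d$. Writing $a_d\coloneqq d/\nmin\ge 1$, we obtain
\[
\frac{V(\tau)}{\hat V(\tau)}\le\sum_d \frac{d\,X_d(\tau)}{a_d+Y_d(\tau)}.
\]
Since $\sum_{j:n_j=d}\bigl(h(d)+a_d-1\bigr)^{-1}=h(d)/\bigl(h(d)+a_d-1\bigr)$, it suffices to show, for each $d$,
\[
\EE\biggl[\frac{d\,X_d(\tau)}{a_d+Y_d(\tau)}\biggr]\le \frac{h(d)}{h(d)+a_d-1}.
\]

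\textbf{Step 1: reduce the offset to $1$.} On $\{X_d(\tau)\ge1\}$ we have $Y_d(\tau)\le h(d)-1$. The map $y\mapsto (1+y)/(a_d+y)$ is nondecreasing because $a_d\ge1$, so
\[
\frac{X_d(\tau)}{a_d+Y_d(\tau)}\le \frac{h(d)}{h(d)+a_d-1}\cdot\frac{X_d(\tau)}{1+Y_d(\tau)}.
\]
The inequality is trivial when $X_d(\tau)=0$. It therefore remains to prove $\EE\bigl[d\,X_d(\tau)/(1+Y_d(\tau))\bigr]\le 1$.

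\textbf{Step 2: the supermartingale bound.} As in the proof of Proposition~\ref{prop:cbt-logarithmic-fdr-bound}, conditional on $\cG=\sigma\bigl((S_{j,(0)})_j\bigr)$, the null winner indicators are independent. Those in class $d$ are moreover i.i.d.\ $\Bern\bigl(1/(d+1)\bigr)$, i.e.\ each is a control winner with odds $d$ to $1$. We index by $k=h_0(d),h_0(d)-1,\dots,0$, the number of top-ranked class-$d$ nulls. We then define a backward filtration $\cF^{(d)}_k$ generated by $\cG$, by all winner indicators of competitions outside null class $d$, and by the partial counts $\bigl(X_d,Y_d\bigr)$ after the first $k'$ class-$d$ nulls, for all $k'\ge k$.

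Because $\hat V$ and $R$ depend on the class-$d$ labels only through these partial counts, and $\tau$ is a maximum over $s$ that is scanned from the top index downward, the index $k_\tau$ (the number of class-$d$ nulls with $\rho_j\le\tau$) is a stopping time for this backward filtration. Conditional on $\cF^{(d)}_k$, the $k$ labels are exchangeable. A direct computation, as in \citet{barber2015controlling}, then shows that $M_k\coloneqq X_d/(1+Y_d)$ evaluated at $k$ is a backward supermartingale. Optional stopping gives $\EE[M_{k_\tau}]\le \EE[M_{h_0(d)}]=\EE[B/(1+h_0(d)-B)]$ with $B\sim\Bin\bigl(h_0(d),1/(d+1)\bigr)$. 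This expectation equals $\bigl(1-(d+1)^{-h_0(d)}\bigr)/d\le 1/d$.

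\textbf{Conclusion and main obstacle.} Summing the bounds over $d$, and using $\sum_j(\cdot)$ over null $j$ being at most the sum over all $j$, gives $\fdr(\hat R^\cbt)\le\alpha C(\bm n)$. I expect the main obstacle to be Step~2's stopping-time claim. It requires carefully verifying that the event $\{\tau\ge s\}$ is determined by the backward information, even though $\hat V$ mixes different classes with different weights. This should hold because $\tau$ depends on the class-$d$ labels only through cumulative counts, and the other classes are placed in $\cF^{(d)}_k$ from the outset. The supermartingale computation itself is the standard one, since within a class the labels are i.i.d.\ with the same success probability.
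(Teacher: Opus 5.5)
Your proof is correct, but it takes a different route from the paper's.

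\textbf{The paper's route.} The paper builds no filtration. For each null $j$ it introduces perturbed processes $R^{j,-}$ and $\hat V^{j,-}$, whose threshold $\tau_j^-$ and value $\hat V_{j,-}$ are $\cI_j$-measurable. It then uses $\PP(W_{j,0}=1\mid\cI_j)=1/(n_j+1)$ to swap each null test-winner term $W_{j,0}\ind\{\rho_j\le\tau\}/\hat V(\tau)$ for a control-winner term $\frac{1}{n_j}(1-W_{j,0})\ind\{\rho_j\le\tau_j^-\}/\hat V_{j,-}$. Finally it shows that the $\tau_j^-$ coincide across the set $\cJ_d$ of size-$d$ null control winners. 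This gives the almost-sure bound $\sum_{j\in\cJ_d}(1/d)/\hat V_{j,-}\le k_d/(k_d-1+d/\nmin)\le h(d)/(h(d)-1+d/\nmin)$.

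\textbf{Your route.} You split $V(\tau)/\hat V(\tau)$ by size class pathwise. You pay the cross-class loss up front through the monotonicity of $y\mapsto(1+y)/(a_d+y)$. Each class then reduces to the classical equal-size, offset-one backward martingale with optional stopping. Your stopping-time check holds: all class-$d$ competitions carry the same weight $1/d$, so the ratio at every position at or beyond the rank of the $k$th class-$d$ null depends on class-$d$ labels only through the partial counts at indices $\ge k$. Hence $\{k_\tau\ge k\}\in\cF^{(d)}_k$.

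\textbf{One point to make explicit.} Your exchangeability claim needs the class-$d$ null labels to be i.i.d.\ $\Bern(1/(d+1))$ conditional on $\cG$ \emph{together with} all other winner indicators, including the nonnull ones, not only conditional on $\cG$. This follows from the same $\cI_j$ argument, since $\cI_j$ contains every other competition. The paper uses exactly this augmented $\cG$ in the proof of Proposition~\ref{prop:kr-inflated-fdr-control-common-offset}.

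\textbf{Trade-offs.} Your approach is modular: it isolates the cross-class cost $h(d)/(h(d)+a_d-1)$ and then reuses a known lemma. It also gives a marginally sharper per-class factor, $\bigl(1-(d+1)^{-h_0(d)}\bigr)h(d)/(h(d)+a_d-1)$. The paper's leave-one-out swap avoids filtrations and optional stopping entirely, and yields a deterministic bound on $\Gamma$. The same template is reused for Proposition~\ref{prop:cbt-pval-equal-weights-general} and for the $q$-finalist bound in Proposition~\ref{prop:fdr-qfinalist}. In the $q$-finalist case the competition outcomes no longer form a simple i.i.d.\ Bernoulli sequence along one ordering, so your martingale reduction would not carry over as directly.
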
 
\begin{proof}
    Fix $j\in\cH_0$. Define the $j$th-perturbed versions of the processes $R(\cdot)$ and $\hat V(\cdot)$ in~\eqref{eq:cbt-rank-fdp-process} by
    \begin{align*}
        R^{j,-}(s)
        &\coloneqq
        \ind\{\rho_j\le s\}
        +\sum_{\ell\ne j}W_{\ell,0}\ind\{\rho_\ell\le s\},\\
        \widehat V^{j,-}(s)
        &\coloneqq
        \frac{1}{\nmin}
        +\sum_{\ell\ne j}
        \frac{1}{n_\ell}(1-W_{\ell,0})\ind\{\rho_\ell\le s\},
    \end{align*}
    and let
    \begin{equation}
        \label{eq:tau-perturbed}
        \tau_j^-
        \coloneqq
        \max\biggl(
        \biggl\{
        s\in[m]:
        \frac{\widehat V^{j,-}(s)}{R^{j,-}(s)\vee1}
        \le\alpha
        \biggr\}
        \cup\{0\}
        \biggr),
        \qquad
        \widehat V_{j,-}
        \coloneqq
        \widehat V^{j,-}(\tau_j^-).
    \end{equation} 
    These processes coincide with corresponding processes for their original versions (without the ``$j,-$'' superscript)  
    on the event that competition $j$ has a test winner.  
    On the event that competition $j$ has a control winner, $R^{j, -}(s) = R(s) + \ind\{\rho_j \leq s\}$, while $\hat V^{j,-}(s) = \hat V(s) - \ind\{ \rho_j \le s \}/n_j $, so for these processes depart from their original versions when $S_{j,(0)}$ ranks in the top $s$ of all competition-winning scores.  Moreover, the random variables $\rho_j$, $\tau_j^-$, and $\widehat V_{j,-}$ are $\cI_j$-measurable, so by~\eqref{eq:cbt-conditional-winner-balance} 
    \begin{align}
        \label{eq:fdr-contribution-loo}
        \EE\biggl[
        \frac{W_{j,0}\ind\{\rho_j\le\tau\}}
        {\widehat V(\tau)}
        \biggr]
        &= 
        \EE\biggl[
        \frac{W_{j,0}
        \ind\{\rho_j\le\tau_j^-\}}
        {\widehat V_{j,-}}
        \biggr] 
        =
        \EE\biggl[
        \frac{ 
        \ind\{\rho_j\le\tau_j^-\}}
        {\widehat V_{j,-}} \EE[W_{j,0}\given \cI_j]
        \biggr] \\
        &= \EE\biggl[
        \frac{ 
        \ind\{\rho_j\le\tau_j^-\}}
        {\widehat V_{j,-}} \EE\Bigl[\frac{1}{n_j}(1-W_{j,0})\given \cI_j\Bigr]
        \biggr]
        = 
        \EE\biggl[
        \frac{\frac{1}{n_j}(1-W_{j,0})
        \ind\{\rho_j\le\tau_j^-\}}
        {\widehat V_{j,-}}
        \biggr].
    \end{align} 
    Hence, by Lemma~\ref{lem:duality-of-cbt}, $\fdr(\hat R^\cbt)\le  \alpha \EE[\Gamma]$, where
    \begin{equation}
    \label{eq:gamma-defn}
        \Gamma
        \coloneqq
        \sum_{j\in\cH_0}
        \frac{\frac{1}{n_j}(1-W_{j,0})
        \ind\{\rho_j\le\tau_j^-\}}
        {\widehat V_{j,-}}.
    \end{equation} 
    
    We now provide an upper bound for $\Gamma$.
    Let $D(\bm n)\coloneqq\{n_j:j\in[m]\}$ denote the set of distinct control counts.
    For each $d\in D(\bm n)$, define
    \[
        \cJ_d
        \coloneqq
        \{j\in\cH_0:
        n_j=d,\ W_{j,0}=0,\ \rho_j\le\tau_j^-\},
        \qquad
        k_d
        \coloneqq
        |\cJ_d|.
    \]
    We first show that $\tau_j^- = \tau_{j'}^-$ for $j,j'\in\cJ_d$.  Fix $j,j'\in\cJ_d$, and suppose without loss of generality that $\rho_j<\rho_{j'}$. Since $W_{j,0}=W_{j',0}=0$ and $n_j=n_{j'}=d$, we have for every $s\ge\rho_{j'}$ that $R^{j,-}(s)=R^{j',-}(s)$ and $\hat{V}^{j,-}(s)=\hat V^{j',-}(s)$.  It follows that the $j$th and $j'$th perturbed processes agree at $s=\tau_{j'}^- \geq \rho_{j'}$, so using definition of $\tau_{j'}^-$, we have that
    \[
        \frac{\hat V^{j,-}(\tau_{j'}^-)}
        {R^{j,-}(\tau_{j'}^-)\vee1} = \frac{\hat V^{j',-}(\tau_{j'}^-)}
        {R^{j',-}(\tau_{j'}^-)\vee1} \le\alpha.
    \]
    By maximality of $\tau_j^-$, we deduce that $\tau_j^-\ge\tau_{j'}^- \geq \rho_{j'}$.  Thus the two perturbed processes also agree at $s=\tau_j^-$, and hence
    \[
        \frac{\hat V^{j',-}(\tau_j^-)}
        {R^{j',-}(\tau_j^-)\vee1}
        =
        \frac{\hat V^{j,-}(\tau_j^-)}
        {R^{j,-}(\tau_j^-)\vee1}
        \le\alpha.
    \]
    The maximality of $\tau_{j'}^-$ now gives $\tau_{j'}^-\ge\tau_j^-$, so $\tau_j^-=\tau_{j'}^-$.

    When $\cJ_d$ is nonempty, we denote the common value of $\tau_j^-$ for $j\in \cJ_d$ as $u_d$. For every $j\in\cJ_d$, we have $\hat V_{j,-} = \hat V(u_d)-{1}/{d} \ge 1/\nmin + (k_d-1)/d$.
    Therefore, 
    \begin{equation}
        \label{eq:cbt-counting-class-bound}
        \sum_{j\in\cJ_d}
        \frac{1/n_j}{\widehat V_{j,-}}
        \le
        \frac{k_d}{k_d-1+d/\nmin} \leq \frac{h(d)}{{h(d)-1+d/\nmin}},
    \end{equation}
    where in the final inequality, we used the fact that for $d\ge\nmin$, the function 
    $k \mapsto  {k}/({k-1+d/\nmin})$ 
    is increasing on $\mathbb{N}_0$ and $k_d\le | \{j \in [m]\colon n_j = d\}| = h(d)$.  We conclude that
    \[
        \Gamma \le \sum_{d\in D(\bm n)} \sum_{j\in\cJ_d}
        \frac{1/n_j}
        {\widehat V_{j,-}} 
        \le
        \sum_{d\in D(\bm n)}
        \frac{h(d)}{h(d)-1+d/\nmin}
        = \sum_{j=1}^m 
        \frac{1}{h(d)-1+n_j/\nmin} = 
        C(\bm n),
    \]
    as required.
\end{proof}

\subsection{Proof of Proposition~\ref{prop:lower-bound}}
\label{sec:appd-proof-lower-bound}

    We use the rank-based notation  from the common setup in
    Appendix~\ref{sec:appd-proof-inflated-fdr-control},
    with $c_0=1/n_{\min}$.
    In particular, $R(s)$, $\widehat V(s)$, $V(s)$, and $\tau$ are defined in~\eqref{eq:cbt-rank-fdp-process} and~\eqref{eq:tau}.  
    For $N\in\NN$, let
    \[
        L_N
        \coloneqq
        \lfloor N^{1/2}\rfloor,
        \qquad
        \alpha_N
        \coloneqq
        \frac{1}{4N^2}.
    \]
    When there is no confusion, we omit the $N$ subscript for brevity. 

\paragraph{Constructing the competition problem instance.}
    Our construction has $L$ null competitions. For $r\in[L]$, define
    \[
        w_r
        \coloneqq
        \frac{1}{N-r+1},
        \qquad
        \overline A_r
        \coloneqq
        \sum_{u=1}^r w_u,
    \]
    with $\overline A_0\coloneqq0$.  
    The null competition corresponding to the index $r$ will have $N-r+1$ controls. All nonnull competitions introduced below will have $N^4$ controls, so $n_{\min} = N-L+1$.  We stratify the competitions into $L$ stages. Define
    \[
        D_r
        \coloneqq
        \bigg\lceil
        \frac{1/n_{\min}+\overline A_{r-1}}{\alpha}
        \bigg\rceil-1,
        \qquad r\in[L],
    \]
    with $D_0\coloneqq0$; further, define the differences $d_r\coloneqq D_r-D_{r-1}$. Since $\overline A_r$ is increasing, $d_r\ge0$.  Stage~$r$ consists of one null competition followed by $d_r$ nonnull competitions, so the total number of competitions is $m_N = L+D_L$.  Denote the index of the sole null competition in stage $r$ by
        $j_r
        \coloneqq
        r+D_{r-1}$, 
    and the index of the final competition in stage $r$ by
       $ s_r
        \coloneqq
        r+D_r$. 
    In particular, $\cH_0=\{j_1,\dots,j_L\}$.

        We next populate the competitions with scores consistent with this
    ordering. For competition $j \in [m_N]$ and for $k\in \{0, \dots, n_j\}$, define $c_{j,k}
        \coloneqq
        ({n_j-k})/(4n_j+1)$, 
    so that $c_{j,0}>c_{j,1}>\cdots>c_{j,n_j}=0$.  For $r \in [L]$, draw independent uniform permutations $\Pi_r$ of $\{0,\dots,n_{j_r}\}$ and set
    \[
        S_{j_r,k}
        \coloneqq
        -j_r+c_{j_r,\Pi_r(k)}
    \]
    for each $k\in \{0, \dots, n_{j_r}\}$. 
    Then 
    \[
        \pi_r
        \coloneqq
        \PP(W_{j_r,0}=1)
        =
        \PP\big(\Pi_r(0)=0\big)
        =
        \frac{1}{n_{j_r}+1}
        =
        \frac{1}{N-r+2}.
    \]
    Moreover, $W_{j_1,0},\dots,W_{j_L,0}$ are independent because
    the permutations are independent.

    For each \emph{nonnull} competition $j\notin\cH_0$ and for $k\in \{0, 1,\dots, n_j\}$, set
    \[
        S_{j,k}
        \coloneqq
        -j+c_{j,k}.
    \]
    Thus $W_{j,0}=1$ with probability 1.
    For both null and nonnull $j$, the scores within competition $j$ are distinct
    and lie in $[-j,-j+1/4)$, so there are no ties.  This construction therefore satisfies all the conditions of
    Theorem~\ref{thm:inflated-fdr-control}.   Additionally, the  scores across different $j$ are mutually independent (and the nonnull competitions are not random at all).  The winning scores satisfy
    \[
        S_{j,(0)}
        =
        -j+\frac{n_j}{4n_j+1},
    \]
    so $S_{1,(0)}>S_{2,(0)}>\cdots>S_{m_N,(0)}$.  
    Consequently, 
    \begin{equation}
        \label{eq:cbt-lower-deterministic-ranks}
        \rho_j
        =
        1+\sum_{\ell=1}^{m_N}
        \ind\{S_{\ell,(0)}>S_{j,(0)}\}
        =
        j,
        \qquad j\in[m_N].
    \end{equation}

\paragraph{A lower bound for the attained FDR.}
    For $r\in[L]$, consider the event
    \[
        E_r
        \coloneqq
        \{W_{j_r,0}=1\}
        \cap
        \bigcap_{u \in [L] \setminus\{ r \} }\{W_{j_u,0}=0\}.
    \]
    Thus, on $E_r$, the null competition in stage $r$ is the unique null competition with a test winner. We claim that, for all $N \in\NN$ and on the event $E_r$,  we have $\tau = s_r$ and
    \begin{equation}
        \label{eq:cbt-lower-stage-behavior}
        \fdp(\hat R^\cbt)
        =
        \frac{1}{D_r+1}.
    \end{equation}
To prove the claim, fix $r\in[L]$ and work on $E_r$. At the endpoint $s_k$ of stage $k$, the first $s_k$ competitions contain the first $k$ null competitions and $D_k$ nonnull competitions. Since every nonnull competition has a test winner, by~\eqref{eq:cbt-rank-fdp-process} and~\eqref{eq:cbt-lower-deterministic-ranks},
    \[
        \bigl(R(s_k),\widehat V(s_k)\bigr)
        =
        \begin{cases}
            (D_k,\ 1/n_{\min}+\overline A_k),
            &k \leq r-1,\\
            (D_k+1,\ 1/n_{\min}+\overline A_k-w_r),
            &k\ge r.
        \end{cases}
    \]
    In particular,
    \[
        \frac{\widehat V(s_r)}{R(s_r)}
        =
        \frac{1/n_{\min}+\overline A_{r-1}}{D_r+1}
        \le
        \alpha,
    \]
    where the inequality follows from the definition of $D_r$.

    Now fix $k>r$.  Then $\alpha(D_k+1)
    \le
    {1}/{n_{\min}}+\overline A_{k-1}+\alpha$ and 
    \[
        \widehat V(s_k)
        =
        \frac{1}{n_{\min}}+\overline A_{k-1}+w_k-w_r \geq \frac{1}{n_{\min}}+\overline A_{k-1}+ w_k-w_{k-1} > \frac{1}{n_{\min}}+\overline A_{k-1} + \frac{1}{N^2} > \frac{1}{n_{\min}}+\overline A_{k-1} + 
        \alpha.
    \]
    It follows that $\widehat V(s_k)/R(s_k) > \alpha$.  More generally, every $s$ satisfying $s_{k-1}<s\le s_k$ has $\widehat V(s)=\widehat V(s_k)$ and $R(s)\le R(s_k)$.  Hence
    \[
        \frac{\widehat V(s)}{R(s)\vee1}
        \ge
        \frac{\widehat V(s_k)}{R(s_k)}
        >
        \alpha.
    \]
    Thus, $\tau=s_r$ on $E_r$.  Moreover, if rejecting at this threshold, then exactly one of the $D_r+1$ rejections is null, which proves~\eqref{eq:cbt-lower-stage-behavior}.

    Now define  
    $\lambda_{0,N}
    \coloneqq
    \sum_{r=1}^L\pi_r$.  The events $E_1,\dots,E_L$ are disjoint, and independence of the null competitions gives that 
\begin{align*}
\PP(E_r)
    =
    \pi_r\prod_{u \in [L] \setminus  \{r\}}(1-\pi_u) \geq \pi_r\biggl(1-\sum_{u \in [L] \setminus  \{r\}}\pi_u\biggr) \geq (1-\lambda_{0,N})\pi_r. 
\end{align*}
    Combining these observations with~\eqref{eq:cbt-lower-stage-behavior} yields
    \begin{equation}
        \label{eq:cbt-lower-fdr-sum}
        \fdr_N = \EE\left[ \fdp(\hat R^\cbt) \right] \ge \sum_{r=1}^L \left(\PP(E_r)\cdot \frac{1}{D_r+1}\right)
        \ge
        (1-\lambda_{0,N})\alpha
        \sum_{r=1}^L
        \frac{\pi_r}{1/n_{\min}+\overline A_{r-1}+\alpha}.
    \end{equation}
    Now, for $r\in[L_N]$,
    \[
        \frac{\pi_r}{w_r}
        =
        \frac{N-r+1}{N-r+2}
        =
        1-  \frac{1}{N-r+2} \ge 1- \frac{1}{N - L_N +2}, %
    \]
    and 
    \[
        \frac{1}{1/n_{\min}+\overline A_{r-1}+\alpha}
        \ge
        \frac{1-\alpha n_{\min}}
        {1/n_{\min}+\overline A_{r-1}}.
    \] 
    Further, with the fact that $\overline A_r=\overline A_{r-1}+w_r$ and
    $\overline A_0=0$, we invoke the inequality $\log(1+x)\le x$ for $x\ge0$ to obtain 
    \begin{equation}
        \label{eq:cbt-lower-log-telescope}
        \begin{aligned}
            \sum_{r=1}^L
            \frac{w_r}{1/n_{\min}+\overline A_{r-1}}
            &\ge
            \sum_{r=1}^L
            \log\biggl(
            1+\frac{w_r}{1/n_{\min}+\overline A_{r-1}}
            \biggr)\\
            &=
            \sum_{r=1}^L
            \bigl[
            \log(1+n_{\min}\overline A_r)
            -\log(1+n_{\min}\overline A_{r-1})
            \bigr]=
            \log(1+n_{\min}\overline A_L).
        \end{aligned}
    \end{equation}
    Applying the preceding bounds 
    to~\eqref{eq:cbt-lower-fdr-sum} gives %
    \begin{equation}
        \label{eq:cbt-lower-harmonic-sum}
        \begin{aligned}
            \fdr_N
            &\ge
            (1-\lambda_{0,N})
            \biggl(1-\frac{1}{N-L_N+2}\biggr)
            (1-\alpha n_{\min})\alpha\log(1+n_{\min}\overline A_L)
            \\
        \end{aligned}
    \end{equation}
Now, using the fact that 
\[
 D_L =
            \bigg\lceil
            \frac{1/n_{\min}+\overline A_{L-1}}{\alpha}
            \bigg\rceil\le
            \frac{L}{\alpha n_{\min}} =
            \frac{4N^2L}{N-L+1},
\]
we obtain
\begin{align}
\label{Eq:ALLambdaN}
|n_{\min}\overline A_L - \Lambda_N| = \biggl|n_{\min}\overline A_L - n_{\min} \sum_{r=1}^L \pi_r - \frac{n_{\min}D_L}{N^4+1}\biggr| &\leq n_{\min}\biggl(\frac{1}{N-L+1} - \frac{1}{N+1}\biggr) + \frac{n_{\min}D_L}{N^4+1} \nonumber \\
&\leq \frac{L}{N+1} + \frac{4N^2L}{N^4+1} \rightarrow 0.
\end{align}
Since $\lambda_{0,N} \leq L/(N+1) \rightarrow 0$, $\alpha n_{\min} = (N-L+1)/(4N^2) \rightarrow 0$ and $n_{\min} \overline A_L \geq L(N-L+1)/N \rightarrow \infty$, we conclude from~\eqref{eq:cbt-lower-harmonic-sum} that
\[
\fdr_N \geq \{1-o(1)\}\alpha_N\log(1+\Lambda_N),
\]
which proves~\eqref{eq:fdr-log-lower-bound}. 

Finally, from~\eqref{Eq:ALLambdaN} and the fact that $n_{\min} \overline A_L \rightarrow \infty$, we have that $\log(1+\Lambda_N)\to\infty$.

\subsection{Proof of Proposition~\ref{prop:fdr-qfinalist}}

Throughout the proof, take $n_j = n$ and $q_j = q$ for every $j\in [m]$, and use the within-competition ranks $\rho_{j,k}$  and the processes $C_{\bm q}(\cdot)$ and $C^{(-j)}_{\bm q} (\cdot)$ introduced in the proof of Proposition~\ref{prop:qfinalist-cbt-compound}.  For $t\in \RR$, define %
\[
    R_{\bm q}(t) \coloneqq \sum_{\ell =1}^m \ind\{ W^{(q)}_{\ell,0} = 1, S_{\ell, 0} \ge t \}
\]
and let $\cS_q \coloneqq \{ S_{j,k} \colon W^{(q)}_{j,k} = 1, j \in [m], 0 \le k \le n \}$ 
denote the set of all $q$-finalist scores in all competitions. 
Define
\begin{equation}
    T^{(\bm q)} \coloneqq \min \biggl\{
        t \in \cS_q \colon \frac{1/n + C_{\bm q} (t)}{R_{\bm q}(t) \vee 1} \le \alpha
    \biggr\},
\end{equation}
with $\min \varnothing \coloneqq \infty$. The same argument used in the proof of Proposition~\ref{prop:cbt-bh-equivalence} can be applied to prove that
\begin{equation}
    \label{eq:cbt-bh-qfin-equivalence}
    \hat R^\bh\coloneqq \hat R^\bh ( \tilde p_1^{(q)} , \dots, \tilde p_m^{(q)}  ) = \{ j \in [m]\colon W_{j,0}^{(q)}  = 1, S_{j,0} \ge T^{(\bm q)} \}. 
\end{equation} 
Similarly to~\eqref{eq:cbt-basic-fdp-ratio-bound}, 
\begin{equation}
    \label{eq:qfin-fdp-bound}
        \fdp(\hat R^\bh) = \frac{\sum_{j\in \cH_0} \ind\{ W_{j,0}^{(q)} = 1, S_{j,0} \ge T^{(\bm q)} \}}{R_{\bm q}(T^{(\bm q)}) \vee 1} \le \alpha \cdot n \frac{\sum_{j\in \cH_0} \ind\{ W_{j,0}^{(q)} = 1, S_{j, 0} \ge T^{(\bm q)}\}}{1 + n C_{\bm q}(T^{(\bm q)})}.
\end{equation}
For each $j \in [m]$, write the ordered scores within competition $j$ as $S_{j,(0)} > \cdots > S_{j, (n)}$.  
Define  the perturbed processes
\begin{equation}
    \begin{split}
         R^{(j)}(t)  &\coloneqq \ind\{ S_{j,(0)} \ge t\} + \sum_{\ell \neq j} \ind\{ W_{\ell, 0}^{(q)}=1 , S_{\ell, 0} \ge t \} \\
         C^{(j)}(t) &\coloneqq \sum_{r=1}^{q-1} \ind\{S_{j,(r)} \ge t\} + \sum_{\ell  \neq j} \sum_{k=1}^n \ind\{ W_{\ell, k}^{(q)} = 1, S_{\ell, k} \ge t \}
    \end{split}
\end{equation} 
and define the perturbed version of $T^{(\bm q)}$ by
\begin{equation}
    T_j  \coloneqq \min \biggl\{
        t \in \cS_q \colon \frac{1 + C^{(j)}(t) }{n \big(R^{(j)}(t) \vee 1\big)} \le \alpha
    \biggr\}.
\end{equation}
The pair $\bigl( R_{\bm q}(t), C_{\bm q}(t) \bigr)$ coincides with $\bigl(R^{(j)}(t), C^{(j)}(t) / n\bigr)$ on the event $\{ W_{j,0}^{(q)} = 1 , S_{j,0} \ge t\}$;
otherwise, $R_{\bm q}(t) \le R^{(j)}(t)$ and $C_{\bm q}(t)\ge C^{(j)}(t)/n$. Hence $T_j \le T^{(\bm q)}$ for $j\in [m]$.  Furthermore, when $j \in \hat R^\bh$, we have from~\eqref{eq:cbt-bh-qfin-equivalence} that $W_{j,0}^{(q)}  =1$ and $S_{j,0} \ge T^{(\bm q)} \ge T_j$. Thus
\[
       \frac{1/n + C_{\bm q} (T_j)}{R_{\bm q}(T_j) \vee 1} =\frac{1 + C^{(j)}(T_j) }{n \big(R^{(j)}(T_j) \vee 1\big)}\le \alpha ,
\]
so by minimality of $T^{(\bm q)}$ we conclude that $T^{(\bm q)}\le T_j$.  We deduce that $T_j=T^{(\bm q)}$ for $j\in \hat R^\bh$ and hence
\begin{equation}
    \label{eq:qfin-frac-bound-1}
    \sum_{j\in\cH_0} \frac{\ind\{j \in \hat{R}^{\mathrm{BH}}\}}{1 + n C_{\bm q}(T^{(\bm q)})} = \sum_{j\in\cH_0} \frac{\ind\{ W_{j,0}^{(q)} = 1, S_{j, 0} \ge T^{(\bm q)}\}}{1 + n C_{\bm q}(T^{(\bm q)})}
    \le \sum_{j\in\cH_0} \frac{\ind\{ W_{j,0}^{(q)} = 1, S_{j, 0} \ge T_j \}}{1 + n C_{\bm q}(T_j)}.
\end{equation}
Now, for $j \in \cH_0$, we have that $T_j$ and $S_{j,(0)},\ldots,S_{j,(n)}$ are $\cI_j$-measurable.  On the event $\{W_{j,0}^{(q)} = 1, S_{j,0} \ge T_j\}$, the test score is a $q$-finalist in competition $j$, so 
\[ 
n C_{\bm q}(T_j)  = \biggl(\sum_{r=0}^{q-1} \ind\{ S_{j, (r)}\ge T_j \} -1\bigg)+ n C_{\bm q}^{(-j)} (T_j), 
\]
and  
\[
    \PP(W_{j,0}^{(q)} = 1, S_{j,0} \ge T_j  \given \cI_j )  = \sum_{r=0}^{q-1} \PP( S_{j,0} = S_{j,(r)}, S_{j,(r)} \ge T_j \given \cI_j ) = \frac{1}{n+1}  \sum_{r=0}^{q-1} \ind\{ S_{j, (r)}\ge T_j \}.
\]
On the other hand, on the event $\{W_{j,0}^{(q)} = 0\}$, all of the $q$-finalist scores in competition $j$ are control scores, so
\[
\sum_{k=1}^n \ind\{ W_{j,k}^{(q)} = 1, S_{j,k} \ge T_j \} = \sum_{r=0}^{q-1}\ind\{ S_{j,(r)} \ge T_j \}, \quad 
n C_{\bm q}(T_j)  = \sum_{r=0}^{q-1} \ind\{ S_{j, (r)}\ge T_j \} + n C_{\bm q}^{(-j)} (T_j). 
\]
Since $\mathbb{P}\bigl(W_{j,0}^{(q)} = 0 \biggiven \cI_j\bigr) = (n-q+1)/ (n+1)$, it follows that for $j \in \cH_0$,
\begin{align} 
    \label{eq:qfin-frac-bound-2}
        \EE \biggl[\frac{\ind\{ W^{(q)}_{j,0} = 1, S_{j, 0} \ge T_j \}}{1 + n C_{\bm q}(T_j)}  \biggr] &= \EE\biggl[ \EE \biggl[\frac{\ind\{ W_{j,0}^{(q)} = 1, S_{j, 0} \ge T_j \}}{1 + n C_{\bm q}(T_j)} \bigggiven \cI_j \biggr] \biggr] \nonumber \\
        &= \EE\biggl[ \frac{\frac{1}{n+1}  \sum_{r=0}^{q-1} \ind\{ S_{j, (r)}\ge T_j \} }{\sum_{r=0}^{q-1} \ind\{ S_{j, (r)} \ge T_j \} + n C_{\bm q}^{(-j)}(T_j)} \biggr] \nonumber \\ 
        &=  \EE\biggl[ \frac{\EE[ \ind\{W_{j,0}^{(q)} =0 \} \given \cI_j]}{n-q+1} \cdot  \frac{  \sum_{r=0}^{q-1} \ind\{ S_{j, (r)}\ge T_j \} }{\sum_{r=0}^{q-1} \ind\{ S_{j, (r)} \ge T_j \} + n C_{\bm q}^{(-j)}(T_j)}   \biggr] \nonumber \\
        &=  \EE\biggl[ \frac{1}{n-q+1} \EE\biggl[ \ind\{ W_{j, 0}^{(q)} = 0 \} \frac{  \sum_{r=0}^{q-1} \ind\{ S_{j, (r)}\ge T_j \} }{\sum_{r=0}^{q-1} \ind\{ S_{j, (r)} \ge T_j \} + n C_{\bm q}^{(-j)}(T_j)} \bigggiven \cI_j  \biggr]  \biggr] \nonumber \\
        &= \EE\biggl[ \frac{1}{n-q+1}\EE\biggl[ 
        \frac{ \ind\{ W_{j,0}^{(q)} = 0 \} \sum_{k=1}^n \ind\{W_{j,k}^{(q)}=1, S_{j,k} \ge T_j \} }{n C_{\bm q} (T_j)}
        \bigggiven\cI_j\biggr] \biggr] \nonumber \\
        &=    \frac{1}{n-q+1}\EE\biggl[ 
        \frac{ \ind\{ W_{j,0}^{(q)} = 0 \} \sum_{k=1}^n \ind\{W_{j,k}^{(q)}=1, S_{j,k} \ge T_j \} }{n C_{\bm q} (T_j)}
         \biggr]. 
\end{align}
Combining the results of~\eqref{eq:qfin-fdp-bound},~\eqref{eq:qfin-frac-bound-1}, and~\eqref{eq:qfin-frac-bound-2}, we have
\begin{equation}
    \fdr(\hat R^\bh) = \EE[ \fdp(\hat R^\bh) ] \le \alpha\frac{n}{n - q +1} \sum_{j\in\cH_0}\EE\biggl[ 
        \frac{ \ind\{ W_{j,0}^{(q)} = 0 \} \sum_{k=1}^n \ind\{W_{j,k}^{(q)}=1, S_{j,k} \ge T_j \} }{n C_{\bm q} (T_j)}
         \biggr]. 
\end{equation}
Now let
\begin{equation}
    \cJ \coloneqq \biggl\{ j \in \cH_0\colon W_{j,0}^{(q)} = 0, \sum_{k=1}^n \ind\{ W_{j,k}^{(q)} = 1, S_{j,k} \ge T_j \} > 0 \biggr\},
\end{equation}
We claim that $T_j$ takes a common value, denoted $T_*$, for $j \in \mathcal{J}$.  To see this, we follow a similar argument to the one found in the proof of Proposition~\ref{prop:cbt-counting-fdr-bound}. Fix $j\in \cJ$, so that $S_{j,(0)} \ge T_j$.
For $j' \in \cH_0$ and $t\le T_j$, we have $R^{(j')} (t) \le R^{(j)} (t)$ and $C^{(j')} (t) \ge C^{(j)}(t)$, so 
\begin{equation}
    \frac{1 + C^{(j')} (t) }{n \bigl( R^{(j')} (t) \vee 1 \bigr)}  \ge  \frac{1 + C^{(j)} (t) }{n \bigl( R^{(j)} (t) \vee 1 \bigr)}
\end{equation}
for such $j'$ and $t$. 
By the minimality of $T_j$, we have for all $t \in \cS_q$ with $t < T_j$ that the right-hand side is strictly greater than $\alpha$, so $T_{j'} \ge T_j$.  On the other hand, if $j' \in \cJ$, we may reverse this argument, giving $T_j \ge T_{j'}$ and so $T_j = T_{j'}$. 

We conclude that
\begin{equation}
    \begin{split}
        \sum_{j\in\cH_0}\EE\biggl[ 
        \frac{ \ind\{ W_{j,0}^{(q)} = 0 \} \sum_{k=1}^n \ind\{W_{j,k}^{(q)}=1, S_{j,k} \ge T_j \} }{n C_{\bm q} (T_j)}
         \biggr] = \EE\biggl[
         \frac{\sum_{j \in \cJ} \sum_{k=1}^n \ind\{ W_{j,k}^{(q)} = 1, S_{j,k} \ge T_* \} }{\sum_{\ell =1}^m  \sum_{k=1}^n \ind\{ W_{\ell ,k}^{(q)} = 1, S_{\ell ,k} \ge T_* \}}
         \biggr] \le 1,
    \end{split}
\end{equation}
as desired.

%
%
%
%
%

%
 
%
%
%
%
%
%
%
%
%
%
%
%
%
%
%
%
%
%
%
%

%
%
%
%
%
%
%
%
%
%
%
%
%
%
%
%
%

%
%
%
%
%
%
%
%
%
%
%
%
%
%
%
%
%
%
%

%
%
%
%
%
%
%
%
%
%
%
%
%
%
%
%
%
%
%
%
%
%
%
%
%
%
%
%
%
%
%

%
%
%
%
%
%
%
%
%
%
%
%
%
%
%
%
%
%
%
%
%
%
%
%
%
%
%
%
%
%
%
%
%
%
%
%
%
%

%
%
%
%
%
%
%
%
%
%
%
%
%
%
%
%

%
%
%
%
%
%
%
%
%
%
%
%
%
%
%
%
%
%
%
%
%
%
%
%
%
%
%
%
%

%
%
%
%
%
%
%
%
%
%
%
%
%
%

%
%
%
%
%
%
%

%
%
%
%
%
%
%
%
%
%
%
%
%
%
%
%
%
%
%
%
%
%
%
%
%
%
%
%
%
%
%
%
%
%
%
%
%
%
%
%
%
%
%
%
%
%
%
%
%
%
%
%
%
%
%
%
%
%
%
%
%
%
%
%
%
%
%
%
%
%
%
%
%
%

%
%
%
%
%
%
%
%
%
%
%

%
%
%
%
%
%
%
%
%
%
%
%
%

%
%
%
%
%
%
%
%
%
%
%
%
%
%
%
%
%
%
%
%
%
%
%
%
%
%
%
%
%

%
%
%
%

%
%
%
%
%
%
%
%
%
%
%
%
%
%
%
%
%
%
%

%
%
%
%
%
%
%
%
%
%
%
%
%
%
%
%
%

%% file: sections_appendix/appd_evalues.tex

\section{Proofs and additional results on compound e-values}
\label{sec:appd-evalues}

As a preliminary, we define the e-BH procedure~\citep{wang2022false}. For random variables $e_1, \dots, e_m$, let $e_{(k)}$ denote the 
$k$th largest e-value and write
\begin{equation}
    \hat k \coloneqq \max \biggl\{ k \in [m] \colon e_{(k)} \ge \frac{m}{\alpha k} \biggr\},
\end{equation} 
where $\max\varnothing \coloneqq 0$. The e-BH procedure 
then rejects the $\hat k$ largest e-values:
\begin{equation}
    \hat R^\ebh (e_1, \dots, e_m)  \coloneqq \biggl\{ j \in [m]\colon e_j \ge \frac{m}{\alpha \hat k} \biggr\}.
\end{equation}
One can observe that $\hat{R}^\ebh(e_1, \dots, e_m) = \hat R^\bh\bigl((1/e_1) \wedge 1, \dots, (1/e_m) \wedge 1\bigr)$, with 
division by zero resulting in $\infty$ by convention.~\citet{wang2022false} proved that 
e-BH applied to strict or compound e-values controls the FDR at the nominal level, regardless of the dependence structure of the e-values.

\subsection{Proof of Proposition~\ref{prop:cbt-compound-evalues}}

We first prove the compound validity of $\tilde e_1, \dots, \tilde e_m$.
Fix $j \in \mathcal{H}_0$.  For $t \in \RR$, Define a leave-one-out version of the weighted control winner count $C(t)$ from the proof of Proposition~\ref{prop:cbt-bh-equivalence} by
\begin{align*}
    C^{(-j)}(t)=
    \sum_{\ell\ne j}\frac{1}{n_\ell}
    \sum_{k=1}^{n_\ell}
    \ind\{W_{\ell,k}=1,\ S_{\ell,k}\ge t\}.
\end{align*}
As a standard convention throughout this paper, we interpret
$0/0$ as zero for the ratios that follow.  On $\{W_{j,0}=1\}$, no control wins competition $j$, so
$C(T_j^+)=C^{(-j)}(T_j^+)$; on $\{W_{j,0}=0\}$, we have 
$\tilde e_j = 0$. Therefore,
\[
    \frac{\tilde e_j }{m}
    =
    \frac{
        W_{j,0}\ind\{S_{j,(0)}\ge T_j^+\}
    }{
        \frac{1}{n_j}+C^{(-j)}(T_j^+)
    }.
\]
Recalling the definition of the $\sigma$-algebra
$\cI_j$ from~\eqref{eq:individual-invariant-sigma-algebra}, both $T_j^+$ and $C^{(-j)}(T_j^+)$ are
$\cI_j$-measurable.  By~\eqref{eq:cbt-conditional-winner-balance} and the no ties assumption,
\begin{align*}
    \EE\Bigl[
        \frac{\tilde e_j }{m}
        \Biggiven \cI_j
    \Bigr]
    &=
    \EE\biggl[
        \frac{
            \frac1{n_j} (\sum_{k=1}^{n_j} W_{j,k})
            \ind\{S_{j,(0)}\ge T_j^+\}
        }{
            \frac{1}{n_j}+C^{(-j)}(T_j^+)
        }
        \Biggiven \cI_j
    \biggr]
    \\
    &=
    \EE\biggl[
        \frac{
            \frac{1}{n_j}\sum_{k=1}^{n_j}
            \ind\{W_{j,k}=1,\ S_{j,{(0)}}\ge T_j^+\}
        }{
            \frac{1}{n_j}+C^{(-j)}(T_j^+)
        }
        \Biggiven \cI_j
    \biggr]\\
    &=
    \EE\biggl[
        \frac{
            \frac{1}{n_j}\sum_{k=1}^{n_j}
            \ind\{W_{j,k}=1,\ S_{j,k}\ge T_j^+\}
        }{
            \frac{1}{n_j}+C^{(-j)}(T_j^+)
        }
        \Biggiven \cI_j
    \biggr].
\end{align*}

Whenever the numerator in the final ratio above is positive,
a control wins competition $j$, so $T_j^+=T$.
Moreover, in this case, the winning control score of competition $j$ must exceed this threshold, 
and hence $C(T) =  {1}/{n_j}+C^{(-j)}(T_j^+)$.  On the other hand, if the numerator is zero, then 
either the test unit won competition $j$ (meaning $W_{j,k} = 0$ for all $k\in[n_j]$);
or some control score $S_{j,k}$ won competition $j$ but $S_{j,k} < T_j^+$.
In the latter case, $T_j^+ = T$, so $\frac{1}{n_j}\sum_{k=1}^{n_j}
        \ind\{W_{j,k}=1,\ S_{j,k}\ge T\}=0$ as well. Consequently, 
\[
    \frac{
        \frac{1}{n_j}\sum_{k=1}^{n_j}
        \ind\{W_{j,k}=1,\ S_{j,k}\ge T_j^+\}
    }{
        \frac{1}{n_j}+C^{(-j)}(T_j^+)
    }
    =
    \frac{
        \frac{1}{n_j}\sum_{k=1}^{n_j}
        \ind\{W_{j,k}=1,\ S_{j,k}\ge T\}
    }{
        C(T)
    }.
\]
Taking expectations and summing over the null hypotheses
therefore yields
\begin{align*}
    \sum_{j\in\cH_0}\EE[\tilde e_j ]
    &=
    m\EE\biggl[
        \frac{ 
            \sum_{j\in\cH_0}\frac{1}{n_j}
            \sum_{k=1}^{n_j}
            \ind\{W_{j,k}=1,\ S_{j,k}\ge T\}
        }{C(T)}
    \biggr]\\
    &= m\EE\biggl[
        \frac{ 
            \sum_{j\in\cH_0}\frac{1}{n_j}
            \sum_{k=1}^{n_j}
            \ind\{W_{j,k}=1,\ S_{j,k}\ge T\}
        }{\sum_{j=1}^m \frac{1}{n_j}
            \sum_{k=1}^{n_j}
            \ind\{W_{j,k}=1,\ S_{j,k}\ge T\}}
    \biggr] \le m,
\end{align*}
so $\tilde{e}_1,\ldots,\tilde{e}_m$ are compound e-values.  

For the second claim, for any $j \in \hat R^\ebh$, we have 
$ \tilde e_j\ge
{m}/ \bigl({\alpha|\hat R^\ebh|}\bigr)
>0$. Thus $W_{j,0}=1$ 
and $ S_{j,0}\ge T_j^+\ge T$. 
Hence, by Proposition~\ref{prop:cbt-bh-equivalence}
\[
    \hat R^\ebh
    \subseteq
    \{j\in[m]:W_{j,0}=1,\ S_{j,0}\ge T\}
    =
    \hat R^\cbt = \hat R^\bh\bigl( \tilde p_1(1/\nmin), \dots, \tilde p_m(1/\nmin) \bigr),
\]
as required.
\hfill $\qed$

%
%
%
%
%
%

%
%
%
%
%
%
%
%
%
%
%
%
%
%
%
%
%
%
%
%
%
%
%
%
%
%
%
%
%
%
%
%
%
%
%
%
%

%
%
%
%
%
%
%
%
%
%
%
%
%
%
%
%
%
%
%
%
%
%
%
%
%
%
%
%
%
%
%
%
%
%
%
%
%
%
%
%
%
%
%

%
%
%
%
%
%
%
%
%
%
%

%
%
%
%
%
%
%
%
%
%
%

%
%
%
%
%
%
%
%
%
%
%
%
%
%
%
%
%
%
%
%
%
%
%

%
%
%
%
%
%
%
%
%
%
%
%
%
%
%
%
%
%
%

%
%
%
%
%
%
%
%
%
%
%
%
%
%
%
%
%
%
%
%
%
%
%
%
%
%
%
%
%
%
%
%
%
%
%
%
%
%
%
%
%
%
%

%
%
%
%
%
%
%
%
%
%
%
%
%

%
%
%
%
%
%
%
%
%
%
%
%
%
%
%
%
%
%
%
%
%
%
%
%
%
%
%
%
%
%
%
%
%
%
%
%
%
%
%
%
%
%
%
%
%
%
%
%
%
%

%
%
%
%
%
%
%
%
%
%
%
%
%
%
%
%
%
%
%
%
%
%
%
%
%
%
%
%
%

%
%
%
%
%
%
%
%
%
%
%
%
%
%

%
%
%
%
%
%
%
%
%
%
%
%
%

%% file: sections_appendix/appd_fdr_supermartingale.tex

\section{Proofs for supermartingale-based FDR bounds}
\label{sec:appd-fdr-supermartingale}
\label{sec:appd-deferred-fdr-bounds}

In this section, we prove the supermartingale-based FDR upper bounds 
presented in Propositions~\ref{prop:kr-inflated-fdr-control-common-offset},~\ref{prop:shared-cbt-ville-fdr-control},
and~\ref{prop:abstract-random-block-cbt-ville-fdr-control}. 
We first establish common notation in Appendix \ref{sec:supermartingale-notation}, then 
provide the common bound in Appendix~\ref{sec:supermartingale-integrated-fdr}, 
and finally specialize the result for the three settings corresponding to the 
aforementioned propositions in Appendix~\ref{sec:supermartingale-applications}.  

\subsection{Notation and reduction to the FDP}
\label{sec:supermartingale-notation}

Fix a common offset $c_0>0$ and a level $\alpha\in(0,1)$.  Similar in spirit to our previous notation, we make the following high-level definitions, with specifics to follow later: for $t\in \RR$, 
let $C(t)$ be the weighted number of control
winners, $R(t)$ the number of test winners, and $V(t)$ the number of null
test winners, all with scores at least $t$.
When dealing with random hypotheses, we write the (random) set of null hypothesis indices as 
$\cH_0\coloneqq\{j\in[m]:U_{j,0}\le c_{j,0}\}$, so that
$\ind\{j\in\cH_0\}=I_{j,0}$ as per the notation of the proof of
Proposition~\ref{prop:block-cbt-compound-rh}.
Let $W'_{j,0}$ denote the test winner indicator, specified below
for each setting, and define
\[
    \begin{aligned}
        R(t)&\coloneqq\sum_{j=1}^m\ind\{W'_{j,0}=1,S_{j,0}\ge t\},\\
        V(t)&\coloneqq\sum_{j=1}^m\ind\{j\in\cH_0\}
        \ind\{W'_{j,0}=1,S_{j,0}\ge t\};
    \end{aligned}
\]
we explicitly define by  convention  $R(+\infty) = V(+\infty) \coloneqq 0$. 

\paragraph{Test-specific controls.}
Using the winner indicators $W_{j,k}$ from Section~\ref{sec:cbt},  define
\[
    C(t)\coloneqq\sum_{j=1}^m\frac1{n_j}\sum_{k=1}^{n_j}
    \ind\{W_{j,k}=1,S_{j,k}\ge t\}.
\]
and take $W'_{j,0}\coloneqq W_{j,0}$.

\paragraph{Shared controls.}
For fixed hypotheses, use the winner indicators
$W_{j,k}^{\blk}$ in~\eqref{eq:block-winner}, define
\[
    C(t) \coloneqq \sum_{g=1}^G\sum_{j\in B_g}\frac1{n_g}\sum_{k=1}^{n_g}
    \ind\{W_{j,k}^{\blk}=1,S_{g,k}\ge t\}
\]
and take $W'_{j,0}=W_{j,0}^{\blk}$.  For random hypotheses, take $V_{j,k}^{\blk}$ from
\eqref{eq:block-winner-rh}, define
\[
    C(t) \coloneqq\sum_{g=1}^G\sum_{j\in B_g}\frac1{n_g}\sum_{k=1}^{n_g}
    \ind\{V_{j,k}^{\blk}=1,S_{g,k}\ge t\}
\]
and set $W'_{j,0} \coloneqq V_{j,0}^{\blk}$.  These control counts agree with $C^{\blk}(t)$ and $C^{\blk,\rh}(t)$
in the proofs of Theorem~\ref{thm:shared-cbt-compound} (Appendix~\ref{sec:appd-proof-shared-cbt-compound}) and  Proposition~\ref{prop:block-cbt-compound-rh} (Appendix~\ref{sec:appd-proof-block-cbt-compound-rh}), respectively. 

\paragraph{The FDP ratio.}
In each setting, and for $t \in \RR$, define
\begin{equation}
    \label{eq:supermartingale-fdp-ratio}
    \widehat{\fdp}_{c_0}(t) \coloneqq \frac{c_0+C(t)}{R(t)\vee1},
    \qquad
    Q_{c_0}(t) \coloneqq  \frac{V(t)}{c_0+C(t)} = \frac{\fdp\bigl( \{ j\in [m] \colon W_{j,0}' = 1, S_{j,0} \ge t\} \bigr)}{\hat{\fdp}_{c_0}(t)}.
\end{equation}
For shared controls, we use the default common-offset p-values in
Propositions~\ref{prop:shared-cbt-ville-fdr-control} and
\ref{prop:abstract-random-block-cbt-ville-fdr-control}, so we take $c_0 = \rmax$. 
By analogous reasoning to the proof of Proposition~\ref{prop:cbt-bh-equivalence} (Appendix~\ref{sec:appd-cbt-equivalence-proof}), in all three settings, 
\[
    \hat R^\bh = \{ j \in [m] \colon W'_{j,0} =1, S_{j,0} \ge t_\bh \},
\]
where $t_\bh$ is the score corresponding to the largest compound p-value rejected by BH (and $\infty$ if the rejection set is empty). 
Hence, $R(t_\bh)$ counts the number of rejections made by BH, while $V(t_\bh)$ counts the number of false rejections. Moreover, $c_0 + C(t_\bh) \le \alpha R(t_\bh)$ by similar reasoning to~\eqref{eq:cbt-pval-bh-thresh}.
Consequently, %
\begin{equation}
    \label{eq:supermartingale-pathwise-fdp}
    \fdp(\hat R^\bh)
    =\frac{V(t_{\bh})}{R(t_{\bh})\vee1}
    \le\alpha\frac{V(t_{\bh})}{c_0+C(t_{\bh})}
    \le\alpha\sup_{t\in\RR}Q_{c_0}(t).
\end{equation}
As a result, 
\begin{equation}
    \label{eq:supermartingale-bh-reduction}
    \begin{aligned}
        \fdr \le \alpha \,\EE\Bigl[
            \sup_{t\in\RR}Q_{c_0}(t)
        \Bigr].
    \end{aligned}
\end{equation} 

\subsection{A high-probability bound  on the FDP and the resulting FDR bound}
\label{sec:supermartingale-ratio-bound}
\label{sec:supermartingale-integrated-fdr}

To bound the expectation in~\eqref{eq:supermartingale-bh-reduction},
we first prove simultaneous high-probability bounds for $   \sup_{t\in\RR}Q_{c_0}(t)$.

\begin{lemma}[Based on \citet{katsevich2020simultaneous}]
    \label{lem:competition-supermartingale-tail}
    Let $(\cF_r)_{r\in \NN_0}$ be a filtration. For $r \in \NN$,  let $X_r\in\{0,1\}$ be
    $\cF_r$-measurable, and let $w_r$ be $\cF_{r-1}$-measurable.
    Define $\pi_r\coloneqq\PP(X_r=1\given \cF_{r-1})$ and 
    suppose that, for a deterministic $\bar w > 0$,
    \begin{equation}
        \label{eq:supermartingale-conditional-weights}
        \pi_r
        \le\frac{w_r}{1+w_r} \qquad \text{and} \qquad 0\le w_r\le\bar w.
    \end{equation} 
    Suppose also that  for every $t\in\RR$ there exists a
    nonnegative integer $r(t)$ such that
    \begin{equation}
        \label{eq:supermartingale-count-representation}
        V(t)=\sum_{\ell=1}^{r(t)}X_\ell,
        \qquad
        C(t)\ge\sum_{\ell=1}^{r(t)}w_\ell(1-X_\ell).
    \end{equation}
    Then, for every $c_0>0$ and $x>1$,
    \begin{equation}
        \label{eq:competition-supermartingale-tail}
        \PP\biggl(\sup_{t\in\RR}
        Q_{c_0}(t) \ge x \bigggiven \cF_0 \biggr)
        \le e^{-c_0x\theta_{\bar w}(x)},
    \end{equation}
    where $\theta_{\bar w}(x)$ is the unique positive solution in $\theta$ of
    \begin{equation}
        \label{eq:competition-supermartingale-root}
        e^{-\bar w x\theta}+\bar w e^\theta=1+\bar w.
    \end{equation}
\end{lemma}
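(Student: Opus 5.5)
The plan is an exponential supermartingale of Ville type. For $\theta>0$ and $\lambda\coloneqq x\theta$, define
\[
M_r\coloneqq \exp\Bigl(\theta\sum_{\ell=1}^r X_\ell-\lambda\sum_{\ell=1}^r w_\ell(1-X_\ell)\Bigr),\qquad M_0=1,
\]
and find a $\theta$ for which $(M_r)$ is a nonnegative $(\cF_r)$-supermartingale. The main obstacle is the one-step condition. Conditional on $\cF_{r-1}$, with $w_r$ measurable there, we need
\[
\EE[M_r/M_{r-1}\given\cF_{r-1}]=\pi_r e^\theta+(1-\pi_r)e^{-\lambda w_r}\le 1 .
\]
The left side is affine increasing in $\pi_r$ because $e^\theta>1>e^{-\lambda w_r}$. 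It therefore suffices to check the case $\pi_r=w_r/(1+w_r)$, which reduces to
\[
\phi(w)\coloneqq w e^\theta+e^{-\lambda w}-(1+w)\le 0\quad\text{for } w\in[0,\bar w].
\]
The function $\phi$ is convex in $w$ and satisfies $\phi(0)=0$. So it is enough to have $\phi(\bar w)\le 0$, that is,
\[
e^{-\bar w x\theta}+\bar w e^\theta\le 1+\bar w .
\]

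Next we choose $\theta$. The map $\psi(\theta)\coloneqq e^{-\bar w x\theta}+\bar w e^\theta-(1+\bar w)$ satisfies $\psi(0)=0$ and $\psi'(0)=\bar w(1-x)<0$ because $x>1$. It is strictly convex and tends to $\infty$. Hence it has a unique positive root $\theta_{\bar w}(x)$, and $\psi\le 0$ on $[0,\theta_{\bar w}(x)]$. Take $\theta=\theta_{\bar w}(x)$.

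Then we transfer to $Q_{c_0}$. By \eqref{eq:supermartingale-count-representation}, on the event $\{Q_{c_0}(t)\ge x\}$ we have $V(t)\ge x\bigl(c_0+C(t)\bigr)$, and $V(t)\ge x c_0>0$. Writing $r=r(t)$,
\[
\sum_{\ell\le r}X_\ell\ \ge\ x c_0+x\sum_{\ell\le r}w_\ell(1-X_\ell).
\]
Multiplying by $\theta$ gives $\log M_r\ge \theta x c_0=c_0x\theta_{\bar w}(x)$. Thus $\{\sup_t Q_{c_0}(t)\ge x\}\subseteq\{\sup_{r\ge0}M_r\ge e^{c_0x\theta}\}$.

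Finally, Ville's maximal inequality for nonnegative supermartingales, applied conditionally on $\cF_0$ with $M_0=1$, gives probability at most $e^{-c_0x\theta_{\bar w}(x)}$. The indices $r(t)$ may be unbounded, so we apply Ville over all $r\in\NN_0$, or take the limit of finite horizons by monotone convergence. The case where $V(t)=0$ is excluded automatically, since then $Q_{c_0}(t)=0<x$.
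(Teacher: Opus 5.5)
Your proof is correct, and it takes a genuinely different route from the paper's. The paper never writes down a supermartingale. Instead it casts the sequence as an instance of \citet[Lemma~1, Case~3]{katsevich2020simultaneous}, introducing one auxiliary null hypothesis per step with p-value $p_r=\pi_r$ if $X_r=1$ and $p_r=1$ otherwise, cutoff $\alpha_r^{\mathrm{KR}}=\pi_r$, and function $h_r(p)=w_r\ind\{p>\alpha_r^{\mathrm{KR}}\}$. It checks their filtration and conditional-moment conditions; the latter is $w_r(1-\pi_r)\ge\pi_r$, which is equivalent to $\pi_r\le w_r/(1+w_r)$. The root equation then appears as a rearrangement of their Eqn.~(24). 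The paper finishes with the same pathwise transfer you use, $\sup_tQ_{c_0}(t)\le\sup_{r\ge0}\sum_{\ell\le r}X_\ell/\{c_0+\sum_{\ell\le r}w_\ell(1-X_\ell)\}$.

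You instead prove the needed special case directly. You build the exponential process, reduce the one-step condition to the extreme case $\pi_r=w_r/(1+w_r)$, then use convexity of $\phi$ with $\phi(0)=0$ to reduce to $w_r=\bar w$, and finish with Ville's inequality. Your version is self-contained and shows exactly why $\theta_{\bar w}(x)$ is defined by that equation: it is the one-step supermartingale condition at the worst configuration. The paper's version is shorter only because it outsources this computation, at the price of a somewhat artificial p-value encoding.

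One small point to tighten. Your inclusion actually covers $\{\exists t: Q_{c_0}(t)\ge x\}$, which can be strictly smaller than $\{\sup_tQ_{c_0}(t)\ge x\}$. This happens when the supremum equals $x$ without being attained and $r(t)$ is unbounded; along such $t$, $\log M_{r(t)}$ need not approach $c_0x\theta$. There are two easy fixes:
\begin{itemize}
\item When $r(t)$ is bounded, as in all of the paper's applications, a pigeonhole argument over the finitely many values of $r(t)$ gives some $r$ with $M_r\ge e^{c_0x\theta}$.
\item In general, bound the event by $\{\exists t:Q_{c_0}(t)\ge x'\}$ for $1<x'<x$ and let $x'\uparrow x$. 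This uses continuity of $x\mapsto\theta_{\bar w}(x)$, which follows from the implicit function theorem because $\partial_\theta\psi>0$ at the positive root by strict convexity.
\end{itemize}
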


\begin{proof}
    We apply~\citet[Lemma 1, Case 3]{katsevich2020simultaneous} under
    the conditional law given $\cF_0$.
    This lemma requires us to define a (potentially infinite) collection of 
    hypotheses, as well as an ordering of hypotheses and corresponding p-values and cutoffs. 
    Using the superscript $\mathrm{KR}$ to distinguish their proof objects
    from ours, 
    introduce one auxiliary hypothesis $H_r^{\mathrm{KR}} $ for each step $r$, with all hypotheses null, and define their p-values by
    \[
        p_r\coloneqq\begin{cases}
            \pi_r& \text{if }X_r=1,\\
            1& \text{if } X_r=0.
        \end{cases}
    \]
    Take the (auxiliary) null set, ordering, cutoffs, and functions to be
    \[\cH_0^{\mathrm{KR}} \coloneqq\NN,\qquad 
        \pi^{\mathrm{KR}}(r)\coloneqq r,
        \qquad \alpha_r^{\mathrm{KR}} \coloneqq\pi_r,
        \qquad h_r(p) \coloneqq w_r\ind\{p>\alpha_r^{\mathrm{KR}}\}, \quad p \in [0,1].
    \]
    By the definitions of $p_r, \alpha_r^{\mathrm{KR}}$, and $h_r$, 
    and since $\pi _r < 1$, we have 
    $\ind\{p_r\le\alpha_r^{\mathrm{KR}}\}=X_r$,
    and hence $ h_r(p_r)=w_r(1-X_r).$ 
    These quantities are $\cF_r$-measurable, 
    while $\pi^\mathrm{KR}(r)$ and $\cH_0^{\mathrm{KR}}$ are deterministic; the filtration requirement of
    \citet[Eqn.~(19)]{katsevich2020simultaneous} is therefore verified.
    The  functions $h_r$ are predictable with respect to the filtration $(\cF_r)_{r\ge 0}$, and the
    conditional inequalities of
    \citet[Eqn.~(20)]{katsevich2020simultaneous} hold because
    \[
        \PP(p_r\le\alpha_r^{\mathrm{KR}}\mid\cF_{r-1})
        =\pi_r=\alpha_r^{\mathrm{KR}},
        \qquad
        \EE[h_r(p_r)\mid\cF_{r-1}]
        =w_r(1-\pi_r)\ge\pi_r=\alpha_r^{\mathrm{KR}}.
    \]
    Moreover, $h_r(p)=0$ for $p\le\alpha_r^{\mathrm{KR}}$ and
    $0\le h_r(p)\le\bar w$, as required in Case~3 with $B=\bar w$.  \citet[Eqn.~(24)]{katsevich2020simultaneous} then  becomes~\eqref{eq:competition-supermartingale-root} after rearranging.

    The rejection sets are
    $\cR_r^{\mathrm{KR}} \coloneqq \{\ell\le r:p_\ell\le\alpha_\ell^{\mathrm{KR}}\}
    =\{\ell\le r:X_\ell=1\}$.
    With these choices and regularization $a=c_0$, the ratio appearing
    in \citet[Eqn.~(21)]{katsevich2020simultaneous} is 
    \[
        \frac{\sum_{\ell=1}^rX_\ell}
        {c_0+\sum_{\ell=1}^rw_\ell(1-X_\ell)},
    \]
    with the ratio interpreted as zero when the rejection set is empty.
    
    The conclusion of \citet[Lemma~1]{katsevich2020simultaneous} yields
    \begin{equation}
        \label{eq:supermartingale-sequential-bound}
        \PP\biggl(
        \sup_{r\ge0}
        \frac{\sum_{\ell=1}^rX_\ell}
        {c_0+\sum_{\ell=1}^rw_\ell(1-X_\ell)}\ge x
        \bigggiven \cF_0\biggr)
        \le e^{-c_0x\theta_{\bar w}(x)}.
    \end{equation}
    Finally, by~\eqref{eq:supermartingale-count-representation},
    \[
        \sup_{t\in\RR}Q_{c_0}(t) = \sup_{t\in\RR} \frac{V(t)}{c_0 + C(t)}
        \le 
        \sup_{t\in\RR}\frac{\sum_{\ell =1}^{r(t)} X_\ell  }{c_0  + \sum_{\ell =1}^{r(t)}  w_\ell (1- X_\ell)}
        \le
        \sup_{r\ge0}
        \frac{\sum_{\ell=1}^rX_\ell}
        {c_0+\sum_{\ell=1}^rw_\ell(1-X_\ell)},
    \]
    as required.
\end{proof}

We are now in a position to state and prove our main unifying result, in which bound the FDR by leveraging Lemma~\ref {lem:competition-supermartingale-tail}.
\begin{proposition}%
    \label{prop:supermartingale-integrated-fdr}
    Suppose that $\fdp\le\alpha\sup_{t\in\RR}Q_{c_0}(t)$ almost surely and that
    $C(t)$ and $V(t)$ satisfy the conditions of
    Lemma~\ref{lem:competition-supermartingale-tail}.
    Then
    \begin{align}
        \fdr
        &\le\alpha\biggl\{1+\int_1^\infty
        e^{-c_0x\theta_{\bar w}(x)}\,dx\biggr\}
        \label{eq:supermartingale-tail-integral}\\
        &=\alpha\frac{c_0}{\bar w^2}\int_0^1
        (1-z)^{c_0/\bar w-1}
        \frac{\log\bigl(1 / (1-z)\bigr)}{\log(1+z/\bar w)}\,dz,
        \label{eq:supermartingale-root-free-fdr}
    \end{align}
    where the integral is finite for all $\bar{w}$.  For fixed $\bar w$, the inflation factor in
    \eqref{eq:supermartingale-root-free-fdr} converges to one as
    $c_0\to\infty$.
\end{proposition}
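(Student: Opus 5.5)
The plan is to combine the pathwise reduction~\eqref{eq:supermartingale-bh-reduction} with the tail bound of Lemma~\ref{lem:competition-supermartingale-tail}, and then to carry out an explicit change of variables. Write $M\coloneqq\sup_{t\in\RR}Q_{c_0}(t)\ge0$. Using $\EE[M]\le 1+\int_1^\infty\PP(M\ge x)\,dx$, together with the tower property applied to the conditional tail bound~\eqref{eq:competition-supermartingale-tail} (which holds conditional on $\cF_0$ and therefore also unconditionally), we get $\PP(M\ge x)\le e^{-c_0x\theta_{\bar w}(x)}$ for every $x>1$. The assumption $\fdp\le\alpha M$ then gives~\eqref{eq:supermartingale-tail-integral} directly.

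To obtain the root-free form~\eqref{eq:supermartingale-root-free-fdr}, I would reparametrize the root $\theta=\theta_{\bar w}(x)$ by $z\coloneqq \bar w(e^\theta-1)$. With this choice,~\eqref{eq:competition-supermartingale-root} reads $e^{-\bar w x\theta}=1-z$, so
\[
\theta=\log(1+z/\bar w),\qquad x=x(z)\coloneqq\frac{\log\bigl(1/(1-z)\bigr)}{\bar w\log(1+z/\bar w)},\qquad e^{-c_0x\theta}=(1-z)^{c_0/\bar w}.
\]
Several properties of this map need to be checked. First, $z\in(0,1)$. Second, $x(z)$ is strictly increasing, with $x(0^+)=1$ and $x(1^-)=\infty$. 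The value $x(0^+)=1$ is consistent with the fact that at $x=1$ the function $\theta\mapsto e^{-\bar w\theta}+\bar we^\theta$ is convex with vanishing derivative at $0$, so $\theta_{\bar w}(1)=0$. Third, for each $x>1$ the positive root is unique, which is what makes $z\mapsto x(z)$ a bijection onto $(1,\infty)$. Monotonicity of $x(z)$ is the step I expect to need the most care. I would verify it by comparing derivatives of the numerator and denominator logarithms, or alternatively argue implicitly through the equation in $\theta$.

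Given the bijection, I would substitute $x=x(z)$ into $\int_1^\infty(1-z)^{c_0/\bar w}\,dx$ and integrate by parts:
\[
\int_1^\infty e^{-c_0x\theta}\,dx=\Bigl[x(z)(1-z)^{c_0/\bar w}\Bigr]_{z=0}^{z=1}+\frac{c_0}{\bar w}\int_0^1x(z)(1-z)^{c_0/\bar w-1}\,dz .
\]
The boundary term equals $-1$. At $z=1$ it vanishes, because $x(z)$ grows only like $\log(1/(1-z))$ while $(1-z)^{c_0/\bar w}\to0$. At $z=0$ it contributes $-x(0^+)=-1$. This $-1$ cancels the leading $1$ in~\eqref{eq:supermartingale-tail-integral}, and inserting the formula for $x(z)$ produces exactly the prefactor $c_0/\bar w^2$ in~\eqref{eq:supermartingale-root-free-fdr}.

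Finiteness of the integral follows because the integrand is bounded near $z=0$ (since $x(z)\to1$) and near $z=1$ behaves like $\log(1/(1-z))\,(1-z)^{c_0/\bar w-1}$, which is integrable for any $c_0>0$.

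For the limit $c_0\to\infty$, I would write the inflation factor as $\EE[x(Z)]$ with $Z\sim\mathrm{Beta}(1,c_0/\bar w)$. Since $(c_0/\bar w)(1-z)^{c_0/\bar w-1}$ is exactly that Beta density, this representation holds. As $c_0\to\infty$, $Z\to0$ in probability, and $x$ is continuous at $0$ with $x(0^+)=1$. Uniform integrability follows from $x(z)\lesssim1+\log(1/(1-z))$ together with a uniform bound on $\EE[\log(1/(1-Z))]=\bar w/c_0$. Hence $\EE[x(Z)]\to1$.
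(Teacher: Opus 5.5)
Your proposal is correct and follows essentially the same route as the paper. It combines the reduction $\fdr\le\alpha\EE[\sup_t Q_{c_0}(t)]$ with the tail bound of Lemma~\ref{lem:competition-supermartingale-tail}, substitutes $z=\bar w(e^{\theta_{\bar w}(x)}-1)$ so that $x(z)=\log(1/(1-z))/\{\bar w\log(1+z/\bar w)\}$, and integrates by parts with boundary contribution $-1$.

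The departures are minor. You check finiteness directly in the $z$-variable, whereas the paper uses $\theta_{\bar w}(x)\ge\theta_{\bar w}(2)$ for $x\ge2$. You obtain the $c_0\to\infty$ limit via $Z\sim\mathrm{Beta}(1,c_0/\bar w)$ and uniform integrability, whereas the paper applies dominated convergence to $\int_1^\infty e^{-c_0x\theta_{\bar w}(x)}\,dx$. In your route, what justifies uniform integrability is $\EE[\log(1/(1-Z))]=\bar w/c_0\to0$, i.e.\ $L^1$ convergence; a mere uniform bound on that expectation would not suffice. For the monotonicity of $x(z)$ that you flag, the paper simply notes that $-\log(1-z)/z$ is increasing and $\bar w\log(1+z/\bar w)/z$ is decreasing.
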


\begin{proof}
    For the first upper bound~\eqref{eq:supermartingale-tail-integral}, we have by~\eqref{eq:supermartingale-bh-reduction} and Lemma~\ref{lem:competition-supermartingale-tail} that
    \[
        \begin{aligned}
            \fdr
             \le\alpha\,\EE\Bigl[\sup_{t\in\RR}Q_{c_0}(t)\Bigr] 
            &=\alpha\int_0^\infty
            \PP\biggl(\sup_{t\in\RR}Q_{c_0}(t)>x\biggr)\,dx\\
            &\le\alpha\biggl\{1+\int_1^\infty
            \PP\biggl(\sup_{t\in\RR}Q_{c_0}(t)\ge x\biggr)\,dx\biggr\} \le\alpha\biggl\{1+\int_1^\infty
            e^{-c_0x\theta_{\bar w}(x)} \,dx\biggr\}.
        \end{aligned}
    \] 
    For $x>1$, set $z \coloneqq \bar w(e^{\theta_{\bar w}(x)}-1) \in (0,1)$. By~\eqref{eq:competition-supermartingale-root} and our  choice of $z$, we have $e^{-c_0x\theta_{\bar w}(x)} = (1-z)^{c_0 / \bar w}$ and by inversion can solve for $x = x(z)$ where
    \[
        x(z) \coloneqq \frac{-\log(1-z)}{\bar w\log(1+z/\bar w)} = \frac{-\log(1-z) / z}{\bar w \log (1 + z/\bar w)  /z}. 
    \]
    Since $\lim_{z \searrow 0} x(z)=1$ and $\lim_{z\nearrow 1} x(z) = \infty$, and since $z \mapsto -\log(1-z)/z$ and $z \mapsto {\bar w \log (1 + z/\bar w)  /z}$ are strictly increasing and strictly decreasing functions respectively, it follows that $z \mapsto x(z)$ is a continuous, strictly increasing bijection from $(0,1)$ to $(1, \infty)$.

    Now, for $\varepsilon \in (0,1/2)$, we can integrate by parts to obtain
    \[
        \begin{aligned}
            \int_{x(\varepsilon)}^{x(1-\varepsilon)}
            e^{-c_0x\theta_{\bar w}(x)}\,dx
            &=
            \int_\varepsilon^{1-\varepsilon}
            (1-z)^{c_0/\bar w}x'(z)\,dz\\
            &=
            \varepsilon^{c_0/\bar w}x(1-\varepsilon)
            -(1-\varepsilon)^{c_0/\bar w}x(\varepsilon)
            +\frac{c_0}{\bar w}
            \int_\varepsilon^{1-\varepsilon}
            x(z)(1-z)^{c_0/\bar w-1}\,dz.
        \end{aligned}
    \]
    We have
    \[
        \varepsilon^{c_0/\bar w}x(1-\varepsilon)
        =
        \frac{\varepsilon^{c_0/\bar w}\log(1/\varepsilon)}
        {\bar w\log\{1+(1-\varepsilon)/\bar w\}}
        \to 0
    \]
    as $\varepsilon \searrow 0$, since $c_0 / \bar w > 0$; moreover $(1-\varepsilon)^{c_0 / \bar w} x(\varepsilon) \to 1$ as $\varepsilon \searrow 0$.  
    It follows by monotone convergence that 
    \begin{equation}
        \label{eq:mct-conclusion}
        1+\int_1^\infty
            e^{-c_0x\theta_{\bar w}(x)} \,dx 
        = \frac{c_0}{\bar w^2} \int_0^1 
        (1-z)^{c_0/\bar w-1}
        \frac{-\log(1-z)}{\log(1+z/\bar w)}\,dz,
    \end{equation} 
    which establishes~\eqref{eq:supermartingale-root-free-fdr}.
    
    The finiteness of this integral for all $\bar w$ is established as follows. Fixing $\bar w$, observe that since $\theta_{\bar{w}}(x) = \log\bigl(1 + z/w)$, and $x \mapsto z(x)$ is increasing, so $\theta_{\bar w}(x)\ge\theta_{\bar w}(2)>0$ for $x\ge2$. Thus 
    \begin{equation}
        \label{eq:integrability}
        \int_1^\infty e^{-c_0x\theta_{\bar w}(x)}\,dx
             \le
            1+\int_2^\infty e^{-c_0x\theta_{\bar w}(2)}\,dx 
             =
            1+\frac{e^{-2c_0\theta_{\bar w}(2)}}
            {c_0\theta_{\bar w}(2)}
            <\infty.
    \end{equation} 
    Taking $c_0=1$ also shows that
    $x\mapsto e^{-x\theta_{\bar w}(x)}$ is integrable on $(1,\infty)$.
    For every $c_0\ge1$, 
    $0\le e^{-c_0x\theta_{\bar w}(x)} \le e^{-x\theta_{\bar w}(x)} $.  Hence by dominated convergence, 
    \[
        \int_1^\infty e^{-c_0x\theta_{\bar w}(x)}\,dx \rightarrow 0
    \]
    as $c_0 \rightarrow \infty$, as required.
    \end{proof}

\begin{remark} [An explicit upper bound]
\label{rem:an-explicit-upper-bound}
As the solution $\theta_{\bar w}(x)$ to~\eqref{eq:competition-supermartingale-root} is not of closed form, we may consider loosening the FDR upper bound slightly in order to obtain a closed-form integrand. 
For $x>1$, define
\[
    \lambda_x \coloneqq \log\biggl(1+
    \frac{1-1/x}{\bar w\vee(1/x)}\biggr).
\]
We claim that $\lambda_x \le \theta_{\bar w}(x)$.  To see this, observe first that the function $z \mapsto e^{-\bar w x \cdot z}  + \bar w e^z  - (1 + \bar w)$ is strictly convex on $[0,\infty)$ and evaluates to zero at $z =0$ and $z = \theta_{\bar w} (x)$.   If $\bar w\ge1/x$, then
\[
    e^{\bar w x\lambda_x}
    =\biggl(1+\frac{1-1/x}{\bar w}\biggr)^{\bar w x}\ge 1 +  \bar w x \frac{x-1}{\bar w 
    x}  = x,
\]
so
\[
e^{-\bar w x \cdot \lambda_x}  + \bar w e^{\lambda_x}  - (1 + \bar w) \leq \frac{1}{x} + \bar w e^{\lambda_x}  - (1 + \bar w) = 0.
\]
Hence $\lambda_x \leq \theta_{\bar{w}}(x)$ in this case.   
On the other hand, if $\bar w<1/x$, then $\lambda_x=\log x$.
The function $s\mapsto e^{-xs\log x}+sx-s-1$ is convex on $[0,1/x]$
and is 0 at both endpoints, so again 
\begin{equation}
    \label{eq:conclusion-of-root-equation}
    e^{-\bar wx \cdot \lambda_x}+\bar w e^{\lambda_x} - (1+\bar w) \leq 0.
\end{equation}
This establishes our claim, and this bound will be used later in Appendix~\ref{sec:supermartingale-global-null-application}.
\end{remark}

\subsection{Proofs of the FDR bounds}
\label{sec:supermartingale-applications}

In each application, we define the filtration and then verify the two
conditions of Lemma~\ref{lem:competition-supermartingale-tail}:
the conditional probability condition
\eqref{eq:supermartingale-conditional-weights} and the conditions on
$C(t)$ and $V(t)$ in
\eqref{eq:supermartingale-count-representation}.
We then invoke Proposition~\ref{prop:supermartingale-integrated-fdr}
with the appropriate substitutions.  %

\subsubsection{Proof of Proposition~\ref{prop:kr-inflated-fdr-control-common-offset}}
\label{sec:supermartingale-nonshared-application}

    \paragraph{Defining the filtration.}
    For $j\in[m]$, define
    \[
        S_{j,(0)}\coloneqq\max_{0\le k\le n_j}S_{j,k},
        \qquad
        \cG
        \coloneqq
        \sigma\bigl(
        (S_{j,(0)})_{j\in[m]},
        (W_{j,0})_{j\notin\cH_0}
        \bigr).
    \] 
    Then $\cG$ contains every competition's winning score and the test winner indicators of each nonnull competition, but it leaves the null
    winner indicators $(W_{j,0})_{j\in\cH_0}$ unrevealed.  Write
    $m_0 \coloneqq |\cH_0|$.  If $m_0=0$, then the FDR is zero and the claim is
    immediate.  Hence, suppose that $m_0 \in [m]$.

    List the null indices $j_1,\ldots,j_{m_0}$ in decreasing order
    of $S_{j,(0)}$, breaking ties by index.  This ordering is random but
    $\cG$-measurable.  For $r\in[m_0]$, take 
    $X_r=W_{j_r,0}$ and $w_r=1/{n_{j_r}}$, 
    and set 
    \[
        \cF_0=\cG,
        \qquad
        \cF_r=\sigma( \cG, X_1,\ldots,X_r).
    \]
    Then $S_{j_r,(0)}$ and $w_r$ are $\cF_0$-measurable, and $X_r$ records
    whether or not the test score wins the $r$th null competition.

    \paragraph{Verifying the conditional probabilities in~\eqref{eq:supermartingale-conditional-weights}.}
    The argument establishing
    \eqref{eq:cbt-log-conditional-label-law} does not require distinct
    winning scores between competitions.  Conditional on $\cF_0$, the
    null winner indicators are independent and satisfy
    \[
        W_{j,0}\given \cF_0
        \sim\Bern\biggl(\frac1{n_j+1}\biggr),
        \qquad j\in\cH_0.
    \]
    Thus, for $r\in[m_0]$,
    \[
        \pi_r
        =\PP(X_r=1\given \cF_{r-1})
        =\frac1{n_{j_r}+1}
        =\frac{w_r}{1+w_r},
        \qquad
        0<w_r\le\frac1{\nmin}.
    \]
    This verifies~\eqref{eq:supermartingale-conditional-weights} with
    $\bar w=1/\nmin$.

    \paragraph{Verifying the conditions on $C(t)$ and $V(t)$ in~\eqref{eq:supermartingale-count-representation}.}
    For $t\in\RR$, let
    \[
        r(t)=\sum_{\ell=1}^{m_0}\ind\{S_{j_\ell,(0)}\ge t\}.
    \]
    Each competition has a unique winner by assumption, from which it follows that
    \begin{align*}
        \sum_{\ell=1}^{r(t)}X_\ell
        &=\sum_{j\in\cH_0}W_{j,0}\ind\{S_{j,(0)}\ge t\}
        =V(t),\\
        \sum_{\ell=1}^{r(t)}w_\ell(1-X_\ell)
        &=\sum_{j\in\cH_0}\frac1{n_j}
        \sum_{k=1}^{n_j}\ind\{W_{j,k}=1,S_{j,k}\ge t\}
        \le C(t).
    \end{align*}
    This verifies~\eqref{eq:supermartingale-count-representation}.
    
    Taking $\bar w=1/\nmin$ in
    Proposition~\ref{prop:supermartingale-integrated-fdr} yields
    \[
        \fdr\le\alpha c_0\nmin^2\int_0^1
        (1-z)^{c_0\nmin-1}
        \frac{-\log(1-z)}{\log(1+\nmin z)}\,dz.
    \]
    Setting $c_0=1/\nmin$ gives the stated expression when the default offset is chosen.
    The limit as $c_0\to\infty$ with $\nmin$ fixed follows from the
    same proposition.\hfill $\qed$

\subsubsection{Proof of Proposition~\ref{prop:shared-cbt-ville-fdr-control}
and Corollary~\ref{coro:limsup-fdr-pi2alpha}}
\label{sec:supermartingale-fixed-shared-application}

    \paragraph{Defining the filtration.}
    Write the number of true null hypotheses in block $g \in [G]$ as $h_g\coloneqq  |B_g\cap\cH_0|$ and define the collection of its null test unit indices as $\cT_g\coloneqq \bigl\{(j,0):j\in B_g\cap\cH_0\bigr\}$.
    Further write the total collection of indices in block $g$ as $\cL_g
    \coloneqq
    \cT_g\cup\bigl\{(g,k):k\in[n_g]\bigr\}$, with cardinality $d_g\coloneqq|\cL_g|=h_g+n_g$. 
    Also define the unordered multiset of scores corresponding to $\cL_g$ as 
    \[
        \cM_g
        \coloneqq
        \Lbag S_{j,0}:j\in B_g\cap\cH_0\Rbag
        \cup 
        \Lbag S_{g,k}:k\in[n_g]\Rbag.
    \]
    Define the initial $\sigma$-algebra in our filtration as 
    \[
        \cF_0
        \coloneqq
        \sigma\left(
        (S_{j,0})_{j\in[m]\setminus\cH_0},
        (\cM_g)_{g\in[G]}
        \right).
    \]
    The no-ties condition in
    Theorem~\ref{thm:shared-cbt-compound}, combined with block exchangeability, 
    precludes ties in $\cM_g$ for each $g\in[G]$. Explicitly, consider distinct $j,\ell\in B_g\cap\cH_0$ and any $k\in[n_g]$;
    block exchangeability then yields 
        $\PP(S_{j,0}=S_{\ell,0}\mid\cG_{-g}^{\blk})
        =\PP(S_{j,0}=S_{g,k}\mid\cG_{-g}^{\blk})=0$ 
    almost surely.

    Write the $d_g$ distinct scores in $\cM_g$  in decreasing order as
    $Z_{g,1}>\cdots>Z_{g,d_g}$. 
    For $g\in [G]$,
    define the unique bijection $\Pi_g:[d_g]\to\cL_g$ such that $Z_{g,q} = S_{\Pi_g(q)}$
    for all $q \in [d_g]$.  
    Further denote $\operatorname{Sym}(\cL_g)$ as the set of bijections from
    $[d_g]$ to $\cL_g$.
    By Assumption~\ref{assu:block-specific}, 
    $\Pi_g \given \cF_0 \sim \unif\bigl( \operatorname{Sym}(\cL_g) \bigr)$ for each $g$, with   $(\Pi_g)_{g\in [G]}$ conditionally independent given $\cF_0$. 
    Enumerate the $D \coloneqq \sum_{g=1}^Gd_g$  positions
    $(g,q)$  as  $(g_1,q_1),\ldots,(g_D,q_D)$, so that
    $Z_{g_1,q_1}\ge\cdots\ge Z_{g_D,q_D}$, breaking ties by block index.  
    This ordered score sequence, spanning all blocks,  is $\cF_0$-measurable.

    Define the filtration $(\cF_r)_{r\in \NN_0}$ by
    \[
        \cF_r \coloneqq \sigma\Bigl( \cF_0, \bigl( \Pi_{g_\ell}(q_\ell)\bigr)_{\ell=1}^r   \Bigr), \qquad r\in [D],
    \]
    with $\cF_{r'} \coloneqq \cF_D$ for all $r' > D$. Define $A_r\coloneqq  \ind \{ \Pi_{g_r}(a) \in \cT_{g_r} \text{ for all } a \in [q_r - 1] \}$, with $A_r \coloneqq 1$ by convention when $q_r=1$. 
    By this definition, $A_r = 1$ if and only if  $\{Z_{g_r, a} \colon {a \le q_r-1 }\}$ corresponds precisely to scores that belong to true null test units in block $g_r$. 

    For $r \in [D]$, denote 
    \[
        s_r \coloneqq A_r \sum_{a  =q_r}^{d_{g_r}} \ind\{ \Pi_{g_r}(a) \in \cT_{g_r} \}= A_r (h_{g_r}  -q_r+1 ) 
    \]
    as the number of null test units among $(Z_{g_r, a})_{a\ge q_r}$ multiplied by the indicator variable $A_r$. The simplification of the definition is due to block $g_r$ containing $h_{g_r}$ null tests in total. Moreover, define $w_r \coloneqq s_r / n_{g_r}$ and $X_r\coloneqq A_r \ind\{ \Pi_{g_r}(q_r) \in \cT_{g_r} \}$.  Then $A_r$, $s_r$, and $w_r$ are all $\cF_{r-1}$-measurable, while $X_r$ is $\cF_{r}$-measurable. Furthermore, the event $\{\Pi_{g_r} (q_r) \notin \cT_{g_r}\} \in \cF_r$ specifies that $Z_{g_r, q_r}$ is a control score; on this event, for every $\ell > r$ such that $g_\ell = g_r$, we have $A_\ell = s_\ell  = w_\ell = X_\ell = 0$.  Finally, as the construction above is only well-defined for $r \in [D]$, we define  $X_r  =w_r \coloneqq 0$ for $r > D$ in order to enable the use of Lemma~\ref{lem:competition-supermartingale-tail}.

    \paragraph{Verifying the conditional probabilities in~\eqref{eq:supermartingale-conditional-weights}.}
    
    On $\{A_r = 1\}$, the indices assigned to positions $1, \dots, q_r-1$ in block $g_r$ are null test indices (i.e., in $\cT_{g_r}$). Hence, $s_r$ null test indices and $n_{g_r}$ control indices are available for position $q_r$, each equally likely conditional on $\cF_{r-1}$. Hence, 
    \[
        \pi_r = \frac{s_r}{s_r+n_{g_r}} = \frac{w_r}{1 + w_r}.
    \] 
    On the other hand, on $\{A_r = 0\}$, we have $X_r = s_r = w_r = 0$, so $\pi_r  \le w_r / (1+ w_r)$ also holds.
    The choice of $\bar w = \max_{g \in [G]} b_g / n_g = \rmax$ suffices for~\eqref{eq:supermartingale-conditional-weights}, as
    $w_r = s_r / n_{g_r} \le h_{g_r} / n_{g_r} \le b_{g_r} / n_{g_r} \le \rmax$.

    \paragraph{Verifying the conditions on $C(t)$ and $V(t)$ in~\eqref{eq:supermartingale-count-representation}.}
    For $g\in [G]$, define $S_g^{\max} \coloneqq \max _{k \in [n_g]} S_{g,k}$ as the
    maximum control score in block $g$. Also define 
    \[
        \kappa_g \coloneqq \min\{ r \in [D] \colon g_r = g, \Pi_g(q_r)\notin \cT_g\};
    \]
    the $\kappa_g$th  score in $Z_{g_1,q_1}\ge \cdots \ge Z_{g_D, q_D}$ then corresponds to
    the control unit in block $g$ that attains $S_g^{\max}$. Consequently, for every $r\in[D]$ such that $g_r = g$, we have $X_r = \ind\{ r < \kappa _g \} = \ind\{ Z_{g, q_r} > S_g^{\max} \}$ and $w_r(1-X_r) = s_{\kappa_g}/n_g \cdot \ind\{r = \kappa _g \}$. 

    For $t\in \RR$, let
    \[
        r(t) \coloneqq \sum_{\ell = 1}^D \ind\{ Z_{g_\ell, q_\ell} \ge t \}. 
    \]
    Then $Z_{g_1,q_1}\ge \cdots \geq Z_{g_{r(t)},q_{r(t)}} \geq t \geq Z_{g_{r(t)+1},q_{r(t)+1}}  \ge Z_{g_D, q_D}$. Therefore,
    \begin{align*}
        \sum_{\ell=1}^{r(t)}X_\ell = \sum_{\ell=1}^D X_\ell \ind\{ Z_{g_\ell, q_\ell} \ge t \} &=   \sum_{g=1}^G \sum_{q=1}^{d_g} \ind\{ Z_{g,q} > S_g^{\max}, Z_{g,q} \ge t \}\\
        &=   \sum_{g=1}^G \sum_{j \in B_g \cap \cH_0} \ind\{ S_{j,0} > S_g^{\max}, S_{j,0} \ge t \} = V(t).
    \end{align*}
    On the other hand, 
    \begin{align*}
        \sum_{\ell=1}^{r(t)} w_\ell (1-X_\ell) &= \sum_{\ell=1}^ D w_\ell (1 - X_\ell)  \ind\{ Z_{g_\ell, q_\ell} \ge t \} = \sum_{g=1}^G \frac{s_{\kappa _g}}{n_g}  \sum_{q=1}^{d_g}\ind\{ Z_{g,q} = S_{g}^{\max}, Z_{g, q} \ge t  \} \\
        &=  \sum_{g=1}^G \frac{s_{\kappa _g}}{n_g}  \ind\{  S_{g}^{\max} \ge t  \} = \sum_{g=1}^G \sum_{j \in B_g \cap \cH_0} \frac{1}{n_g} \ind\{ S_{j,0} < S_g^{\max}, S_g^{\max} \ge t \} \le C(t),
    \end{align*}
    where the third equality is by the distinctness of all scores in $\cM_g$.   
    This verifies~\eqref{eq:supermartingale-count-representation}.
    Substituting $c_0=\bar w=\rmax$ into
    Proposition~\ref{prop:supermartingale-integrated-fdr} gives
    \begin{equation}
        \label{eq:supermartingale-shared-fdr}
        \fdr\le\alpha\int_0^1
        \frac{-\log(1-z)}{\rmax\log(1+z/\rmax)}\,dz.
    \end{equation}
    Finally, 
    compound validity of  $\tilde p_1^{\blk}(\rmax),\dots,\tilde p_m^{\blk}(\rmax)$  follows from
    Theorem~\ref{thm:shared-cbt-compound}, since increasing each
    offset $b_g/n_g$ to $\rmax$ can only increase the compound p-values.
    
    For Corollary~\ref{coro:limsup-fdr-pi2alpha}, the integrand in
    \eqref{eq:supermartingale-shared-fdr} converges to $-\log(1-z)/z$
    as $\rmax\to\infty$.
    For $\rmax\ge1$, $\log(1+y)\ge y/(1+y)$ for $y > -1$ gives
    \[
        \rmax\log(1+z/\rmax)
        \ge\frac{z}{1+z/\rmax}\ge\frac{z}{2}
    \]
    for all $z \in [0,1]$.  The dominated convergence theorem therefore allows us to conclude that 
    \begin{equation}
        \label{eq:supermartingale-shared-limit}
        \lim_{\rmax\to\infty}\int_0^1
        \frac{-\log(1-z)}{\rmax\log(1+z/\rmax)}\,dz
        =\int_0^1\frac{-\log(1-z)}z\,dz
        =\sum_{k=1}^\infty\frac1{k^2}
        =\frac{\pi^2}{6},
    \end{equation}
    as desired.\hfill$\qed$

\subsubsection{Proof of Proposition~\ref{prop:abstract-random-block-cbt-ville-fdr-control}}
\label{sec:supermartingale-random-shared-application}

    First,  the random-compound validity of $\tilde p_1^{\blk,\rh}(\rmax),\dots,\tilde p_m^{\blk,\rh}(\rmax)$ follows from Proposition~\ref{prop:block-cbt-compound-rh}. We now prove the FDR upper bound. 

    \paragraph{Defining the filtration.}
    We adapt the argument in Section~\ref{sec:supermartingale-fixed-shared-application} to account for the exchangeability of score--null-status pairs. For $g \in [G]$, replace $\cT_g$ and $\cL_g$ with
    \begin{equation} 
        \cT_g^{\rh}
        \coloneqq \{(j,0):j\in B_g\} \qquad \text{and}
        \qquad
        \cL_g^{\rh} 
        \coloneqq
        \cT_g^{\rh}
        \cup\{(g,k):k\in[n_g]\},
    \end{equation}  
    respectively. Hence, $d_g = b_g + n_g$ here, with $b_g = |B_g|$, as defined in Section~\ref{sec:shared_cbt}. We then define the initial $\sigma$-algebra in our filtration as 
    \begin{equation}
        \cF_0  
        \coloneqq
        \sigma \biggl(
        \Bigl(
        \Lbag (S_u,I_u):u\in\mathcal L_g^{\rh}\Rbag
        \Bigr)_{g\in[G]}
        \biggr). 
    \end{equation}  
    Denote the ordered scores $Z_{g, 1} > \cdots > Z_{g, d_g}$ and the bijection $\Pi_g$ as in Section~\ref{sec:supermartingale-fixed-shared-application}, with $\cL^\rh_g$ in place of $\cL_g$. Further define $J_{g,q} \coloneqq I _{\Pi_g(q)}$ as the null-status indicator associated with $Z_{g,q}$, for $q \in [d_g]$. 
    By Assumption~\ref{assu:block-specific-rh}, we can conclude---with analogous reasoning to the fixed-hypothesis setting--- that $\Pi_g \given \cF_0 \sim \unif\bigl(\operatorname{Sym}(\cL_g^\rh)\bigr)$ and $(\Pi_g)_{g\in [G]}$ are conditionally independent given $\cF_0$. 

    Define the total number of scores with true null status as $D \coloneqq \sum_{g=1}^G \sum_{q=1}^{d_g} J_{g,q}$, which is $\cF_0$-measurable, and enumerate the positions of the scores with true null status with the index pairs $(g_1, q_1), \dots, (g_D, q_D)$, such that 
    \[
        Z_{g_1, q_1} \ge \cdots \ge Z_{g_D, q_D},
    \] 
    again breaking ties by block index. This ordered sequence of true null scores is $\cF_0$-measurable. Now define the filtration
    \[
        \cF_r \coloneqq    \sigma\Bigl( \cF_0, \bigl( \Pi_{g_\ell}(q_\ell)\bigr)_{\ell=1}^r   \Bigr), \qquad r\in [D],
    \]
    with $\cF_{r'}   \coloneqq  \cF_D$ for all $r' > D$. For $r \in [D]$, we redefine $A_r$ and $s_r$ as follows:
    \[
        \begin{aligned}
        A_r
        &\coloneqq
        \ind \biggl\{
        \Pi_{g_r}(a)\in\cT_{g_r}^{\rh} 
        \text{ for every }a<q_r\text{ with }J_{g_r,a}=1
        \biggr\},\\
        s_r
        &\coloneqq
        A_r\biggl(
        b_{g_r}-\sum_{a=1}^{q_r-1}J_{g_r,a}
        \biggr),
        \end{aligned}
        \]
    with $A_r =1$ by convention whenever  
    $\{a\in[q_r-1]:J_{g_r,a}=1\}=\varnothing$. 
    Through our new $A_r$ and $s_r$, we can define $w_r \coloneqq s_r / n_{g_r}$ and $X_r \coloneqq A_r \ind\{ \Pi_{g_r}(q_r) \in \cT_{g_r}^\rh \}$, as in Section~\ref{sec:supermartingale-fixed-shared-application}.  The random variables $A_r, s_r, w_r$ are $\cF_{r-1}$ measurable, while $X_r$ is $\cF_r$-measurable. For $r > D$, set $X_r  =w_r = 0$.
    
    \paragraph{Verifying the conditional probabilities in~\eqref{eq:supermartingale-conditional-weights}.} 
    On $\{A_r= 1\}$, the indices available for position $q_r$ are the $s_r$ remaining test indices and all $n_{g_r}$ controls. Hence, 
    $\pi_r = s_r / (s_r +  n _{g_r})  =w_r/(1+w_r)$. On $\{A_r = 0\}$, we have $\pi_r = w_r / (1+w_r)$ as both sides are zero. Hence,~\eqref{eq:supermartingale-conditional-weights} is verified with the choice $\bar w = \rmax$. 

    \paragraph{Verifying the conditions on $C(t)$ and $V(t)$ in~\eqref{eq:supermartingale-count-representation}.}

    For each $g$ with $\cN_g \neq \varnothing$, we define $\kappa_g$ as in Section~\ref{sec:supermartingale-fixed-shared-application}, with $\cT_g^\rh$ replacing $\cT_g$. In addition, recall that $S_g^{\cN, *}$ denotes the maximum score of the controls in group $g$, restricted to those with null status 1; we use this in place of the fixed-hypothesis~$S_g^{\max}$. 
    Consequently, for every $r\in[D]$ such that $g_r = g$, we have $X_r = \ind\{ r < \kappa _g \} = \ind\{ Z_{g, q_r} > S_g^{\cN, *} \}$ and $w_r(1-X_r) = s_{\kappa_g}/n_g \cdot \ind\{r = \kappa _g \}$. On the other hand, consider $g$ such that $\cN_g = \varnothing$.
    Then, for every $r \in [D]$ with $g_r = g$, the score~$Z_{g_r,q_r}$ must belong to a test unit. Hence, $A_r = X_r = 1$ and $w_r(1-X_r) =0$.
    
    We take the same definition of $r(t)$ and observe from the same reasoning that
    \[ 
        \sum_{\ell=1}^{r(t)}X_\ell
         =
        \sum_{g=1}^G\sum_{q=1}^{d_g}
        J_{g,q}\,
        \ind\{Z_{g,q}>S_g^{\cN,*},\ Z_{g,q}\ge t\} 
         =
        \sum_{g=1}^G\sum_{j\in B_g}
        \ind\{I_{j,0}=1,\ V_{j,0}^{\blk}=1,\ S_{j,0}\ge t\} =V(t). 
    \]
    Furthermore, 
    \[
        \sum_{\ell = 1}^{r(t)} w_\ell (1-X_\ell) = \sum_{\substack{g\in[G]:\\ \cN_g\neq\varnothing}}
        \frac{s_{\kappa_g}}{n_g}
        \ind\{ S_g^{\cN, *} \ge t \} \le \sum_{\substack{g\in[G]:\\ \cN_g\neq\varnothing}}
        \frac{b_g }{n_g}
        \ind\{ S_g^{\cN, *} \ge t \} = C(t). 
    \]
    At this point, we substitute $c_0 = \bar w = \rmax$ into Proposition~\ref{prop:supermartingale-integrated-fdr} and conclude exactly as written in Section~\ref{sec:supermartingale-fixed-shared-application}. \hfill $\qed $

\subsubsection{FDR under the global null}
\label{sec:supermartingale-global-null-application}

Under the global null,  $\fdp = 1$ exactly when the rejection set
is nonempty. We therefore use the tail bound at a single value of $x$,
rather than integrating over it.

\begin{proposition}[FDR under the global null]
    \label{prop:supermartingale-global-null}
    Suppose that the conditions of Proposition~\ref{prop:cbt-bh-equivalence}
    hold and that $\cH_0=[m]$. With $c_0=1/\nmin$ and
    $\alpha\in(0,1)$,
    \begin{align}
        &\fdr\bigl(\hat R^\bh(
        \tilde p_1(1/\nmin),\ldots,\tilde p_m(1/\nmin))\bigr)
        \notag\\
        &\qquad\le
        \left(1+\frac{1-\alpha}{(1/\nmin)\vee\alpha}
        \right)^{-1/(\alpha\nmin)}
        =\begin{cases}
            \{1+\nmin(1-\alpha)\}^{-1/(\alpha\nmin)}
                &\text{for } \nmin<1/\alpha,\\
            \alpha^{1/(\alpha\nmin)} &\text{for }\nmin\ge1/\alpha.
        \end{cases}
        \label{eq:supermartingale-global-null}
    \end{align}
    In particular, the FDR is at most $\alpha$ whenever $\nmin\le1/\alpha$.
\end{proposition}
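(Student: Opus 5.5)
Under the global null every rejection is false, so $\fdp=\ind\{\hat R^\bh\neq\varnothing\}$ and the FDR equals the probability that BH rejects anything. The plan is to bound this probability by a single tail evaluation of Lemma~\ref{lem:competition-supermartingale-tail}, rather than integrating the tail as in Proposition~\ref{prop:supermartingale-integrated-fdr}.

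\textbf{Step 1 (reduction to a tail event).} By the reduction~\eqref{eq:supermartingale-pathwise-fdp}, on $\{\hat R^\bh\neq\varnothing\}$ we have $V(t_\bh)=R(t_\bh)\ge1$, since $\cH_0=[m]$, and also $c_0+C(t_\bh)\le\alpha R(t_\bh)$. Hence
\[
Q_{c_0}(t_\bh)=\frac{V(t_\bh)}{c_0+C(t_\bh)}\ge 1/\alpha,
\]
so $\fdr\le\PP\bigl(\sup_t Q_{c_0}(t)\ge 1/\alpha\bigr)$.

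\textbf{Step 2 (tail bound).} Use the filtration and the variables $X_r=W_{j_r,0}$, $w_r=1/n_{j_r}$ constructed in the proof of Proposition~\ref{prop:kr-inflated-fdr-control-common-offset}. There both conditions of Lemma~\ref{lem:competition-supermartingale-tail} were verified with $\bar w=1/\nmin$. Applying the lemma with $x=1/\alpha>1$ and $c_0=1/\nmin$ gives
\[
\fdr\le \exp\bigl\{-\theta_{\bar w}(1/\alpha)/(\alpha\nmin)\bigr\}.
\]

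\textbf{Step 3 (explicit root bound).} Remark~\ref{rem:an-explicit-upper-bound} gives $\theta_{\bar w}(x)\ge\lambda_x=\log\bigl(1+(1-1/x)/(\bar w\vee 1/x)\bigr)$. With $x=1/\alpha$ and $\bar w=1/\nmin$ this is
\[
\lambda_{1/\alpha}=\log\Bigl(1+\frac{1-\alpha}{(1/\nmin)\vee\alpha}\Bigr).
\]
Substituting into the exponent yields the stated bound. The two cases follow by evaluating the maximum:
\begin{itemize}
\item if $\nmin<1/\alpha$, the maximum is $1/\nmin$, giving $\{1+\nmin(1-\alpha)\}^{-1/(\alpha\nmin)}$;
\item if $\nmin\ge 1/\alpha$, the maximum is $\alpha$, so $1+(1-\alpha)/\alpha=1/\alpha$ and the bound is $\alpha^{1/(\alpha\nmin)}$.
\end{itemize}

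\textbf{Step 4 (the final claim).} This is the step I expect to need care. Set $u=\alpha\nmin\le 1$. If $u=1$, the second case gives exactly $\alpha$. If $u<1$, the first case applies, and the claim reduces to $\{1+\nmin(1-\alpha)\}^{1/u}\ge 1/\alpha$, that is,
\[
\log(1+\nmin-u)\ge u\log(\nmin/u).
\]
I would first try to derive this from the monotone behaviour of the second-case expression together with continuity at $u=1$, checking the comparison directly. As a sanity check, the boundary $\nmin=1$ gives $(2-\alpha)^{-1/\alpha}\le\alpha$, which holds because $\log(1+y)\ge y/(1+y)$ applied with $y=(1-\alpha)/\alpha$ gives the inequality after exponentiation. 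For general $\nmin$, I expect the same inequality with $y=\nmin(1-\alpha)$, combined with $\nmin\le1/\alpha$, to handle it.
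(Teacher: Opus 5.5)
Your Steps 1--3 are exactly the paper's argument: the reduction via~\eqref{eq:supermartingale-pathwise-fdp} to the event $\{\sup_t Q_{c_0}(t)\ge 1/\alpha\}$; then Lemma~\ref{lem:competition-supermartingale-tail}, using the filtration from the proof of Proposition~\ref{prop:kr-inflated-fdr-control-common-offset}, with $x=1/\alpha$ and $c_0=\bar w=1/\nmin$; then the explicit root bound $\lambda_x$ from Remark~\ref{rem:an-explicit-upper-bound}. Your evaluation of the two cases is also correct.

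The gap is Step 4. You do not prove the final claim, and the tool you propose for it fails.

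\textbf{Why your proposed bound fails.} The inequality $\log(1+y)\ge y/(1+y)$ with $y=\nmin(1-\alpha)$ is too weak. As $\nmin\uparrow 1/\alpha$ it would require $1-\alpha\ge\log(1/\alpha)$, which is false for every $\alpha\in(0,1)$. Concretely, at $\nmin=1$ and $\alpha=1/2$ it only gives $\log(3/2)\ge 1/3$. What you need is $\log(3/2)\ge\tfrac12\log 2\approx 0.347$.

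\textbf{Why the $\nmin=1$ check fails.} With $y=(1-\alpha)/\alpha$ it yields $\log(1/\alpha)\ge 1-\alpha$. That is a lower bound on $\log(1/\alpha)$, but $\log(1/\alpha)$ sits on the side of $\log(2-\alpha)\ge\alpha\log(1/\alpha)$ that must be bounded from above. So it does not give $(2-\alpha)^{-1/\alpha}\le\alpha$.

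\textbf{Why the monotonicity remark does not help.} The ``monotone behaviour of the second-case expression'' is irrelevant when $\nmin<1/\alpha$, since there the first-case formula is the bound.

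\textbf{The missing idea: concavity.} Let $f(s)\coloneqq\log\{1+s(1-\alpha)\}$. It is concave, with $f(0)=0$ and $f(1/\alpha)=\log(1/\alpha)$. When $\nmin\le 1/\alpha$,
\[
\nmin=(1-\alpha\nmin)\cdot 0+\alpha\nmin\cdot(1/\alpha)
\]
is a convex combination. Hence $f(\nmin)\ge\alpha\nmin\log(1/\alpha)$, which is exactly $\{1+\nmin(1-\alpha)\}^{-1/(\alpha\nmin)}\le\alpha$.

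Equivalently, by weighted AM--GM with $u=\alpha\nmin\in(0,1]$,
\[
(1/\alpha)^{u}\cdot 1^{1-u}\le u/\alpha+(1-u)=1+\nmin(1-\alpha).
\]
This concavity step is how the paper closes the argument.
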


\begin{proof}
    On the event $\{\hat{R}^{\bh} \neq \varnothing\}$, we have 
    $\sup_{t\in\RR}Q_{c_0}(t)\ge1/\alpha$ by~\eqref{eq:supermartingale-pathwise-fdp}.  Hence by Lemma~\ref{lem:competition-supermartingale-tail} and Remark~\ref{rem:an-explicit-upper-bound} with $c_0 = \bar w = 1 /\nmin$,
    \[ 
            \fdr=\PP\bigl(\hat{R}^{\bh} \neq \varnothing\bigr) \le\PP\biggl(
            \sup_{t\in\RR}Q_{c_0}(t)\ge\frac1\alpha
            \biggr) \leq e^{-c_0 x \theta_{\bar w}(x)} \le e^{-c_0 x \lambda_x} \le 
            \biggl(1+\frac{1-\alpha}{(1/\nmin)\vee\alpha}
            \biggr)^{-1/(\alpha\nmin)}.
    \]
    The different cases in~\eqref{eq:supermartingale-global-null} follow immediately.
    Finally, the function $s\mapsto\log\bigr(1+s(1-\alpha)\bigl)$ is concave, so
    \[
        \begin{aligned}
            \log\bigl(1+\nmin(1-\alpha)\bigr)
            &=
            \log\biggl(
            (1-\alpha\nmin)\cdot1
            +\alpha\nmin\cdot\frac1\alpha
            \biggr) \ge 
            \alpha\nmin\log(1/\alpha).
        \end{aligned}
    \] 
    Multiplying by $-1/(\alpha\nmin)$ and exponentiating shows that
    the first case in~\eqref{eq:supermartingale-global-null} is at most
    $\alpha$. Evaluating the second case at $\nmin = 1/\alpha$ establishes the nominal FDR control for $\nmin \le 1/\alpha$. 
\end{proof}

%% file: sections_appendix/appd_equal_weights.tex

\section{Equal-weights compound p-values for competition-based testing}
\label{sec:appd-equal-weights}

\subsection{Equal weighting with an arbitrary common weight}

We state a general version of Proposition~\ref{prop:cbt-pval-equal-weights}
that holds when every contribution per competition in the numerator 
of the compound p-values is equally weighted by some $w_0 > 0$. 
Proposition~\ref{prop:cbt-pval-equal-weights} then follows 
by choosing $w_ 0 = 1/\nmin$.\footnote{
As Proposition~\ref{prop:cbt-pval-equal-weights-general} only proves the aggregate superuniformity property for $t \in [0,1)$, we must technically argue this property also holds when $t  =1$; this is straightforward, as   the sum of $|\cH_0|$ probabilities is always at most $m$. 
}

\begin{proposition}
    \label{prop:cbt-pval-equal-weights-general}
    Suppose that the conditions of Theorem~\ref{thm:cbt-compound} hold.
    For a deterministic $w_0>0$, define
    \begin{equation}
        \label{eq:cbt-pval-equal-weights-general}
        \tilde p_j^{(w_0)}
        \coloneqq
        \left[
        \frac{
            w_0
            +w_0\sum_{\ell=1}^m\sum_{k=1}^{n_\ell}
            \ind\{W_{\ell,k}=1,\ S_{\ell,k}\ge S_{j,0}\}
            +m\ind\{W_{j,0}=0\}
        }{m}
        \right]\wedge 1,
        \qquad j\in[m].
    \end{equation}
    Then, for every $t\in[0,1)$,
    \begin{equation}
        \sum_{j\in\cH_0}
        \PP\bigl(\tilde p_j^{(w_0)}\le t\bigr)
        \le
        \frac{m}{\nmin w_0} t. 
    \end{equation}
    Furthermore, assume no ties among the competition-winning scores. Then, for every $\alpha\in(0,1)$,
    \begin{equation}
        \fdr\left(
        \hat R^\bh\bigl(
        \tilde p_1^{(w_0)},\dots,\tilde p_m^{(w_0)}
        \bigr)
        \right)
        \le
         \frac{\alpha}{\nmin w_0}.
    \end{equation}
\end{proposition}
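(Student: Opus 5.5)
The plan is to compare the equal-weight p-values $\tilde p_j^{(w_0)}$ with the default individual-offset p-values of Theorem~\ref{thm:cbt-compound}, and, for the FDR claim, to rerun the leave-one-out counting argument from the proof of Proposition~\ref{prop:cbt-counting-fdr-bound}. Throughout, write $\kappa \coloneqq \nmin w_0$.

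\emph{Aggregate superuniformity.} The only change from Theorem~\ref{thm:cbt-compound} is the weighting, so I would reduce to that theorem. On $\{W_{j,0}=1\}$, every $\ell$ has $n_\ell\ge\nmin$, so $w_0 \ge \kappa/n_\ell$; in particular $w_0\ge \kappa/n_j$ for the offset. Hence, with the convention $1/n_j$ for the offset,
\[
w_0+w_0\sum_{\ell,k}\ind\{W_{\ell,k}=1,S_{\ell,k}\ge S_{j,0}\}\ \ge\ \kappa\Bigl(\tfrac1{n_j}+\sum_{\ell}\tfrac{1}{n_\ell}\sum_k\ind\{W_{\ell,k}=1,S_{\ell,k}\ge S_{j,0}\}\Bigr).
\]
If $t/\kappa<1$, this gives $\{\tilde p_j^{(w_0)}\le t\}\subseteq\{\tilde p_j(1/n_j)\le t/\kappa\}$. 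Both events force $W_{j,0}=1$, and the truncation at $1$ is harmless because $t<1$. Summing over $j\in\cH_0$ and applying Theorem~\ref{thm:cbt-compound} at level $t/\kappa$ yields the bound $mt/\kappa$. If instead $t/\kappa\ge1$, the bound is trivial, since $mt/\kappa\ge m\ge|\cH_0|$.

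\emph{FDR.} First, mimicking Proposition~\ref{prop:cbt-bh-equivalence}, BH on $(\tilde p_j^{(w_0)})$ rejects $\{j:W_{j,0}=1,\rho_j\le\tau\}$. Here $\tau$ is defined as in \eqref{eq:tau}, with $\widehat V(s)$ replaced by $w_0\{1+N(s)\}$, where $N(s)$ is the unweighted number of control winners among the top $s$ competitions. As in Lemma~\ref{lem:duality-of-cbt}, this gives $\fdp\le (\alpha/w_0)\sum_{j\in\cH_0}W_{j,0}\ind\{\rho_j\le\tau\}/\{1+N(\tau)\}$. Next, for each null $j$ define the perturbed processes $R^{j,-},N^{j,-}$, which treat competition $j$ as if it had no control winner but a test winner, together with the corresponding $\tau_j^-$. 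On $\{W_{j,0}=1\}$ these agree with the originals, and $\tau_j^-$, $\rho_j$ and $N^{j,-}(\tau_j^-)$ are $\cI_j$-measurable. By \eqref{eq:cbt-conditional-winner-balance}, $\EE[W_{j,0}\given\cI_j]=\frac1{n_j}\EE[1-W_{j,0}\given\cI_j]\le\frac1{\nmin}\EE[1-W_{j,0}\given\cI_j]$. It follows that
\[
\fdr\le\frac{\alpha}{\kappa}\,\EE\Bigl[\sum_{j\in\cJ}\frac{1}{1+N^{j,-}(\tau_j^-)}\Bigr],\qquad \cJ\coloneqq\{j\in\cH_0:W_{j,0}=0,\ \rho_j\le\tau_j^-\}.
\]

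\emph{Main step.} It remains to show that the bracketed sum is at most $1$. As in Proposition~\ref{prop:cbt-counting-fdr-bound}, I would show that $\tau_j^-$ is common to all $j\in\cJ$. The point is that, because every control winner now carries the same weight $w_0$, perturbing $j$ or $j'$ produces identical processes for $s\ge\max(\rho_j,\rho_{j'})$ even when $n_j\ne n_{j'}$. This is exactly why the equal-weight version avoids the $h(d)$-classes of $C(\bm n)$. The maximality argument then gives $\tau_j^-=\tau_{j'}^-=:u$. Every $j\in\cJ$ is then a control winner ranked at most $u$, so $N^{j,-}(u)=N(u)-1\ge|\cJ|-1$. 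The sum is therefore at most $|\cJ|/(1+|\cJ|-1)=1$, giving $\fdr\le\alpha/\kappa$. I expect the main obstacle to be checking the common-threshold claim carefully, namely that the perturbed processes coincide on $s\ge\rho_{j'}$ and that the ties convention on winning scores is used correctly.
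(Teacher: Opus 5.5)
Your proposal is correct and takes essentially the same route as the paper. Aggregate superuniformity comes from the comparison $\tilde p_j(1/n_j)\le \tilde p_j^{(w_0)}/(\nmin w_0)$ on the test-winner event together with Theorem~\ref{thm:cbt-compound}; the FDR bound reruns the leave-one-out and common-threshold argument of Proposition~\ref{prop:cbt-counting-fdr-bound}, where equal weights let every index in $\cJ$ share one threshold, so the sum is at most $1$. The only difference is that you apply $1/n_j\le 1/\nmin$ at the conditional-expectation step rather than when bounding $\Gamma$, which changes nothing.
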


\begin{proof}
    For the first claim, fix $t\in [0,1)$. On the event $ \{ \tilde p_j^{(w_0)}\le t\}$, competition $j$ has a test winner, so $\tilde p_j (1 / n_j) \le \tilde p_j^{(w_0)} / (\nmin w_0)$. By Theorem~\ref{thm:cbt-compound}, 
    \[ 
        \sum_{j\in\cH_0}
        \PP\bigl(\tilde p_j^{(w_0)}\le t\bigr)
        \le
        \sum_{j\in\cH_0}
        \PP\biggl(
        \tilde p_j(1/n_j)\le
        \frac{t}{n_{\min}w_0}\wedge1
        \biggr) \le m\biggl(\frac{t}{n_{\min}w_0}\wedge1\biggr)
        \le \frac{mt}{n_{\min}w_0}. 
    \]

    To prove the FDR bound, we again use the notation of Appendix~\ref{sec:appd-proof-inflated-fdr-control}; we take $R(s)$ and $V(s)$ as in~\eqref{eq:cbt-rank-fdp-process}, but redefine $\hat V(s)$ as 
        \[
            \hat V(s) \coloneqq w_0 + w_ 0  \sum_{j=1}^m (1 - W_{j,0}) \ind\{ \rho _j \le s\}.
        \]
    The quantity $\tau$ is then defined as in~\eqref{eq:tau} but with this new $\hat V(s)$. By analogous reasoning to the proof of Lemma~\ref{lem:duality-of-cbt}, we conclude that $\hat R^\bh \coloneqq R^\bh\bigl(   \tilde p_1^{(w_0)},\dots,\tilde p_m^{(w_0)}   \bigr) = \{j \in [m]\colon W_{j,0} = 1, \rho _ j\le \tau\}$ and $\fdp(\hat R^\bh) \le \alpha V(\tau) / \hat V(\tau)$. 
    Moreover, for all $j \in [m]$, we take the $j$th-perturbed version of $R(s)$ as in the proof of Proposition~\ref{prop:cbt-counting-fdr-bound} and define the $j$th perturbed $V(s)$ as 
    \[
        \hat V^{j,-}(s)
        \coloneqq
        w _ 0 
        +\sum_{\ell \in [m] \setminus\{ j\}}
        w_ 0 (1-W_{\ell,0})\ind\{\rho_\ell\le s\};
    \]
    we can then define $\tau_j^{-}$ via~\eqref{eq:tau-perturbed} but with our new definition of $\hat V^{j, -}(\cdot)$.  Moreover, we again define  $\hat V_{j, -} \coloneqq \hat V^{j,-} (\tau_j^{-})$. 

    Following the start of the proof of Proposition~\ref{prop:cbt-counting-fdr-bound}, we have $\fdr(\hat R^\cbt)\le  \alpha \EE[\Gamma]$, where
    $\Gamma$ is defined as in~\eqref{eq:gamma-defn}. 
    Now define
    \[ 
        \cJ 
        \coloneqq
        \{j\in\cH_0:
        W_{j,0}=0,\ \rho_j\le\tau_j^-\},
        \qquad k\coloneqq|\cJ |.
    \] 
    Similar to how $\tau_j^{-}$ does not vary in value across $j\in\cJ_d$ in the proof of Proposition~\ref{prop:cbt-counting-fdr-bound}, we can show via the same reasoning that $\tau_j^{-} = \tau_{j'}^{-}$ for $j, j' \in \cJ$ (we omit the argument for brevity).
    Denote this common value of $\tau_j$ for $j\in \cJ$ as $\tau_*$. 
    We can therefore write, when $k\ge 1$,
    \begin{equation}
        \begin{split}
            \Gamma = \sum_{j \in \cJ} \frac{1/n_j}{\hat V_{j,-} } \le \sum_{j \in \cJ} \frac{1/ \nmin}{ w_0 + w_0 \sum_{\ell \in [m] \setminus\{ j\}} (1- W_{\ell, 0}) \ind\{ \rho_\ell \le \tau_* \}} \le \sum_{j \in \cJ} \frac{1/ \nmin}{ k w_0} = \frac{1}{\nmin w_0};
        \end{split}
    \end{equation}
    the above bound holds trivially when $k = 0$. We conclude that $\fdr(\hat R^\bh) \le   \alpha / (\nmin w_0)$.
\end{proof}

\subsection{Subsampling to equal-size competitions}

For $j\in[m]$, let $\cK_j$ be a
$\nmin$-subset of $[n_j]$, chosen uniformly at random
and drawn independently of all score values. 
Define
\begin{equation}
    W_{j,k}^{\mathrm{sub}}
    \coloneqq
    \ind\left\{
    S_{j,k}
    =
    \max_{a\in\{0\}\cup\cK_j}S_{j,a}
    \right\},
    \qquad k\in\{0\}\cup\cK_j,
\end{equation}
and
\begin{equation}
    \label{eq:subsampled-sss-compound-pval}
    \tilde p_j^{\mathrm{sub}}
    \coloneqq
    \left[
    \frac{
        \frac{1}{\nmin}
        +\frac{1}{\nmin}
        \sum_{\ell=1}^m\sum_{k\in\cK_\ell}
        \ind\{W_{\ell,k}^{\mathrm{sub}}=1,\
        S_{\ell,k}\ge S_{j,0}\}
        +m\ind\{W_{j,0}^{\mathrm{sub}}=0\}
    }{m}
    \right]\wedge1,
    \qquad j\in[m].
\end{equation}
Thus, we first compute the winners of the competition determined by the 
subsampled control units and the test unit of each hypothesis, then  
construct SSS+ compound p-values using these subsampled competitions. 

\begin{proposition}
    \label{prop:subsampled-bh-fdr-control}
    Suppose that Assumption~\ref{assu:individual} holds and that there are
    no ties among the scores within each competition.
    Further assume that $(\cK_j)_{j\in[m]}$ are independent of the 
    scores $\cS$. Then, for every
    $\alpha\in(0,1)$,
    \begin{equation}
        \fdr\left(
        \hat R^\bh\bigl(
        \tilde p_1^{\mathrm{sub}},\dots,\tilde p_m^{\mathrm{sub}}
        \bigr)
        \right)
        \le\alpha.
    \end{equation}
\end{proposition}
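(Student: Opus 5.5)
The plan is to condition on the subsampling randomness $(\cK_j)_{j\in[m]}$ and reduce to the equal-size setting already treated in the paper. Since $(\cK_j)$ is independent of all scores, I will fix a realization $\bm\kappa=(\kappa_1,\dots,\kappa_m)$ of the subsets. I will then show that, conditional on $\{(\cK_j)=\bm\kappa\}$, the subsampled competitions $(S_{j,0},(S_{j,k})_{k\in\kappa_j})_{j\in[m]}$ satisfy Assumption~\ref{assu:individual} with $n_j=\nmin$ for every $j$. Once this holds, the p-values in~\eqref{eq:subsampled-sss-compound-pval} are exactly the SSS+ compound p-values~\eqref{eq:sss-compound-pval} with $n=\nmin$, computed on these equal-size competitions. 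By Proposition~\ref{prop:sss-compound-bh}, BH applied to them coincides with $\hat R^\sss$ on the subsampled data. Theorem~\ref{thm:inflated-fdr-control} in the equal-size case gives $C(\bm n)=1$, hence conditional FDR at most $\alpha$; equivalently I can cite the nominal SSS+ guarantee of \citet{barber2015controlling}. Averaging over $\bm\kappa$ then gives the unconditional bound.

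The key verification is the conditional exchangeability. Fix $j\in\cH_0$ and write $\cG^{\mathrm{sub}}_{-j}$ for the $\sigma$-algebra generated by the subsampled scores of the other competitions. This $\sigma$-algebra is contained in $\cG_{-j}$. Conditional on $\cG_{-j}$, the full vector $(S_{j,k})_{k=0}^{n_j}$ is exchangeable, and any subvector indexed by a fixed set containing $0$ is therefore also exchangeable. Conditioning on the coarser $\sigma$-algebra $\cG^{\mathrm{sub}}_{-j}$ preserves this exchangeability by the tower property. Conditioning additionally on $(\cK_\ell)=\bm\kappa$ changes nothing, because the subsets are independent of the scores. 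The no-ties conditions are inherited by subcollections, both within each competition and, where needed, among the competition-winning scores. This is enough to apply the equal-size results, which only require uniqueness of the winner within each competition together with Assumption~\ref{assu:individual}.

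I do not expect a serious obstacle, but one technical point needs care. Theorem~\ref{thm:inflated-fdr-control} assumes almost-surely distinct competition-winning scores, whereas Proposition~\ref{prop:subsampled-bh-fdr-control} only assumes no ties within each competition. To avoid this gap I will invoke the original SSS+ FDR result of \citet{barber2015controlling}, recalled in Section~\ref{sec:prelim-cbt}, which needs only within-competition no ties. As a second point, I will check that Proposition~\ref{prop:sss-compound-bh} applies to the relabelled control indices $k\in\kappa_j$, so that its rejection set really is the SSS+ set; this is immediate after reindexing $\kappa_j$ as $[\nmin]$.
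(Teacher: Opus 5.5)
Your proposal is correct and is essentially the paper's proof. Condition on $(\cK_j)_{j\in[m]}$, use their independence from the scores to get conditional exchangeability of the retained $\nmin+1$ scores in each null competition, apply the equal-size nominal FDR guarantee, and average via the tower property.

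Your extra care about the distinct-winning-scores hypothesis of Theorem~\ref{thm:inflated-fdr-control} is justified. The paper's one-line proof invokes that theorem without this hypothesis. Moreover, your remark that distinctness is ``inherited'' is not right: distinct original winning scores need not give distinct subsampled winning scores. Routing instead through the SSS+ guarantee is a sound patch, because its optional-stopping argument still goes through when stopping is restricted to the boundaries between tied winning scores.
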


\begin{proof}
    For each $j\in\cH_0$, the retained $\nmin+1$ scores
    $(S_{j,k})_{k\in\{0\}\cup\cK_j}$ are jointly exchangeable
    conditional on $\sigma\bigl(\cG_{-j}, (\cK_\ell)_{\ell \in [m]}\bigr)$.
    Hence, we may invoke Theorem~\ref{thm:inflated-fdr-control} conditional on $(\cK_j)_{j\in[m]}$, and the result follows by the tower property of conditional expectation.  
\end{proof}

%% file: sections_appendix/appd_miscellaneous_results.tex

\section{Other theoretical results}
\label{sec:appd-misc-results}

\subsection{A counterexample for Holm's procedure}
\label{sec:appd-holm-counterexample}

Holm's   procedure~\citep{holm1979simple} first orders the compound (or strict) p-values as
\[
    \tilde p_{(1)}\le\cdots\le\tilde p_{(m)};
\]
denote $j_{(k)}$ as the index corresponding to $\tilde p_{(k)}$, with ties broken according to the order of the original index. Starting with $k=1$, we reject 
$H_{j_{(k)}}$ and continue to the next ordered p-value whenever
\[
    \tilde p_{(k)}\le\frac{\alpha}{m-k+1};
\]
at the first failed comparison, Holm's procedure stops and does not reject any of the remaining
hypotheses. 

\begin{proposition}
    \label{prop:holm-compound-counterexample}
    For every $m\ge2$, there exist independent compound p-values
    $\tilde p_1,\dots,\tilde p_m$   for which Holm's procedure at
    any level $\alpha\in(0,1)$ has
    \[
        \fwer=(m\alpha)\wedge1>\alpha.
    \]
\end{proposition}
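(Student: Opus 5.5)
The plan is to exploit the fact that compound validity, as defined in \eqref{eq:compound-pval-defn}, only constrains the \emph{null} p-values. The step-down structure of Holm's procedure then lets deterministically small nonnull p-values ``unlock'' the lenient final threshold $\alpha$ for a single null. It is worth noting first why nonnulls are needed. If all hypotheses were null, then any Holm rejection requires $\min_j \tilde p_j\le\alpha/m$. A union bound then gives $\fwer\le\sum_{j\in\cH_0}\PP(\tilde p_j\le\alpha/m)\le\alpha$, exactly as in Proposition~\ref{prop:fwer-compound}. So the counterexample must place $m-1$ nonnulls ahead of one null in Holm's ordering.

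\textbf{Construction.} Fix $m\ge 2$ and take $\cH_0=\{m\}$ and $\cH_1=[m-1]$. Set $\tilde p_1=\cdots=\tilde p_{m-1}\coloneqq 0$ deterministically, and let $\tilde p_m\sim\unif([0,1/m])$. These random variables are trivially independent, since all but one are constant.

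\textbf{Compound validity.} For every $t\in[0,1]$,
\[
\sum_{j\in\cH_0}\PP(\tilde p_j\le t)=\PP(\tilde p_m\le t)=(mt)\wedge 1\le mt.
\]
Hence $\tilde p_1,\dots,\tilde p_m$ satisfy \eqref{eq:compound-pval-defn}.

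\textbf{Holm's behaviour.} Almost surely $\tilde p_m>0$, so the ordering is $\tilde p_{(k)}=\tilde p_k=0$ for $k\le m-1$ and $\tilde p_{(m)}=\tilde p_m$; the tie-breaking rule by original index also puts the zeros first. Each of the first $m-1$ comparisons passes, because $0\le\alpha/(m-k+1)$. At step $m$, the threshold is $\alpha/(m-m+1)=\alpha$, so $H_m$ is rejected if and only if $\tilde p_m\le\alpha$. The only null is $H_m$, so
\[
\fwer=\PP(\tilde p_m\le\alpha)=(m\alpha)\wedge1.
\]
This strictly exceeds $\alpha$ because $m\ge2$ and $\alpha<1$.

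No step here is a real obstacle. The only conceptual point is the observation above: the union bound rules out any all-null example, which forces the use of deterministic nonnull p-values to push the null to the last Holm step. The remaining verifications are direct.
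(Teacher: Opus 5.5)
Your proposal is correct and is essentially the paper's proof. The paper makes $H_1$ the single null with $\tilde p_1=U/m$ and sets the other $m-1$ p-values to zero; it also follows the proof with the same remark that an all-null counterexample is impossible by a union bound.
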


\begin{proof}
    Let $\cH_0=\{1\}$, draw $U\sim\unif\bigl([0,1]\bigr)$, and define
    \[
        \tilde p_1=\frac{U}{m},
        \qquad
        \tilde p_2=\cdots=\tilde p_m=0.
    \]
    These random variables are independent and for every $t\in[0,1]$,
    \[
        \sum_{j\in\cH_0}\PP(\tilde p_j\le t)
        =\PP(U\le mt)
        =(mt)\wedge1
        \le mt,
    \]
    so $\tilde{p}_1,\ldots,\tilde{p}_m$ are compound p-values.   Holm's procedure first
    rejects the $m-1$ false nulls, whose compound p-values are zero. It then tests
    the sole true null at its final critical value $\alpha$. Consequently,
    \[
        \fwer
        =\PP(\tilde p_1\le\alpha)
        =(m\alpha)\wedge1
        >\alpha,
    \]
as required.
\end{proof}

The presence of a false null is necessary for such a counterexample.  Indeed, under
the global null $\cH_0 = [m]$, Holm's procedure makes at least one rejection if and only if
$\min_{j \in [m]} \tilde p_j\le\alpha/m$, whose probability is at most $\alpha$ by a
union bound and the compound validity of $\tilde p_1, \dots, \tilde p_m$.

\subsection{PRDS property of block empirical p-values}
\label{sec:appd-block-pvalues-prds}

We prove that the block-specific empirical p-values 
$p_1^\blk, \dots, p_m^\blk$ are PRDS on $\cH_0$; specifically, we show that for any increasing set $D \subseteq [0,1]^m$ and for all $\ell \in \cH_0$, we have that
\begin{equation}
    t\mapsto \PP\bigl(   (p_j^{\blk})_{j\in[m]}\in D
        \given p_\ell^{\blk}=t\bigr)
\end{equation}
is an increasing function on the support of $p_\ell^{\blk}$.  This implies
that BH applied to these strict p-values 
achieves nominal FDR control~\citep{benjamini2001control}. 

\begin{proposition}[PRDS under block-specific exchangeability]
    \label{prop:block-pvalues-prds}
    Suppose that Assumption~\ref{assu:block-specific} holds and that
    there are no ties among the scores within each block. Then the block
    empirical p-values $(p_j^\blk)_{j=1}^m$ are PRDS on $\cH_0$.
\end{proposition}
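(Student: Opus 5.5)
We verify the PRDS condition directly, by conditioning on everything outside a single block. Fix $\ell\in\cH_0$ with $\ell\in B_g$, and an increasing set $D\subseteq[0,1]^m$. Each $p_j^\blk$ depends only on $S_{j,0}$ and the controls $C_{g(j)}$. So the p-values from blocks $h\neq g$ are functions of scores measurable with respect to $\cG_{-g}^\blk$, together with the nonnull test scores in $B_g$.

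Conditionally on $\cG_{-g}^\blk$, the p-values outside $B_g$ are fixed. The set $D$ then induces an increasing set $D'$ in the coordinates $(p_j^\blk)_{j\in B_g}$. By the tower property, it suffices to show that, for each fixed realization of $\cG_{-g}^\blk$,
\[
t\mapsto \PP\bigl((p_j^\blk)_{j\in B_g}\in D'\given p_\ell^\blk=t,\cG_{-g}^\blk\bigr)
\]
is increasing. One caveat: the conditioning on $p_\ell^\blk=t$ does not commute with the outer expectation. We must therefore argue that the conditional law of $\cG_{-g}^\blk$ given $p_\ell^\blk=t$ does not depend on $t$. This holds because, conditionally on $\cG_{-g}^\blk$, the law of $p_\ell^\blk$ is uniform on $\{1/(n_g+1),\dots,1\}$ by exchangeability and no ties. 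Hence $p_\ell^\blk$ is independent of $\cG_{-g}^\blk$, and the reduction goes through.

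Within block $g$ we reduce to a conformal-type structure, which is exactly the pooled setting of \citet{bates2023testing} restricted to one block. Conditionally on $\cG_{-g}^\blk$, the scores $(S_{j,0})_{j\in B_g\cap\cH_0}\cup(S_{g,k})_{k\in[n_g]}$ are exchangeable, and the nonnull test scores in $B_g$ are fixed. Each $p_j^\blk$ for $j\in B_g$ is a decreasing function of $S_{j,0}$ relative to the shared control set $C_g$.

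We follow the PRDS argument of \citet{bates2023testing}. Conditional on the unordered multiset of the exchangeable scores and on $p_\ell^\blk=t$, the rank of $S_{\ell,0}$ among the controls is fixed. Increasing $t$ amounts to swapping $S_{\ell,0}$ downward past one control score. We would build an explicit coupling between the conditional laws at consecutive values $t=i/(n_g+1)$ and $t'=(i+1)/(n_g+1)$. The coupling exchanges the roles of $S_{\ell,0}$ and the adjacent control. In it, each other null p-value $p_j^\blk$ weakly increases, because one fewer control lies above $S_{j,0}$ or the count is unchanged. Nonnull p-values in $B_g$ weakly increase by the same count argument.

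Monotonicity of $D'$ then gives the increasing property. Integrating over the multiset and over $\cG_{-g}^\blk$ then yields PRDS on $\cH_0$.

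The main obstacle is making this coupling rigorous when several null test scores in $B_g$ share the controls. We would handle it by conditioning on the multiset and using that, given $\rho=$ the rank of $S_{\ell,0}$, the remaining labels (which positions are controls versus other null tests) are uniformly distributed. Moving $S_{\ell,0}$ one position down then corresponds to a label swap. That swap raises the control count above any other null test score by at most zero, equivalently lowers it by at most one, so every p-value in the block is pointwise weakly larger. This is where the no-ties assumption is used. Finally, FDR control of BH follows from \citet{benjamini2001control}.
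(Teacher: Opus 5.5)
Your argument is correct and is essentially the paper's proof. The paper conditions on a finer $\sigma$-algebra $\cF_{-\ell}$ that fixes everything outside block $g$, the multiset of block-$g$ scores, and the positions of the other null test scores in block $g$, so the $p$-vector is a deterministic, coordinatewise increasing function of the slot $r\in[n_g+1]$ that $S_{\ell,0}$ occupies among the remaining positions, with $r$ uniform and independent of $\cF_{-\ell}$. This is exactly your swap coupling with the other labels held fixed, and it spares you from checking that the swap is a measure-preserving bijection between level sets.

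One correction: the swap moves a control \emph{up} into $S_{\ell,0}$'s former position. The number of controls above any score lying between the two swapped positions therefore goes up by one and never down, and that is why every $p_j^{\blk}$ in the block is weakly larger. Your phrases ``one fewer control'' and ``lowers it by at most one'' have the sign backwards; read literally, they would make the $p$-values smaller.
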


\begin{proof}
    We will use the definitions and notation from Appendix~\ref{sec:supermartingale-fixed-shared-application}. Specifically, for $g\in [G]$, we take definitions there of $\cT_g$ (with cardinality $h_g$), $\cL_g$, the multiset $\cM_g$ (with cardinality $d_g = h_g + n_g$), and the $\sigma$-algebra $\cF_0$. As in  Appendix~\ref{sec:supermartingale-fixed-shared-application}, 
    we write the scores in $\cM_g$ as   $Z_{g,1}>\cdots>Z_{g,d_g}  $, with 
    $\Pi_g:[d_g]\to\cL_g$ defined as the unique bijection satisfying $Z_{g,q}=S_{\Pi_g(q)}$ 
    for each $q \in [d_g]$.
    We  then argue via Assumption~\ref{assu:block-specific} that $\Pi_g \given \cF_0 \sim \unif\bigl( \operatorname{Sym}(\cL_g) \bigr)$ and that $(\Pi_g)_{g\in [G]}$ are independent conditional on $\cF_0$, as established previously. 

    Fix $\ell \in \cH_0$, write $g  = g(\ell)$, and define
    \[
        \cF_{-\ell} \coloneqq \sigma\Bigl( \cF_0,  
        (\Pi_h)_{h\neq g},\,
        \bigl(\Pi_g^{-1}(u)\bigr)_
        {u\in\cT_g\setminus\{(\ell,0)\}}  \Bigr)
    \]
    Thus, $\cF_{-\ell} $ records the assignments outside block  $g$ and the positions of all other null test units within block  $g$. The remaining randomness is from the $n_g + 1$ units' positions, corresponding to the $\ell$th test score and the $n_g$ control scores. 

    Let $q_1<\cdots<q_{n_g+1} $ enumerate the indices in   $[d_g]\setminus \bigl \{ \Pi_g^{-1}(u): u\in\cT_g\setminus\{(\ell,0)\}\bigr\}$   in increasing order. 
    On the event  $\{\Pi_g(q_r)=(\ell,0)\}$ , exactly $r-1$ control scores in group $g$ exceed  $S_{\ell,0}=Z_{g,q_r} $. Hence, 
    \[
        \PP\biggl( p_\ell^\blk  = \frac{r}{n_g+ 1} \bigggiven \cF_{-\ell}\biggr)
        = \PP \bigl(  \Pi_g(q_r)=(\ell,0)  \given \cF_{-\ell}\bigr )=\frac{1}{n_g+1},
        \qquad r\in[n_g+1].
    \]
    Since this conditional probability does not depend on $\cF_{-\ell}$, we conclude $p_\ell^\blk$ is independent of $\cF_{-\ell}$. 

    Now, for $r \in [n_{g}+1]$, define
    \[
        p_j^{\blk,(r)}
        \coloneqq
        \begin{cases}
        \displaystyle 
          \frac{r}{n_g+1} 
        & \text{for }j=\ell,\\[4pt]
        \displaystyle 
        \frac{1+\sum_{a\in[n_g+1]\setminus\{r\}}
        \ind\{Z_{g,q_a}\ge S_{j,0}\}}{n_g+1} 
        & \text{for }j\in B_g\setminus\{\ell\},\\ 
        p_j^{\blk} 
        & \text{for }j\notin B_g.
        \end{cases}
    \]
    This $r$-perturbed p-value collection $(p_j^{\blk, (r)})_{j \in [m]}$ is $\cF_{-\ell}$-measurable and coincides with the original block empirical p-values $(p_j^{\blk })_{j \in [m]}$ on the event $\{S_{\ell,0} = Z_{g, q_r}\}$.   

    For $r < s$, we see by considering the different cases that
    \[
        (n_g+1)\bigl(p_j^{\blk,(s)}-p_j^{\blk,(r)}\bigr)
        =
        \begin{cases}
        s-r
        & \text{for } j=\ell,\\ 
        \ind\{Z_{g,q_r}\ge S_{j,0}\}
        -\ind\{Z_{g,q_s}\ge S_{j,0}\} 
        & \text{for } j\in B_g\setminus\{\ell\},\\ 
        0 
        & \text{for }  j\notin B_g.
        \end{cases}
    \]
    Since $Z_{g, q_r} > Z_{g, q_s}$, all three cases lead to nonnegative values for $(p_j^{\blk,(s)}-p_j^{\blk,(r)})$. Hence, increasing $r$ cannot decrease any individual coordinate $p_j^{\blk, (r)}$, conditional on $\cF_0$. 

    Finally, let $D \subseteq [0,1]^m$ be any increasing set. Since $\{ p_\ell^\blk = r / (n_g + 1) \} = \{ S_{\ell, 0}  =Z_{g, q_r} \}$, and since $(p_j^{\blk, (r)})_{j \in [m]}= (p_j^{\blk })_{j \in [m]}$ on this event, we can compute:
    \begin{multline}
        \PP\biggl(   (p_j^{\blk})_{j\in[m]}\in D
        \bigggiven p_\ell^{\blk}=\frac{r}{n_g+1} \biggr)  \\
        = \EE \biggl[
        \ind\bigl\{(p_j^{\blk,(r)})_{j\in[m]}\in D\bigr\}
        \bigggiven p_\ell^{\blk}=\frac{r}{n_g+1}
        \biggr] 
        =\EE \bigl[ \ind \{(p_j^{\blk,(r)})_{j\in[m]}\in D\}  \bigr].
    \end{multline} 
    The last expectation is an increasing function in $r$, as $D$ is an increasing set and the $r$-perturbed p-values are coordinate-wise increasing. As $\ell \in \cH_0$ was arbitrarily chosen, this establishes the PRDS property of $(p_j^{\blk})_{j\in[m]}$ over $\cH_0$. 
\end{proof}

%% file: sections_appendix/appd_deferred_sims.tex

\section{Additional simulation results}
\label{sec:appd-deferred-sims}

\subsection{Varying competition sizes chosen in a range}
\label{sec:appd-binom-heterog}

Figure~\ref{fig:heterog-smooth-10-50} visualizes the FDR, FWER, and power curves for the methods of interest when $n_j$ is sampled in a range between $10$ and $50$. Specifically, independently across $j \in [m]$, we generate $n_j \sim 10 + \Bin(40, \rho_{g(j)})$, where $\rho_{g}$ for $g\in[25]$ is defined case-by-case  (see Section~\ref{sec:sims-heterog}).
The smooth configuration exhibits similar results to those of the binary configuration, and our conclusion is identical to the one written in Section~\ref{sec:sims-heterog} for the binary configuration.

\newpage 

\begin{figure}[!tb]
    \centering
    \includegraphics[width=\textwidth,clip]{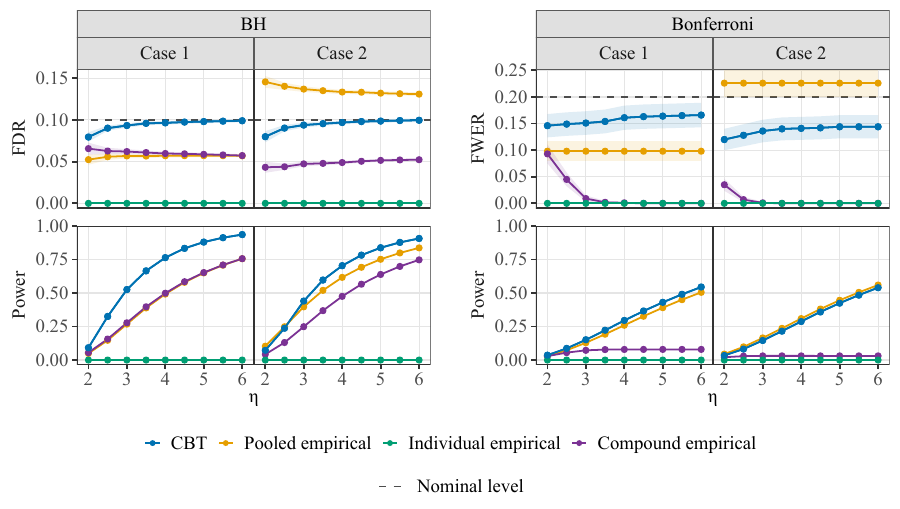}
    \caption{Power, FDR and FWER curves for the varying-size competition experiments, where 
    $n_j$ follows the smooth setting distribution. Cases~1 and~2 refer to scenarios where higher-variance clusters have more controls and lower variance clusters have more controls, as described in the text.}
    \label{fig:heterog-smooth-10-50}
\end{figure}

\begin{figure}[!tb]
    \centering
    \includegraphics[width=0.5\textwidth,clip]{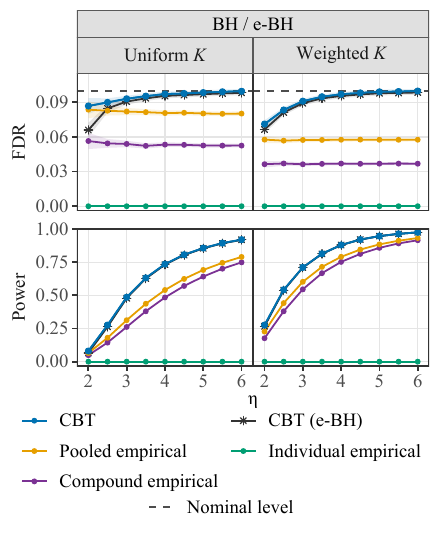}
    \caption{Power and FDR  curves for the equal-size competition experiments, including the e-BH approach. The setting is the same as that of Figure~\ref{fig:homog-n25-unif_and_weighted}.}
    \label{fig:supp-homog}
\end{figure}

\begin{figure}[!tb]
    \centering
    \includegraphics[width=\textwidth,clip]{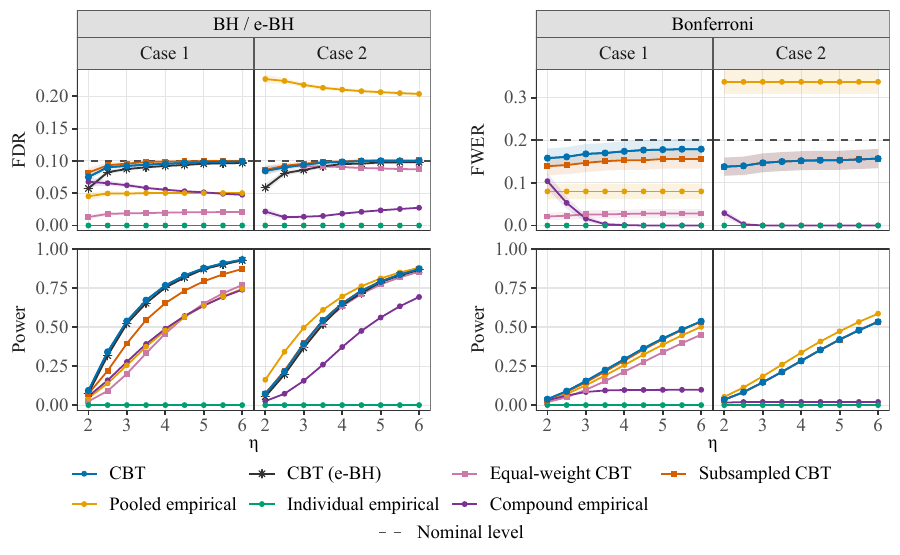}
    \caption{Power, FDR and FWER curves for the varying-size competition experiments, including the alternate procedures, for the setting of Figure~\ref{fig:heterog-binary-10-50}.}
    \label{fig:supp-heterog-binary}
\end{figure}

\begin{figure}[!tb]
    \centering
    \includegraphics[width=\textwidth,clip]{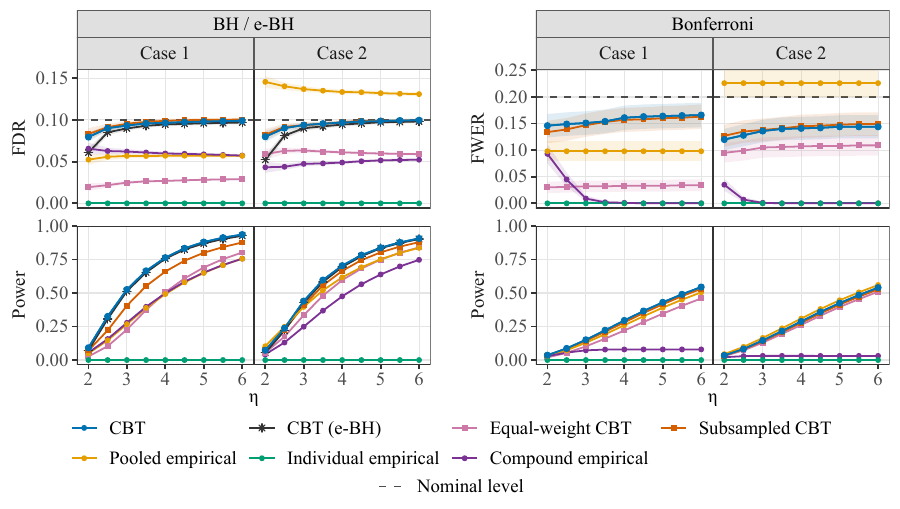}
    \caption{Power, FDR and FWER curves for the varying-size competition experiments, including the alternate procedures, for the setting of Figure~\ref{fig:heterog-smooth-10-50}.}
    \label{fig:supp-heterog-smooth}
\end{figure}
\clearpage

\subsection{Equal weighting, subsampling, and compound e-values}

In Section~\ref{sec:fdr-cbt}, we discussed three alternative methods that guarantee FDR control at the nominal level: equal-weight CBT, CBT with subsampled control groups, and e-BH with competition-based compound e-values. We now examine the performance of each   alternative and compare them to both the CBT method and baseline methods in the settings discussed in Sections~\ref{sec:sims-homog} and~\ref{sec:sims-heterog}.  

\paragraph{Results.} In the simulation setting of Section~\ref{sec:sims-homog}, first note that the competitions are of equal size, so equal-weight CBT and subsampled CBT both coincide with the original CBT method. Hence, the sole new method visualized in Figure~\ref{fig:supp-homog} is e-BH applied to the competition-based compound e-values~\eqref{eq:cbt-compound-evalue}. As discussed in Section~\ref{sec:fdr-cbt}, $\hat R^{\ebh}(\tilde e_1, \dots, \tilde e_m) \subseteq \hat R^\cbt$. Empirically, the FDR and power of the compound p-value and e-value approaches  are virtually identical. 

Figures~\ref{fig:supp-heterog-binary} and~\ref{fig:supp-heterog-smooth} visualize the Type I errors and powers of the three methods in the setting of Section~\ref{sec:sims-heterog}; the figures also include the methods described in the main text.  
Again, the e-BH approach achieves similar power to our original CBT method. Both the equal-weight and subsampled variants of CBT control the error rates at the nominal level, but achieve lower power than original CBT. 
These approaches give compound p-values with  weights of $1/\nmin$ in the numerator; this leads to a more conservative rejection set via BH, as evidenced by the figures.  

\section{Deferred simulation details}
\label{sec:appd-deferred-details}

We provide additional experiment details omitted from Sections~\ref{sec:sims} and~\ref{sec:real_data}.

\subsection{Additional details on the data-generating process and model fitting}
\label{sec:appd-details-on-dgp}

\paragraph{Generating $\bm u$.}

We explicitly state how the vector $\bm u $ is generated in the data-generating process described in Section~\ref{sec:dgp}. 
Generate $(v_1, \dots, v_{25}) \sim \cN_{25}({0, I})$ and compute
\[
u_g \coloneqq \frac{v_g-\bar v}
{\bigl(25^{-1}\sum_{h=1}^{25}(v_h-\bar v)^2 \bigr)^{1/2}}.
\]
Both $\cK$ and $\bm u$ are generated once, and are fixed across all experiments and all replications.

\paragraph{Details on the OneClassSVM.}

The OneClassSVM is fitted with default parameters on an independent held-out training dataset of 2,000 points drawn from the uniform-weighted null mixture distribution. This is done once per replication (with the held-out training dataset resampled for each). We then score the units with the negative of the fitted function, as the convention in this paper is that higher scores signify greater evidence against the null.

\paragraph{Estimating block structure via the $K$-means algorithm.}

In the shared-control competition setting of Section~\ref{sec:sims-block}, we discuss estimating the block structure of the data, which is generated by sampling the mixture distribution. We fit the \texttt{sklearn.cluster.KMeans} function in the \texttt{scikit-learn} package, with 25 clusters and 10 initializations, on the OneClassSVM held-out training data of 2,000 observations. 

Denote the fitted centers as $\hat \mu_1, \dots, \hat \mu_{25}$.  We then assign a training observation  with score $x$ to $\hat\mu_{a(x)}$, where 
\[
a(x) \coloneqq \underset{h\in[25]}{\mathrm{argmin}}\|x-\hat\mu_h\|_2.
\]
For each $h \in [25]$, compute 
$    N_h^{\mathrm{train}}    \coloneqq   \sum_{i=1}^{2000}   \ind\{a(X_i^{\mathrm{train}})=h\}.$ 
To avoid using clusters with too few assigned training observations, we define the index set for \emph{retained} fitted centers as  
\[
\hat\cI 
\coloneqq 
\{h:N_h^{\mathrm{train}}\ge20\}.
\] 
We may then construct the following map from the initial fitted groups to the retained fitted groups as
\[
    M(h) \coloneqq
    \begin{cases}
    h, & h\in\hat\cI,\\ 
    \underset{\ell\in[25]}{\mathrm{argmin}}
    \|\widehat\mu_h-\widehat\mu_\ell\|_2,
    & h\notin\hat \cI.
    \end{cases}
\]
The final assignment rule is
\[
\operatorname{gr}(x) \coloneqq M\bigl(a(x)\bigr).
\]
For each control and test unit, we therefore:
\begin{enumerate}
    \item[1.] Find its nearest center among all 25 original fitted centers.
    \item[2.] Use the map $M(\cdot)$ to compute its block membership with regard to the retained cluster centers.
\end{enumerate}
For each block $g \in [|\hat \cI|]$, we define 
\[
B_g \coloneqq \bigl\{j:\operatorname{gr}(X_{j,0})=g\bigr\},
\qquad
C_g \coloneqq \bigl\{i:\operatorname{gr}(X_i^{\mathrm{control}})=g\bigr\}.
\]
and compute the compound p-values in~\eqref{eq:shared-cbt-common-offset-pval}. 

As the held-out training set is resampled for every replication, the pairs $(B_g, C_g)_{g\in  [|\hat \cI|]}$ used to compute the compound p-values also vary across different replications. 

\subsection{Covariate-based matching algorithms}
\label{sec:appd-matching-algorithms}

We describe two algorithms for constructing blocks of tests and shared
controls. Each algorithm takes a collection of tests and controls that
are eligible to be matched to one another. When matching is restricted
to observations belonging to the same pre-existing group, such as a
histological cluster (Section~\ref{sec:application-virtual-st}) or a
cell line (Section~\ref{sec:application-crispr}), the algorithm is
applied separately within each such group.

For an application of an algorithm, index the tests by $[m]$ and the
eligible controls by $[N]$, where $m,N \in \NN$. Let
$(x_j)_{j\in[m]}$ and $(z_i)_{i\in[N]}$ denote their $p$-dimensional
covariates, standardized coordinate-wise using means and standard
deviations computed from the pooled test and control observations.
Both algorithms return pairs of nonempty index sets
$(B_g,A_g)_{g=1}^G$ satisfying
\[
    \bigsqcup_{g=1}^G B_g=[m],
    \qquad
    \bigsqcup_{g=1}^G A_g\subseteq[N].
\]
Here, $B_g$ indexes tests and $A_g$ indexes controls; the scores of
the controls indexed by $A_g$ form the shared control collection of scores 
$C_g$. Controls in groups containing no tests are unused; thus
$\sum_{g=1}^G n_g\le N$, where $n_g \coloneqq |A_g|$.
Write $a_+ \coloneqq \max\{a,0\}$ and break ties lexicographically unless stated otherwise.

\subsubsection{Adaptive recursive coarsening}
\label{sec:appd-arc}

ARC recursively partitions the pooled test and control covariates,
favoring splits with small test-to-control ratios and balanced block sizes.

\begin{algorithm}[H]
\footnotesize
\SetAlgoNlRelativeSize{-1}
\SetAlCapFnt{\normalsize}
\SetAlCapNameFnt{\normalsize}
\DontPrintSemicolon
\caption{Adaptive recursive coarsening (ARC)}
\label{alg:appd-arc}

\KwIn{test covariates $(x_j)_{j\in[m]}$;
control covariates $(z_i)_{i\in[N]}$;
 control-count stopping threshold $h\ge1$;
 maximum depth $D\ge0$;
number of candidate coordinates $s\in[p]$;
finite nonempty set of split quantile levels
$\mathcal Q\subset(0,1)$;
child-size imbalance penalty weight $\varepsilon\ge0$.}

Set $\mathcal P\gets\varnothing$ and
$\mathcal S\gets\{([m],[N],0)\}$.\;

\While{$\mathcal S\ne\varnothing$}{
    Remove a cell $(I,J,d)$ from $\mathcal S$.\;

    \If{$I=\varnothing$}{
        \textbf{continue}\;
    }
    \If{$|J|\le h$ \textnormal{\bfseries or} $d\ge D$}{
        $\mathcal P\gets\mathcal P\cup\{(I,J)\}$.\;
        \textbf{continue}\;
    }

    \BlankLine
    {\color{gray}\tcp{Construct candidate splits with controls
    for every  child containing a test.}}

    Let $L\subseteq[p]$ index the $s$ largest coordinate ranges
    among the pooled covariates in $(I,J)$.\;
    Set $\mathcal A\gets\varnothing$.\;

    \For{$(\ell,q)\in L\times\mathcal Q$}{
        Set $t$ to the empirical $q$-quantile of the pooled
        values $(x_{j\ell})_{j\in I}$ and
        $(z_{i\ell})_{i\in J}$.\;

        Set
        \[
        \begin{aligned}
            I_L&\gets\{j\in I:x_{j\ell}\le t\},
            & I_R&\gets I\setminus I_L,\\
            J_L&\gets\{i\in J:z_{i\ell}\le t\},
            & J_R&\gets J\setminus J_L.
        \end{aligned}
        \]
        Set $a\gets(I_L,J_L,I_R,J_R)$.\;

        \If{$|I_u|+|J_u|>0$ and
            $(|I_u|=0\ \text{or}\ |J_u|\ge1)$
            for both $u\in\{L,R\}$}{
            $\mathcal A\gets\mathcal A\cup\{a\}$.\;
        }
    }

    \BlankLine
    {\color{gray}\tcp{Keep a terminal cell or choose the best split.}}

    \eIf{$\mathcal A=\varnothing$}{
        $\mathcal P\gets\mathcal P\cup\{(I,J)\}$.\;
    }{
        Choose $(I_L,J_L,I_R,J_R)\in
        \argmin{a\in\mathcal A}F(a)$, where
        \[
            F(a)
            =
            \max_{u\in\{L,R\}}\frac{|I_u|}{|J_u|\vee1}
            +
            \varepsilon
            \bigl|(|I_L|+|J_L|)-(|I_R|+|J_R|)\bigr|.
        \]

        $\mathcal S\gets\mathcal S\cup
        \{(I_L,J_L,d+1),(I_R,J_R,d+1)\}$.\;
    }
}

\KwOut{shared-control blocks
$\mathcal P=\{(B_g,A_g)\}_{g=1}^G$,
where $G=|\mathcal P|$.}
\end{algorithm}

\clearpage

\subsubsection{Greedy grouping}
\label{sec:appd-gg}

GG first partitions the controls into small groups and then assigns tests
via covariate distance, with a penalty on the resulting test-to-control
ratio to discourage assigning too many tests to groups with few controls.

\begin{algorithm}[H]
\footnotesize
\SetAlgoNlRelativeSize{-1}
\SetAlCapFnt{\normalsize}
\SetAlCapNameFnt{\normalsize}
\DontPrintSemicolon
\caption{Greedy grouping (GG)}
\label{alg:appd-gg}

\KwIn{test covariates $(x_j)_{j\in[m]}$;
control covariates $(z_i)_{i\in[N]}$;
 maximum control-group size $h\ge1$;
 number of nearest controls $k\ge1$
for identifying candidate groups;
ratio penalty weight $\lambda\ge0$;
ratio threshold $r_0>0$ above which penalties apply;
penalty exponent $\gamma>0$.}

{\color{gray}\tcp{Partition controls into groups of size at most $h$.}}

Initialize the ordered list of control groups
$\mathcal L\gets([N])$.\;

\While{$\exists J\in\mathcal L:\ |J|>h$}{
    Choose such a $J$ and set
    $\ell\gets\min\argmax{a\in[p]}
    (\max_{i\in J}z_{ia}-\min_{i\in J}z_{ia})$.\;

    Stably sort $J$ as $(i_1,\ldots,i_{|J|})$
    by increasing $z_{i\ell}$.\;

    Set $v\gets\lfloor|J|/2\rfloor$,
    $J_L\gets(i_1,\ldots,i_v)$, and
    $J_R\gets(i_{v+1},\ldots,i_{|J|})$.\;

    Replace $J$ in $\mathcal L$ by $J_L,J_R$, in that order.\;
}

Label the resulting control index sets $A_1,\ldots,A_K$
in the order they appear in $\mathcal L$, and set
$n_g\gets|A_g|$, $B_g\gets\varnothing$ for $g\in[K]$.\;

\BlankLine
{\color{gray}\tcp{Identify candidate groups for each test.}}

\For{$j\in[m]$}{
    Let $N_j$ index the $\min(k,N)$ controls nearest to $x_j$
    in Euclidean distance.\;

    Set
    \[
        \mathcal G_j
        \gets\{g\in[K]:A_g\cap N_j\ne\varnothing\},
        \qquad
        d_{jg}\gets
        \min_{i\in A_g\cap N_j}\|x_j-z_i\|_2
        \quad(g\in\mathcal G_j).
    \]
}

\BlankLine
{\color{gray}\tcp{Assign tests, starting with those having
fewer candidate groups.}}

Order the tests as $(j_1,\ldots,j_m)$ by increasing
$(|\mathcal G_j|,\min_{g\in\mathcal G_j}d_{jg},j)$
lexicographically.\;

\For{$j=j_1,\ldots,j_m$}{
    Set $r_{jg}\gets(|B_g|+1)/n_g$
    for $g\in\mathcal G_j$.\;

    Choose
    \[
        g^*\in\argmin{g\in\mathcal G_j}
        \left\{
            d_{jg}+\lambda(r_{jg}-r_0)_+^\gamma
        \right\},
    \]
    breaking ties by increasing
    $(d_{jg},r_{jg},g)$ lexicographically.\;

    $B_{g^*}\gets B_{g^*}\cup\{j\}$.\;
}

Set $\mathcal P\gets
\{(B_g,A_g):g\in[K],\ B_g\ne\varnothing\}$
and relabel the retained blocks consecutively.\;

\KwOut{shared-control blocks
$\mathcal P=\{(B_g,A_g)\}_{g=1}^G$,
where $G=|\mathcal P|$.}
\end{algorithm}